%% file: _main.tex
\documentclass[12pt]{extarticle}
\usepackage[linenumbers,preprint]{imsart}

\usepackage[margin=1.5in]{geometry}
\usepackage{amsmath}
\usepackage{amssymb,amsthm,mathtools}
\usepackage{mathtools}
\usepackage{bm}
\usepackage[unicode,colorlinks,citecolor=blue,urlcolor=black,linkcolor=blue,pdfborder={0 0 0}]{hyperref}
\usepackage[capitalize,sort,compress]{cleveref} %
\usepackage{comment}
\usepackage{graphicx}
\usepackage{enumitem}
\usepackage{framed}
\usepackage{dcolumn}
\usepackage{chngcntr}
\usepackage{dsfont} %

\usepackage{url} %
\usepackage[usenames]{color} %
\usepackage{subfig}
\usepackage[sort]{natbib}

\usepackage{datetime}
\usepackage[normalem]{ulem}
\usepackage[linesnumbered,ruled,vlined]{algorithm2e}

\usepackage{autonum}
\usepackage{microtype} %
\usepackage{booktabs} %
\usepackage{graphicx}
\usepackage{thmtools}

\newtheorem{assumption}{Assumption}
\newtheorem{theorem}{Theorem}
\newtheorem{proposition}{Proposition}
\newtheorem{lemma}{Lemma}
\newtheorem{example}{Example}
\newtheorem{definition}{Definition}
\newtheorem{remark}{Remark}

\crefname{theorem}{Theorem}{Theorems}
\crefname{lemma}{Lemma}{Lemmas}
\crefname{proposition}{Proposition}{Propositions}
\crefname{corollary}{Corollary}{Corollaries}
\crefname{conjecture}{Conjecture}{Conjectures}
\crefname{definition}{Definition}{Definitions}
\crefname{assumption}{Assumption}{Assumptions}
\crefname{example}{Example}{Examples}
\crefname{remark}{Remark}{Remarks}
\crefname{algocf}{Algorithm}{Algorithms}

\AtBeginDocument{ %
\def\[#1\]{\begin{align}#1\end{align}}
}

\input{shortcuts}

\graphicspath{{./figures/}}
\begin{document}

\begin{frontmatter}
    \title{Robust Model Selection for Discovery of Latent Mechanistic Processes}
	\runtitle{~Robust Model Selection for Mechanistic Discovery}

	\runauthor{J.\ Li, N.\ Nguyen, M. Lai, et al.}

	\begin{aug}
		\author[MS]{\fnms{Jiawei} \snm{Li}\ead[label=jl]{jwli@bu.edu}\thanksref{A}},
		\author[SE]{\fnms{Nguyen} \snm{Nguyen}\ead[label=nn]{nguyenpn@bu.edu}\thanksref{A}},
		\author[CDS]{\fnms{Meng} \snm{Lai}\ead[label=ml]{menglai@bu.edu}\thanksref{A}}, \\
		\author[SE,CDS]{\fnms{Ioannis Ch.} \snm{Paschalidis}\ead[label=yp]{yannisp@bu.edu}}
		\and
		\author[MS,CDS]{\fnms{Jonathan H.} \snm{Huggins}\ead[label=jh]{huggins@bu.edu}\thanksref{B}}
		\address[MS]{Department of Mathematics \& Statistics, Boston University, USA \printead{jl,jh}}
		\address[SE]{Division of Systems Engineering, Boston University, USA \printead{nn,yp}}
		\address[CDS]{Faculty of Computing \& Data Sciences, Boston University, USA \printead{ml}}
	\end{aug}
    \thankstext{A}{Equal contribution.}
     \thankstext{B}{Corresponding author.}

	\maketitle
	\vspace{1em} 
	\input{abstract}

\end{frontmatter}

\doublespacing

\input{content}

\singlespacing 

\bibliographystyle{imsart-nameyear}
\bibliography{paper-ref}

\newpage
\appendix

\input{appendix}

\end{document}

%% file: shortcuts.tex
\newcommand{\methodname}{\textsf{ACDC}\xspace}

\definecolor{WowColor}{rgb}{.75,0,.75}
\definecolor{SubtleColor}{rgb}{0,0,.50}

\newcounter{margincounter}

\newcommand{\charalphabetmacro}[3]{
	\def\mydeffoo##1{\expandafter\def\csname #1##1\endcsname{#2{##1}}}
	\def\mydefallfoo##1{\ifx##1\mydefallfoo\else\mydeffoo##1\expandafter\mydefallfoo\fi}
	\expandafter \mydefallfoo #3\mydefallfoo
}

\newcommand{\texalphabetmacro}[3]{
	\def\mydeftex##1{\expandafter\def\csname #1##1\endcsname{#2{\csname ##1\endcsname}}}
	\def\mydefalltex##1{\ifx\mydefalltex##1\else\mydeftex{##1}%
			\lowercase{\mydeftex{##1}}\expandafter\mydefalltex\fi}
	\expandafter \mydefalltex #3\mydefalltex
}

\newcommand{\upperCaseRomanLetters}{ABCDEFGHIJKLMNOPQRSTUVWXYZ}
\newcommand{\lowerCaseRomanLetters}{abcdefghijklmnopqrstuvwxyz}
\newcommand{\lowerCaseRomanLettersNoMHT}{abcdefgijklnopqrsuvwxyz}

\newcommand{\lowerCaseRomanLettersNoMF}{abcdeghijklnopqrstuvwxyz}

\newcommand{\lowerCaseGreekLetters}{{alpha}{beta}{gamma}{delta}{epsilon}{zeta}{eta}{theta}{iota}{kappa}{lambda}{mu}{nu}{xi}{omicron}{pi}{rho}{sigma}{tau}{upsilon}{phi}{chi}{psi}{omega}}
\newcommand{\lowerCaseGreekLettersNoEta}{{alpha}{beta}{gamma}{delta}{epsilon}{zeta}{theta}{iota}{kappa}{lambda}{mu}{nu}{xi}{omicron}{pi}{rho}{sigma}{tau}{upsilon}{phi}{chi}{psi}{omega}}
\newcommand{\upperCaseGreekLettersInLaTeX}{{Gamma}{Delta}{Theta}{Lambda}{Xi}{Pi}{Sigma}{Upsilon}{Phi}{Psi}{Omega}}

\charalphabetmacro{bar}{\bar}{\upperCaseRomanLetters}
\charalphabetmacro{bar}{\bar}{\lowerCaseRomanLetters}

\charalphabetmacro{b}{\boldorbar}{\upperCaseRomanLetters}
\charalphabetmacro{b}{\boldorbar}{\lowerCaseRomanLettersNoMF} 

\texalphabetmacro{b}{\boldorbar}{\lowerCaseGreekLettersNoEta} 

\texalphabetmacro{b}{\boldorbar}{\upperCaseGreekLettersInLaTeX}

\charalphabetmacro{mc}{\mathcal}{\upperCaseRomanLetters}

\newcommand{\widehatmathcal}[1]{\widehat{\mathcal{#1}}}
\charalphabetmacro{hmc}{\widehatmathcal}{\upperCaseRomanLetters}

\charalphabetmacro{h}{\widehat}{\upperCaseRomanLetters}
\charalphabetmacro{h}{\widehat}{\lowerCaseRomanLettersNoMHT}
\texalphabetmacro{h}{\widehat}{\lowerCaseGreekLetters}
\texalphabetmacro{h}{\widehat}{\upperCaseGreekLettersInLaTeX}

\newcommand{\boldhat}[1]{\mathbf{\widehat{#1}}}
\charalphabetmacro{bh}{\boldhat}{\upperCaseRomanLetters}
\charalphabetmacro{bh}{\boldhat}{\lowerCaseRomanLetters}
\texalphabetmacro{bh}{\boldhat}{\lowerCaseGreekLetters}
\texalphabetmacro{bh}{\boldhat}{\upperCaseGreekLettersInLaTeX}

\charalphabetmacro{wh}{\widehat}{\upperCaseRomanLetters}
\charalphabetmacro{wh}{\widehat}{\lowerCaseRomanLettersNoMHT}

\charalphabetmacro{td}{\tilde}{\upperCaseRomanLetters}
\charalphabetmacro{td}{\tilde}{\lowerCaseRomanLetters}
\texalphabetmacro{td}{\tilde}{\lowerCaseGreekLetters}
\texalphabetmacro{td}{\tilde}{\upperCaseGreekLettersInLaTeX}

\newcommand{\defas}{:=}

\newcommand{\stk}[2]{\ensuremath{\stackrel{\text{#2}}{#1}}}

\newcommand{\model}[1]{\mcM^{(#1)}}
\newcommand{\paramSpace}[1]{\Theta^{(#1)}}

\newcommand{\opt}{\star}

\newcommand{\datadist}{P_{o}}
\newcommand{\distDisc}{\mcD_{\mathrm{dist}}}
\newcommand{\compDisc}{\mcD_{\mathrm{comp}}}
\newcommand{\compDiscEst}{\widehat{\mcD}_{\mathrm{comp}}}
\newcommand{\Int}[1]{\operatorname{Int}({#1})}

\def\norm#1{\left\|{#1}\right\|} 

\newcommand{\kl}[2]{\mathrm{KL}(#1 \mid #2)}

\newcommand{\klest}[2]{\mathrm{\widehat{KL}}(#1 \mid #2)}
\newcommand{\klestsub}[4]{\mathrm{\widehat{KL}^{#1}}_{#2}(#3 \mid #4)}

\newcommand{\data}[1]{x_{#1}}
\newcommand{\numobs}{N}
\newcommand{\numcomps}{K}

\newcommand{\allparam}{\theta}
\newcommand{\param}{\phi}

\newcommand{\discr}[2]{\mcD(#1\mid #2)}
\newcommand{\discrest}[2]{\widehat\mcD(#1\mid #2)}

\newcommand{\blmetric}{d_\mathrm{BL}}

\renewcommand{\Pr}{\text{pr}}
\newcommand{\dee}{\mathrm{d}}
\newcommand{\op}{o_{P}}
\newcommand{\BLnorm}[1]{\Vert #1 \Vert_{\mathrm{BL}}}

\def\argmax{\operatornamewithlimits{arg\,max}}
\def\argmin{\operatornamewithlimits{arg\,min}}
\newcommand{\distNamed}[1]{{\sf{#1}}}
\newcommand{\distCat}{\distNamed{Categorical}}

\newcommand{\distPoiss}{\distNamed{Poiss}}

\newcommand{\distNorm}{\mathcal{N}}
\newcommand{\distSNorm}{\mathcal{SN}}
\newcommand{\distMulti}{\distNamed{Multi}}
\newcommand{\distGamma}{\distNamed{Gamma}}

\newcommand{\reals}{\ensuremath{\mathbb{R}}}

\newcommand{\nats}{\ensuremath{\mathbb{N}}}

\newcommand{\distas}{\sim}
\newcommand{\distiid}{\stk{\distas}{iid}}
\newcommand{\distind}{\stk{\distas}{ind}}

\newcommand{\E}{\mathbb{E}}	
\renewcommand{\Pr}{\mathbb{P}}	
\newcommand{\ind}{\mathds{1}} 

\newcommand{\lrp}[1]{\left(#1\right)}
\newcommand{\lrb}[1]{\left[#1\right]}
\newcommand{\lrc}[1]{\left\{#1\right\}}

\newcommand{\abs}[1]{\left\lvert #1\right\rvert} 

\newcommand{\veps}{\varepsilon}

\newcommand{\Norm}{\mathcal{N}}

\newcommand{\Unif}{\distNamed{Unif}}

\newcommand{\Poiss}{\distNamed{Poiss}}

\newcommand{\transpose}{^{\mathrm{T}}}

\newcommand{\inv}{^{-1}}

\newcommand{\mop}[1]{\mathop{\mathrm{#1}}}
\newcommand*\blank{{\mkern 2mu\cdot\mkern 2mu}}

\let\oldP\P
\renewcommand\P{\relax\ifmmode\mathbb{P}\else\oldP\fi} 
\newcommand{\Var}{\operatorname{Var}}	

\newcommand{\convp}{\overset{p}{\to}}
\newcommand{\convd}{\overset{d}{\to}}

\let\oldd\d
\renewcommand\d{\relax\ifmmode\mathrm{d}\else\oldd\fi} 

%% file: abstract.tex
\begin{abstract}
When learning interpretable latent structures using model-based approaches,
even small deviations from modeling assumptions can lead to inferential results that are 
not mechanistically meaningful. 
In this work, we consider latent structures that consist of $K_o$ mechanistic processes, where 
$K_o$ is unknown.
When the model is misspecified, likelihood-based model selection methods can substantially overestimate $K_o$ 
while more robust nonparametric methods can be overly conservative.
Hence, there is a need for approaches that combine the sensitivity of likelihood-based methods with the robustness of nonparametric ones.
We formalize this objective in terms of a  \emph{robust model selection consistency} property, 
which is based on a component-level discrepancy measure that captures the mechanistic 
structure of the model. 
We then propose the \emph{accumulated cutoff discrepancy criterion} (\methodname), which leverages plug-in estimates of component-level discrepancies. 
To apply \methodname, we develop mechanistically meaningful component-level discrepancies for a general
class of latent variable models that includes unsupervised and supervised variants of probabilistic matrix factorization and mixture modeling. 
We show that \methodname is robustly consistent when applied to unsupervised matrix factorization and mixture models. 
Numerical results demonstrate that in practice our approach reliably identifies a mechanistically meaningful number 
of latent processes in numerous illustrative applications, outperforming existing methods. 
\end{abstract}

\begin{keyword}
Clustering; Latent variable models; Model selection; Misspecified model; Mixture modeling; Matrix factorization; Overfitting
\end{keyword}

%% file: content.tex
\section{Introduction}
\label{sec:intro}

Characterizing latent structures with meaningful real-world interpretations is a common task in scientific applications of statistical methods.
This task often amounts to discovering unobserved physical ``processes'' that generate observable quantities.
For example, processes might correspond to subpopulations that cannot be directly observed such as
types of cells \citep{Gorsky:2020,Prabhakaran:2016}, behavioral genotypes \citep{Stevens:2019}, or
groups with canonical patterns of IQ development \citep{Bauer:2007}.
Processes could correspond to a variety of scientifically important objects such cell programs \citep{Kotliar_Identify_Cell_Idendity_Activity_NMF_2019,Buettner_FscLVM_ScalableVersatile_FA_2017,Risso_General_Flexible_Signal_Extract_2018},
mutational processes in tumors %
\citep{Levitin_DeNovo_Gene_Signature_Identification_2019,Kinker_Pan_Cancer_2020,Seplyarskiy_PopulationSequencingData_2021},
or material types
\citep{Fevotte_NonlinearHyperspectralUnmixing_2015,Rajabi_SpectralUnmixingHyperspectral_2015}.

In practice, it is necessary to not only characterize each latent process but also determine \emph{how many} such processes there are.
This requires solving a model selection problem for a sequence of model families $\model{K} = \{ P_\theta : \theta \in \Theta_K\}$, $K = 1, 2, \dots,$ where $K$ denotes the number of latent processes in the model.
For example, $K$ could be the number of components in a mixture model or the number of factors in a factor analysis model.
Given some observed data $\data{1},\dots, \data{N} \distiid P_o$ that were generated by the output of $K_{o}$ latent processes,
the goal is to recover $K_{o}$ and a model $\widetilde{P}_o \in \model{K_{o}}$ such that $\widetilde{P}_o  \approx P_{o}$.
Likelihood-based model selection methods provide consistent estimation of $K_o$ when $\model{K_o}$ is \emph{well-specified} (that is, $P_o \in \model{K_o}$ but $P_o \notin \model{K}$ if $K < K_o$).
However, if $\model{K_o}$ is \emph{misspecified} (that is, $P_o \notin \model{K_o}$), then these methods do not work as intended
\citep{Cai:2021,Guha:2021,Fruhwurth:2006,Miller:2019,Xue:2024}.
The reason for this failure is that they optimize for some measure of \emph{predictive performance} (e.g., some form of expected log loss).
Therefore, as $N \to \infty$, these methods will select a sequence of models that converge to the distribution $P_\star \in \model{\infty} = \bigcup_{K=1}^\infty \model{k}$ that has minimal Kullback--Leibler divergence between $P_o$ and all $P \in \model{\infty}$.
Since typically $P_\star \notin \model{K}$ for any finite $K$, when using a predictive method for
model selection, as the number of observations $N$ increases, rather than obtaining better estimates, the opposite occurs:
the distribution $P_o$ is better estimated by adding spurious latent structures that compensate for the shortcomings of the parametric model.
This problem is known as \emph{overfitting}  \citep{Cai:2021}. %

However, when trying discover real-world processes, the statistical problem is no longer predictive in nature, but rather \emph{explanatory}  \citep{Shmueli:2010}.
The aim is to infer meaningful constructs that capture the underlying causal structure 
-- in this case the latent processes -- even if doing so results in a fitted model with reduced
predictive power.
An alternative set of approaches are nonparametric or semi-parametric in nature. 
With weaker assumptions, they can often be more robust.
However, because they do not rely on parametric assumptions, they tend to be less sensitive and can (but don't always) underestimate the number of latent components. 
These nonparametric methods also lack the generality of the likelihood-based approaches. 
Hence, there is a need for robust model selection procedures that have the generality of likelihood-based methods like Akaike, Bayesian, and deviance information criteria \citep{Akaike:1974,Schwarz:1978,Spiegelhalter:2002} while
retaining the robustness to parametric assumptions provided by the nonparametric methods. 
Notably, information criteria remain widely used due to their simplicity and the ease with
which they can be incorporated into data analysis workflows -- for example, when a user wants to use an existing (perhaps specialized) method to estimate the parameters of each model $\model{K}$ ($K = 1, 2, \dots$).
We defer detailed discussions of related work to \cref{sec:related-work,sec:mixture-model-applications,sec:pmf-applications}. %

\paragraph{Summary of Contributions.}
In this paper, we make the following contributions to the development of more reliable
and robust methods for model selection for discovering real-world latent processes:
\begin{enumerate}
    \item We define a formal notion of \textit{robust model selection consistency}
    and argue that it captures key features any robust model selection method 
    should satisfy. 
	\item We propose the \emph{accumulated cutoff discrepancy criterion} (\methodname), as a simple, flexible approach to robust model selection. 
	\item We show how to apply \methodname to a broad class of models
	in which the observations are determined by combining unobserved outputs from $K$ latent processes.
	\item We prove that \methodname provides robust model selection 
    consistency for mixture models and probabilistic matrix factorization models. 
	\item We illustrate the advantages of \methodname through numerical experiments with 
	    simulated and real data, including demonstrating state-of-the-art performance for
        cell discovery using single-cell RNA sequencing data.\footnote{Code to reproduce all experiments is available on GitHub: \url{https://github.com/TARPS-group/robust-model-selection-for-discovery}.} %
\end{enumerate}

\section{Methodology}
\label{sec:methods}

We first define robust model selection consistency, which naturally 
leads to a plug-in procedure, the accumulated cutoff discrepancy criterion (\methodname). 
We then describe how to apply \methodname for a broad class of latent
variable models.
In this section, we use mixture modeling as a running example to illustrate ideas. 
Let $\mcF = \{ F_{\phi} \mid \phi \in \Phi \}$ denote the component distribution family
and let $\eta \in \Delta_{\numcomps}$ denote the component weights, 
where $\Delta_\numcomps = \{ \eta \in \reals_{+}^{\numcomps} \mid \sum_{k=1}^{\numcomps} \eta_{k} = 1 \}$ 
is the $(\numcomps-1)$-dimensional probability simplex. 
Denote the parameter set for the $\numcomps$-component mixture distributions by 
$\Theta^{(\numcomps)} = \Delta_{\numcomps} \times \Phi^{\numcomps}$, so the 
mixture model distribution family is 
\[
\model{K} = \textstyle \left\{ P_\theta = \sum_{k=1}^K \eta_k F_{\phi_k} : \theta = (\eta, \phi_1,\dots,\phi_K) \in \Theta^{(\numcomps)}\right\}. 
\]
We can also write the generative process of the mixture model in terms of
latent variables $z_{n} \in \{1,\dots,K\}$ that indicate which component observation $n$ belongs to:
\[
z_n \mid \theta &\distas \distCat(\eta) &
x_n \mid z_n = k, \theta &\distas F_{\phi_{k}}. 
\]
Since we are interested in isolating the contribution of each component, it is this latent variable
representation that will be most relevant. 
We discuss applications to other models in \cref{sec:framework,sec:pmf-applications}.

\subsection{Robust Model Selection Consistency} \label{sec:robust-consistency}

Generalizing the mixture model setting, consider a sequence of models $\model{1}, \model{2}, \dots, \model{K}, \dots$, where $K$ captures how many distinct latent components are generating the observed data.
Assume that $\model{K} = \{ P_{\theta} \mid \theta \in \paramSpace{K} \}$, where $\paramSpace{K}$ is the parameter space
and $P_{\theta} \in \mcP(\mathbb{X})$, the space of probability measures on the observation space $\mathbb{X}$. 
The objective is to identify the true number of processes $K_o$.
Fix a \emph{distribution-level discrepancy} $\distDisc$ on probability measures that will quantify the fit between $P_{\theta}$ and the data-generating distribution $P_{o}$.
We do not assume a unique minimizing parameter since, at the very least, 
the component indices in latent variable models are non-identifiable. 
Let $\Theta_{\star}^{(K)}(P_{o}) \defas \argmin_{\theta \in\paramSpace{K}} \distDisc(P_{o} \mid P_{\theta})$ 
denote the set of minimizing parameters.
Alternatively, a practitioner might choose a parameter estimation procedure that is not model-based, in which case it might converge to a parameter value in some other set, which we also denote by $\Theta_{\star}^{(K)}(P_{o})$.

The challenge in the misspecified setting is that for any $\theta_{\star}^{(K)} \in \Theta_{\star}^{(K)}(P_{o})$, typically $\distDisc(P_{o} \mid P_{\theta_{\star}^{(K)}})$ is not minimized at $K = K_{o}$.
In fact, in our settings of interest $\model{K} \subsetneq \model{K+1}$ but $P_{o} \notin \model{K}$ for any $K$, so
$\distDisc(P_{o} \mid P_{\theta_{\star}^{(K)}})$ is monotonically decreasing as $K$ increases. 
To correctly recover $K_{o}$, the user must specify how much $P_{\theta}$ can deviate from $P_{o}$ while remaining an acceptable approximation.
Therefore, we introduce a second discrepancy which measures how well the \emph{components} of $P_{o}$ and $P_{\theta}$ match.

Since the components of $P_{o}$ are unknown, the component contributions must be estimated based on model $P_{\theta}$ but 
using the distribution of data from $P_{o}$. 
Let the \emph{component-level realized discrepancy} $\compDisc^{(K)}(\theta, k, P_{o})$ quantify
how close the inferred component $k \in \{1,\dots,K\}$ from $P_{o}$ is to component $k$ of the model
$P_{\theta}^{(K)}$. 
To quantify the overall degree of component-level misspecification of $P_o$ with true number of components $K_o$,
define the \emph{worst-case component-wise discrepancy}
\[
 \rho(P_{o}, K_{o}) \defas  \sup_{\theta \in \Theta_{\star}^{(K_{o})}(P_{o})} \max_{k \in [K_{o}]} \compDisc^{(K_{o})}(\theta, k, P_{o}).
\]
For example, in the mixture model case we can construct a component-level realized discrepancy by inferring
the component of $P_o$ that would correspond to each mixture component.
That is, if we assign each observation from $P_o$ according to the conditional component probabilities 
$p(k \mid x, \theta) = \eta_{k} \frac{\dee F_{\phi_k}}{\dee P_{\theta}}(x)$, 
then the inferred $k$th component of $P_o$ is 
\begin{align}
	\widetilde{F}^{(\theta)}_{ok} = \frac{p(k \mid x, \theta)}{\int p(k \mid y, \theta) P_o(\dee y)} P_o. 
\end{align}
Given a choice of discrepancy measure $\mcD$ (e.g., $\distDisc$), we can set 
$
\compDisc^{(K)}(\theta, k, P_{o}) = \mcD(\widetilde{F}^{(\theta)}_{ok} \mid F_{\phi_k}).
$

\begin{definition}[Robust model selection consistency] \label{def:robust-consistency}
	Fix a function $\kappa : \reals_{+} \times \nats \to \reals_{+}$. 
	A model selection procedure $\widehat K(\data{1:N}, \rho) \in \nats$ is \emph{$\kappa$-robustly consistent for $\Theta_{\star}$ and $\compDisc$} if,
	for any data-generating distribution $P_o$ and true component number $K_o$ that satisfies 
	\[ \label{eq:mismatch-condition}
		\inf_{\theta \in \Theta_{\star}^{(K)}(P_{o})} \distDisc(P_{o} \mid P_{\theta}) \ge \kappa(\rho(P_{o}, K_{o}), K)
		\quad\text{for all $K \in \{1,\dots,K_{o}-1\}$},
	\]
	it holds that, for $\data{1}, \data{2}, \dots \distiid P_{o}$, 
	\[
		\Pr\big\{\widehat K(\data{1:N}, \rho(P_{o}, K_{o})) = K_o \big\} \overset{N \to \infty}{\longrightarrow} 1.
	\]
\end{definition}
To interpret \cref{def:robust-consistency} and the role of the function $\kappa$, 
it is helpful to compare robust model selection consistency to classical model selection consistency.
\Cref{fig:consistency-illustration} provides a cartoon illustration of the differences. 
Classical model consistency typically requires that  
(1) $P_o \in \model{K_o}$ and (2) $P_{o} \notin \model{K}$ for $K < K_o$. 
Robust consistency weakens the first condition by allowing for a worst-case discrepancy $\rho(P_{o}, K_{o}) \ne 0$,
rather than needing $\rho(P_{o}, K_{o}) = 0$.
However, robust consistency strengthens the second by requiring a gap between $P_o$ 
and all models for $K < K_o$.
The size of this gap specified in \cref{eq:mismatch-condition} in terms of $\kappa$. 
Hence, we call $\kappa$ the \emph{gap function}. 
It would be natural to ask that $\kappa(\rho, K) = \rho$, although this may not always be possible. 

\begin{figure}[tp]
	\centering
	\includegraphics[width=.7\textwidth,trim=0in 6.2in 6.2in 0in,clip]{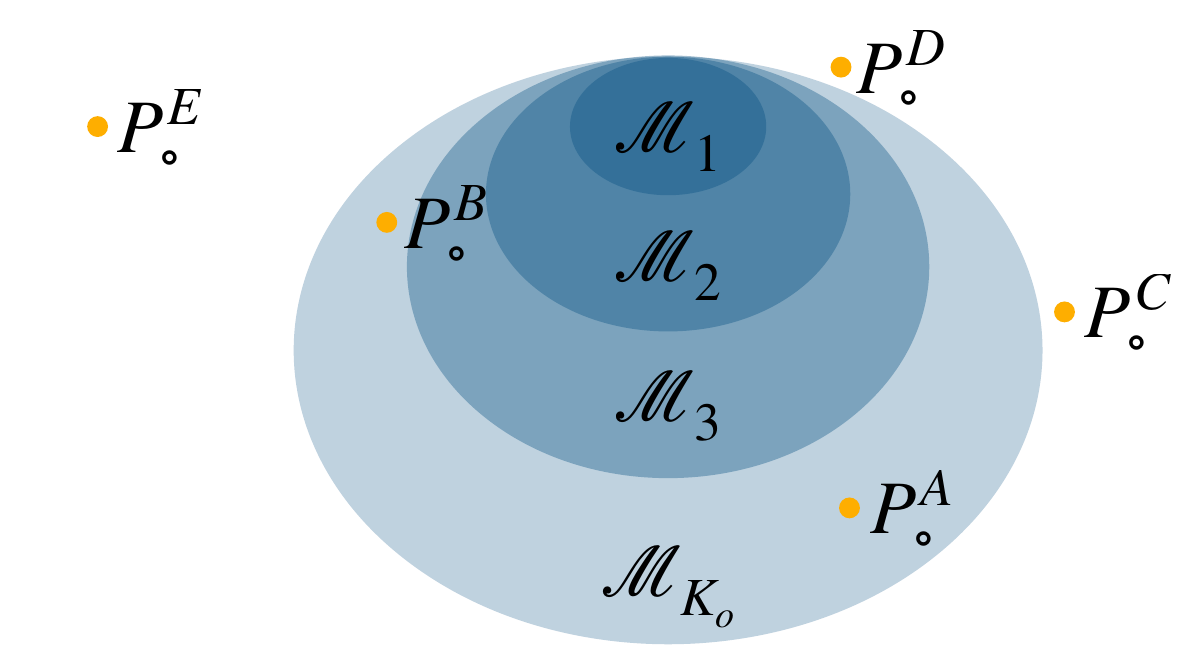}
	\caption{Cartoon illustration of the differences between traditional and robust model selection consistency
		where the true model corresponds to $K_{o} = 4$, with the nested models indicated by the gray ovals.
		We contrast five possible data-generating distributions, $P_{o}^{A}, \dots, P_{o}^{E}$, indicated by
		the gold points.
		Since $P_{o}^{A}, P_{o}^{B} \in \model{K_{o}}$ but are not in $\model{K}$ for $K < K_o = 4$,
		for these models $K_{o}$ can be consistently estimated.
		However,  since $P_{o}^{C}, P_{o}^{D}, P_{o}^{E} \notin \model{K_{o}}$, for these distributions
		$K_{o}$ cannot be estimated consistently in the traditional sense.
		On the other hand $P_{o}^{A}, P_{o}^{B}, P_{o}^{C}$, and $P_{o}^{D}$
		are close to $\model{K_{o}}$,
		so $K_{o}$ could potentially be robustly and consistently estimated in these four cases.
		However, $P_{o}^{B}$ and $P_{o}^{D}$ are also close to $\model{3}$, so robustly consistent
		estimation of $K_{o}$ is feasible only for $P_{o}^{A}$ and $P_{o}^{C}$.
		Since $P_{o}^{E}$ is far from $\model{K_{o}}$, $K_{o}$ would not be consistently estimable --
		either in the traditional or robust sense.
	}
	\label{fig:consistency-illustration}
\end{figure}

\subsection{A Plug-in Procedure}
\label{sec:method}

Inspired by \cref{def:robust-consistency}, we propose a simple plug-in procedure for model selection,.
Assuming $\rho_o = \rho(P_o, K_o)$ were known, we would want to find the smallest value of $K$ such that 
for all $k \in \{1,\dots,K\}$, we have $\compDisc^{(K)}(\theta_\star, k, P_{o}) \le \rho_o$ for $\theta_\star \in \Theta_{\star}^{(K_{o})}(P_{o})$.
However, since $P_{o}$ and $\Theta_{\star}^{(K_{o})}(P_{o})$ are unavailable, we propose to instead 
use the empirical distribution $\widehat{P}_o = \numobs^{-1} \sum_{n=1}^N \delta_{x_n}$ (here $\delta_x$ denotes
the Dirac measure at $x$) and a point estimator $\widehat\theta^{(K)}$.
Hence, we obtain the plug-in estimator $\compDiscEst^{(K,k)} = \compDisc^{(K)}(\widehat\theta^{(K)}, k, \widehat{P}_o)$.
Since values of $\compDiscEst^{(K,k)} < \rho_o$ are not important from a model selection perspective, 
we truncate the estimator by replacing it with $\max(0, \compDiscEst^{(K,k)} - \rho)$, where $\rho$ is 
a best estimate of $\rho_o$. 
We can view taking this maximum as serving a similar role to how the coarsened posterior conditions on the discrepancy having a known upper bound. 

Rather than taking the maximum over the component-wise discrepancy estimates, for better robustness
to noisy estimates we use a sum, which results in a robust model selection loss 
\[ \label{eq:robust-loss}
	\mathcal{R}^\rho(x_{1:n}, K) = \sum_{k=1}^K \max(0, \compDiscEst^{(K,k)} - \rho),
\]
where for notational simplicity we have left the dependence of $\compDiscEst^{(K,k)}$ on $x_{1:n}$ 
(as a function of $\widehat{P}_o$ and $\widehat\theta^{(K)}$) implicit. 
Since the loss is the sum (i.e., accumulation) of discrepancies that have been truncated (i.e., cut off) at $\rho$,
we call \cref{eq:robust-loss} the \emph{accumulated cutoff discrepancy criterion} (\methodname).
The corresponding robust model estimator is
\[
\widehat{K}^\rho(x_{1:n}) = \min\{\argmin_K \mathcal{R}^\rho(x_{1:n}, K)\}.
\]
Since $\argmin_{\numcomps}$ may return a set of values if the loss is equal to
zero for more than one value of $\numcomps$, it is necessary to include an
additional $\min$ operation to select the smallest value from the set.
We provide a number of methods for determining $\rho$ in \cref{sec:choosing-rho}. 

\subsection{Modeling Framework} \label{sec:framework}

\begin{figure}[tp]
	\centering
	\includegraphics[width=.8\textwidth,page=1]{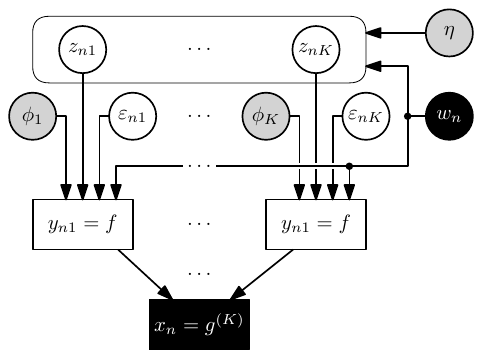}
	\caption{Graphical representation of the general form for model $\model{K}$ for a single observation $\data{n}$.
		Circles denote random variables while squares denote deterministic variables.
		A gray background indicates global parameters while a black background indicates an observed quantity.}
	\label{fig:model}
\end{figure}

Guaranteeing robust model consistency requires specific choices of model and component-level discrepancy.
In this paper, we consider a flexible modeling framework in which the observed data are the result of $K$ distinct latent sources (\cref{fig:model}). 
This framework will lead to a natural choice of component-level discrepancy, of which the mixture model
discrepancy proposed \cref{sec:robust-consistency} is a particular case. 

We allow each observation $x_n \in \mathcal{X}$ to have sample-specific covariates $w_n \in \mcW$.
Assume that $x_n$ depends on process-level contributions $y_{n1},\dots,y_{nK} \in \mathcal{Y}$ via the deterministic function $g^{(K)} \colon \mathcal{Y}^{\otimes K} \to \mathcal{X}$:
\[
	x_n = g^{(K)}(y_{n1}, \dots, y_{nK}).
\]
For example, we could have $g^{(K)}(y_1,\dots,y_K) = \sum_{k=1}^K y_k$ or $g^{(K)}(y_1,\dots,y_K) = \max_{k} y_{k}$.
The process-level contributions are in turn
determined by an observation-specific latent variable $z_n  = (z_{n1},\dots,z_{nK}) \in \mcZ^{(K)} \subseteq \mcZ^{\otimes K}$ and the process-specific parameters $\phi_1,\dots,\phi_\numcomps \in \Phi$.
We assume both $\mathcal{Y}$ and $\mcZ$ contain a \emph{null value} $\emptyset$, which indicates no contribution.
Specifically, we assume $g^{(K)}$ has the following \emph{no contribution property}: for all $y_1,\dots,y_{K-1} \in \mathcal{Y}$,
\[
	g^{(K)}(y_{1}, \dots, y_{K-1}, \emptyset) = g^{(K-1)}(y_{1}, \dots, y_{K-1}).
\]
Given a deterministic function
$f \colon \mcW \times \mcZ \times \Phi \times \reals \to \mathcal{Y}$
and independent noise random variables
$
	\varepsilon_{nk} \distiid G
$,
the component-level contributions are given by
\[
	y_{nk} = \begin{cases}
		\emptyset,                                  & \text{if $z_{nk} = \emptyset$,} \\
		f(w_{n}, z_{nk}, \phi_k, \varepsilon_{nk}), & \text{otherwise}.
	\end{cases}
\]
If there are no covariates, we drop the dependence on $w_n$ and write $f(z_{nk}, \phi_k, \varepsilon_{nk})$ instead.
Typically $\mcZ \subseteq \reals$ and $z_{nk}$ represents the activity level of the $k$th process for the $n$th observation.
In such cases, usually $\emptyset = 0$.
For a global parameter $\eta \in \mathcal{E}^{(K)}$, we assume the observation-specific latent variables are independent but may depend on the sample-specific covariates:
\[
	z_n \mid \eta, w_n \distind H^{(K)}_{\eta,w_n}.
\]
If there are no covariates, we drop the dependence on $w_{n}$ and write $H^{(K)}_{\eta}$ instead.

Our framework captures many common model types: 

\begin{example}[Mixture Modeling]\label{ex:mix-model}
	We can recover a general mixture model by taking $H_\eta^{(\numcomps)} = \distCat(\eta)$, 
	so $z_{nk} \in \{0,1\}$ and $\sum_{k=1}^K z_{nk} = 1$.
	Given a mixture component distribution family $\mcF = \{ F_{\phi} \mid \phi \in \Phi \}$,
	define $f$ such that, for  $\varepsilon \distas G$, it holds that
    $f(0, \phi, \varepsilon) = 0$ and $f(1, \phi, \varepsilon) \distas F_\phi$. 
	Finally, take $g^{(K)}$ to be the summation operator.
	Hence, if $z_{nk} = 1$, then $x_n = y_{nk} \distas F_{\phi_k}$.
\end{example}
\begin{example}[Mixture Model with Varying Component Probabilities] \label{ex:mix-model-varying}
	When covariates are available for each observation, the mixture model can be generalized to
	allow the mixture probabilities to depend on observed covariates \citep{Jaspers_BayesianEstimationMixtureCovariate_2018,Huang_MixtureRegressionModels_2012}.
	We can recover this model by using the same setup as \cref{ex:mix-model}
	but instead letting $H_{\eta,w_n}^{(\numcomps)} = \distCat(h(\eta, w_{n}))$ for some fixed function
	$h \colon \mcE^{(K)} \times \mcW \to \Delta_{\numcomps}$.
\end{example}
\begin{example}[Probabilistic Matrix Factorization]\label{ex:pmf-formulations}
	For probabilistic matrix factorization (PMF), $x_n \in \reals^{D}$.
	Let $\mcZ \subseteq \reals$ and $\Phi \subseteq \reals^{D}$.
	We assume that $\mcF = \{ F_{\mu} \mid \mu \in \reals^{D} \}$ is a location family of distributions satisfying $\int x F_\mu(\dee x) = \mu$ for all $\mu \in \reals^{D}$.
	Let $f$ satisfy $f(z, \phi, \varepsilon) \distas F_{z \phi}$ if  $\varepsilon \distas G$.
	For example, in nonnegative matrix factorization, $F_{\mu} = \Poiss(\mu)$ while in classical factor analysis $F_{\mu} = \Norm(\mu, \sigma^{2})$,
	where $\sigma^{2}$ can also be learned.
	Finally, take $g^{(K)}$ to be the summation operator.
\end{example}
Our framework is similarly applicable to supervised variants of probabilistic matrix factorization models, functional clustering problems, and a variety
of other latent variable models \citep{Carvalho:2008,Chiou:2007,Cunningham:2014,West:2003,Blei:2007,Dunson:2000}.

\subsection{Constructing the Component-level Discrepancy}

To define the component-level discrepancy, it is tempting to directly apply the approach we took for mixture models
and quantify the difference between the distributions of $y_{n1},\dots,y_{nK}$ when $x_n \distas P_o$ and the modeled distributions of $y_{n1},\dots,y_{nK}$.
However, the distributions of $y_{1k}, y_{2k}, \dots, y_{nk}, \dots$ may be different due to the sample-specific dependence on $w_n$ and $z_{nk}$.
To address this issue, we instead consider the discrepancy between the conditional distribution of the noise variables $\varepsilon_{nk}$ given $x_{1:N}$ and the assumed noise distribution $G$.
However, we must exclude $\varepsilon_{nk}$ if $y^{(K)}_{nk} = \emptyset$ because then it is no longer part of the graphical model (see \cref{fig:model}).
Dropping the dependence on $n$, the inferred distribution of the $k$th noise variable is
\[
\widetilde{G}_k^{(\theta)} = \int \mcL(\veps_{k} \mid y_k \ne \emptyset, x, w, \theta) P_o(\d x, \d w),
\]
where $\mcL(\veps_{k} \mid \cdots)$ denotes a conditional law of $\varepsilon_k$. 
Hence, define the discrepancy for the $k$th component by $\compDisc^{(K)}(\theta, k, P_o) = \discr{\widetilde G_k^{(\theta)}}{G}$.

To define an empirical version of $\compDisc^{(K)}(\theta, k, P_o)$, let
$\widehat G_{nk}^{(K)} = \mcL(\varepsilon_{nk} \mid x_n, w_n, \widehat\theta^{(K)})$, 
define the usage indicators $u_{nk}^{(K)} = \ind(y_{nk}^{(K)} \ne \emptyset)$, and denote the number of observations for component $k$ as $N_k^{(K)} = \sum_{n=1}^N u_{nk}^{(K)}$.
Then the empirical distribution of the noise variables for component $k$ is
\[
	\widehat G_k^{(K)} = \frac{1}{N_k^{(K)}} \sum_{n=1}^N u_{nk}^{(K)} \widehat G_{nk}^{(K)}.
\]
Hence, an estimate of discrepancy for the $k$th component is given by 
\[ \label{eq:general-Dcomp}
\compDiscEst^{(K,k)} = \discr{\widehat G_k^{(K)}}{G}.
\]

In some scenarios, it is necessary to replace the divergence with an estimator because 
$\discr{\widehat G_k^{(K)}}{G}$ is undefined (e.g., the KL divergence) or not efficiently computable (e.g., the Wasserstein distance).
In \cref{sec:kl-estimation}, we discuss how best to estimate the KL divergence in practice and provide supporting
consistency theory by modifying the two-sample KL divergence estimation theory developed in \citet{Wang:2009} to the one-sample estimation setting applicable to \methodname. 
In \cref{sec:sinkhorn}, we discuss using the entropy-regularized Wasserstein distance (the Sinkhorn distance) 
as a stable, efficiently computable alternative to the Wasserstein distance that scales to high dimensions.

\subsection{Choosing $\rho$} \label{sec:choosing-rho}

The value of $\rho$ is problem dependent, as it quantifies the maximum amount of model misspecification of each process.
We propose two complementary approaches to selecting $\rho$ that
take advantage of the fact that the robust loss is a piecewise linear function of $\rho$.
Therefore, given a fitted model for each candidate $\numcomps$, we can easily compute the loss for all values of $\rho$.

\paragraph{Using domain knowledge.}
The first approach aims to leverage domain knowledge. %
Specifically, it is frequently the case that some related datasets are available with ``ground-truth'' labels either
through manual labeling or via \emph{in silico} mixing of data where group labels are directly observed \citep[see, e.g.,][]{Souto:2008}.
In such cases, an empirically optimal $\rho$ value %
for one or more such datasets with ground-truth labels
can be determined by maximizing a problem-appropriate accuracy metric such as F-measure.
Because $\rho$ quantifies the divergence between the true process distributions and the model estimates,
we expect the values found using this approach will generalize to new datasets that are reasonably similar.
We illustrate this approach in \cref{sec:flow-cytometry}.
Alternatively, if real data with ground truth $K_o$ is unavailable, plausible simulated data could 
be used to calibrate $\rho$ instead \citep{Xue:2024}.

\paragraph{A generally applicable approach.}
For applications where there are no related datasets with ground-truth labels available, we
propose a second approach.  %
After estimating the model parameters for each fixed $\numcomps$
and computing all process-wise divergences, we plot the loss as a function of $\rho$ for each $\numcomps \in \{\numcomps_{\min},\dots,\numcomps_{\max}\}$.
For readability, introduce a small positive constant $\lambda \ll 1$ and plot $\mathcal{R}^{\rho}(\data{1:\numobs}, K) + \lambda \numcomps$
so it is possible to distinguish the lines when the loss is exactly zero.
The optimal model is determined by identifying the number of processes which is best over a substantial range of $\rho$ values,
with $\rho$ as small as possible.
The idea behind this selection rule is to identify the first instance of stability, indicating that allowing just a small amount of additional misspecification (by increasing $\rho$) doesn't notably improve the loss.
However, subsequent stable regions that appear afterward might introduce too much tolerance, potentially resulting in model underfitting.
This approach is similar in spirit to the one introduced for heuristically selecting the $\alpha$ parameter for the coarsened posterior \citep{Miller:2019}.

\paragraph{Automation.}
Building on the same heuristic intuition, we can automate the selection of $\rho$ by defining a minimum width $\Delta_{\min}$ for which an interval can be recognized as the stability region. Specifically,
we keep track of the smallest $\rho$ value, $\rho_\mathrm{start}$, for which a $K$-specific penalized loss becomes minimal among all other $K$-specific losses.
We then identify the next $\rho$ value, $\rho_\mathrm{end}$, at which this same loss curve is no longer the minimum.
The difference between $\Delta = \rho_\mathrm{end} - \rho_\mathrm{start}$ defines an interval of stability.
If $\Delta \ge \Delta_{\min}$ (i.e., $\Delta$ has the predefined minimum width), it is recognized as a stability interval, the corresponding value of $K$ is chosen to be $\widehat K$. Otherwise, $\rho_\mathrm{end}$ becomes $\rho_\mathrm{start}$
and repeat the procedure to compute the new $\rho_\mathrm{end}$ and $\Delta$, check if $\Delta \ge \Delta_{\min}$, and so forth.
The value of $\Delta_{\min}$ should be set based on preliminary manual experiments to estimate a suitable stability region width for automated selection in larger batches. This approach allows users to adjust the interval threshold to balance the tradeoff between avoiding underfitting ($\Delta_{\min}$ sufficiently small) and ensuring appropriately conservative and stable model selection ($\Delta_{\min}$ sufficiently large).
We demonstrate the utility of this approach in \cref{sec:scRNA}. %

\subsection{Related Work} \label{sec:related-work}

There is limited work on general-purpose approaches to robust model selection with the goal
of ensuring interpretability.
Recent work on robustifying likelihoods to small model perturbations offer one promising 
strategy \citep{Chakraborty:2023,Dewaskar:2023,Wu:2024}.
However, these methods aim to replace existing parameter estimation methods rather than augment them
-- which is a key goal of the present work.
Perhaps most closely related to our work is \citet{Miller:2019}, which proposes an elegant robust Bayesian model selection
procedure that employs a technique they call \emph{coarsening}.
Unlike the standard posterior, which assumes the data were generated from the assumed model (that is, it conditions on $\data{} = \data{\mathrm{obs}}$), the \emph{coarsened posterior} conditions on the ``true model data'' being close to the observed data
(that is, it conditions on the estimated Kullback--Leibler divergence between $\data{}$ and $\data{\mathrm{obs}}$ being less than some threshold $\gamma$) -- hence, sacrificing predictive power for greater robustness.
However, using the coarsened posterior approach has a potentially high computational cost because
it requires running Markov chain Monte Carlo dozens of times to heuristically determine a suitable robustness threshold \citep{Miller:2019,Xue:2024}.
In addition, our experiments show that, while coarsening offers good robustness in many scenarios, it does not have have any formal correctness guarantees
and can fail in simple situations (see \cref{sec:coarsening-limitations}).

\section{Application to Mixture Models}
\label{sec:mixture-model-applications}

In the next two sections, we illustrate the use of \methodname in some representative applications while 
also providing theoretical support for our approach by showing that \methodname is robustly consistent.  
For readability, theorems are stated informally in the main text.
Formal statements of assumptions and results are deferred to \cref{sec:theory}. 

For all experiments, unless stated otherwise we use the KL divergence as the discrepancy measure 
$\mcD$ for the calculation of the estimated component-level discrepancy 
$\compDiscEst^{(K,k)}$ defined in \cref{eq:general-Dcomp}.
We compare to BIC since, like \methodname, it only requires a point estimate for each value of $K$.
Alternative criteria like AIC and DIC would give similar results.
However, since BIC has a larger penalty than AIC (which DIC generalizes), it will be more conservative and hence tend to choose smaller values of $K$.
See, for example, \citet{Miller:2019,Xue:2024,Cai:2021} for numerical examples showing that Bayesian 
model selection does not resolve the overfitting problem.

In this section, we consider mixture modeling, for which \methodname is robustly consistent:
\begin{theorem} \label{thm:mixture-model-robust-consistency}
\methodname using $\compDiscEst^{(K,k)}$ defined in \cref{eq:general-Dcomp} is $\kappa$-robustly consistent for mixture models if the underlying component discrepancy $\mcD$ is the KL divergence, Wasserstein distance, or maximum mean discrepancy (MMD). 
For the KL divergence, $\kappa(\rho, K) = \sqrt{\rho}$ and $\distDisc = d_\mathrm{BL}$, the bounded Lipschitz distance. 
For Wasserstein distance and MMD, $\kappa(\rho, K) = \rho$ and $\distDisc = \mcD$. 
\end{theorem}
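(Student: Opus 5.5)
The plan is to prove the two halves of robust consistency separately: that $\widehat K^\rho$ asymptotically does not select any $K > K_o$, and that it asymptotically does not select any $K < K_o$. Throughout, $\rho = \rho_o := \rho(P_o, K_o)$. We also work under the regularity conditions deferred to \cref{sec:theory}:
\begin{enumerate}[label=(\roman*)]
\item the point estimators $\widehat\theta^{(K)}$ converge in probability to the set $\Theta_\star^{(K)}(P_o)$;
\item the maps $\theta \mapsto \compDisc^{(K)}(\theta,k,P)$ are continuous;
\item the plug-in discrepancies are consistent, $\compDiscEst^{(K,k)} - \compDisc^{(K)}(\widehat\theta^{(K)},k,P_o) \to 0$ in probability, uniformly over a neighbourhood of $\Theta_\star^{(K)}$.
\end{enumerate}
For the Wasserstein distance and MMD, (iii) follows from standard empirical-measure convergence, since $\widehat G_k^{(K)}$ is a weighted empirical measure with weights $p(k\mid x_n,\widehat\theta)$. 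For the KL divergence it follows from the one-sample nearest-neighbour estimation theory of \cref{sec:kl-estimation}. Given (i)--(iii), $\compDiscEst^{(K,k)}$ is asymptotically at most $\sup_{\theta\in\Theta_\star^{(K)}}\compDisc^{(K)}(\theta,k,P_o)$ plus $o_P(1)$.

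\emph{Case $K = K_o$.} By the definition of $\rho_o$, every limit point of $\widehat\theta^{(K_o)}$ has all component discrepancies at most $\rho_o$. Hence each term $\max(0,\compDiscEst^{(K_o,k)}-\rho_o)$ is $o_P(1)$. A technicality arises here: the criterion needs $\mathcal{R}=0$ exactly for $K_o$ to be selected. I would handle it either by allowing a vanishing slack $\rho_N \downarrow \rho_o$, as the formal statement presumably does, or by showing the limits are strict. I would use the slack version.

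\emph{Case $K < K_o$ (the key step).} We must show that some component has discrepancy bounded away above $\rho_o$. The idea is to convert small component-level discrepancies into small distribution-level discrepancy. For any $\theta$, the mixture identity gives
\[
P_o = \sum_k \widetilde\eta_k \widetilde F_{ok}^{(\theta)}, \qquad \widetilde\eta_k = \int p(k\mid x,\theta)\,P_o(\dee x),
\]
since the responsibilities sum to one. For the mixture models considered, the inferred noise law $\widetilde G_k$ and the inferred component $\widetilde F_{ok}$ are related by the fixed map $f$. So for the Wasserstein distance and MMD, joint convexity gives
\[
\distDisc\Big(P_o \,\Big|\, \sum_k \widetilde\eta_k F_{\phi_k}\Big) \le \max_k \mcD\big(\widetilde F_{ok} \mid F_{\phi_k}\big).
\]
This holds after identifying component and noise discrepancies, and we need to assume $f$ preserves the metric, e.g.\ location-scale families with controlled Lipschitz constants. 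For the KL divergence, Pinsker's inequality turns KL into total variation, and $d_{\mathrm{BL}} \le$ TV. This yields $d_{\mathrm{BL}}(P_o, \cdot) \lesssim \sqrt{\max_k \mathrm{KL}}$, which explains the gap function $\kappa(\rho,K)=\sqrt\rho$. I would also need $\widetilde\eta$ to equal $\eta$ at $\theta_\star$, or a bound on the weight mismatch. This holds from the first-order conditions of likelihood minimization for KL, and otherwise it is added to the bound.

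Now suppose all $\compDiscEst^{(K,k)} \le \rho_o$ with non-vanishing probability. Passing to limits, some $\theta\in\Theta_\star^{(K)}$ would have
\[
\distDisc(P_o\mid P_{\theta'}) \le \kappa(\rho_o,K)
\]
for a $K$-component $\theta'$. Making the inequality in \cref{eq:mismatch-condition} strict, or using an $\epsilon$ margin, gives a contradiction. Therefore $\mathcal{R}^{\rho}(x_{1:N},K)$ is bounded below by a positive constant with probability tending to one, for each of the finitely many $K<K_o$. Combining this with the $K=K_o$ case and the $\min\argmin$ rule, $\widehat K = K_o$ with probability tending to one. Larger $K$ are never selected over $K_o$ once $\mathcal{R}(K_o)=0$.

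The main obstacle is this $K<K_o$ step. It requires transferring noise-level discrepancies back to component-level distributions, with tight constants, and controlling the mismatch between the inferred weights $\widetilde\eta$ and $\eta$. For KL it additionally requires the uniform consistency of the nearest-neighbour estimator at estimated parameters.
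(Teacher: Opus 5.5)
Your two-part structure is the paper's proof of \cref{thm:main}, specialised through \cref{coro:KL} and its MMD and Wasserstein analogues. First, the $K_o$ loss vanishes with probability tending to one. Second, for $K<K_o$ one argues by contradiction using joint convexity of $\distDisc$ and the bound $\distDisc\le\widetilde\kappa(\mcD)$. The paper does the $K<K_o$ step through the empirical measure: it writes $\widehat P_o=\sum_k\widehat\eta_k\widehat F_k$, applies the triangle inequality, and splits off a weight-mismatch term controlled by $\widehat\eta\to\eta_\star$. Your population identity $P_o=\sum_k\widetilde\eta_k\widetilde F_{ok}$ is cleaner and avoids applying $\mcD$ (e.g.\ KL) to empirical measures.

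The caveats you raise are steps the paper passes over:
\begin{itemize}
\item The paper uses $\int q_\star(k\mid x)P_o(\dee x)=\eta_{\star k}$ without comment. This holds at EM/MLE fixed points. Under the literal definition $\Theta_\star^{(K)}=\argmin\distDisc$ it is not even needed, because \cref{eq:mismatch-condition} then lower-bounds $\distDisc(P_o\mid P_{\theta'})$ for every $\theta'\in\Theta^{(K)}$.
\item The paper declares a contradiction from $\le\kappa(\rho)+o_P(1)$ versus $\ge\kappa(\rho)$ without the strictness you correctly ask for.
\item The paper works with the component-level discrepancy $\mcD(\widetilde F_{ok}\mid F_{\phi_k})$ of \cref{sec:robust-consistency}, so it imposes nothing on $f$. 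You are right that this coincides with the noise-level version \cref{eq:general-Dcomp} for KL (bijective $f$) but not for Wasserstein or MMD. The paper does not address this.
\end{itemize}

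Two points need sharpening.

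First, at $K=K_o$ your alternative of ``showing the limits are strict'' is not available. $\rho_o$ is a maximum over components, so (e.g.\ when $\Theta_\star^{(K_o)}$ is a single point up to relabelling) some component has limiting discrepancy exactly $\rho_o$. Its estimate can land on either side of $\rho_o$. Then $\mathcal{R}^{\rho_o}(K_o)>0$ with non-vanishing probability, while larger $K$ may have zero loss. The paper covers this only with the unjustified line ``there exists $\varepsilon>0$ such that $\widehat{\mathcal{D}}<\rho-\varepsilon+o_P(1)$'', and the formal theorem contains no slack. So your slack $\rho_N\downarrow\rho_o$ (or an explicit strict-inequality hypothesis) is genuinely required. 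What you then prove is consistency of that slightly modified procedure, not the literal statement.

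Second, your ``$\lesssim$'' hides a wrong constant. Consistently normalised, $\blmetric(P,Q)\le\|P-Q\|_1\le\sqrt{2\,\kl{P}{Q}}$, and the $\sqrt2$ is sharp. To see this, take $P,Q$ supported on $\{-M,M\}$ with masses $(\tfrac12+t,\tfrac12-t)$ and $(\tfrac12,\tfrac12)$, and let $M\to\infty$, $t\to0$. Then $\blmetric\to 2t$ while $\kl{P}{Q}\approx 2t^2$. Hence the argument supports $\kappa(\rho,K)=\sqrt{2\rho}$, not the stated $\sqrt{\rho}$. This applies to the paper's argument as well: its \cref{coro:KL} reaches $(\rho/2)^{1/2}$ by combining $\blmetric\le d_{\mathrm{TV}}$ with Pinsker under incompatible normalisations of total variation.
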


In addition to BIC, we compare \methodname to three nonparametric, mixture model-specific selection methods that rely on within-cluster dispersion. 
For $K$ clusters, the \emph{total within-cluster sum of squares} $\mathrm{WCSS}(K)$ is defined as
\[
    \mathrm{WCSS}(K) = \sum_{k=1}^{K} \sum_{x \in C_k} \| x - \mu_k \|^2,
\]
where $C_k$ is the set of observations assigned to cluster $k$ 
and $\mu_k$ is its centroid. 
The first approach is the \emph{elbow method}, which identifies the point beyond which additional clusters no longer show significant improvement in dispersion reduction \citep{Thorndike_1953}.
The \emph{silhouette coefficient}, on the other hand, assesses how well each point fits within its assigned cluster relative to its distance to the nearest neighboring cluster \citep{siluet_coef}. It favors clusters with high cohesion where points are close to others in the same cluster, and high separation where clusters are well isolated from one another. 
Finally, the \emph{gap statistic} compares the observed clustering dispersion in the data to what is expected under a null reference where data comes from uniform distribution \citep{gap_stats}. The selection is based on the number of clusters that maximizes the deviation from its baseline. 
For all mixture model experiments, we use the EM algorithm to obtain parameter estimates.
Unless stated otherwise, we use the bias-corrected KL estimator for \methodname (see \cref{sec:kl-estimation} for details).

Given estimates $\hat{K}_1,\dots,\hat{K}_T$ for $T$ datasets with 
true $K$ values of, respectively, $K_1, \dots, K_T$, we measure performance in terms of the mean absolute deviation $\operatorname{MAE} = T^{-1} \sum_{t=1}^T |\hat{K}_t - K_t|$, the mean 0-1 loss $\operatorname{0-1} = T^{-1} \sum_{t=1}^T \ind(\hat{K}_t \ne K_t)$, and the median of the signed deviations $\{ \hat{K}_t - K_t \}_{t=1}^T$.

\subsection{Simulation Study}
\label{sec:high-dim-simulation}

We first investigate the accuracy of \methodname under a variety of 
conditions on the level of misspecification, the relative sizes of the mixture components, 
and the data dimension.
We generate data $x_1, \ldots,x_{\numobs} \in \reals^{D}$ from the mixture distribution
$P_{o} = \sum_{k=1}^{\numcomps_{o}}\eta_{ok}\distSNorm(m_{ok}, \Sigma_{ok}, \gamma_{ok})$,
where $\distSNorm(m, \Sigma, \gamma)$ denotes a skew normal distribution with location vector $m \in \reals^{D}$, scale matrix $\Sigma \in \reals^{D \times D}$,
and shape $\gamma \in \reals^{D}$, which controls the skewness.
When $D > 1$, we introduce weak correlations by letting $\Sigma_{ok} = \Sigma$,
where $\Sigma_{ij}=\exp\{-(i-j)^2/\sigma^2\}$ and $\sigma$ controls the strength of the correlation.
See \cref{sec:simulation-gauss} for full experimental details. 

\paragraph{High-dimensional data.} 
We suggest using the KL divergence as a default discrepancy choice.
But, as discussed in \cref{sec:kl-estimation}, the $k$-nearest-neighbor estimator becomes less accurate with increasing data dimension.
While a general solution is unlikely to exist, we illustrate one approach to address this challenge.
Specifically, if we believe the coordinates are likely to be only weakly correlated, we can employ the $k$-nearest-neighbor
method on each coordinate by assuming the coordinates are independent.
We set $D = 25$, $N=10\,000$, $K_{o}=3$, and $\sigma=0.6$.
As shown in \cref{fig:high-dim}(left), the wide and stable region corresponds to the true correct number of components.
\cref{fig:high-dim}(right) illustrates the value of model-based clustering, particularly in high dimensions: 2-dimensional
projections of the data give the appearance of there being four clusters in total, when in fact there are only three.
\begin{figure}[tp]
	\centering
	\subfloat{\label{fig:high-dim-stare-loss}\includegraphics[width=.48\textwidth]{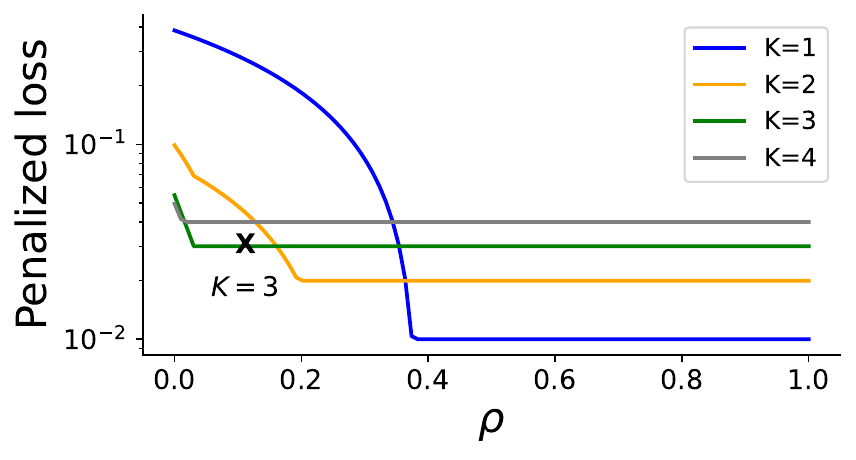}}
	\subfloat{\label{fig:high-dim-true}\includegraphics[width=.48\textwidth]{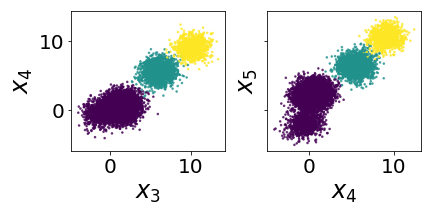}}

	\caption{Application of \methodname clustering simulated high dimensional data (\cref{sec:high-dim-simulation}).
		\textbf{Left:} The penalized loss plot for $\numcomps \in \{1,2,3,4\}$.
		Since the loss for $K=3$ is very close to equal to or much less than the loss for $K \ne 3$, it shows the greatest
		stability, resulting in $\widehat K = 3$ (indicated by the ``$X$'').
		\textbf{Right:} Selected two-dimensional projections of the data.}
	\label{fig:high-dim}
\end{figure}

\begin{figure}[tp]
	\centering
    \subfloat[low-dimensional]{\label{fig:sim-low-dim}\includegraphics[width=0.5\textwidth]{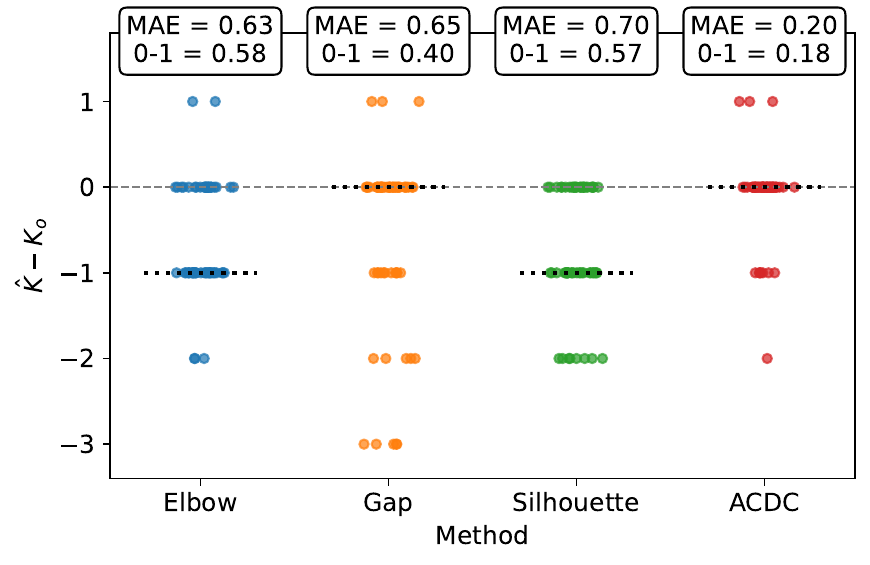}}
    \subfloat[high-dimensional]{\label{fig:sim-high-dim}\includegraphics[width=0.5\textwidth]{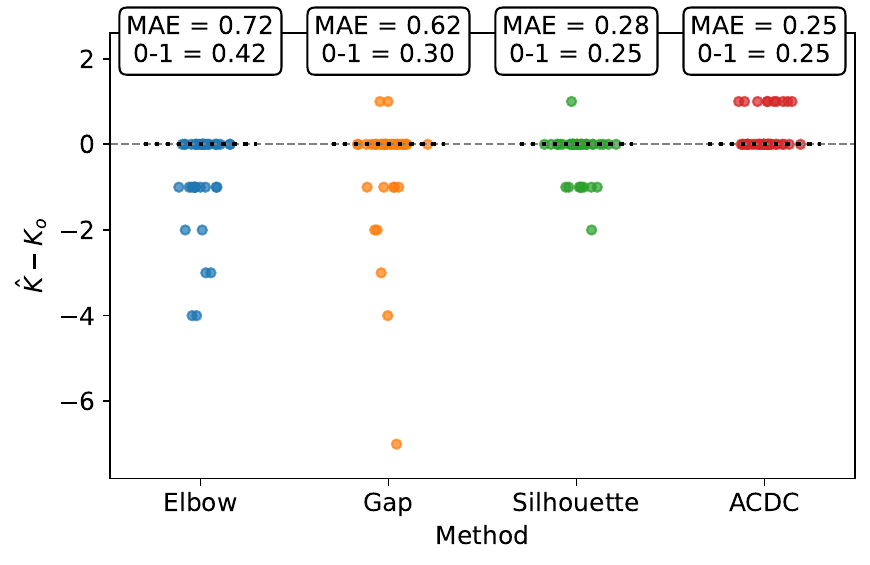}}
	\caption{Comparison of ACDC with common model selection criteria (Elbow, Gap, and Silhouette) in low-and high-dimensional settings. Each point shows the deviation between estimated and true cluster counts ($\hat{K}-K_o$) of one dataset. Dotted black lines indicate the median error for each method. Mean absolute error (MAE) and 0–1 loss quantify the model selection accuracy.}
	\label{fig:multiGMM_alphaVar}
\end{figure}

\paragraph{Comparison to other methods.} To assess the relative performance of \methodname, we compare it to the three nonparametric model selection methods. 
In the low-dimensional setting ($D \in \{2,3\}$), \methodname achieves the best overall performance, with the smallest MAE of $0.20$ and mean 0--1 loss of $0.18$, and a median deviation equal to zero (\cref{fig:sim-low-dim}).
By contrast, the competing methods exhibit MAE values of $\ge 0.60$ and mean 0--1 losses of $\ge 0.40$. 
Both the elbow and silhouette criteria show a tendency to underestimate the number of components.
The gap statistic shows the highest variability and significant underestimation. 
In the high-dimensional setting (\cref{fig:sim-high-dim}), \methodname has similar performance to the silhouette criteria while significantly outperforming 
the elbow and gap methods, particularly in terms of MAE. 
For the low and high-dimensional settings, all differences between the MAE values of
\methodname and the three alternative methods were statistically significant except for \methodname versus the silhouette coefficient applied to the high-dimensional data (Wilcoxon signed-rank test; see \cref{tab:wilcoxon-pvals} for exact p-values).

\subsection{Cell Type Discovery using Flow Cytometry}
\label{sec:flow-cytometry}

As a first investigation of \methodname applied to real data, we follow the setup 
of \citet{Miller:2019} using 12 flow cytometry datasets originally from a
longitudinal study of graft-versus-host disease %
in patients undergoing blood marrow transplantation \citep{Brinkman:2007}.
Flow cytometry is a technique used to analyze properties of individual cells in a biological material \citep{flow_cytometry}.
Typically, flow cytometry data consists of 3--20 properties of tens of thousands cells.
Cells from distinct populations tend to fall into clusters and discovering cell
types by identifying clusters is of primary interest in practice.
Usually scientists identify clusters manually, which is labor-intensive and subjective.
Therefore, clustering methods that provide interpretable groups of cells is invaluable.
Nevertheless, for our experiments we treat the manual cluster assignments as the ground truth, which was feasible because the datasets have dimension $D=4$.

To calibrate $\rho$, we follow the domain knowledge-aided approach described in \cref{sec:choosing-rho}. 
For comparability with the results of \citet{Miller:2019}, we use F-measure to quantify clustering accuracy and calibrate $\rho$ using the first 6 datasets.
As shown in \cref{fig:flowcyt-train}, the training datasets 1--6 have a nearly identical trend of
clustering accuracy as a function of $\rho$.
The averaged F-measure achieves the maximum when $\rho \approx 1.16$, which is a point of maximum F-measure for all 6 datasets.
The consistency of \methodname compares favorably to using coarsening, where drastically different $\alpha$ values maximize the F-measure \citep[Figure 5]{Miller:2019}.
These results provide evidence that our approach is taking better advantage of the common structure and degree of misspecification across datasets.

\begin{figure}[tp]
	\centering
	\includegraphics[width=80mm]{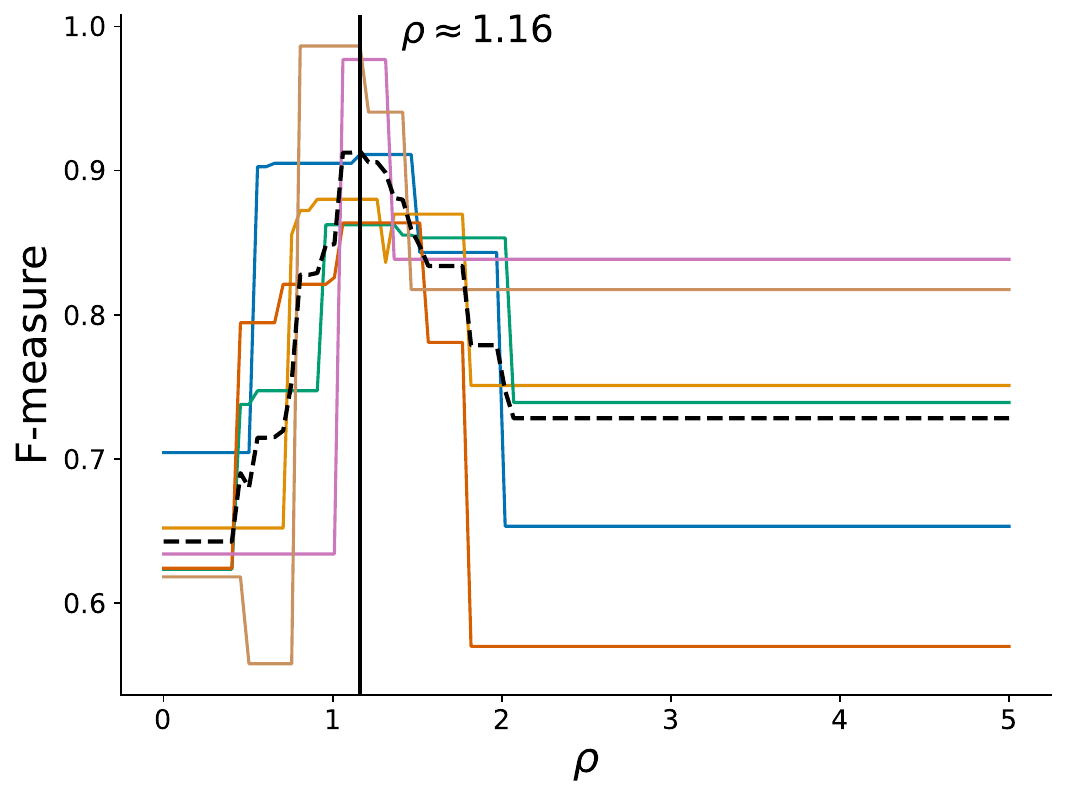}
	\caption{Selecting an optimal value of $\rho$ using the first 6 flow cytometry datasets (\cref{sec:flow-cytometry}).
		The solid lines show $\rho$ vs.\ F-measure for datasets 1--6.
		The black dashed line indicates averaged F-measure over the training datasets.
		The vertical black line shows the value $\rho = 1.16$ that maximizes the averaged F-measure.}
	\label{fig:flowcyt-train}
\end{figure}

\begin{table}[tp]
	\centering
	\def~{\hphantom{0}}
	\caption{F-measures on flow cytometry test datasets 7--12} {
		\begin{tabular}{ccccccc}
			\hline
			                                                   & 7             & 8             & 9             & 10            & 11            & 12            \\ \hline
			ACDC                                 & 0.63          & {0.92} & {0.94} & {0.99} & {0.99} & {0.98} \\
			Coarsened & {0.67} & 0.88          & {0.93} & {0.99} & {0.99} & {0.99} \\ 
            Elbow                               & 0.91          & 0.91          & 0.93          & 0.99          & 0.99          & 0.98          \\ 
			Gap                                 & 0.73          & 0.91          & 0.93          & 0.99          & 0.98          & 0.98          \\ 
			Silhouette                          & 0.63          & 0.74          & 0.94          & 0.99          & 0.96          & 0.98          \\ \hline
		\end{tabular}}
	\label{tab:flowcyto-test}
\end{table}

For test datasets 7--12, we pick the value of $\numcomps$ based on a value of $\rho$ that is as close as possible to the estimated $\rho$ value of $1.16$ while also being stable for a range of $\rho$ values.
As shown in \cref{tab:flowcyto-test}, \methodname provides the same average accuracy as the coarsened posterior
while being substantially more computationally efficient, despite using a much slower programming language for implementation (\methodname took 2 hours using Python while the coarsened posterior took 30 hours using Julia). 
\methodname performs comparably to the nonparametric methods, although 
the elbow method performs the best, particularly on dataset 7. 

\subsection{Cell Type Discovery using Single-Cell RNA Sequencing Data} 
\label{sec:scRNA}

There are two limitations to our flow cytometry experiments: the cell types are based on manual labeling 
(which maybe be incorrect) and the data is low dimensional. 
To more fully test the capabilities of \methodname on high-dimensional datasets with reliable ground truth, we consider
the common task of identifying cell types from single-cell RNA sequencing (scRNA-seq) data. 
Current practice relies heavily on heuristics to determine the number of cell types, which are known
to be inaccurate \citep{grabski}.

We evaluate \methodname and the alternative methods on the Tabula Muris dataset \citep{mice}, 
which profiles nearly 100,000 mouse cells across 20 tissues with curated ground-truth cell types.
We applied all methods to 80 data subsets with equally sized clusters (see \cref{tab:subsampled-datasets}).
All datasets undergo the same preprocessing steps described in \cref{sec:rna-data}.
Following common practice, we use log-transformed gene expression counts 
and fit a Gaussian mixture model. 
The model is clearly misspecified, motivating the need for a robust clustering approach.
The cell clustering performance is assessed by (1) the accuracy of estimating the true number of cell types and (2) the agreement between estimated and true labels, quantified using using the Adjusted Rand Index (ARI) and Adjusted Mutual Information (AMI) \citep{ari,ami}.

\begin{figure}[t]
	\centering
	\subfloat{\includegraphics[width=.55\textwidth]{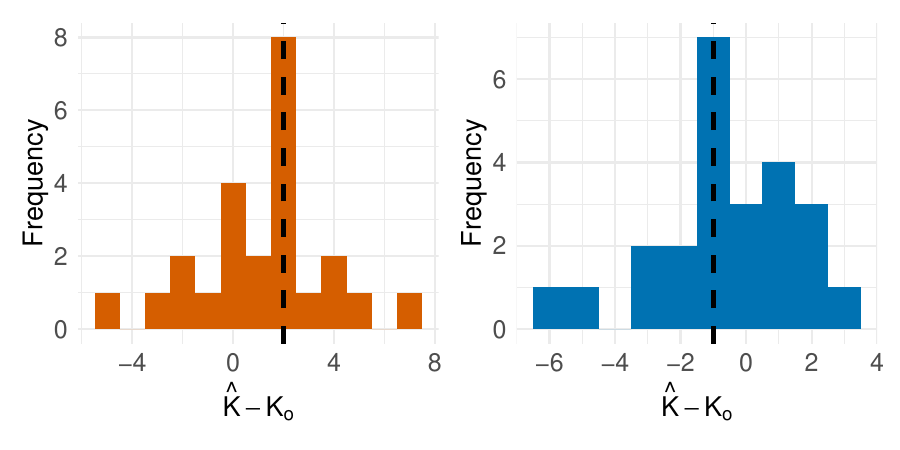}}
	\subfloat{\includegraphics[width=.44\textwidth]{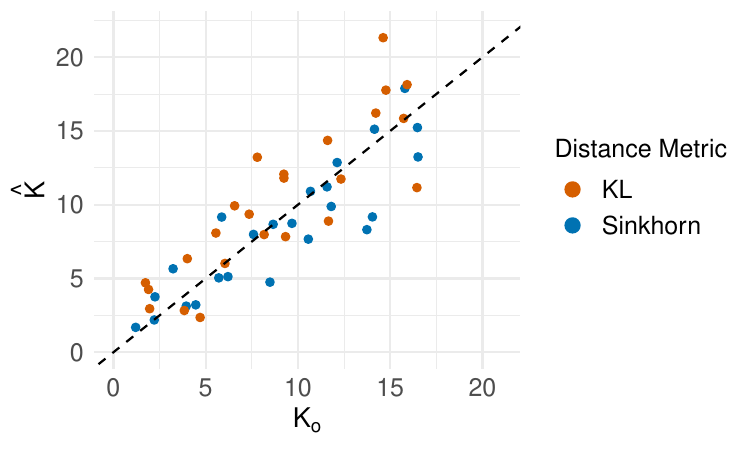}}

	\caption{Comparison of KL to Sinkhorn as divergence estimators for scRNA-seq clustering. Difference in $K$ estimates using \methodname with KL divergence \textbf{(left)} 
    and Wasserstein distance \textbf{(middle}.
    The dashed vertical lines indicate the medians.
	\textbf{Right:} True number of cell types vs. estimated number of cell types (jittered).
	}
	\label{fig:sh_kl_comp}
\end{figure}
\paragraph{Discrepancy Choice.} 
One benefit of \methodname is that the user can select a discrepancy measure $\mcD$ 
that best captures the underlying structure of their data.
While the KL divergence is our recommended default, the Wasserstein distance can better align with the underlying metric structure of the data.
In single-cell clustering, the cosine distance captures directional relationships between gene expression profiles.
This is particularly useful with high-dimensional scRNA-seq data where the magnitude of total gene counts in each cell can vary significantly, while the expression proportions across genes remain informative. 
We use the unbalanced Sinkhorn distance to estimate the Wasserstein distance (see \cref{sec:case-study-details} for details).
As illustrated in \cref{fig:sh_kl_comp}, both choices of discrepancy lead to good estimates of the true number of cell types, with the KL version having a slight bias toward overestimation (median difference of $2$ from the ground-truth $K_o$) 
and the Wasserstein version showing a slight bias toward underestimation (median difference of $-1$).
These results demonstrate the robustness of our default choice (KL divergence), 
while also highlighting the benefits of being able to improve performance by making a problem-specific choice (the Wasserstein distance).
For consistency with our other experiments,
we continue to use the KL divergence for the remaining experiments. 

\paragraph{Automation.}
To facilitate large-scale analyses and reduce manual intervention, we also assess the automated selection of the regularization parameter $\rho$ using the stability-based procedure introduced in \cref{sec:choosing-rho}.
As shown in Fig.~\ref{fig:auto_comp_unif},
the manually selected $K$ achieves slightly better ARI values
(ranging from $-0.27$ to $0.37$), while automated $K$ selection results in higher AMI values
(ranging from $-0.19$ to $0.33$).
To assess whether the differences are significant, we conduct paired t-tests on the AMI and ARI scores. For AMI, the mean
difference between the two methods is $0.0027$  (95\% CI: $[-0.0095, 0.0148])$.
For ARI, the mean difference is $0.0084$ (95\% CI: $[-0.0069, 0.0237])$.
In terms of estimation accuracy,  manual tuning of $\rho$ yields slightly lower mean absolute error (MAE $=2.16$ vs.\ $2.43$) and 0--1 loss ($0.84$ vs.\ $0.88$) compared to the automated approach.
The mean difference in absolute error between the manual and the automated procedures ($|\hat{K}_{manual} -\hat{K}_{auto} |$) is $-0.26$ (95\% CI: [$-0.57$, $0.05$]).
Overall, the difference between the automated and manual approaches -- in terms of both clustering accuracy and estimation accuracy -- is minimal. 

\begin{figure}[tp]
	\centering
    \includegraphics[width=0.8\textwidth]{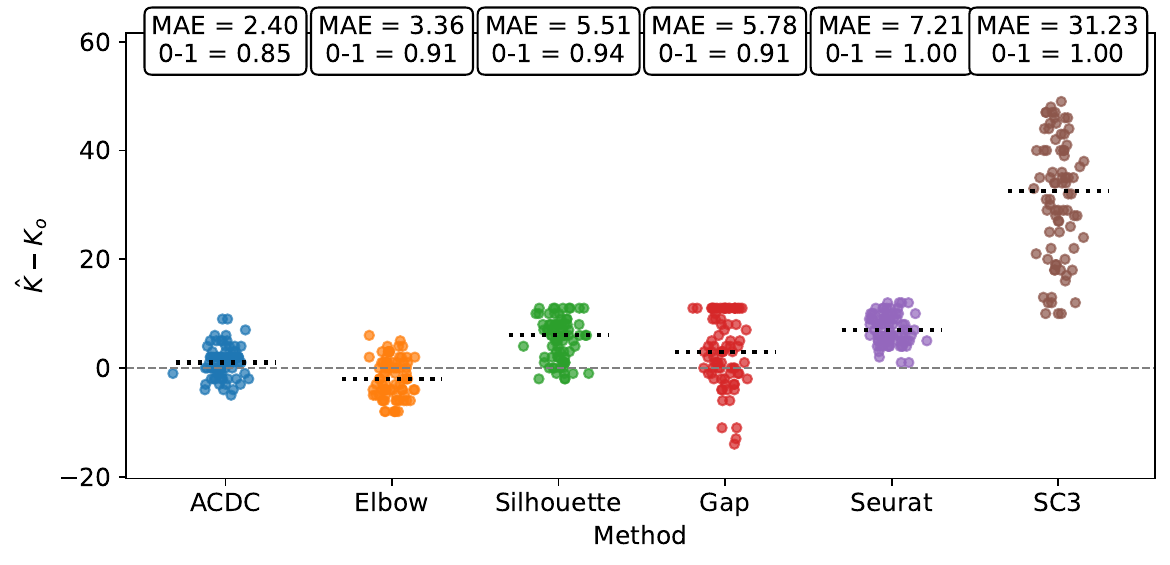}
	\caption{Comparison of ACDC with common model selection criteria (Elbow, Gap, and Silhouette) and specialized tools for scRNA-seq clustering (Seurat and SC3) on 80 scRNA-seq datasets. Each point shows the deviation between estimated and true numbers of cell types ($\hat K - K_o$). Dotted black lines indicate the median error for each method. Mean absolute error (MAE) and 0--1 loss quantify the model selection accuracy.}
    \label{fig:rna-seq-mm-select}
\end{figure}

\paragraph{Comparison to other methods.} 
There are a number of specialized tools for scRNA-seq clustering. 
So, in addition to comparing \methodname to Elbow, Gap, and Silhouette (all applied to the same parameter estimates), we also benchmark 
two widely used scRNA-seq clustering pipelines, Seurat  \citep{seurat} and SC3  \citep[Single-Cell Consensus Clustering;][]{sc3}; 
see \cref{sec:existing-tools} for further details about these tools. 
As shown in \cref{fig:rna-seq-mm-select}, 
\methodname achieves the best cluster estimation accuracy (MAE $=2.40$, 0–1 loss $=0.85$).
To quantify the difference in estimation accuracy, we conduct paired t-tests on the absolute estimation errors between \methodname and each competing method.
The mean differences in absolute error between \methodname and competing methods are $-0.96$ for Elbow (95\% CI: [$-1.59$, $-0.33$]), 
$-3.11$ for Silhouette (95\% CI: [$-3.97$, $-2.26$]), 
$-3.38$ for Gap (95\% CI: [$-4.41$, $-2.34$]), 
$-4.81$ for Seurat (95\% CI: [$-5.50$, $-4.12$]), 
and $-28.83$ for SC3 (95\% CI: [$-31.28$, $-26.37$]), indicating \methodname outperforms all other methods. 
In terms of estimation bias, the gap (median $=3.0$) and silhouette (median $=6.0$) criteria exhibit a stronger tendency to overestimate the number of clusters, while the elbow method (median $=-2.0$) tends to underestimate. All three methods show median deviations farther from zero than those of \methodname (median $=1.0$).
Seurat (median $=7.0$) produces consistently inflated estimates and SC3 (median $=32.5$) substantially overshoots $K_o$ in all datasets. 
In terms of clustering agreement, \methodname remains highly competitive, achieving ARI and AMI values comparable to Seurat (see \cref{fig:rna-seq-mm-ami-ari}).

\begin{figure}[h!]
	\centering
	\includegraphics[width=.8\textwidth]{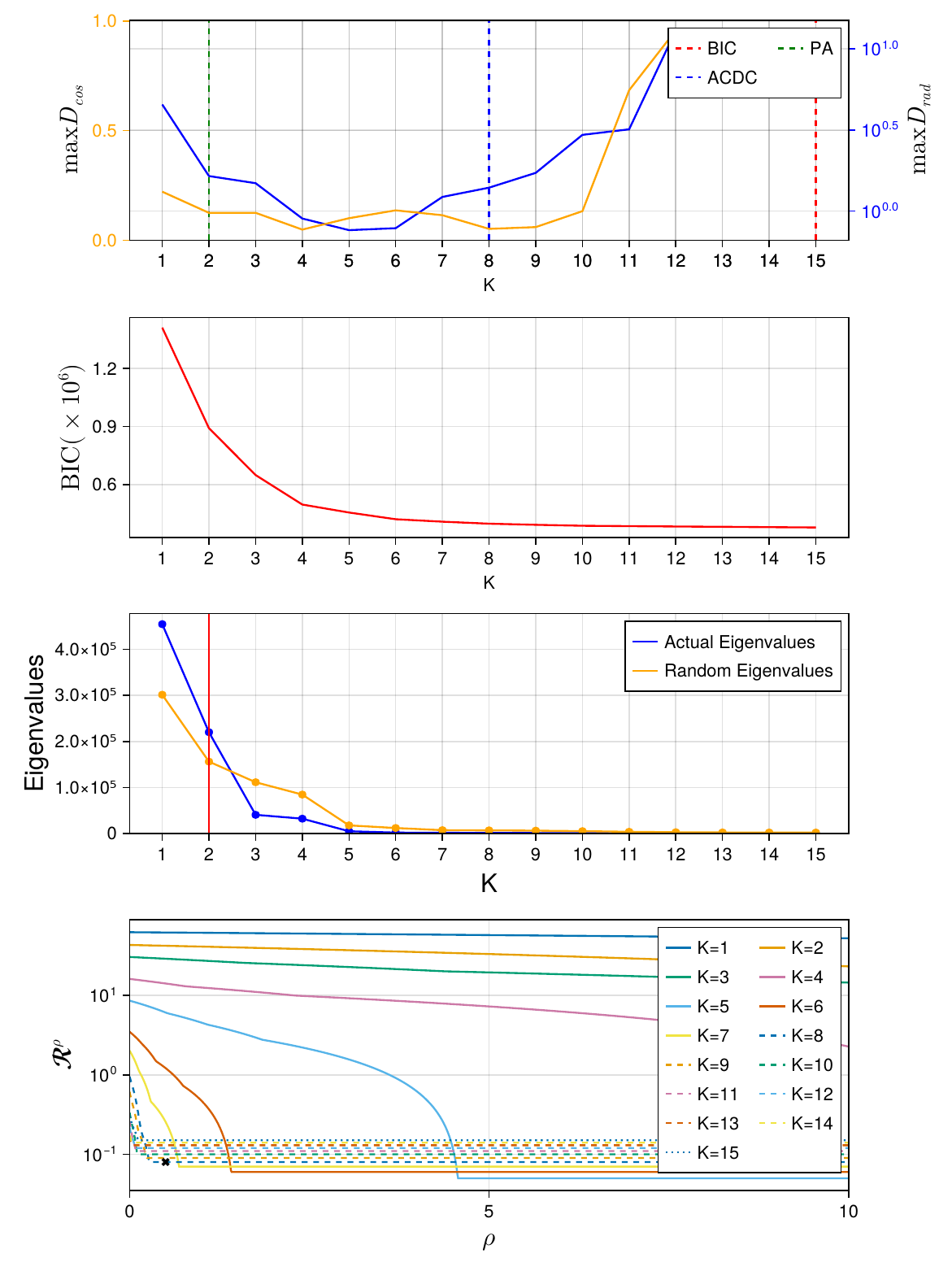}
	\caption{
	Estimation quality of mutational signature discovery for
	perturbed synthetic breast cancer data (\cref{sec:mutsigs}).
	\textbf{(top)} Errors of signature and loadings estimates.
	\textbf{(middle top)} $K$ versus value of BIC. BIC selects $\widehat K=K_{\max}$.
	\textbf{(middle bottom)} Scree plot generated from the dataset, indicating that $K = 2$ is the optimal choice of $K$.
	\textbf{(bottom)} Structurally aware loss for $K\in \{1,\dots,K\}$, the wide region with the smallest $\rho$ is marked by the cross mark, indicating $K=7$ or $K=8$ is the most appropriate choice of $K$.
	}
	\label{fig:mutsig_result}
\end{figure}

\section{Application to Probabilistic Matrix Factorization} \label{sec:pmf-applications}

Next, we apply \methodname to probabilistic matrix factorization problems. 
As with mixture models, \methodname is robustly consistent: 
\begin{theorem} \label{thm:pmf-robust-consistency}
\methodname using $\compDiscEst^{(K,k)}$ defined in \cref{eq:general-Dcomp} is $\kappa$-robustly consistent for probabilistic matrix factorization if the underlying component discrepancy $\mcD$ is the KL divergence,  $\distDisc = d_\mathrm{BL}$, and $\kappa(\rho, K) = \sqrt{K\rho/2}$. 
\end{theorem}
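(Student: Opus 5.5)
Consistency splits into two directions: with probability tending to one, $\mathcal{R}^{\rho_o}(x_{1:N},K)$ is (asymptotically) zero at $K=K_o$ and bounded away from zero for every $K<K_o$. Since $\widehat K^\rho$ takes the smallest minimizer, these two facts give $\widehat K = K_o$. I would work under the formal assumptions deferred to \cref{sec:theory}: consistency of $\widehat\theta^{(K)}$ to $\Theta_\star^{(K)}(P_o)$, continuity of $\theta \mapsto \widetilde G_k^{(\theta)}$, and consistency of the KL plug-in/estimator $\discr{\widehat G_k^{(K)}}{G}$ for $\discr{\widetilde G_k^{(\theta)}}{G}$ (the one-sample adaptation of \citet{Wang:2009} in \cref{sec:kl-estimation}). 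One caveat is that $\rho_o$ is a supremum, so at $K=K_o$ we only get $\compDiscEst^{(K_o,k)} \le \rho_o + o_P(1)$. I would handle this either with a vanishing slack $\rho_N \downarrow \rho_o$ or by assuming a strict inequality, as in the formal statement.

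\textbf{Upper direction ($K<K_o$).} This is where $\kappa(\rho,K)=\sqrt{K\rho/2}$ enters. I will show the contrapositive at the population level: if every component discrepancy satisfies $\kappa_k := \kl{\widetilde G_k^{(\theta)}}{G} \le \rho$ for $\theta\in\Theta_\star^{(K)}(P_o)$, then $d_\mathrm{BL}(P_o, P_\theta) < \sqrt{K\rho/2}$, contradicting \cref{eq:mismatch-condition}.

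The key construction is a coupling. Under $P_o$, draw $(x,w)$ and the posterior latents $z \mid x, w, \theta$, then draw noise $\varepsilon_k$ from their conditional laws. Pushing these through $f$ and $g^{(K)}$ returns $x$ exactly, so $P_o$ is the image of a joint law $Q$ on $(w,z,\varepsilon_{1:K})$. Meanwhile $P_\theta$ is the image of the model joint law $M$, which has the same $(w,z)$-marginal after a suitable choice and independent $\varepsilon_k\sim G$.

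Because a measurable map does not increase total variation, and BL distance is at most TV, I get $d_\mathrm{BL}(P_o,P_\theta)\le \mathrm{TV}(Q,M)$. Pinsker then gives $\mathrm{TV}(Q,M)\le\sqrt{\tfrac12 \kl{Q}{M}}$. The chain rule for KL, combined with convexity in the mixing over $(x,w)$, should bound $\kl{Q}{M}$ by $\sum_{k=1}^K \kl{\widetilde G_k^{(\theta)}}{G}\le K\rho$. This yields $\sqrt{K\rho/2}$.

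Two points in this step need care. First, I must restrict to coordinates with $y_k\ne\emptyset$, which the usage weighting in $\widetilde G_k$ handles; unused $\varepsilon_k$ can be matched exactly to $G$. Second, I must justify matching the $z$-marginal, which is the fitting condition for the optimizer. I expect this decomposition to be the main obstacle. The chain rule gives a sum of conditional KLs averaged over $(x,w)$ and usage, and relating that to the KL of the averaged laws $\widetilde G_k$ goes the wrong way under convexity.

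My first attempt would construct $Q$ differently. I would take $Q$ to have $\varepsilon_k$ independent across $k$ with marginals $\widetilde G_k^{(\theta)}$, generated independently of $(w,z)$. Then $\kl{Q}{M}=\sum_k\kl{\widetilde G_k}{G}$ exactly, by additivity of KL over product measures. What remains is to argue, via the assumed property of the PMF model (location family, summation $g$), that the pushforward of $Q$ is within the needed distance of $P_o$, or equals it when $\theta$ is a stationary point of the fit.

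\textbf{Lower direction.} Given the upper bound, for $K<K_o$ some $k$ has $\compDisc^{(K)}(\theta,k,P_o)>\rho_o+\delta$ uniformly over $\Theta_\star^{(K)}$, using compactness and continuity. Estimator consistency then makes $\mathcal{R}^{\rho_o}(x_{1:N},K)\ge\delta/2$ with probability tending to one. At $K_o$, the loss tends to $0$ in probability by the same consistency, so the argmin falls at $K_o$ or above, and the smallest minimizer equals $K_o$.
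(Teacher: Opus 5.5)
Your construction is, in substance, the paper's. In the proof of \cref{thm:main-pmf}, $\widehat P^{(K,N)}$ is the finite-$N$ version of your product law ($z\sim\widehat H^{(K,N)}$, with independent $\veps_k\sim\widehat G^{(K,N)}_k$). The paper bounds $\distDisc(\widehat P^{(K,N)},\check P^{(K,N)})$ via convexity over the $z$-mixture, additivity over the product, invariance of KL under $\veps\mapsto f(z,\phi,\veps)$, and then $\blmetric\le d_{\mathrm{TV}}$ and Pinsker; that is your KL-additivity/data-processing chain. Because the paper works at finite $N$, it also needs the bridge $\check P^{(K,N)}$ and \cref{assump:sample_approx_able}(c) to pass from $\widehat H^{(K,N)}$ to $H^{(K)}_\star$; your population-level matching of the $z$-marginal avoids this.

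The genuine gap is the step you leave open: that the product pushforward is close to $P_o$. Your proposed fix does not work. The component discrepancies see only the marginals $\widetilde G^{(\theta)}_k$, whereas the inferred noises are coupled across $k$ through $\sum_k y_k=x$ (and generally depend on $z$). Replacing their joint law by the product of marginals therefore changes the pushforward, even at a stationary point. Concretely, take Gaussian factor analysis with $K=2$, $D=1$ and fixed $z$, and put $R=(x-z_1\phi_1-z_2\phi_2)/\sigma$. The posterior noises are $\veps_{1,2}=R/2\pm U$ with $U\sim\Norm(0,1/2)$ independent of $R$. Your product law then yields the residual $(R_1+R_2)/2+\Norm(0,1)$ with $R_1,R_2$ i.i.d.\ copies of $R$. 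For finite-variance $R$, this has the law of $R$ only if $R$ is Gaussian with variance $2$, and stationarity of the fit can match at most the mean and variance of $R$. Your stationary-point intuition is correct for mixtures, where a single $\veps_k$ is active and an EM fixed point makes the pushforward exactly $P_o$, but not for PMF. Your diagnosis of the exact posterior coupling is also right: there the chain rule picks up the total correlation of $\veps_{1:K}$ given $z$, which $\rho$ does not control.

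The paper does not actually close this step either. It writes $\distDisc(P_o,\widehat P^{(K,N)})=\op(1)$ ``by \cref{assump:metric-discr-conditions}(a)'', but that assumption only says $\blmetric$ detects weak convergence of empirical measures. It does not show that $\widehat P^{(K,N)}$, which is not the empirical measure of the $x_n$, converges to $P_o$. So what your argument lacks is exactly what the paper takes for granted. To finish, you should either make it an explicit hypothesis (e.g.\ $\blmetric(P_o,\widehat P^{(K,N)})\to 0$ in probability, or asymptotic independence of the inferred noise across $k$ and from $z$), or use a component-level discrepancy that controls the joint law of $\veps_{1:K}$.

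Two smaller points, which the paper also glosses over. First, Pinsker only gives $\blmetric(P_o,P_\theta)\le\sqrt{K\rho/2}$, so you need a strict gap in \cref{eq:mismatch-condition} to obtain your $\delta>0$. Second, at $K_o$ an $\op(1)$ loss does not make $K_o$ the smallest minimizer, because some $K>K_o$ may have loss exactly zero. You need $\mathcal R^{\rho}(x_{1:N},K_o)=0$ with probability tending to one. That requires the slack you mentioned, since $\rho_o$ is attained by the worst component.
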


In our numerical experiments, we take $G = \Unif([0,1]^{D})$, the uniform distribution on the $D$-dimensional hypercube.
This choice is without loss of generality when using KL divergence as the discrepancy measure since it is invariant to diffeomorphisms of the noise variables ${\veps}_{nk}$.
Moreover, it leads to a universal choice of $f(z, \phi, \cdot) = F^{-1}_{z\phi}$, the inverse CDF.
However, in specific scenarios other choices for $G$ and $f$ might be preferred due to considerations such as ease of implementation or stability of KL divergence estimation. 
For example, in the Gaussian case, we could take $G = \Norm({0}, I)$ and $f(z, \phi, \veps) = z\phi + \sigma \veps$. 

In addition to BIC, we compare \methodname to \emph{parallel analysis} (PA) \citep{Horn_ParallelAnalysis_1965,Buja_RemarksParallelAnalysis_1992}, which is a 
commonly used nonparametric method for model selection for probabilistic matrix factorization. 
Parallel analysis is carried out by generating the scree plot (ordered PCA eigenvalues) of the data against that of randomly generated matrices of the same size. These random matrices are generated by independently permuting each row of the data matrix. 
The results of \citet{Dobriban_PA-for-FA_2020} show that, in general, PA can be conservative 
and miss factors with a low signal-to-noise ratio. 
For BIC, to account for permutation invariance of the parameters, we use the formula
\[
\operatorname{BIC}\Big(x_{1:N},z^{(K)}_{1:K,1:N},\phi^{(K)}_{1:K}\Big) = K\log(N) - 2\log\left\{p\big(x_{1:N}~\big|~ z^{(K)}_{1:K,1:N},\phi^{(K)}_{1:K}\big)\right\}+2\log(K!).
\]

While other robust model selection approaches for matrix factorization exist, they all have limitations
that lead us to not include them in our empirical comparison. 
The method of \citet{Liu_Support-Union_2019} is limited to Gaussian nonnegative matrix factorization (NMF).
\citet{Pelizzola_NegBin-NMF_2023} aim to address the problem of robustness for the case of 
Poisson NMF using two different approaches: a negative binomial instead of a Poisson likelihood to improve the model's ability to handle overdispersed data, and a testing routine inspired by cross validation.
While this approach provides reasonable results, using the negative binomial only targets a very specific type of data--model mismatch, and the cross validation approach does not have any correctness guarantees.
\citet{Bai_DeterminingNumberFactors_2002} propose an information criterion-based approach for factor analysis
and provide an asymptotic consistency result for the case where the input dimension tends towards infinity. 
However, their main result
applies only to the Gaussian NMF model with principle component analysis (PCA) as the estimation method.
Particularly in NMF applications, another common approach is to evaluate the stability of the 
NMF solution across multiple runs. 
The \emph{cophenetic correlation coefficient} \citep{brunet_CCC_2004a} is one such example. 
A similar stability-based principle is adopted by the widely-used toolset SigProfilerExtractor \citep{islam_sigprofiler_extractor_2022}, which uses a consensus bootstrap approach to improve stability. 
However, these approaches are computationally costly due to its reliance on repeated NMF executions. Furthermore, the results of \citet{Xue:2024} demonstrate empirically that the consensus bootstrap methodology -- much like PA -- tends to be conservative, underestimating the true number of factors. 

\subsection{Mutational Process Discovery} \label{sec:mutsigs}

Exposure to, and presence of, carcinogenic processes such as UV radiation, tobacco smoke, defective DNA repair mechanisms, and naturally occurring biochemical reactions, generate characteristic patterns of somatic mutations known as \emph{mutational signatures} \citep{Alexandrov_mut-sig-NMF_2013,Nik-zainal_MutationalProcessesMolding_2012}.
Mutational signature-based analyses have contributed to novel insights in cancer research \citep[e.g.,][]{Alexandrov_mut-sig-NMF_2013,Nik-zainal_MutationalProcessesMolding_2012,pcawgSV:2020,pcawg2020} and are leading to emerging translations in clinical settings \citep[e.g.,][]{Chakravarty:2021}.
The most widely used approach to signature discovery is to fit a Poisson non-negative matrix factorization (NMF) model. %
For the $n$th tumor sample, the data consist of a count vector ${x}_{n}\in\nats^{D}$,
where $D$ is the number of mutation types being considered (e.g., there are $D=96$ single-base substitution types with a trinucleotide context).
The number of mutations of type $d$ in sample $n$ due to mutational process $k$ is given by
$y^{(K)}_{nkd} \sim \Poiss({\phi}^{(K)}_{kd}z^{(K)}_{nk})$ and hence the total number of mutations of type $d$ in sample $n$ is $x_{nd} = \sum_{k=1}^K y^{(K)}_{nkd}$.

Because it is nearly impossible to obtain ground-truth signatures for real data, we use
simulated breast cancer data
based on the COSMIC v2 catalog and the pan-cancer analysis of whole genomes (PCAWG),
following the procedure of \citet{Xue:2024}.
See \cref{sec:mutsigs-details} for experimental details.

As is standard practice, we quantify the signature recovery error using the cosine difference $
	D_{\mathrm{cos}}({\phi},{\phi}_{\star})=1-\langle {\widetilde\phi},{\widetilde\phi}_{\star} \rangle$, where for any vector ${v}$, we define ${\widetilde v} = {v} / \norm{{v}}_{2}$.
For the exposures, we quantify the error using the relative average difference
$
	D_{\mathrm{rad}}({z}, {z}_{\star})= {\abs{\overline{z}-\overline{z}_{\star}}}/{\overline{z}_{\star}},
$
where for a vector ${v} \in \reals^N$, $\overline{v} = N^{-1}\sum_{n=1}^N v_n$.
To evaluate the quality of an estimate as a whole, we perform bipartite matching against the ground truth by minimizing the metric
\[
	D^{(K)}(\sigma)=\sum^{K}_{k=1}\lrb{D_{\mathrm{cos}}\lrp{{\phi}^{(K)}_{k},{\phi}_{o\sigma(k)}}
		+0.1\, \tanh\, D_{\mathrm{rad}}\lrp{{z}^{(K)}_{k},{z}_{o\sigma(k)}}},
\]
where $\sigma \colon [K_o] \to [K]$ denotes an injective matching function.
We bound and down-weight $D_{\mathrm{rad}}$ using the $0.1 \tanh$ transform so that signature 
reconstruction accuracy is the main determinant for matching, with exposure accuracy acting a
tiebreaker when the signature accuracies are similar. 
Given the optimal matching $\sigma_{\star} = \argmin_\sigma D^{(K)}(\sigma)$, the accuracy scores are defined
as the worst-case cosine and relative average errors, given by
$
	L^{(K)}_{{\phi}} = \max\{D_{\mathrm{cos}}({\phi}^{(K)}_{k},{\phi}_{o\sigma_{\star}(k)}) : k=1,\dots,K \},
$
and
$
	L^{(K)}_{z} = \max\{D_{\mathrm{rad}}(\overline{z}^{(K)}_{k},\overline{z}_{o\sigma_{\star}(k)}) :  k=1,\dots,K\}.
$

\Cref{fig:mutsig_result,fig:mutsig_result_appendix_1} show that, across all four datasets,
BIC selects $\widehat{K} = K_{\max}$ -- even when the data is well specified.
On the other hand, PA consistently underestimates $K_o$, estimating $\hat K = 2$. Finally, \methodname selects $\widehat K = 7$ or $8$, which correspond to the some of the largest values of $K$ for which the parameter estimates still have reasonably small error and meaningful decomposition.
These results suggest \methodname is a promising alternative to existing approaches for selecting the number of mutational signatures, which all suffer from some combination of high computational cost and lack of statistical rigor \citep{Xue:2024,pcawg2020}.

\subsection{Materials Discovery using Hyperspectral Imaging} \label{sec:hyperspectral}

Hyperspectral remote sensing data is used in applications such as environmental monitoring and city planning \citep{Brook_Dust_over_Green_Canopy_2016,Ji_Estimatng_Vegetation_Fractional_Cover_2016,Lin_RetrievingHydrousMinerals_2017}.
Hyperspectral data is collected as an image in which each pixel specifies the intensity of light at each observed wavelength.
However, due to low spatial resolution, each pixel can be a mixture of materials,
each reflecting different amounts of light at each wavelength.
\emph{Hyperspectral unmixing} refers to the unsupervised extraction of spectral signatures corresponding to materials (called \emph{end-members}) and the abundances of these materials from each pixel.
While it is reasonable to assume the intensities are observed with Gaussian noise, the contributions from each material obviously cannot be negative.
However, incorporating this non-negativity constraint into the model is non-trivial (e.g., if using off-the-shelf parameter estimation methods),
and so the constraint is often ignored.
Thus, to illustrate the benefits of our approach in enabling principled model selection
in a setting with clear misspecification,
we will use a Gaussian factor analysis model.
We apply the model to a $307\times307$ hyperspectral image of an urban area,
with each pixel representing a plot 2m $\times$ 2m in size.\footnote{\url{https://rslab.ut.ac.ir/data}}
After discarding certain wavelengths due to dense water vapor and atmospheric effects, each pixel consists of 162 channels with wavelengths ranging from $400$nm to $2500$nm.
There are three versions of ground truth, containing 4, 5, and 6 end-members respectively.
The 6 end-member version includes an additional material named ``metal'', which, through manual inspection, was found to contribute to only a small part of the image.
As a result, none of the NMF algorithms we tested accurately recovered this end-member.
Therefore, we use the ground truth with 5 end-members for this experiment (visualized in \cref{fig:hyprunmix_gt}(a))

\begin{figure}[t!]
	\centering
	\includegraphics[width=.8\textwidth]{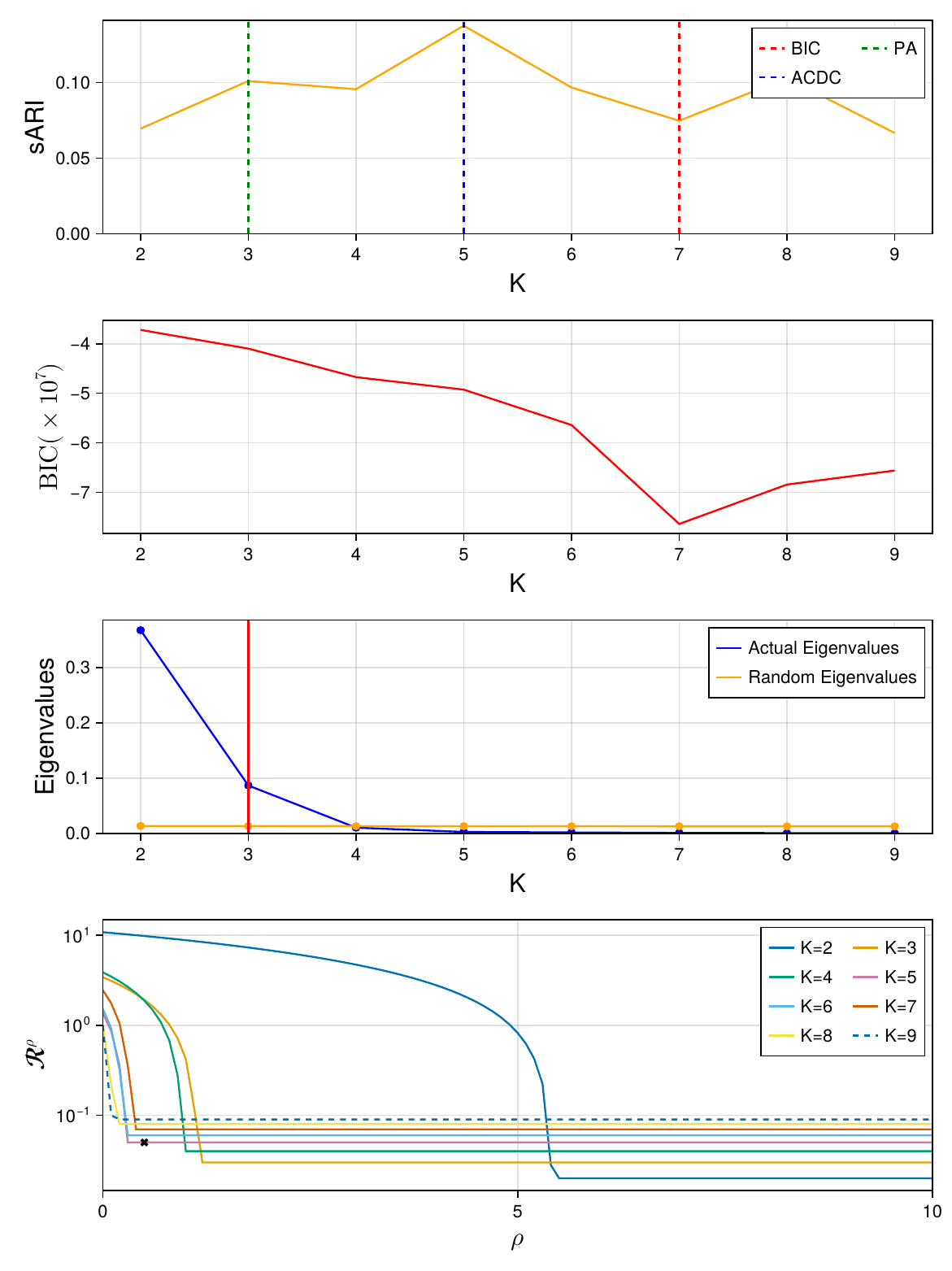}
	\label{fig:urban_results_rho_k}
	\caption{
		Model selection of end-members for hyperspectral urban dataset (\cref{sec:hyperspectral}).
		\textbf{(top)} Quality of solution measured by sARI, showing $K=5$ is the solution with  material distribution that best matches the ground truth.
		\textbf{(middle top)} Quality of solution ranked using BIC, indicating $K=7$ is the best fitting solution.
        \textbf{(middle bottom)} Scree plot generated from the dataset, indicating that $K=3$ is the optimal choice of $K$.
		\textbf{(bottom)} Structurally aware loss for $K=2,\dots,9$, the wide region with the smallest $\rho$ is marked by the cross mark, indicating $K=5$ is the most appropriate choice of $K$.
	}
	\label{fig:urban_results}
\end{figure}

We obtain point estimates of the factorization using a modified implementation of the coordinate descent method \citep{Cichocki-Phan_coorddesc_2009}.
We judge the quality of the estimates by quantifying how close the inferred material abundances are
to the ground truth
using the soft adjusted Rand index \citep[sARI;][]{Flynt_sARI_2019}.
\Cref{fig:urban_results} shows that \methodname selects $\widehat{K} = K_{o} = 5$,
which also maximizes the sARI. %
BIC overfits, selecting $\widehat{K} = 7$, while PA underfits, selecting $\widehat{K} = 3$.
\Cref{fig:hyprunmix_gt}(b)--(e) provides a qualitative understanding of the result.
The small gap between $K=3$ and $K=4$ can be explained through the addition of the ``grass'' material that is
very similar to the already included ``tree'' material. 
The large gap between $K=4$ and $K=5$ can be explained by the addition of the very different ``asphalt'' material. 
Finally, $K=6$ only adds a ``residual'' material compared to $K=5$, which can be interpreted as a different shade of ``grass''
and is clearly an artifact of overfitting.

\section{Discussion} \label{sec:discussion}

In this paper, we have developed a general theoretical and methodological framework for ensuring 
mechanistic interpretability when selecting the number of latent components in a latent variable model. 
As two applications, we showed that our \methodname method is robustly consistent and empirically 
effective for mixture models and probababilistic matrix factorization. 
These results suggest \methodname is a promising general-purpose method for robust model selection. 

Our work also opens up many avenues for future research. 
In applications where related labeled datasets are not available, we are only able to provide a heuristic method for calibrating the degree of misspecification, as quantified by $\rho$.
Ideally, we would like to have a more rigorous criteria.
However, given the nonparametric nature of misspecification, we suspect that a fully general solution does not exist; for example, the coarsened posterior similarly requires a heuristic calibration step.
One alternative to explore in the future is the simulation-based calibration method from \citet{Xue:2024}, which was developed for the coarsened posterior but could easily be used with \methodname as well.
However, generally speaking, a user must have \emph{some} prior knowledge about the degree of model misspecification -- and believe that the misspecification will be reasonably small as measured by the chosen discrepancy $\mcD$.
Moreover, if the degree of misspecification is very large, we should not expect any robust model selection procedure to
work well (cf.\ $P_{o}^{E}$ in \cref{fig:consistency-illustration}).

It would be valuable to further develop our robust consistency theory; for example  
similar results could be proved for other common model classes. 
It would also be useful to characterize how quickly $\Pr(\widehat{K} = K_o)$ converges to 1 
and to quantify the variability of $\widehat{K}$.
Another interesting direction for future work is to apply \methodname to other model classes such as supervised factor analysis and extend it to apply to models outside of the framework we developed in
\cref{sec:framework} -- for example, (nonlinear) variational autoencoders \citep{Kingma2014,Kingma2019VAEs} and semiparametric matrix factorization models \citep{Anandkumar2014uc,rohe2023vintage-7f4}.

%% file: appendix.tex
\counterwithin{equation}{section}
\counterwithin{figure}{section}
\renewcommand{\thefigure}{\Alph{section}.\arabic{figure}}
\counterwithin{table}{section}
\renewcommand{\thetable}{\Alph{section}.\arabic{table}}
\counterwithin{theorem}{section}
\renewcommand{\thetheorem}{\Alph{section}.\arabic{theorem}}
\counterwithin{lemma}{section}
\renewcommand{\thelemma}{\Alph{section}.\arabic{lemma}}
\counterwithin{assumption}{section}
\renewcommand{\theassumption}{\Alph{section}.\arabic{assumption}}
\counterwithin{definition}{section}
\renewcommand{\thedefinition}{\Alph{section}.\arabic{definition}}
\counterwithin{proposition}{section}
\renewcommand{\theproposition}{\Alph{section}.\arabic{proposition}}

\begin{center}
    \LARGE \textbf{Supplementary Materials}
\end{center}

\section{Robust Consistency of \methodname} 
\label{sec:theory}

We now show that, under reasonable assumptions, \methodname is robustly consistent.
We first cover the mixture modeling case, then probabilistic matrix factorization.

\subsection{General Theory for Mixture Modeling}

We will develop a general theory, then discuss how \cref{thm:mixture-model-robust-consistency} follows 
as a corollary.
Proofs of all mixture model-related results are in \cref{sec:mixture-model-consistency}. 
First, we consider the requirements for the distribution-level discrepancy $\distDisc(\cdot, \cdot)$
(assumed to be a metric), the discrepancy $\discr{\cdot}{\cdot}$ used to construct the component-level
discrepancy, and the discrepancy estimator $\discrest{\cdot}{\cdot}$
-- noting that sometimes it will be possible to take $\discrest{\cdot}{\cdot} = \discr{\cdot}{\cdot}$.
\begin{assumption} 	\label{assump:metric-discr-conditions}
	For $y_{\numobs n} \in \mathcal{X}$ $(\numobs = 1,2,\dots; n = 1,\dots,\numobs)$,
	define the empirical distribution $\widehat{P}_{\numobs} = \numobs^{-1}\sum_{n=1}^{\numobs}\delta_{y_{N,n}}$
	and assume $\widehat{P}_{\numobs} \rightarrow P$ in distribution.
	The distribution-level and component-level discrepancies satisfy the following conditions:
	\begin{enumerate}[label=\textnormal{(\alph*)}]
		\item \emph{The distribution-level discrepancy metric detects empirical convergence:} $\distDisc(\widehat{P}_\numobs, P) \to 0$ as $\numobs \to \infty$.
		\item  \emph{The distribution-level discrepancy metric is jointly convex in its arguments:} for all $w \in (0, 1)$ and distributions $P$, $P'$, $Q$, $Q'$,
		      \[
			      \distDisc(wP + (1 - w)P', wQ + (1 - w)Q') \le w\,\distDisc(P, Q) + (1 - w)\,\distDisc(P', Q').
		      \]
		\item  \emph{The discrepancy estimator is consistent:} For any distributions $P, Q$,
		      if $\discr{P}{Q} < \infty$ and $\widehat P_{N} \to P$ in distribution,
		      then $\discrest{\widehat{P}_{\numobs}}{Q} \to \discr{P}{Q}$ as $\numobs \to \infty$.
		\item \emph{Smoothness of the discrepancy estimator:} The map $\param \mapsto \discrest{\widehat{P}_{\numobs}}{F_{\param}}$ is continuous.
		\item  \emph{The discrepancy bounds the metric:} There exists a continuous,
		      non-decreasing function $\widetilde{\kappa}: \reals \to \reals$ such that
		      $\distDisc(P,Q) \leq \widetilde{\kappa}(\discr{P}{Q})$ for all distributions $P, Q$.
		\item \emph{The distribution-level discrepancy metric between components is finite:} For all $\param, \param' \in \Theta$,
		      it holds that $\distDisc(F_{\param}, F_{\param'}) < \infty$.
	\end{enumerate}
\end{assumption}

\begin{remark}[Discussion of \cref{assump:metric-discr-conditions}] A wide variety of metrics satisfy \cref{assump:metric-discr-conditions}(a), including the bounded Lipschitz metric,
the Kolmogorov metric, maximum mean discrepancies with sufficiently regular bounded kernels, and the Wasserstein metric with a bounded cost function \citep{Vaart:1996,Sriperumbudur:2010,SimonGabriel:2018,Villani:2009}.
\Cref{assump:metric-discr-conditions}(b) also holds for a range of metrics.
For example, it is easy to show that all integral probability metrics -- which includes the bounded Lipschitz
metric, maximum mean discrepancy, and 1-Wasserstein distance -- are jointly convex (see \cref{lem:ipm-joint-convex}).
\cref{assump:metric-discr-conditions}(c) is a natural requirement that the limiting divergence can be estimated consistently.
Such estimators are well-studied for many common discrepancies.
\cref{assump:metric-discr-conditions}(d) will typically hold as long as the map $\phi \mapsto F_{\phi}$ is
well-behaved.
For example, for the Kullback--Leibler divergence estimators described in \cref{sec:kl-estimation}
and standard maximum mean discrepancy estimators \citep{Gretton:2012,Krause:2023},
when $F_{\phi}$ admits a density $f_{\phi}$, it suffices for the map $\phi \mapsto f_{\phi}(x)$
to be continuous for $P_{o}$-almost every $x$.
\cref{assump:metric-discr-conditions}(e) is not overly restrictive and we discuss some relevant examples in \cref{sec:divergence-choice} below.
\cref{assump:metric-discr-conditions}(f) trivially holds for bounded metrics such as the bounded Lipschitz metric and integral probability measures with uniformly bounded test functions.
\end{remark}

Our second assumption requires the parameter estimation procedure to be sufficiently well-behaved,
in the sense that, for each fixed number of components $\numcomps \le \numcomps_{o}$, it should consistently estimate an asymptotic parameter $\allparam_{\star}^{(K)}$.
We do not make any explicit assumptions that such parameters are, in any sense, ``optimal.''

\begin{assumption}\label{assump:inference-regularity}
	For each $\numcomps \in \{1,\dots, \numcomps_{o}\}$, there exists $\theta_{\star}^{(K)} \in \paramSpace{K}$ such that
	$\widehat\theta^{(K)} \to \theta_{\star}^{(K)}$ in probability as $N \rightarrow \infty$,
	after possibly reordering of components.
\end{assumption}

\begin{remark}[Discussion of \cref{assump:inference-regularity}]
\Cref{assump:inference-regularity} holds for most reasonable algorithms, including expectation--maximization, point estimates based on the posterior distribution,
and variational inference \citep{Balakrishnan:2017,Walker:2001,Wang:2019}.
Note that we assume that consistency holds for parameters in the equivalence class induced by component reordering, although we keep
this equivalence implicit.
\Cref{assump:inference-regularity} implies that the empirical data distribution of the $k$th component,
$\widehat{F}^{(K)}_k$, %
converges to a limiting distribution
\begin{align}
	\widetilde{F}^{(K)}_{k}(\dee x) & = \frac{p_{\star}^{(K)}(k\mid x)P_o(\dee x) }{\int p_{\star}^{(K)}(k\mid y)P_o(\dee y)}, \label{eq:phat-def}
\end{align}
where $p_{\star}^{(K)}(k \mid x) = \eta^{(K)}_{\star k}f_{\star k}^{(K)}(x)/p^{(K)}_{\star}(x)$ is the
conditional component probability under the limiting model distribution
 and $f_{\star k}^{(K)}$ denotes the density of $F_{\param^{(K)}_{\star k}}$.
\end{remark}

We can now state our main result about the robust consistency of \methodname for mixture modeling. 
\begin{theorem} \label{thm:main}
	For mixture modeling, if \cref{assump:metric-discr-conditions,assump:inference-regularity} hold, then 
    \methodname is $\kappa$-robustly consistent for $\kappa(\rho, K) = \widetilde{\kappa}(\rho)$.
\end{theorem}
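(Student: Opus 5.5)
The plan is to compare the robust loss at $K=K_o$ with the loss at each $K<K_o$: the first tends to $0$, while each of the second is bounded away from $0$ with probability tending to one. Since $\widehat K^\rho$ takes the smallest minimizer, this forces $\widehat K^{\rho}=K_o$. Throughout, $\rho=\rho(P_o,K_o)$. I assume the limits $\theta_\star^{(K)}$ of \cref{assump:inference-regularity} can be taken in $\Theta_\star^{(K)}(P_o)$ (after reordering). If the estimation procedure is not the minimizer, this is exactly the convention that $\Theta_\star^{(K)}$ denotes the set of limits of the procedure.

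\textbf{Step 1: plug-in estimates converge.} For fixed $K\le K_o$ and $k$, the estimated component distribution $\widehat F^{(K)}_k$ is built from the responsibilities $p(k\mid x_n,\widehat\theta^{(K)})$. I will show it converges weakly in probability to $\widetilde F^{(K)}_k$ of \cref{eq:phat-def}. This uses the law of large numbers for the empirical measure, together with $\widehat\theta^{(K)}\to\theta_\star^{(K)}$ and continuity of $\theta\mapsto p(k\mid x,\theta)$. Then \cref{assump:metric-discr-conditions}(c),(d), applied along subsequences and using continuity in $\phi$ to replace $\widehat\phi_k$ by $\phi_{\star k}$, give
\[
\compDiscEst^{(K,k)}\convp \mcD\big(\widetilde F^{(K)}_k \mid F_{\phi^{(K)}_{\star k}}\big)=:D^{(K)}_k .
\]
Consequently $\mathcal R^\rho(x_{1:N},K)\convp\sum_k\max(0,D^{(K)}_k-\rho)$ for each $K\le K_o$. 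For $K=K_o$, every $D^{(K_o)}_k\le\rho$ by the definition of $\rho$ as a supremum, so $\mathcal R^\rho(x_{1:N},K_o)\convp0$.

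\textbf{Step 2: underfitted models have a component with strictly positive limiting loss.} Fix $K<K_o$ and set $w_k=\int p^{(K)}_\star(k\mid y)\,P_o(\dee y)$. The responsibilities sum to one, so $P_o=\sum_k w_k\widetilde F^{(K)}_k$ exactly. The mixture $Q=\sum_k w_kF_{\phi_{\star k}^{(K)}}$ lies in $\model{K}$. Combining minimality of $\theta_\star^{(K)}$, joint convexity (\cref{assump:metric-discr-conditions}(b), extended to $K$ terms by induction, with finiteness from (f)), and (e), I obtain
\[
\distDisc(P_o\mid P_{\theta^{(K)}_\star})\le\distDisc(P_o\mid Q)\le\sum_k w_k\,\widetilde\kappa(D^{(K)}_k)\le\widetilde\kappa\big(\max_k D^{(K)}_k\big).
\]
If every $D_k^{(K)}\le\rho$, monotonicity of $\widetilde\kappa$ would give $\distDisc\le\widetilde\kappa(\rho)$, contradicting the gap condition \cref{eq:mismatch-condition}. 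Hence $\max_kD^{(K)}_k>\rho$, so the limit of $\mathcal R^\rho(x_{1:N},K)$ is some $c_K>0$.

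\textbf{Step 3: conclude, and the main obstacle.} Since there are finitely many $K<K_o$, with probability tending to one we have $\mathcal R^\rho(K)>\min_K c_K/2>\mathcal R^\rho(K_o)$ for all $K<K_o$, so no $K<K_o$ is selected. The delicate points are two boundary issues.

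\begin{enumerate}
\item The contradiction in Step 2 needs strictness. Equality $\distDisc=\widetilde\kappa(\rho)$ is allowed by \cref{eq:mismatch-condition}, so I would read the gap condition strictly. Alternatively, I would use continuity of $\widetilde\kappa$ to exploit a slack.
\item More seriously, $\mathcal R^\rho(K_o)$ tends to $0$ but need not equal $0$ exactly. A model with $K>K_o$ and zero loss could therefore win the argmin. I would first handle this by showing $\mathcal R^\rho(K_o)=0$ with probability tending to one whenever $D_k^{(K_o)}<\rho$ strictly. For the case where the supremum defining $\rho$ is attained, I would check whether the paper's argmin should be read as restricted to $K\le K_o$, or whether the losses for $K>K_o$ can be controlled by the same convexity bound. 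This is the step I expect to require the most care.
\end{enumerate}
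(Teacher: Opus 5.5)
Your overall structure matches the paper's. The loss at $K_o$ vanishes, and for each $K<K_o$ a contradiction built from joint convexity and \cref{assump:metric-discr-conditions}(e) forces some component discrepancy above $\rho$.

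\textbf{Step 2 takes a different and cleaner route than the paper.} The paper argues as follows:
\begin{itemize}
\item it goes through the empirical measure $\sum_k\widehat\eta_k\widehat F_k$ using the triangle inequality and (a);
\item it handles the mismatch of weights with (b), (f) and the convergence $\widehat\eta_k\to\eta^{(K)}_{\star k}$ in \cref{eq:pi-conv};
\item it then applies (e) to $\mathcal{D}(\widehat F_k\mid F^{(K)}_{\star k})$. This empirical-versus-model divergence is infinite for KL. It is then replaced by $\rho+o_P(1)$ using a hypothesis about $\widetilde F_k$, not $\widehat F_k$, so that step does not go through as written.
\end{itemize}
Your exact population identity $P_o=\sum_k w_k\widetilde F^{(K)}_k$, compared against $Q=\sum_k w_kF_{\phi^{(K)}_{\star k}}\in\mathcal{M}^{(K)}$, needs only (b) and (e) at the population level. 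It never touches empirical measures, and it does not need (a) or (f).

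The price is the inequality $\mathcal{D}_{\mathrm{dist}}(P_o\mid P_{\theta^{(K)}_\star})\le\mathcal{D}_{\mathrm{dist}}(P_o\mid Q)$.
\begin{itemize}
\item It is valid if $\Theta^{(K)}_\star$ is the $\mathcal{D}_{\mathrm{dist}}$-argmin set. In that case you can bypass $\theta^{(K)}_\star$ entirely: the infimum in \cref{eq:mismatch-condition} is then the global infimum over $\mathcal{M}^{(K)}$, which is at most $\mathcal{D}_{\mathrm{dist}}(P_o\mid Q)$.
\item It fails under the ``limits of the procedure'' convention you invoke in your opening paragraph, because an EM limit is in general not a $d_{\mathrm{BL}}$-minimizer. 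Under that convention, use instead the stationarity identity $w_k=\eta^{(K)}_{\star k}$. This is exactly what the paper tacitly asserts in \cref{eq:pi-conv}. It gives $Q=P_{\theta^{(K)}_\star}$, so minimality is not needed.
\end{itemize}

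\textbf{Your two boundary concerns are genuine, and the paper does not resolve them.} Its part (1) simply posits some $\varepsilon>0$ with $\widehat{\mathcal{D}}<\rho-\varepsilon+o_P(1)$ at $K_o$. Its part (2) treats $\mathcal{D}_{\mathrm{dist}}(P_o\mid P^{(K)}_\star)\le\widetilde\kappa(\rho)$ as a contradiction, even though the gap condition is non-strict.

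The attained case is the generic one, not an edge case. Suppose $\Theta^{(K_o)}_\star(P_o)$ is a single point up to relabeling and contains the procedure's limit. Then $\max_kD^{(K_o)}_k=\rho$ exactly. The estimate for the worst component then fluctuates around $\rho$, so $\mathcal{R}^\rho(x_{1:N},K_o)$ is typically positive with non-vanishing probability. Nothing in the assumptions prevents some $K>K_o$ from having zero loss and being selected.

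Neither of your proposed escapes works:
\begin{itemize}
\item $\widehat K^\rho$ minimizes over all candidate $K$, so the argmin cannot be restricted to $K\le K_o$ (which is unknown anyway).
\item Convexity only upper-bounds $\mathcal{D}_{\mathrm{dist}}$. It cannot force a component discrepancy of an overfitted model above $\rho$.
\end{itemize}
The real repair is an explicit extra hypothesis. One option is $\max_kD^{(K_o)}_k<\rho$ at the procedure's limit, together with a strict gap. Another is a strict gap combined with running the criterion at $\rho+\delta$ for small $\delta>0$, invoking continuity of $\widetilde\kappa$. With either in place, your argument is complete and at least as rigorous as the paper's.
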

\begin{proof}[Proof idea] To prove \cref{thm:main} we establish two facts.
	First, we show that the loss satisfies
	\[
		\Pr\{\mcR^{\rho}(\allparam^{(\numcomps_{o})}; z^{(\numcomps_{o})}, \data{1:\numobs}) = 0 \}   \overset{N \to \infty}{\longrightarrow}  1.
	\]
	Second, we verify that for $\numcomps < \numcomps_o$,
	$
		\mcR^{\rho}(\allparam^{(\numcomps)}; z^{(\numcomps)}, \data{1:\numobs}) \to \infty
	$
	in probability for $N \to \infty$.
	Therefore, for $\numcomps > \numcomps_o$, asymptotically
	$\mcR^{\rho}(\allparam^{(\numcomps)}; z^{(\numcomps)}, \data{1:\numobs})  \ge \mcR^{\rho}(\allparam^{(\numcomps_{o})}; z^{(\numcomps_{o})}, \data{1:\numobs})$, so the minimum is asymptotically attained at $\widehat{\numcomps}=\numcomps_o$.
\end{proof}

%
%
%
%
%
%
%

%
%
%
%
%
%
%
%
%
%
%
%
%
%
%
%
%
%
%
%
%
%
%
%
%
%

\subsection{Results for Specific Divergences} \label{sec:divergence-choice}

We now verify \cref{thm:mixture-model-robust-consistency} by showing that  \cref{assump:metric-discr-conditions}(a,b,e) is satisfied when $\mcD$ is the  KL divergence, maximum mean discrepancy \citep[MMD; ][]{Sriperumbudur:2010}), or 1-Wasserstein distance \citep{Villani:2009}) -- with an appropriate choice of $\distDisc$ is made.
Before stating these results, we recall the following useful class of metrics on probability measures.
\begin{definition}[Integral probability metric]
	Given a collection $\mcH$ of real-valued functions on $\mcX$,
	for any probability measures $P$ and $Q$ on $\mcX$,
	the corresponding \emph{integral probability metric} is defined as
	\[ \label{eq:IPM}
		d_{\mcH}(P,Q) = \sup_{h \in \mcH}\left|\int h(x)P(\dee x) - \int h(y) Q(\dee y)\right|.
	\]
\end{definition}
\paragraph{Kullback--Leibler divergence.}
Assume that $\mcX$ is equipped with a metric $m$ and define the \emph{bounded Lipschitz norm}
$\BLnorm{h} = \Vert h \Vert_{\infty} + \Vert h \Vert_{L}$, where $\Vert{h}\Vert_{L} = \sup_{x \ne y}|h(x) - h(y)|/m(x, y)$
and $\Vert{h}\Vert_{\infty} = \sup_{x} |h(x)|$.
Letting $\mcH = \mcH_{\mathrm{BL}} = \{h : \Vert h \Vert_{\mathrm{BL}}\le 1\}$ gives the bounded Lipschitz metric
$\blmetric = d_{\mcH_{\mathrm{BL}}}$.
If $\mcD$ is the KL divergence, we can choose $d$ to be the bounded Lipschitz metric.
\begin{proposition} 	\label{coro:KL}
	If $\distDisc(P, Q) = \blmetric(P,Q)$ and $\mcD(P \mid Q) = \kl{P}{Q}$, then
	\cref{assump:metric-discr-conditions}(a,b,e) holds with $\widetilde\kappa(\rho) = (\rho/2)^{1/2}$.
\end{proposition}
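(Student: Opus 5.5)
We verify the three parts of \cref{assump:metric-discr-conditions} separately, using standard facts about the bounded Lipschitz metric together with Pinsker's inequality.

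\textbf{Part (a).} Suppose $\widehat{P}_N \to P$ in distribution. The bounded Lipschitz metric metrizes weak convergence on a separable metric space (the portmanteau/Dudley theorem; see \citet{Vaart:1996}). Hence $\blmetric(\widehat P_N, P) \to 0$. If the space is not separable, we would instead invoke the version of Dudley's theorem for measures with separable support, which covers empirical measures and their limits.

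\textbf{Part (b).} This follows from the general fact that every integral probability metric is jointly convex. The argument is direct. For any $h \in \mcH_{\mathrm{BL}}$ and $w \in (0,1)$, linearity of the integral gives
\[
\Big|\int h\,\dee(wP+(1-w)P') - \int h\,\dee(wQ+(1-w)Q')\Big| \le w\Big|\int h\,\dee P - \int h\,\dee Q\Big| + (1-w)\Big|\int h\,\dee P' - \int h\,\dee Q'\Big|,
\]
by the triangle inequality. Each term on the right is bounded by the corresponding supremum over $\mcH_{\mathrm{BL}}$, i.e.\ by $w\,\blmetric(P,Q)$ and $(1-w)\,\blmetric(P',Q')$. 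Taking the supremum over $h$ on the left yields the claim. We would cite this as \cref{lem:ipm-joint-convex}.

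\textbf{Part (e).} This is the only step with any content; the point is to get the right constant. Any $h$ with $\BLnorm{h} \le 1$ satisfies $\|h\|_\infty \le 1$, so
\[
\blmetric(P,Q) \le \sup_{\|h\|_\infty \le 1}\Big|\int h\,\dee P - \int h\,\dee Q\Big| = 2\,\mathrm{TV}(P,Q),
\]
where $\mathrm{TV}(P,Q) = \sup_A |P(A)-Q(A)|$.

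Pinsker's inequality gives $\mathrm{TV}(P,Q) \le \sqrt{\kl{P}{Q}/2}$, so this route only yields $\blmetric(P,Q) \le \sqrt{2\,\kl{P}{Q}}$. The stated constant $(\rho/2)^{1/2}$ is four times smaller, so this crude route does not give it.

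To get the sharper constant, we would exploit a centering argument instead of the bound $\|h\|_\infty \le 1$. We would first check that the paper's intended convention for $\blmetric$ or for TV (for example, defining TV as the full $L^1$ distance) absorbs the factor. If it does not, we would proceed as follows:
\begin{itemize}
\item Shift $h$ by a constant $c$; this leaves $\int h\,\dee P - \int h\,\dee Q$ unchanged.
\item Choose $c$ so that $\|h-c\|_\infty \le \tfrac12(\sup h - \inf h)$.
\item Apply the sharp form $|\int g\,\dee(P-Q)| \le \|g\|_\infty \|P-Q\|_1$ with $\|P-Q\|_1 \le \sqrt{2\,\kl{P}{Q}}$.
\end{itemize}
Centering alone gives $|\int h\,\dee(P-Q)| \le \sqrt{2\,\mathrm{KL}}$, the same as before. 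Matching $(\rho/2)^{1/2}$ exactly therefore depends on the normalization convention, and this is where we expect the main friction.

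Continuity and monotonicity of $\widetilde\kappa(\rho) = (\rho/2)^{1/2}$ on $\reals_+$ are immediate. For $\rho < 0$ we would extend it by $0$; this case never arises because the KL divergence is nonnegative.
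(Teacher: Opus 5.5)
\textbf{Verdict: your (a) and (b) are correct, but (e) as stated is false, so the gap you flag cannot be closed. The bound you derived, $\widetilde\kappa(\rho) = (2\rho)^{1/2}$, is the correct and sharp one, and the paper's own proof of $(\rho/2)^{1/2}$ is in error.}

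Parts (a) and (b) are fine. Part (b) is exactly the paper's argument via \cref{lem:ipm-joint-convex}. For (a), you use that $\blmetric$ metrizes weak convergence, whereas the paper cites \citet[Theorem 1]{Wellner:1981}. Since (a) already assumes $\widehat P_N \to P$ in distribution, your route is the more direct one.

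The friction you flag in (e) is not a matter of convention. With $\BLnorm{h} = \Vert h\Vert_\infty + \Vert h\Vert_L$, which is how the paper defines it, the inequality $\blmetric(P,Q) \le (\kl{P}{Q}/2)^{1/2}$ fails. Take $\mcX = \reals$ with $m(x,y) = |x-y|$, $P = 0.6\,\delta_0 + 0.4\,\delta_{10}$ and $Q = 0.5\,\delta_0 + 0.5\,\delta_{10}$.
\begin{itemize}
\item Then $\kl{P}{Q} = 0.6\log 1.2 + 0.4\log 0.8 \approx 0.0201$, so $(\kl{P}{Q}/2)^{1/2} \approx 0.100$.
\item The function $h(x) = \tfrac56\max\{-1,\min\{1,1-x/5\}\}$ has $\Vert h\Vert_\infty = 5/6$ and $\Vert h\Vert_L = 1/6$, so $\BLnorm{h} = 1$.
\item It gives $\int h\,\dee P - \int h\,\dee Q = 1/6 \approx 0.167 > 0.100$.
\end{itemize}
Your bound is also sharp. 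Take $P = (\tfrac12+\varepsilon)\delta_0 + (\tfrac12-\varepsilon)\delta_L$ and $Q = \tfrac12(\delta_0+\delta_L)$, and let $L\to\infty$ and $\varepsilon\to 0$. Then $\blmetric(P,Q)/\kl{P}{Q}^{1/2}$ tends to $\sqrt{2}$, matching $\blmetric \le (2\,\mathrm{KL})^{1/2}$.

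The paper's one-line proof obtains $(\rho/2)^{1/2}$ from ``$\blmetric \le d_{\mathrm{TV}}$ and $(2d_{\mathrm{TV}})^2 \le \mathrm{KL}$''. Neither convention for $d_{\mathrm{TV}}$ makes both hold:
\begin{itemize}
\item With $d_{\mathrm{TV}} = \sup_A |P(A)-Q(A)|$, the convention of \citet{Gibbs:2002}, Pinsker's inequality is $2d_{\mathrm{TV}}^2 \le \mathrm{KL}$, not $4d_{\mathrm{TV}}^2 \le \mathrm{KL}$. Also $\blmetric \le d_{\mathrm{TV}}$ fails: the example above has $d_{\mathrm{TV}} = 0.1$. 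Only $\blmetric \le 2d_{\mathrm{TV}}$ holds, which is your displayed bound.
\item With the $L^1$ convention, $\blmetric \le d_{\mathrm{TV}}$ holds, but Pinsker reads $d_{\mathrm{TV}} \le (2\,\mathrm{KL})^{1/2}$.
\end{itemize}
Either way one arrives at $\widetilde\kappa(\rho) = (2\rho)^{1/2}$. The stated constant does hold if $\mcX$ has diameter at most $2$. In that case $\BLnorm{h} \le 1$ forces $\sup h - \inf h \le \min\{2\Vert h\Vert_\infty, 2\Vert h\Vert_L\} \le 1$, and $|\int h\,\dee(P-Q)| \le (\sup h - \inf h)\sup_A|P(A)-Q(A)|$.

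So drop the hedge and the centering detour, and conclude (e) with $\widetilde\kappa(\rho) = (2\rho)^{1/2}$, extended by $0$ for $\rho<0$. This function is continuous and nondecreasing, which is all \cref{thm:main} uses. The only downstream cost is a larger gap function:
\begin{itemize}
\item $(2\rho)^{1/2}$ rather than the $\sqrt{\rho}$ of \cref{thm:mixture-model-robust-consistency};
\item $(2K\rho)^{1/2}$ rather than $(K\rho/2)^{1/2}$ in \cref{thm:pmf-robust-consistency}.
\end{itemize}
One small slip: $(\rho/2)^{1/2}$ is half of $(2\rho)^{1/2}$, not a quarter.
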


\paragraph{Maximum mean discrepancy (MMD).}
Let $\mcK $ denote a kernel.
Denote the reproducing kernel Hilbert space with positive definite kernel $\mcK \colon \mcX \times \mcX \to \reals$ as $\mcH_{\mcK}$.
Denote the inner product of $\mcH_{\mcK}$ by $\langle{\cdot},{\cdot}\rangle_{\mcK}$ and the norm by $\|{\cdot}\|_{\mcK}$.
Letting $\mcH = \mcB_{\mcK} = \{ h \in \mcH_{\mcK} :  \|h\|_{\mcK} \le 1\}$, the unit ball, gives the maximum mean discrepancy $d_{\mathrm{MMD}} = d_{\mcB_{\mcK}}$.
If we choose the divergence to be an MMD, $\distDisc$ can be the same MMD.
\begin{proposition} 	\label{coro:MMD}
	If $\mcK$ is chosen such that $d_{\mathrm{MMD}}$ metrizes weak convergence, and
	$\distDisc(P, Q) =  \discr{P}{Q} = d_{\mathrm{MMD}}(P,Q)$,
	then \cref{assump:metric-discr-conditions}(a,b,e) holds with $\widetilde\kappa(\rho) = \rho$.
\end{proposition}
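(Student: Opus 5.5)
We need to verify (a), (b), (e) of \cref{assump:metric-discr-conditions} for $\distDisc = \discr{\cdot}{\cdot} = d_{\mathrm{MMD}}$ with $\widetilde\kappa(\rho) = \rho$. We take the three conditions in turn; each follows from the definition of $d_{\mathrm{MMD}}$ as an integral probability metric.

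\textbf{Condition (e).} This is immediate. Since $\distDisc(P,Q) = d_{\mathrm{MMD}}(P,Q) = \discr{P}{Q}$, the inequality $\distDisc(P,Q) \le \widetilde\kappa(\discr{P}{Q})$ holds with equality when $\widetilde\kappa(\rho) = \rho$. This $\widetilde\kappa$ is continuous and non-decreasing on $\reals$.

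\textbf{Condition (b).} We invoke \cref{lem:ipm-joint-convex}, which states that every integral probability metric is jointly convex, and note that $d_{\mathrm{MMD}} = d_{\mcB_{\mcK}}$ is one. The one-line argument is as follows. Fix $h \in \mcB_{\mcK}$ and $w \in (0,1)$. By linearity of the integral in the measure,
\[
\textstyle \int h \, \dee(wP + (1-w)P') - \int h \, \dee(wQ + (1-w)Q') = w\big(\int h\,\dee P - \int h\,\dee Q\big) + (1-w)\big(\int h\,\dee P' - \int h\,\dee Q'\big).
\]
Apply the triangle inequality to the absolute value of the right-hand side, bound each term by its supremum over $\mcB_{\mcK}$, and then take the supremum over $h$ on the left. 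This gives the convexity inequality.

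\textbf{Condition (a).} By hypothesis $\widehat P_\numobs \to P$ in distribution, and $d_{\mathrm{MMD}}$ metrizes weak convergence. Hence $d_{\mathrm{MMD}}(\widehat P_\numobs, P) \to 0$ directly.

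The only point that needs care is that we must use the weak-convergence metrization assumption in (a) as stated, without appealing to any further properties of the kernel. Condition (a) is therefore the place where the assumption on $\mcK$ enters; conditions (b) and (e) hold for any positive definite kernel.
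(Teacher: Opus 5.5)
Your proposal is correct and follows essentially the same route as the paper: (a) comes directly from the hypothesis that $d_{\mathrm{MMD}}$ metrizes weak convergence, (b) from the joint convexity of integral probability metrics (\cref{lem:ipm-joint-convex}), and (e) holds trivially with the identity map. The paper's proof also remarks that (f) holds for bounded kernels, but that condition is not part of the stated claim.
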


For conditions under which $d_{\mathrm{MMD}}$ metrizes weak convergence, see
\citet{Sriperumbudur:2010,Simon:2020}.

\paragraph{1-Wasserstein distance.}
Setting $\mathcal{H} = \mathcal{H}_{\text{Lip}} = \{ h : \| h \|_L \leq 1 \}$ gives the 1-Wasserstein distance $d_W = d_{\mathcal{H}_{\text{Lip}}}$.
As with the MMD, we can choose the divergence and $\distDisc$ to be the same:
\begin{proposition}
	\label{prop:wasserstein}
	If $\distDisc(P, Q) = \discr{P}{Q} = d_W(P, Q)$, then \cref{assump:metric-discr-conditions}(a,b,e) holds with $\kappa(\rho) = \rho$.
\end{proposition}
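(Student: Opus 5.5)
We need to verify \cref{assump:metric-discr-conditions}(a), (b), and (e) for $\distDisc = \discr{\cdot}{\cdot} = d_W$, with $\widetilde\kappa(\rho) = \rho$. Condition (e) is immediate. Since the discrepancy and the metric coincide, $\distDisc(P,Q) = d_W(P,Q) = \widetilde\kappa(\discr{P}{Q})$ with $\widetilde\kappa$ the identity, which is continuous and non-decreasing. The work is therefore in (a) and (b).

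For (b), I would use the integral probability metric representation $d_W = d_{\mcH_{\mathrm{Lip}}}$ and argue directly, as in \cref{lem:ipm-joint-convex}. Fix $w \in (0,1)$ and any $h \in \mcH_{\mathrm{Lip}}$. By linearity of the integral in the measure and the triangle inequality,
\[
\Big|\int h \, \dee(wP + (1-w)P') - \int h \, \dee(wQ+(1-w)Q')\Big| \le w \Big|\int h\,\dee P - \int h\,\dee Q\Big| + (1-w)\Big|\int h\,\dee P' - \int h\,\dee Q'\Big|.
\]
The right-hand side is at most $w\, d_W(P,Q) + (1-w)\, d_W(P',Q')$. Taking the supremum over $h$ on the left gives joint convexity. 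One bookkeeping point: if some $d_W$ term on the right is infinite, the inequality is trivial. Integrability of Lipschitz $h$ is only guaranteed under first moments, so I would work within the class of measures with finite first moment, or else rely on the bounded-cost convention noted in the remark.

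For (a), we need $d_W(\widehat P_N, P) \to 0$ whenever $\widehat P_N \to P$ in distribution. This is the step I expect to be the main obstacle. Weak convergence alone does not imply $W_1$ convergence; one also needs convergence of first moments, equivalently uniform integrability of $m(x_0,\cdot)$. My plan is to invoke the characterization in \citet[Thm.~6.9]{Villani:2009}: $W_1$ metrizes weak convergence plus convergence of first moments.

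There are two ways to secure the moment condition. The first, consistent with the remark's reference to "the Wasserstein metric with a bounded cost function," is to take the cost $m$ bounded (for example, replace $m$ by $\min(m,1)$). Then first moments are trivially uniformly integrable, and weak convergence suffices. The second route applies when $\widehat P_N$ is the empirical measure of i.i.d.\ draws from $P$ with $\int m(x_0,x)P(\dee x) < \infty$. In that case Varadarajan's theorem gives weak convergence almost surely, and the strong law of large numbers gives convergence of $\int m(x_0,x)\widehat P_N(\dee x)$. Together these yield $d_W(\widehat P_N,P)\to 0$ almost surely, which is the form actually used in the proof of \cref{thm:main}.

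I would state the proposition under the bounded-cost convention, or equivalently under a finite-first-moment assumption, and note that the other two conditions hold without any such caveat.
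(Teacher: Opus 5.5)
Your proposal is correct and follows the template the paper uses. The paper gives no separate proof for this case, but its MMD argument is exactly yours: \cref{lem:ipm-joint-convex} for (b), the identity for (e), and metrization of weak convergence for (a). As you rightly point out, for $d_W$ the last step needs the bounded-cost convention mentioned in the paper's remark.

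One small correction: the finite-first-moment i.i.d.\ route is not equivalent to the bounded-cost one. Condition (a) quantifies over arbitrary weakly convergent empirical measures; for example, $\widehat P_N = N^{-1}\delta_N + (1-N^{-1})\delta_0 \to \delta_0$ weakly while $d_W(\widehat P_N,\delta_0)=1$. So only the bounded-cost route verifies (a) as stated, whereas the i.i.d.\ route gives the weaker property actually invoked in the proof of \cref{thm:main}.
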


\subsection{Theory for Probabilistic Matrix Factorization}

Due to the more complex dependence structures in probabilistic matrix factorization, we focus 
on the case when $\mcD$ is the KL divergence, as stated informally in \cref{thm:pmf-robust-consistency}. 
For a given $K$, denote the empirical distribution of estimated coefficients by $\widehat{H}^{(K,N)} = N\inv\sum^{N}_{n=1}\delta_{\widehat{z}^{(K)}_{n}}$.
\begin{assumption}\label{assump:consistency}
	For each $K \in \{1,\dots,K_{o}\}$, there exists ${\phi}^{(K)}_{\star} \in\Phi^{(K)}$ and distribution $H^{(K)}_{\star}$ such that ${\widehat{\phi}}^{(K,N)} \convp {\phi}^{(K)}_{\star}$ and $\widehat{H}^{(K,N)}\convd H^{(K)}_{\star}$ as $N\to \infty$, after possible reordering of components.
\end{assumption}

\cref{assump:consistency} is analogous to \cref{assump:inference-regularity} in the mixture model case. This should also holds for most applications under mild conditions
\citep{Zhao_ConvergenceAnalysisMU_2017,Devarajan_NMFdualdivergence_2019,Fu_IdentifiabilityNMF_2018,Anderson_AsymptoticFA_1988,Anderson_AsymptoticPCA_1963}. 

Let $\frac{\d{y}}{\d{\veps}}(\blank,{\phi},z)$ be the Jacobian matrix of the mapping ${\veps}\mapsto f\lrp{z,{\phi},{\veps}}$,
and $\mathcal{E}({y},{\phi},z)=\lrc{{\veps}\mid f(z,{\phi},{\veps})={y})}$. 
Let
\[
Q({y_{nk}}\mid{\phi}_{k},z_{nk},\widetilde{G}_{k}) %
  = \int_{\mathcal{E}\lrp{{y_{nk}},{\phi_{k}},z_{nk}}}
	\widetilde{G}_{k}({\veps})
	\det\lrp{\frac{\d{{y}}}{\d{\veps}}\lrp{{\veps},{\phi_{k}},z_{nk}}}\inv
\]
denote the conditional distribution of ${y}^{(K)}_{nk}$ 
given $z^{(K)}_{nk}$, ${\phi}^{(K)}_{k}$ and, noise distribution $\widetilde{G}$,
and let 
\[
  Q^{(K)}\lrp{y_{n} \mid {\phi} , z_{n} , \widetilde{G}_{1:K}}=
  \prod_{k=1}^{K}Q({y}_{nk}\mid{\phi}_{k},z_{nk},\widetilde{G}_{k}). 
\]
We can now define the conditional distributions for the limiting model as 
\[
  Q^{(K)}_{\star k}({y_{nk}}\mid z_{nk}) = Q\lrp{{y_{nk}}\mid{\phi}^{(K)}_{\star k},z_{nk},G}
  \quad\text{and}\quad
  Q^{(K)}_{\star}({y}_{n}\mid z_{n})=\prod^{K}_{k=1}Q^{(K)}_{\star k}({y}_{nk}\mid z_{nk}),
\]
and the empirical conditional distributions as 
\[
  \widehat{Q}^{(K,N)}_{k}({y_{n}}\mid z_{n}) = Q\lrp{{y_{n}}\mid{\widehat{\phi}}^{(K,N)}_{k},z,\widehat{G}^{(K,N)}_{k}}
    \quad\text{and}\quad
  \widehat{Q}^{(K,N)}(y_{n}\mid z_{n}) =\prod^{K}_{k=1}\widehat{Q}^{(K,N)}_{k}({y}_{k}\mid z_{k}).
\]

\begin{assumption}\label{assump:sample_approx_able}
  The data-generating distribution and model satisfy the following conditions: 
	\begin{enumerate}[label={(\alph*)}]
\item For all $K \in \{1,\dots,K_{o}\}$, $k \in \{1,\dots,K\}$, and $z \in \mcZ$, 
the distributions $Q^{(K)}_{\star k}(\blank\mid z)$ has the same support on $\mathcal{Y}$.
    \item For all $K \in \{1,\dots,K_{o}\}$, $k \in \{1,\dots,K\}$, $z \in \mcZ$, and distribution $Q'$ on $\mcY$, 
      \[
        \lim_{{\phi}\to{\phi}^{(K)}_{k\star}}\widehat{\mathcal{D}}(Q', Q(\blank\mid{\phi}, z, G))
        = \widehat{\mathcal{D}}(Q', Q^{(K)}_{\star k}(\blank\mid z)).
      \]
    \item There exists a function $C: \mathcal{Y}^{K} \to \reals$ such that for all ${y}_{1:K}\in\mathcal{Y}^{K}$, if
     $z_{1:K}\sim H^{(K)}_{\star}$, then 
      \[
      	\Var\lrp{Q^{(K)}_{\star}\lrp{{y}_{1:K}\mid z_{1:K}}} 
   	 \leq C({y}_{1:K})\cdot\mathbb{E}\lrb{Q^{(K)}_{\star}\lrp{{y}_{1:K}\mid z_{1:K}}},
     \]
    and
    \[
    \int_{\mathcal{X}^{\otimes K}_{D}}C({y}_{1:K}) \, d{y}_{1:K} < \infty.
    \]
	\end{enumerate}
\end{assumption}
\cref{assump:sample_approx_able} ensures that the data generating distribution can be sufficiently approximated via empirical sampling. 
The first two parts are mild regularity conditions while \cref{assump:sample_approx_able}(c) is applicable to the PMF models used in \cref{sec:pmf-applications} (see \cref{sec:sample_approx_able_example} for details).

The following theorem and \cref{coro:KL} immediately imply \cref{thm:pmf-robust-consistency}. 
\begin{theorem} \label{thm:main-pmf}
	For probabilistic matrix factorization, if \cref{assump:metric-discr-conditions,assump:consistency,assump:sample_approx_able} hold with $\discr{\cdot}{\cdot} = \kl{\cdot}{\cdot}$, then 
    \methodname is $\kappa$-robustly consistent for $\kappa(\rho, K) = \widetilde{\kappa}(K \rho)$. 
\end{theorem}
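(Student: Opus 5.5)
The plan mirrors the two-step structure of the proof of \cref{thm:main}. Step one shows that at $K = K_o$ the loss vanishes with probability tending to one. Step two shows that for every $K < K_o$ at least one truncated component discrepancy stays bounded away from zero in probability. Since $\widehat K^\rho$ takes the smallest minimizer, these two facts together give $\Pr\{\widehat K^{\rho}(x_{1:N}) = K_o\} \to 1$ with $\rho = \rho(P_o, K_o)$; the value of the loss for $K > K_o$ does not matter. As in the mixture case, we take $\theta_\star^{(K)}$ from \cref{assump:consistency} to be the relevant element of $\Theta_\star^{(K)}(P_o)$.

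\textbf{Step one ($K = K_o$).} By \cref{assump:consistency}, $\widehat\phi^{(K_o,N)} \convp \phi_\star^{(K_o)}$ and $\widehat H^{(K_o,N)} \convd H_\star^{(K_o)}$. I would first show that the empirical noise distributions $\widehat G_k^{(K_o)}$ converge in distribution to the limiting inferred noise laws $\widetilde G_k$. This is where \cref{assump:sample_approx_able}(c) enters: the variance-to-mean bound on $Q_\star^{(K)}(y_{1:K} \mid z_{1:K})$ is integrable, so a Chebyshev-type argument integrated over $y$ shows that the empirical mixture $N^{-1}\sum_n Q_\star(\cdot \mid \widehat z_n)$ approximates its expectation in total variation. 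Combined with \cref{assump:sample_approx_able}(a), this makes the posterior noise laws $\mcL(\veps_{nk} \mid x_n, \widehat\theta)$ average to $\widetilde G_k$. Then \cref{assump:metric-discr-conditions}(c), together with the continuity in $\phi$ from \cref{assump:sample_approx_able}(b), gives $\compDiscEst^{(K_o,k)} \to \kl{\widetilde G_k}{G} \le \rho$. To get exactly zero after truncation I would handle the boundary case $\kl{\widetilde G_k}{G} = \rho$ with a slack argument, as in the proof of \cref{thm:main}.

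\textbf{Step two ($K < K_o$), the main obstacle.} Argue by contraposition: suppose every $\kl{\widetilde G^{(K)}_k}{G} \le \rho$. I would build a coupling in which the model generates $(y_{nk})_k$ with noise $\veps_{nk} \sim G$ independently, while the inferred process uses noise $\veps_{nk} \sim \widetilde G_k$, and both push forward through the same $f$ and $g^{(K)}$ with the same $z_n \sim H_\star^{(K)}$. Under the inferred noise laws the pushforward recovers $P_o$. The data-processing inequality and the chain rule for KL, applied to the product $Q^{(K)}$ over components, then give
\[
\kl{P_o}{P_{\theta_\star^{(K)}}} \le \sum_{k=1}^K \kl{\widetilde G_k}{G} \le K\rho .
\]
Pinsker's inequality, combined with $\blmetric \le$ total variation (\cref{coro:KL}), yields $\distDisc(P_o, P_{\theta_\star^{(K)}}) \le \widetilde\kappa(K\rho) = \sqrt{K\rho/2}$.

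The delicate point is that the inferred noise variables are generally dependent across $k$ given $x$. The proposed fix is to replace the joint inferred law by the product of its marginals. This works because KL to a product measure $G^{\otimes K}$ splits as the sum of the marginal KLs plus a nonnegative mutual information term, so it is the upper bound I need to justify, not equality. If the bound fails, I would instead bound the KL of the joint inferred law by the expected conditional KL, using convexity. This contradicts \eqref{eq:mismatch-condition} with $\kappa(\rho,K) = \widetilde\kappa(K\rho)$, provided the inequality is made strict: a margin $\epsilon$ comes from continuity of $\widetilde\kappa$, and an appeal to \cref{assump:metric-discr-conditions}(a) handles the empirical versions. Finally, the consistency from step one, applied at $K$, transfers the strict gap to $\compDiscEst^{(K,k)} > \rho$ for some $k$ with probability tending to one, so $\mcR^\rho(x_{1:N}, K) > 0$.
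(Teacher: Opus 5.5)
Your Step two hinges on $\mathrm{KL}(P_o \mid P_{\theta_\star^{(K)}}) \le \sum_{k=1}^K \mathrm{KL}(\widetilde G_k \mid G)$. This does not follow from your coupling, and in fact it fails in general.

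The law whose pushforward through $f$ and $g^{(K)}$ reproduces $P_o$ is the joint inferred law $J$ of $(z, \veps_1, \dots, \veps_K)$. Under $J$ the $\veps_k$ are dependent on one another and on $z$. Data processing gives $\mathrm{KL}(P_o \mid P_\theta) \le \mathrm{KL}(J \mid H_\star^{(K)} \otimes G^{\otimes K})$. The chain rule writes the right-hand side as $\sum_k \mathrm{KL}(\widetilde G_k \mid G)$ plus three nonnegative terms: the KL of the $z$-marginal from $H^{(K)}_\star$, the mutual information between $z$ and $\veps_{1:K}$, and the total correlation of $\veps_{1:K}$. So the decomposition you cite makes the sum of marginal KLs a \emph{lower} bound on what data processing controls, which is the opposite of what you need.

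Neither fix repairs this. Replacing $J$ by the product of its marginals makes the KL equal to the sum, but then the pushforward is no longer $P_o$. The convexity fallback gives $\int \mathrm{KL}(\mcL(\veps \mid x) \mid G^{\otimes K})\, P_o(\d x) \ge \mathrm{KL}(J \mid G^{\otimes K})$, which is larger still.

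For a concrete failure, take $K = 2$, $D = 1$, $z_{nk} \equiv 1$, $f(z, \phi, \veps) = z\phi + \sigma\veps$, $G = \Norm(0,1)$ and a fixed $\theta$. With $s = (x - \phi_1 - \phi_2)/\sigma$ the model has $s \sim \Norm(0,2)$; let $s \sim \Norm(0, 2u)$ under $P_o$. Then $\veps_k \mid s \sim \Norm(s/2, 1/2)$, so $\widetilde G_k = \Norm(0, (u+1)/2)$, and $\sum_k \mathrm{KL}(\widetilde G_k \mid G) - \mathrm{KL}(P_o \mid P_\theta) = \log\{2\sqrt{u}/(u+1)\} < 0$ for $u \neq 1$. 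Nothing in your argument uses a property of $\theta_\star^{(K)}$, so it would cover this $\theta$. The Pinsker step, and with it the contradiction with \eqref{eq:mismatch-condition}, therefore has no valid premise.

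The paper takes the product-of-marginals route at finite $N$. Its $\widehat Q^{(K,N)}$ mixes $\prod_k Q(\cdot \mid \widehat\phi_k, z_k, \widehat G_k)$ over $z \sim \widehat H^{(K,N)}$, with the noise independent of $z$. Convexity and additivity of KL then reduce $\widehat{\mcD}(\widehat Q^{(K,N)}, \check Q^{(K,N)})$ to $\sum_k \widehat{\mcD}(\widehat G_k, G)$, and the bridge from $\check Q^{(K,N)}$ to $Q^{(K)}_\star$ is a $\chi^2$ bound. That bound is where \cref{assump:sample_approx_able}(c) is actually used. It says nothing about $\widehat G_k \to \widetilde G_k$, which is how you invoke it in Step one.

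The paper's argument then rests on $\distDisc(P_o, \widehat P^{(K,N)}) = \op(1)$, attributed to \cref{assump:metric-discr-conditions}(a), although $\widehat P^{(K,N)}$ is not the empirical data distribution. In the example above, the product-of-marginals pushforward has $s \sim \Norm(0, u+1)$, not $\Norm(0, 2u)$. So you and the paper lack the same ingredient. You need either a hypothesis that the inferred noise is asymptotically independent across components and of $z$ (equivalently, that the product-of-marginals pushforward recovers $P_o$), or an explicit bound on the dependence terms above. You were right to flag the dependence, but no rearrangement of KL identities will supply it.

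Separately, you need two strict margins: $\mathrm{KL}(\widetilde G_k \mid G) < \rho$ at $K = K_o$, and strict inequality in \eqref{eq:mismatch-condition} for $K < K_o$. Neither comes from continuity of $\widetilde\kappa$ or a slack argument. The paper simply asserts an $\varepsilon > 0$ at both points, so these have to be imposed as strict hypotheses.
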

See \cref{sec:PMF-consistency-proof} for a proof.

\section{Proofs of Mixture Model Results} \label{sec:mixture-model-consistency}
\subsection{Notation}
We write $\op(g(\numobs))$ to denote a random function $f$ that satisfies $f(\numobs) / g(\numobs) \to 0$ in probability for $\numobs \to \infty$.
Let $\widehat\param_{k}^{(\numcomps,\numobs)}$ denote the $k$th component parameter estimate using $\data{1:\numobs}$
for the mixture model with $\numcomps$ components.
More generally, we replace superscript $(\numcomps)$ with $(\numcomps,\numobs)$ to make the dependence on $\numobs$ explicit.
Note that, with probability 1, $N^{(\numcomps,\numobs)}_k \rightarrow \infty$ as $\numobs \rightarrow \infty$.
For simplicity, we introduce the shorthand notation
$F^{(\numcomps,\numobs)}_{k} = F_{\widehat\param^{(\numcomps,\numobs)}_k}$,
$\mcR^{\rho}_{\numobs} (\allparam^{(\numcomps)}) = \mcR^{\rho}(\allparam^{(\numcomps)}; z^{(\numcomps,\numobs)}, \data{1:\numobs})$, and
$\mcR^{\rho,\lambda}_{\numobs}(\allparam^{(\numcomps)}) = \mathcal{R}^{\rho,\lambda}(\allparam^{(\numcomps)}; z^{(\numcomps,\numobs)}, \data{1:\numobs})$.
We also write $F^{(K)}_{\star k} = F_{\param^{(K)}_{\star k}}$ and $G^{(K)}_{\star k} = G_{\allparam^{(K)}_{\star k}}$, and similarly for their densities.

Define the conditional component probabilities based on optimal model distribution and true generating distribution respectively as
\[
	q^{(\numcomps)}_{\star}(k \mid x)= \frac{\eta^{(\numcomps)}_{\star k}
  f_{\star k}^{(\numcomps)}(x)}{p^{(\numcomps)}_{\star}(x)}, \qquad
	q_{o}(k \mid x) = \frac{\eta_{ok}f_{ok}(x)}{p_{o}(x)}.
\]
For conditional probabilities of model distribution, we denote as 
\[
\widehat{q}^{(\numcomps, \numobs)}(k \mid x) = \widehat{\eta}^{(\numcomps,\numobs)}_{k}f_{k}^{(\numcomps,\numobs)}(x) / p^{(\numcomps,\numobs)}(x).
\]
Note that $\widehat{q}^{(\numcomps,\numobs)} (k\mid x) \rightarrow q^{(\numcomps)}_{\star}(k \mid x)$ as $\numobs \rightarrow \infty$ with probability 1.

\subsection{Proof of \cref{thm:main}} \label{pf:main-thm}

We show that (1) for $\numcomps = \numcomps_o$,  $ \mcR^{\rho}_{\numobs}(\allparam^{(\numcomps)}_{\star}) \rightarrow 0$ in probability as $\numobs \rightarrow \infty$, and (2) for $\numcomps < \numcomps_o$, $\mcR^{\rho}_{\numobs}(\allparam^{(\numcomps)}_{\star})\rightarrow \infty$ in probability as $\numobs \rightarrow \infty$.
The theorem conclusion follows immediately from these two results since the loss is lower bounded by zero,
so asymptotically $K_o$ will be estimated to be the smallest value of $\numcomps$ which has a loss of zero.

\textbf{Proof of part (1).} %
Fix $\numcomps=\numcomps_{o}$.
It follows from \cref{assump:metric-discr-conditions}(c,d) and \cref{assump:inference-regularity} that $\discrest{\widehat{F}^{(\numcomps_{o},\numobs)}_{k}}{F^{(\numcomps,\numobs)}_{k} }  \to \discr{\widetilde{F}_{k}}{F^{(\numcomps_{o})}_{\opt k}}$ in probability.
Hence, it follows that there exists $\varepsilon>0$ such that
\[
	\discrest{\widehat{F}^{(\numcomps_{o},\numobs)}_{k}}{F^{(\numcomps,\numobs)}_{k} } < \rho - \varepsilon + \op(1).
\]
Using this inequality, it follows that
\begin{align}
	\mcR^{\rho}_{\numobs}(\allparam^{(\numcomps_o}_{\star})
	 & =\sum_{k=1}^{\numcomps_o}	N^{(\numcomps_{o},\numobs)}_k \max\{0, \discrest{\widehat{F}_{k}^{(\numcomps_o, \numobs)}}{F^{(\numcomps_{o},\numobs)}_{k}}  - \rho\} \\
	 & \le \sum_{k=1}^{\numcomps_o}	N^{(\numcomps_{o},\numobs)}_k\max\{0, -\varepsilon+\op(1)\}.
\end{align}
Hence, we can conclude that $\lim_{\numobs \rightarrow\infty}\Pr[ \mcR^{\rho}_{\numobs}(\allparam^{(\numcomps_o}_{\star}) =0] = 1$. %

\textbf{Proof of part (2).}
Now we consider the case of $\numcomps < \numcomps_o$.
Note the empirical distribution can be written as $\widehat{F}^{(\numcomps, \numobs)} = \sum_{k=1}^{\numcomps}\widehat{\eta}_k^{(\numcomps, \numobs)} \widehat{F}_k^{(\numcomps, \numobs)}$.
By dominated convergence, we know that for $\numobs \to \infty$,
\begin{equation}
	\widehat{\eta}_k^{(\numcomps,\numobs)} = \int \widehat{q}^{(\numcomps,\numobs)} (k\mid x')\datadist(\dee x') \rightarrow  \int q^{(\numcomps)}_{\star}(k\mid x')\datadist(\dee x') = \eta^{(\numcomps)}_{\star k}, \label{eq:pi-conv}
\end{equation}
where convergence is in probability.

For the purpose of contradiction, assume that $\discr{\widetilde{F}_{k}}{F^{(\numcomps)}_{\opt k}} \le \rho$ for all $k \in \{1,\ldots, \numcomps\}$. 
Then we have
\begin{align}
	 \distDisc\left(P^{(\numcomps)}_{\star}, P_{o}\right) 
	 & \le \distDisc\left(P^{(\numcomps)}_{\star}, \widehat{F}^{(\numcomps, \numobs)}\right) + \distDisc\left( \widehat{F}^{(\numcomps, \numobs)}, \datadist\right)\label{eq:trigin-ineq} \\
	 & = \distDisc\left(\sum_{k = 1}^{\numcomps}  \eta^{(\numcomps)}_{\star k}F^{(\numcomps)}_{\star k}, \sum_{k=1}^{\numcomps}\widehat{\eta}_k^{(\numcomps, \numobs)} \widehat{F}_k^{(\numcomps, \numobs)}\right) + \op(1) \label{eq:assump1a}\\
	 & \le \distDisc\left(\sum_{k = 1}^{\numcomps}  \eta^{(\numcomps)}_{\star k}F^{(\numcomps)}_{\star k}, \sum_{k = 1}^{\numcomps}  \widehat{\eta}_k^{(\numcomps, \numobs)} F^{(\numcomps)}_{\star k} \right) \\
   &\qquad + \distDisc\left(\sum_{k = 1}^{\numcomps}  \widehat{\eta}_k^{(\numcomps, \numobs)} F^{(\numcomps)}_{\star k}, \sum_{k=1}^{\numcomps}\widehat{\eta}_k^{(\numcomps, \numobs)} \widehat{F}_k^{(\numcomps, \numobs)}\right)
   + \op(1), \label{eq:decomp}
\end{align}
where \cref{eq:assump1a} follows by  \cref{assump:metric-discr-conditions}(a).
Define $\eta_k^{\min} = \min\{\eta^{(\numcomps)}_{\star k}, \widehat{\eta}^{(\numcomps, \numobs)}_k\}$ and $\bar{\eta}=1-\sum_{k=1}^{\numcomps}\eta_k^{\min}$.
Let $\Vert \cdot \Vert_{1}$ denote the $\ell^1$-norm.
For the first term in \cref{eq:decomp}, we can write
\begin{align}
	 & \distDisc\left(\sum_{k = 1}^{\numcomps} \eta^{(\numcomps)}_{\star k}F^{(\numcomps)}_{\star k}, \sum_{k = 1}^{\numcomps}  \widehat{\eta}_k^{(\numcomps, \numobs)} F^{(\numcomps)}_{\star k} \right)\\
	 & = \distDisc\left(\sum_{k = 1}^{\numcomps} \eta_k^{\min}F^{(\numcomps)}_{\star k} + \bar{\eta}\sum_{k = 1}^{\numcomps}  \frac{\eta^{(\numcomps)}_{\star k}-\eta_k^{\min}}{\bar{\eta}}F^{(\numcomps)}_{\star k}, \sum_{k = 1}^{\numcomps}  \eta_k^{\min}F^{(\numcomps)}_{\star k}+ \bar{\eta}\sum_{k = 1}^{\numcomps}  \frac{\widehat{\eta}_k^{(\numcomps, \numobs)} -  \eta_k^{\min}}{\bar{\eta}}F^{(\numcomps)}_{\star k} \right)  \\
	 & \le \sum_{k = 1}^{\numcomps} \eta_k^{\min}\distDisc\left(F^{(\numcomps)}_{\star k}, F^{(\numcomps)}_{\star k}\right) \\
   &\qquad+ \bar{\eta}\distDisc\left(\sum_{k = 1}^{\numcomps}  \frac{\eta^{(\numcomps)}_{\star k}-\eta_k^{\min}}{\bar{\eta}}F^{(\numcomps)}_{\star k}, \sum_{k = 1}^{\numcomps}  \frac{\widehat{\eta}_k^{(\numcomps, \numobs)} -  \eta_k^{\min}}{\bar{\eta}}F^{(\numcomps)}_{\star k}\right)\label{eq:joint-convexity}\\
	 & \le \Vert \eta^{(\numcomps)}_{\star} - \widehat{\eta}^{(\numcomps,\numobs)} \Vert_{1} \distDisc\left(\sum_{k = 1}^{\numcomps}  \frac{\eta^{(\numcomps)}_{\star k}-\eta_k^{\min}}{\bar{\eta}}F^{(\numcomps)}_{\star k}, \sum_{k = 1}^{\numcomps}  \frac{\widehat{\eta}_k^{(\numcomps, \numobs)} -  \eta_k^{\min}}{\bar{\eta}}F^{(\numcomps)}_{\star k}\right)\label{eq:pi-l1norm-bound-pibar} \\
	 & = \Vert \eta^{(\numcomps)}_{\star} - \widehat{\eta}^{(\numcomps,\numobs)}\Vert_{1} \\
	 & \qquad \times  \distDisc\left( \sum_{k = 1}^{\numcomps} \sum_{\ell = 1}^{\numcomps} \frac{(\eta^{(\numcomps)}_{\star k}-\eta_k^{\min})(\widehat{\eta}_\ell^{(\numcomps, \numobs)} -  \eta_\ell^{\min}) }{\bar{\eta}^2}  F^{(\numcomps)}_{\star k}, \sum_{k = 1}^{\numcomps} \sum_{\ell = 1}^{\numcomps} \frac{(\eta^{(\numcomps)}_{\star k}-\eta_k^{\min})(\widehat{\eta}_\ell^{(\numcomps, \numobs)} -  \eta_\ell^{\min}) }{\bar{\eta}^2} F^{(\numcomps)}_{\star \ell}\right) \\
	 & \le \Vert \eta^{(\numcomps)}_{\star} - \widehat{\eta}^{(\numcomps,\numobs)}\Vert_{1} \sum_{k = 1}^{\numcomps} \sum_{\ell = 1}^{\numcomps} \frac{(\eta^{(\numcomps)}_{\star k}-\eta_k^{\min})(\widehat{\eta}_\ell^{(\numcomps, \numobs)} -  \eta_\ell^{\min}) }{\bar{\eta}^2} \distDisc\left(F^{(\numcomps)}_{\star k}, F^{(\numcomps)}_{\star \ell}\right)\\
	 & \le \Vert \eta^{(\numcomps)}_{\star} - \widehat{\eta}^{(\numcomps,\numobs)}\Vert_{1} \max_{\substack{k,\ell\in \{1,\ldots,K\} \\ 
   k\neq \ell}} \distDisc\left(F^{(\numcomps)}_{\star k}, F^{(\numcomps)}_{\star \ell}\right) \\
	 & = \op(1), \label{eq:first-bound}
\end{align}
where \cref{eq:joint-convexity} uses  \cref{assump:metric-discr-conditions}(b), \cref{eq:pi-l1norm-bound-pibar} follows by the fact that $\bar{\eta}=1- \sum_{k=1}^{\numcomps}\eta_k^{\min} \le  \sum_{k=1}^{\numcomps}\eta_k^{\max}-\sum_{k=1}^{\numcomps}\eta_k^{\min}=\Vert \eta^{(\numcomps)}_{\star} - \widehat{\eta}^{(\numcomps,\numobs)} \Vert_{1}$, and \cref{eq:first-bound} follows by  \cref{assump:metric-discr-conditions}(f) and \cref{eq:pi-conv}.

The second term in \cref{eq:decomp} can be upper bounded as
\begin{align}
	\distDisc\left(\sum_{k = 1}^{\numcomps}  \widehat{\eta}_k^{(\numcomps, \numobs)} F^{(\numcomps)}_{\star k}, \sum_{k=1}^{\numcomps}\widehat{\eta}_k^{(\numcomps, \numobs)} \widehat{F}_k^{(\numcomps, \numobs)}\right)
	 & \le   \sum_{k = 1}^{\numcomps}  \widehat{\eta}_k^{(\numcomps, \numobs)}\distDisc\left( F^{(\numcomps)}_{\star k},  \widehat{F}_k^{(\numcomps, \numobs)}\right)  \label{eq:assump1c}          \\
	 & \le   \sum_{k = 1}^{\numcomps}  \widehat{\eta}_k^{(\numcomps, \numobs)}\kappa\left(\discr{\widehat{F}_k^{(\numcomps, \numobs)}}{F^{(\numcomps)}_{\star k}}\right)  \label{eq:assump1d} \\
	 & \le  \sum_{k = 1}^{\numcomps}  \widehat{\eta}_k^{(\numcomps, \numobs)}\kappa\left(\rho + \op(1) \right) \label{eq:prop}                                                                     \\
	 & = \kappa(\rho) + \op(1) \label{eq:second-bound},
\end{align}
where %
\cref{eq:assump1c} follows by  \cref{assump:metric-discr-conditions}(b),
\cref{eq:assump1d} follows by  \cref{assump:metric-discr-conditions}(e), and
\cref{eq:prop} follows by our assumption for purposes of contradiction, and  \cref{eq:second-bound} follows by the continuity of $\kappa$. %

Plugging \cref{eq:first-bound,eq:second-bound} into \cref{eq:decomp} yields the final inequality $\distDisc\left(P^{(\numcomps)}_{\star}, P_{o}\right) \le \kappa(\rho) + \op(1)$, which contradicts the 
definition of $\rho$.
Therefore, there must exist $\ell$ such that $\discr{\widetilde{F}_{\ell}}{F^{(\numcomps)}_{\star \ell}} > \rho$.
Hence, by \cref{assump:metric-discr-conditions}(c), for some $\varepsilon > 0$, $\discrest{\widehat{F}_{\ell}^{(\numcomps,\numobs)} }{F^{(\numcomps,\numobs)}_{\ell}} = \rho + \varepsilon +\op(1)$.
Hence, we have
\begin{align}
	\mcR^{\rho}_{\numobs}(\allparam^{(\numcomps)}_{\star})
	 & \ge \widehat{n}_{\ell}^{(\numcomps,\numobs)}\max\{0, \discrest{\widehat{F}_{\ell}^{(\numcomps,\numobs)} }{F^{(\numcomps,\numobs)}_{\ell}} - \rho\}                             \\
	 & = \widehat{n}_{\ell}^{(\numcomps,\numobs)}\max\{0, \varepsilon + \op(1)\}                                                                                                               \\
	 & \rightarrow \infty,\label{ineq:N-infty}
\end{align}
where \cref{ineq:N-infty} follows
since $\widehat{n}_{\ell}^{(\numcomps,\numobs)}  \rightarrow \infty$ in probability for $\numobs\rightarrow\infty$.

\subsection{Proof of \cref{coro:KL}}

 \cref{assump:metric-discr-conditions}(a) follows by \citet[Theorem 1]{Wellner:1981}.
 \cref{assump:metric-discr-conditions}(b) holds by \cref{lem:ipm-joint-convex}.
 \cref{assump:metric-discr-conditions}(e) holds with $\kappa(\rho) =(\rho/2)^{1/2}$ by the fact that, letting $d_{\mathrm{TV}}$
denote total variation distance, $d_{\mathrm{BL}} \le d_{\mathrm{TV}}$ and
$(2d_{\mathrm{TV}})^{2}  \le \mathrm{KL}$ \citep[][Section 3]{Gibbs:2002}.
 \cref{assump:metric-discr-conditions}(f) holds since $d_{\mathrm{BL}} \le 1 < \infty$.

\subsection{Proof of \cref{coro:MMD}}
 \cref{assump:metric-discr-conditions}(a) follows by the assumption that $d_{\mathrm{MMD}}$ metrizes weak convergence.
 \cref{assump:metric-discr-conditions}(b) holds by \cref{lem:ipm-joint-convex}.
 \cref{assump:metric-discr-conditions}(e) holds by choosing $\kappa(\rho) =\rho$ for maximum mean discrepancy. \cref{assump:metric-discr-conditions}(f) holds for maximum mean discrepancy with bounded kernels since $d_{\mathrm{MMD}} \le \sup_{x} \mcK(x,x)^{1/2} < \infty$.

\section{Robust Consistency for Probabilistic Matrix Factorization} \label{sec:PMF-consistency-proof}

\subsection{Proof of \cref{thm:main-pmf}}

\paragraph{Notation.}
Let
\[
  Q^{(K)}\lrp{{y}_{1:K}\mid{\phi}_{1:K},H,G_{1:K}}=
  \mathbb{E}_{z_{1:K}\sim H}\lrb{Q^{(K)}\lrp{{y}_{1:K}\mid{\phi}_{1:K},z_{1:K},G_{1:K}}}
\]
denote the joint distribution of ${y}_{n,1:K}$ when $z^{(K)}_{n,1:K}$ is marginalized out, and let
\[
	P^{(K)}({x}\mid{\phi}_{1:K},H,G_{1:K})
  =\int_{\sigma({x})}Q^{(K)}\lrp{{y}_{1:K}\mid{\phi}_{1:K},H,G_{1:K}}\d{y}_{1:K}
\]
be the marginal probability of drawing ${x}_{n}$, where $\sigma({x})=\lrc{{y}_{1:K}\ \mid \sum_{k}{y}_{k}={x}}$.
Similarly, we define the model distributions
\[
	Q^{(K)}_{\star}({y}_{1:K})=
  \mathbb{E}_{z_{1:K}\sim H^{(K)}_{\star}}\lrb{Q^{(K)}_{\star}\lrp{{y}_{1:K}\mid z_{1:K}}} \quad\text{and}\quad
	P^{(K)}_{\star}({x})=
	\int_{\sigma({x})}Q^{(K)}_{\star}({y}_{1:K})\d{y}_{1:K},
\]
the empirical distributions
\[
	\widehat{Q}^{(K,N)}({y}_{1:K}) = \mathbb{E}_{z_{1:K}\sim\widehat{H}^{(K,N)}}\lrb{\widehat{Q}^{(K,N)}\lrp{{y}_{1:K}\mid z_{1:K}}}
    \quad\text{and}\quad
  \widehat{P}^{(K,N)}({x}) =
	\int_{\sigma({x})}\widehat{Q}^{(K,N)}({y}_{1:K})\d{y}_{1:K},
\]
and the bridging distributions
\[
	\check{Q}^{(K,N)}({y}_{1:K})=
\mathbb{E}_{z_{1:K}\sim\widehat{H}^{(K,N)}}\lrb{Q^{(K)}_{\star}\lrp{{y}_{1:K}\mid z_{1:K}}}
 \quad\text{and}\quad
  \check{P}^{(K,N)}({x})=
	\int_{\sigma({x})}\check{Q}^{(K,N)}({y}_{1:K})\d{y}_{1:K},
\]
Let
\[
	\mathcal{R}^{\rho}\lrp{\widehat{G}^{(K,N)}_{1:K}}
	=N\sum^{K}_{k=1}\mop{max}\lrp{0, \widehat{\mathcal{D}}\lrp{\widehat{G}^{(K,N)}_{k},G}-\rho}.
\]

\paragraph{Approach.}
Similar to the proof of \cref{thm:main}, we show that 
(1) if $K=K_{o}$, then $\mathcal{R}^{\rho}(\widehat{G}^{(K,N)}_{1:K}) \to 0$ in probability, and 
(2) if $K<K_{o}$, then $\mathcal{R}^{\rho}(\widehat{G}^{(K,N)}_{1:K}) \to\infty$ in probability. 
The conclusion follows immediately from these two results, as 
in the proof of \cref{thm:main}. 

\paragraph{Proof of part (1).} 
If $K=K_{o}$, then it follows from \cref{assump:metric-discr-conditions}(c,d) and \cref{assump:consistency} that $\widehat{\mathcal{D}}\lrp{\widehat{G}^{(K_{o},N)}_{k},G}\to\mathcal{D}\lrp{G^{(K_{o})}_{\star}, G}$ in probability.
Hence, it follows that there exists $\varepsilon>0$ such that
\[
	\widehat{\mathcal{D}}\lrp{\widehat{G}^{(K_{o},N)}_{k},G}<\rho-\varepsilon+o_{P}(1).
\]
Using this inequality, we have
\[
	\mathcal{R}^{\rho}\lrp{\widehat{G}^{(K,N)}_{1:K}}
	\leq N\sum^{K}_{k=1}\mop{max}\lrp{0, -\varepsilon+o_{P}(1)}.
\]
Hence, we can conclude that
$\lim_{N\to\infty}\Pr\lrc{\mathcal{R}^{\rho}\lrp{\widehat{G}^{(K,N)}_{1:K}}=0}=1$.

\paragraph{Proof of part (2).} Consider the case of $K<K_{o}$.
We have
\[
	\distDisc(\datadist,P^{(K)}_{\star})&\leq \distDisc\lrp{\datadist,\widehat{P}^{(K,N)}}+\distDisc\lrp{\widehat{P}^{(K,N)},P^{(K)}_{\star}}\\
	&\leq \distDisc\lrp{\datadist,\widehat{P}^{(K,N)}}+\distDisc\lrp{\widehat{P}^{(K,N)},\check{P}^{(K,N)}}+\distDisc\lrp{\check{P}^{(K,N)},P^{(K)}_{\star}}\\
	&=o_{P}(1)+\distDisc\lrp{\widehat{P}^{(K,N)},\check{P}^{(K,N)}}
  +\distDisc\lrp{\check{P}^{(K,N)},P^{(K)}_{\star}}\label{eq:follows_assump1a}\\
	&\leq\widetilde{\kappa}\lrp{ \widehat{\mathcal{D}}\lrp{\widehat{P}^{(K,N)},\check{P}^{(K,N)}}+o_{P}(1)}
	+\widetilde{\kappa}\lrp{\mathcal{D}\lrp{\check{P}^{(K,N)},P^{(K)}_{\star}}}+o_{P}(1)\\
	&\leq\widetilde{\kappa}\lrp{ \widehat{\mathcal{D}}\lrp{\widehat{Q}^{(K,N)},\check{Q}^{(K,N)}}}
	+\widetilde{\kappa}\lrp{\mathcal{D}\lrp{\check{Q}^{(K,N)},Q^{(K)}_{\star}}}+o_{P}(1).\label{eq:main_ineq}
\]
where \cref{eq:follows_assump1a} follows from \cref{assump:metric-discr-conditions}(a).
A bound on $\widehat{\mathcal{D}}\lrp{\widehat{Q}^{(K,N)},\check{Q}^{(K,N)}}$ is given by
\[
  &\widehat{\mathcal{D}}\lrp{\widehat{Q}^{(K,N)},\check{Q}^{(K,N)}}\\
	&=\widehat{\mathcal{D}}\lrp{\mathbb{E}_{z_{1:K}\sim\widehat{H}^{(K,N)}}\lrb{\prod_{k}\widehat{Q}^{(K,N)}_{k}(\blank\mid z_{k})},
	\mathbb{E}_{z_{1:K}\sim\widehat{H}^{(K,N)}}\lrb{\prod_{k}Q^{(K)}_{\star k}(\blank\mid z_{k})}}\\
	&\leq\mathbb{E}_{z_{1:K}\sim\widehat{H}^{(K,N)}}\lrb{ \widehat{\mathcal{D}}
  \lrp{\prod_{k}\widehat{Q}^{(K,N)}_{k}(\blank\mid z_{k}),\prod_{k}Q^{(K)}_{\star k}(\blank\mid z_{k})}}+o_{P}(1)
  \label{eq:follows-convex-KL}\\
  &=\sum_{k}\mathbb{E}_{z_{1:K}\sim\widehat{H}^{(K,N)}}\lrb{ \widehat{\mathcal{D}}\lrp{\widehat{Q}^{(K,N)}_{k}(\blank\mid z_{k}),Q^{(K)}_{\star k}(\blank\mid z_{k})}}+o_{P}(1)\\
  &=\sum_{k}\mathbb{E}_{z_{1:K}\sim\widehat{H}^{(K,N)}}\lrb{
    \widehat{\mathcal{D}}\lrp{\widehat{Q}^{(K,N)}(\blank\mid z_{k}),Q\lrp{\blank\mid {\phi}^{(K,N)}_{k},z_{k},G}}
    +o_{P}(1)
  }+o_{P}(1)\label{eq:follows-assump-6b}\\
  &\leq\sum_{k}\widehat{\mathcal{D}}\lrp{\widehat{G}^{(K,N)}_{k},G} + o_{P}(1). \label{eq:follows-KL-invariance}
\]
where \cref{eq:follows-convex-KL} follows from the fact that the KL divergence is convex with respect to both of its arguments,
\cref{eq:follows-assump-6b} follows from \cref{assump:sample_approx_able}(b),
and \cref{eq:follows-KL-invariance} follows from the fact that KL divergence is invariant under diffeomorphism. 
With this we bound
\[
  \widetilde{\kappa}\lrp{\mathcal{D}\lrp{\widehat{Q}^{(K,N)},\check{Q}^{(K,N)}}}
  &\leq\widetilde{\kappa}\lrp{\sum_{k}\mathcal{D}\lrp{\widehat{G}^{(K,N)}_{k},G}+o_{P}(1)}\\ 
  &\leq\widetilde{\kappa}\lrp{\sum_{k}\mathcal{D}\lrp{\widehat{G}^{(K,N)}_{k},G}}+o_{P}(1).
  \label{eq:subbound_1}
\]

Next we will bound $\mathcal{D}\lrp{\check{Q}^{(K,N)},Q^{(K)}_{\star}}$. 
For the chi-squared distance $D_{\chi^2}(P,Q) = \int \frac{(P(x) - Q(x))^2}{Q(x)} \, dx$ \citep{cover2006elements}, we have that
\begin{align}
\lefteqn{\mathbb{E}_{h_{1:N}} \left[ D_{\chi^2}\left( \check{Q}^{(K,N)}, Q^{(K)}_{\star} \right) \right]} \\
&= \mathbb{E}_{h_{1:N}} \left[ \int_{\mathcal{X}^{\otimes K}_{D}} \frac{(\check{Q}^{(K,N)} - Q^{(K)}_{\star})^2}{Q^{(K)}_{\star}} \, d y_{1:K} \right] \\
&= \int_{\mathcal{X}^{\otimes K}_{D}} \mathbb{E}_{h_{1:N}} \left[ \frac{(\check{Q}^{(K,N)} - Q^{(K)}_{\star})^2}{Q^{(K)}_{\star}} \right] d y_{1:K} 
\quad \text{[by bounded convergence]}\\
&= \int_{\mathcal{X}^{\otimes K}_{D}} \mathbb{E}_{h_{1:N}} \left[ \frac{(\check{Q}^{(K,N)})^2}{Q^{(K)}_{\star}} - 2\check{Q}^{(K,N)} + Q^{(K)}_{\star} \right] d y_{1:K} \\
&= \int_{\mathcal{X}^{\otimes K}_{D}} \left[ \mathbb{E}_{h_{1:N}} \left( \frac{(\check{Q}^{(K,N)})^2}{Q^{(K)}_{\star}} \right)
- 2Q^{(K)}_{\star} + Q^{(K)}_{\star} \right] d y_{1:K}
\quad \text{[ since } \mathbb{E}[\check{Q}^{(K,N)}] = Q^{(K)}_{\star} \text{]}\\
&= \int_{\mathcal{X}^{\otimes K}_{D}} \frac{ \mathbb{E}_{h_{1:N}} \left( (\check{Q}^{(K,N)})^2 \right) }{Q^{(K)}_{\star}} - Q^{(K)}_{\star}\, d y_{1:K} \\
&= \int_{\mathcal{X}^{\otimes K}_{D}} \frac{ \operatorname{Var}_{h_{1:N}}(\check{Q}^{(K,N)}) + \left( \mathbb{E}_{h_{1:N}} \check{Q}^{(K,N)} \right)^2 }{Q^{(K)}_{\star}}- Q^{(K)}_{\star} \, d y_{1:K} \\
&= \int_{\mathcal{X}^{\otimes K}_{D}} \frac{ \operatorname{Var}_{h_{1:N}}(\check{Q}^{(K,N)}) }{ Q^{(K)}_{\star} } d y_{1:K}\\
&= \frac{1}{N} \int_{\mathcal{X}^{\otimes K}_{D}} \frac{ \Var_{z_{1:K}\sim H^{(K)}_{\star}}\lrp{Q^{(K)}_{\star}({y}_{1:K}\mid z_{1:K})}}
{ Q^{(K)}_{\star} } d y_{1:K}
\quad \text{[by variance of sample mean: } \frac{1}{N} \operatorname{Var}(X) \text{]}\\
&\leq\frac{1}{N}\int_{\mathcal{X}^{\otimes K}_{D}}\frac
{C({y}_{1:K})\cdot\mathbb{E}_{z_{1:K}\sim H^{(K)}_{\star}}\lrb{Q^{(K)}_{\star}({y}_{1:K}\mid z_{1:K})}}
{Q^{(K)}_{\star}({y}_{1:K})}d{y}_{1:K} \\
&\leq\frac{1}{N}\int_{\mathcal{X}^{\otimes K}_{D}} C({y}_{1:K}) d{y}_{1:K}.
\end{align}
Since $\mathrm{KL} \leq \log(1 + D_{\chi^2}) \le D_{\chi^2}$ \citep{Gibbs:2002},
it follows that
$\chi^2 \to 0  \Rightarrow \mathrm{KL}\to 0.$
Together with \cref{assump:sample_approx_able}, it follows that 
\[
	\lim_{N\to\infty}\mathbb{E}_{{{h}_{1:N}}}\lrb{\mathcal{D}_{KL}\lrp{\check{Q}^{(K,N)},Q^{(K)}_{\star}}}
	=0.
\]
and therefore
\[
	\widetilde{\kappa}\lrp{\mathcal{D}\lrp{\check{Q}^{(K,N)},Q^{(K)}_{\star}}}\leq\widetilde{\kappa}\lrp{o_{P}(1)}
	\leq o_{P}(1).\label{eq:subbound_2}
\]
Using \cref{eq:subbound_1,eq:subbound_2}, we rewrite \cref{eq:main_ineq} as
\[
	\distDisc(\datadist,P^{(K)}_{\star})&\leq\widetilde{\kappa}\lrp{\sum_{k}\widehat{\mathcal{D}}\lrp{\widehat{G}^{(K,N)}_{k},G}}+o_{P}(1).
\]
Therefore, under the conditions for $\kappa$-robust consistency, we are guaranteed that 
\[
  \widetilde{\kappa}(K\rho)\leq\widetilde{\kappa}\lrp{\sum_{k}\widehat{\mathcal{D}}\lrp{\widehat{G}^{(K,N)}_{k},G}}+o_{P}(1).
\]
Because $\widetilde{\kappa}$ is monotonic,
\[
  K\rho\leq\sum_{k}\widehat{\mathcal{D}}\lrp{\widehat{G}^{(K,N)}_{k},G}+o_{P}(1).
\]
This holds if and only if there exists $\ell\in\lrc{1,\dots,K}$ such that $\mathcal{D}\lrp{G^{(K)}_{\star \ell},G}\geq\rho$. Hence, for some $\varepsilon>0$, $\mathcal{D}\lrp{G^{(K)}_{\star \ell},G}=\rho+\varepsilon$. As a result
\[
  \mathcal{R}^{\rho}\lrp{\widehat{G}^{(K,N)}_{1:K}}
  &\geq N\mop{max}\lrp{0, \widehat{\mathcal{D}}\lrp{\widehat{G}^{(K,N)}_{\ell},G}-\rho}\\
  &= N\mop{max}\lrp{0, \mathcal{D}\lrp{G^{(K)}_{\star \ell},G}-\rho+o_{P}(1)} \label{eq:follow1c}\\
  &= N\mop{max}\lrp{0, \varepsilon+o_{P}(1)}\\
  &\to \infty
\]
where \cref{eq:follow1c} follows from \cref{assump:metric-discr-conditions}(c).

\subsection{Verifying \cref{assump:sample_approx_able} for Applications} \label{sec:sample_approx_able_example}

We show that \cref{assump:sample_approx_able} holds for both the PMF models used for the experiments in \cref{sec:pmf-applications}. 
It is sufficient to verify the assumption for a single element $y_{nk}$ since the variance and integrability conditions can often be checked component-wise.
Hence, we drop the dependence on $n$ and $k$ in our notation. 

\subsubsection{Poisson PMF}

Consider the Poisson model with  $y \sim \distPoiss(\lambda)$ and $\lambda = W h$.
For convenience, we assume $h \sim \distGamma(\alpha, \beta)$.
To compute the first and second moments of 
\[
  P(y\mid h) = \frac{(W h)^{y} e^{-W h}}{y!},
\]
we will use the identity
$\Gamma(z) = \int_0^\infty t^{z-1} e^{-t} \, \d t$ and integration by substitution.
For the first moment we have 
\begin{align}
\mathbb{E}_{h}\bigl[P(y\mid h)\bigr]
  &= \int_{0}^{\infty} \frac{(W h)^{y} e^{-W h}}{y!}
     \;\frac{\beta^{\alpha}}{\Gamma(\alpha)} h^{\alpha-1} e^{-\beta h} \, \mathrm{d}h\\[6pt]
  &= \frac{W^{y} \beta^{\alpha}}{y!\,\Gamma(\alpha)}
     \int_{0}^{\infty} h^{y+\alpha-1} e^{-(W+\beta)h} \, \mathrm{d}h\\[6pt]
  &= \frac{W^{y} \beta^{\alpha}\,\Gamma(y+\alpha)}{y!\,\Gamma(\alpha)(W+\beta)^{y+\alpha}},
\end{align}
while the second moment is 
\begin{align}
\mathbb{E}_{h}\bigl[P^{2}(y\mid h)\bigr]
  &= \int_{0}^{\infty} \frac{(W h)^{2y} e^{-2 W h}}{(y!)^{2}}
     \;\frac{\beta^{\alpha}}{\Gamma(\alpha)} h^{\alpha-1} e^{-\beta h} \, \mathrm{d}h\\[6pt]
  &= \frac{W^{2y} \beta^{\alpha}}{\Gamma(\alpha)(y!)^{2}}
     \int_{0}^{\infty} h^{2y+\alpha-1} e^{-(2W+\beta)h} \, \mathrm{d}h\\[6pt]
  &= \frac{W^{2y} \beta^{\alpha}\, \Gamma(2y+\alpha)}{\Gamma(\alpha)(y!)^{2}(2W+\beta)^{2y+\alpha}}.
\end{align}
Taking the ratio of the second to the first moment, define 
\[
  C(y) = \frac{\mathbb{E}_{h}[P^{2}(y\mid h)]}{\mathbb{E}_{h}[P(y\mid h)]}
         = \frac{W^{y} \, \Gamma(2y+\alpha)}{y!\,\Gamma(y+\alpha)}
           \left(\frac{W+\beta}{2W+\beta}\right)^{y+\alpha}
           \cdot\left(\frac{1}{2W+\beta}\right)^{y},
\]
which is continuous and finite for all $y \in \nats$. 
Now, using Stirling's approximation
$ \Gamma(z) \sim \sqrt{2\pi}\, z^{z-1/2} e^{-z},$ we have
\begin{align}
    \frac{\Gamma(2y+\alpha)}{y!\,\Gamma(y+\alpha)} &\sim
    \frac{(2y)^{2y+\alpha-1/2} e^{-2y}}{\sqrt{2\pi}\,y^{y+\alpha-1/2} e^{-y}\cdot \sqrt{2\pi}\,y^{y+1/2} e^{-y}}
    \sim \frac{2^{2y+\alpha}}{\sqrt{\pi y}}.
\end{align}
Substituting into $C(y)$ yields
\begin{align}
C(y)
 &\sim \frac{W^{y}}{\sqrt{\pi y}}
        \left(\frac{W+\beta}{2W+\beta}\right)^{\alpha}
        \left(\frac{W+\beta}{2W+\beta}\right)^{y}
        \left(\frac{1}{2W+\beta}\right)^{y}
        2^{2y}\\
 &\sim \frac{1}{\sqrt{\pi y}}
        \left(\frac{W+\beta}{2W+\beta}\right)^{\alpha}
        \left(\frac{2^2W(W+\beta)}{(2W+\beta)^2}\right)^{y}\\
 &\sim \frac{1}{\sqrt{\pi y}}
        \left(\frac{4W^2+4W\beta}{4W^2+\beta^2+4W\beta}\right)^{y}.      
\end{align}
Because $0<\frac{4W^2+4W\beta}{4W^2+\beta^2+4W\beta} < 1$, the ratio $C(y)$ decays exponentially as $y \to \infty$, 
hence $\sum_{y=0}^{\infty} C(y)  <\infty$.

\subsubsection{Gaussian PMF}

Consider the Gaussian setting $y \sim \mathcal{N}(\phi h, \sigma^2)$
and, following common practice, we let $h \sim \mathcal{N}(\mu, \tau^2)$. 
To compute the moments of $p(y \mid h)$, we integrate over the latent variable $h$ by combining the terms in the exponential, completing the square, and using the Gaussian integral identity
$\int_{-\infty}^{\infty} e^{-(a x^2 + b x + c)} \, \dee x = \sqrt{\frac{\pi}{a}} e^{\frac{b^2}{4a} - c}.$ 
The first moment is 
\begin{align}
\mathbb{E}_h [p(y \mid h)] 
&= \int_{-\infty}^{\infty} \frac{1}{\sqrt{2\pi\sigma^2}} \exp\left(-\frac{(y - \phi h)^2}{2\sigma^2} \right)
\cdot \frac{1}{\sqrt{2\pi\tau^2}} \exp\left(-\frac{(h - \mu)^2}{2\tau^2} \right) \, dh \\
&\propto \exp\left\{ -\frac{(y - \phi \mu)^2}{2(\phi^2 \tau^2 + \sigma^2)} \right\},
\end{align}
while the second moment is 
\begin{align}
\mathbb{E}_h [p(y \mid h)^2] 
&= \int_{-\infty}^{\infty} \left( \frac{1}{\sqrt{2\pi\sigma^2}} \exp\left(-\frac{(y - \phi h)^2}{2\sigma^2} \right) \right)^2
\cdot \frac{1}{\sqrt{2\pi\tau^2}} \exp\left(-\frac{(h - \mu)^2}{2\tau^2} \right) \, dh \\
&\propto \exp\left\{ -\frac{(y - \phi \mu)^2}{2(\phi^2 \tau^2 + \frac{1}{2}\sigma^2)} \right\}.
\end{align}
Therefore, up to a multiplicative constant, the ratio of moments is 
\begin{align}
    C(y) 
    &= \frac{\mathbb{E}_{h}[P^{2}(y\mid h)]}{\mathbb{E}_{h}[P(y\mid h)]} \\
    &\propto \frac{ \exp\left\{ -\frac{(y - \phi \mu)^2}{2(\phi^2 \tau^2 + \frac{1}{2}\sigma^2)} \right\} }{ \exp\left\{ -\frac{(y - \phi \mu)^2}{2(\phi^2 \tau^2 + \sigma^2)} \right\} }\\
    &= \exp\left\{ -(y - \phi \mu)^2 \left( \frac{1}{2(\phi^2 \tau^2 + \frac{1}{2}\sigma^2)} - \frac{1}{2(\phi^2 \tau^2 + \sigma^2)} \right) \right\}.
\end{align}
Since $\frac{1}{2(\phi^2 \tau^2 + \frac{1}{2}\sigma^2)} > \frac{1}{2(\phi^2 \tau^2 + \sigma^2)}$, 
we get $C(y) \propto \exp\left\{ -a (y - \phi \mu)^2 \right\}$ for some $a > 0$. Therefore,
$C(y)$ decays exponentially as $y \to \infty$ and hence $\int C(y)\d y <\infty$.

\section{Technical Lemma}

The following lemma states that integral probability metrics (as defined in \cref{eq:IPM}) are jointly convex -- that is, they satisfy \cref{assump:metric-discr-conditions}(b).
\begin{lemma}\label{lem:ipm-joint-convex}
	Suppose $P_i$ and $Q_i$, $i = 1,\ldots, n$ are probability measures defined on $\mcX$. Then for $0\leq w_i \leq 1$ and $\sum_{i = 1}^n w_i = 1$,
	\[
		d_{\mcH}\left(\sum_{i=1}^{n}w_i P_i, \sum_{i=1}^{n}w_i Q_i\right)
		\le \sum_{i=1}^{n}w_i  	d_{\mcH} (P_i, Q_i).
	\]

\end{lemma}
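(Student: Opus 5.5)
The plan is to work directly from the definition of $d_{\mcH}$ in \cref{eq:IPM} as a supremum over test functions, and to combine linearity of integration with the triangle inequality.

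First, fix an arbitrary $h \in \mcH$. The integral of $h$ against a finite mixture is linear in the mixture weights, so
\[
\int h \,\dee\Big(\sum_{i=1}^n w_i P_i\Big) - \int h\,\dee\Big(\sum_{i=1}^n w_i Q_i\Big) = \sum_{i=1}^n w_i\Big(\int h\,\dee P_i - \int h\,\dee Q_i\Big).
\]
For this we only need $h$ to be integrable against every $P_i$ and $Q_i$. This holds automatically for bounded classes such as $\mcH_{\mathrm{BL}}$ and the unit ball of a bounded kernel. For the Lipschitz class it requires finite first moments, which is implicit whenever $d_W$ is finite.

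Next, take absolute values and apply the triangle inequality. Since $w_i \ge 0$, this gives
\[
\Big|\sum_{i=1}^n w_i\Big(\int h\,\dee P_i - \int h\,\dee Q_i\Big)\Big| \le \sum_{i=1}^n w_i\Big|\int h\,\dee P_i - \int h\,\dee Q_i\Big|.
\]
Each term on the right is bounded by $w_i\, d_{\mcH}(P_i,Q_i)$, because $d_{\mcH}(P_i,Q_i)$ is a supremum over $\mcH$.

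Finally, the resulting upper bound $\sum_i w_i d_{\mcH}(P_i,Q_i)$ does not depend on $h$. Taking the supremum over $h\in\mcH$ on the left-hand side therefore yields the claim. The same bound holds trivially if some $d_{\mcH}(P_i,Q_i)=\infty$.

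There is no real obstacle here. The only point needing care is the integrability caveat above. Note also that the two-component statement in \cref{assump:metric-discr-conditions}(b) is the special case $n=2$.
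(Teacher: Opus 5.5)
Your proof is correct and follows essentially the same route as the paper's. Both expand the definition, use linearity of the integral over the mixture, apply the triangle inequality with $w_i \ge 0$, and bound each term by its own supremum before taking the supremum over $h$. Your integrability remark (automatic for bounded classes, needing finite first moments for the Lipschitz class) is a sensible addition that the paper leaves implicit.
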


\begin{proof}
	By definition of the integral probability metric,
	\begin{align}
		{d_{\mcH}\left(\sum_{i=1}^{n}w_i P_i, \sum_{i=1}^{n}w_iQ_i\right)}
		 & =  \sup_{h \in \mcH}\left\vert \int_{\mcX} h(x) \left(\sum_{i=1}^{n}w_i P_i(\dee x)\right) - \int_{\mcX} h(x)\left(\sum_{i=1}^{n}w_i Q_i(\dee x)\right)\right\vert  \nonumber \\
		 & = \sup_{h \in \mcH}\left\vert \sum_{i=1}^{n}w_i \left(\int_{\mcX} h(x)  P_i(\dee x) - \int_{\mcX} h(x) Q_i(\dee x) \right)\right\vert\nonumber                                \\
		 & \le  \sup_{h \in \mcH} \sum_{i=1}^{n}w_i \left\vert  \int_{\mcX} h(x) P_i(\dee x) - \int_{\mcX} h(x) Q_i(\dee x) \right\vert \nonumber                                        \\
		 & \le \sum_{i=1}^{n}w_i  \sup_{h \in \mcH} \left\vert  \int_{\mcX} h(x) P_i(\dee x) - \int_{\mcX} h(x) Q_i(\dee x) \right\vert  \nonumber                                       \\
		 & = \sum_{i=1}^{n}w_i  d_{\mcH} (P_i, Q_i).
	\end{align}
\end{proof}

\section{Connection between \methodname and Likelihood-based Inference} \label{sec:connection-to-likelihood}

We can relate our loss function to the conditional negative log-likelihood in the case where the discrepancy measure is the KL divergence.
Let $\widetilde{X}_{k}$ be a random observation selected uniformly from the $k$th cluster.
Then the conditional negative log-likelihood given $z_{1:\numobs}$ is
\begin{align}
	-\log p(x_{1:\numobs} \mid \theta, z_{1:\numobs})
	 & = \sum_{k=1}^{\numcomps}\sum_{n=1}^N - u_{nk}\log f_{\param_{k}}(x_n)      \\
	 & = N \sum_{k=1}^{\numcomps} \eta_{k} \E\{-\log f_{\param_{k}}(\widetilde X_{k})\}
\end{align}
For the sake of argument, if $\widetilde{X}_{k}$ were distributed according to a density $f_{ok}$, then
the Kullback--Leibler divergence between $f_{ok}$ and $f_{\param_{k}}$ would be
\[
	\begin{aligned}
		\kl{f_{ok}}{ f_{\param_{k}}}
		 & = \E\{\log f_{ok}(\widetilde X_{k})\} -  \E\{\log f_{\param_{k}}(\widetilde X_{k})\} \\
		 & = -\mathcal{H}(f_{ok}) -  \E\{\log f_{\param_{k}}(\widetilde X_{k})\},
	\end{aligned}
\]
where $\mathcal{H}(f) = \int f(x) \log f(x) \dee x$ denotes the entropy of a density $f$.
Then the negative conditional log-likelihood for each is equal to the Kullback--Leibler divergence, up to an entropy term that depends on the data (i.e., $f_{ok}$) but not the parameter $\param_{k}$:
\[
	-\log p(x_{1:\numobs} \mid \theta, z_{1:\numobs})
	\approx N \sum_{k=1}^{\numcomps}\eta_k\left\{ \kl{f_{ok}}{ f_{\param_{k}}} + \mathcal{H}(f_{ok})\right\}.
\]
Thus, we can view the loss from \cref{eq:robust-loss} as targeting the negative conditional log-likelihood but
(a) using a consistent estimator $\klest{f_{ok}}{f_{\param_{k}}}$ in place of $\kl{f_{ok}}{f_{\param_{k}}}$ and
(b) ``coarsening'' the Kullback--Leiber divergence using the map $t \mapsto \max(0, t - \rho)$ to avoid overfitting.

\section{Kullback--Leibler Divergence Estimation} \label{sec:kl-estimation}

We first summarize how to best estimate the KL divergence. 
The remaining subsections contain further details, including supporting theory
and experiments. 

\subsection{Summary}
We consider a general setup with observations $y_{1:\numobs} = (y_{1},\dots, y_{\numobs}) \in \mcX^{\otimes \numobs}$ independent, identically distributed
from a distribution $P$.
First consider the case where $\mcX$ is countable, and let $\numobs(x) = \#\{ n \in \{1,\dots, \numobs\} \mid y_{n} = x\}$ denote the number of observations
taking the value $x \in \mcX$.
Letting $Q$ be a distribution with probability mass function $q$, we can use the plug-in estimator for $\kl{P}{Q}$,
\[
	\begin{aligned}
		\klest{y_{1:\numobs}}{F}
		= \sum_{x \in \mcX} \frac{\numobs(x)}{\numobs} \log \left\{\frac{\numobs(x)}{\numobs q(x)}\right\},
		\label{eq:plug-in-KL}
	\end{aligned}
\]
which is consistent under modest regularity conditions \citep{Paninski:2003}.

Next we consider the case of general distributions on $\mcX \subseteq \mathbb{R}^{D}$,
when estimation is less straightforward.
One common approach is to utilize $k$-nearest-neighbor density estimation.
For $r > 0$, let $V_{D}(r) = \frac{\eta^{D/2}}{\Gamma(D/2+1)}r^{D}$ denote the volume of an
$D$-dimensional ball of radius $r$ and let $r_{k,n}$ denote the distance to the $k$th nearest neighbor of $y_n$.
Following the same approach as \citet{Zhao:2020} and assuming the distribution $Q$ has Lebesgue density $q$,
we obtain a one-sample estimator for $\kl{P}{Q}$:
\begin{equation}
	\begin{aligned}
		\klestsub{b}{k}{y_{1:\numobs}}{Q}
		 & = \frac{1}{N}\sum_{n=1}^{\numobs} \log\left\{\frac{k/(\numobs-1)}{V_{D}(r_{k,n}) q(y_n)}\right\}.
		\label{eq:stare-knn-kl-est}
	\end{aligned}
\end{equation}
As we discuss in detail below, for fixed $k$, the estimator in \cref{eq:stare-knn-kl-est} is asymptotically biased.
However, it is easy to correct this bias, leading to the unbiased, consistent estimator
\begin{equation}
	\klestsub{u}{k}{y_{1:\numobs}}{Q} = \klestsub{b}{k}{y_{1:\numobs}}{Q} - \log k + \psi(k),
\end{equation}
where $\psi(k)$ denotes the digamma function.
Another way to construct a consistent estimator is to let $k = k_{\numobs}$ depend on the data size $\numobs$,
with $k_{\numobs}  \to \infty$ as $\numobs \to \infty$.
A canonical choice is $\klestsub{b}{k_{\numobs}}{y_{1:\numobs}}{Q}$ with $k_{\numobs}=\numobs^{1/2}$.

We compare the three estimators for various dimensions below.
Our results show that the bias-corrected estimator slightly improves the biased version when $\numobs \ge 5000$,
while the adaptive estimator with $k_\numobs = \numobs^{1/2}$ has the most reliable performance  when $D=4$.
Since the data dimensions are relatively low in all our experiments, we use the adaptive estimator with $k_N=N^{1/2}$. %

It is important to highlight that KL divergence estimators require density estimation, which in general requires the sample size to grow exponentially with the dimension \citep{Donoho:2000}.
This limits the use of such estimators with generic high dimensional data.
However, a general strategy to address this would be to take advantage of some known or inferable
structure in the distribution to reduce the effective dimension of the problem.
We provide a more detailed illustration of this strategy in \cref{sec:high-dim-simulation} with simulated data that exhibits weak correlations across coordinates.

\subsection{Detailed Theory and Methods}

Following \citet{Wang:2009}, we derive and study the theory of various one-sample Kullback--Leibler estimators on continuous distributions. Estimating Kullback--Leibler between continuous distributions is a delicate matter.
One common way is to start with density estimations.

Consider a general case on $\mcX = \reals^{D}$.
Suppose $y_{1:\numobs} = (y_1, \ldots, y_{\numobs}) \in \mcX^{\otimes \numobs}$ are independent,
identically distributed from a continuous distribution $P$ with density $p$.
For $r>0$, one can estimate the density $p(y_n)$ by
\begin{equation}
	P(V_{D}(r)) \approx p(y_{n}) V_{D}(r),
	\label{eq:knn-density-intuition}
\end{equation}
where $V_{D}(r) = \eta^{D/2}r^D/\Gamma(D/2+1)$ is the volume of a $D$-dimensional ball centered at $y_{n}$ of radius $r$.
Fix the query point $y_{n}$.
The radius $r$ can be determined by finding the $k$-th nearest neighbor $y_{n(k)}$ of $y_{n}$, i.e., $r_{k,n} = \Vert y_{n(k)} - y_n\Vert$, where $\Vert \cdot \Vert$ denotes the Euclidean distance.
Therefore, the ball centered at $y_{n}$ with radius $r_{k,n}$ contains $k$ points and thus $P(V_{D}(r_{k,n}))$ can be estimated by $k/(\numobs-1)$.
Plugging this estimate back to \cref{eq:knn-density-intuition} yields the $k$-nearest-neighbor density estimator for $p(y_n)$,
\begin{equation}
	\widehat{p}_{\numobs}(y_n) = \frac{k/(\numobs-1)}{V_{D}(r_{k,n})}.
	\label{eq:knn-density-est-p}
\end{equation}

To estimate Kullback-Leibler divergence, \citet{Wang:2009} studied various two-sample estimators given two sets of samples $y_{1},\dots, y_{N} \sim P$ and $z_{1},\dots,z_{M} \sim Q$ where the distributions $P$ and $Q$ are unknown.
However, in the context of our method, we want to estimate the Kullback-Leibler divergence with one set of samples $y_{1:\numobs}$ and one known distribution from our assumed model $Q$.
Hence, we modify the two-sample $k$-nearest-neighbor estimators from \citep{Wang:2009} to
create one-sample Kullback--Leibler estimators.

Given samples $y_{1},\dots, y_{N} \sim P$, where $P$ is unknown, and a known distribution $Q$ with density $q$,
we can use \cref{eq:knn-density-est-p} to obtain the one-sample $k$-nearest-neighbor estimator
\begin{equation}
	\begin{aligned}
		\klestsub{b}{k}{y_{1:\numobs}}{Q}
		 & = \frac{1}{\numobs}\sum_{n=1}^{\numobs}\log\left\{\frac{\widehat{p}_{\numobs}(y_n)}{q(y_n)}\right\}   \\
		 & = \frac{1}{N}\sum_{n=1}^{\numobs} \log\left\{\frac{k/(\numobs-1)}{V_{D}(r_{k,n}) q(y_n)}\right\}.
		\label{eq:canonical-knn-kl-est}
	\end{aligned}
\end{equation}
Following the proof of \citet[Theorem 1]{Wang:2009}, we can show that for fixed $k$
\begin{equation}
	\lim\limits_{n\rightarrow\infty}E[\klestsub{b}{k}{y_{1:\numobs}}{Q}] = \kl{P}{Q} +\log k- \psi(k),
	\label{eq:consistency-stare-knn-kl-est}
\end{equation}
where $\psi(k) = \Gamma'(k)/\Gamma(k)$ is the digamma function. \cref{eq:consistency-stare-knn-kl-est} suggests that this canonical Kullback--Leibler estimator is asymptotically biased.
However, using \cref{eq:consistency-stare-knn-kl-est}, we can define the consistent (asymptotically unbiased) estimator
\begin{equation}
	\klestsub{u}{k}{y_{1:\numobs}}{Q} = \klestsub{b}{k}{y_{1:\numobs}}{Q} - \log k + \psi(k).
	\label{eq:biased-correct-knn-kl-est}
\end{equation}
Another way to eliminate the bias is to make $k$ data-dependent, which we call \emph{adaptive} $k$-nearest-neighbor estimators.
Following the proof of \citet[Theorem 5]{Wang:2009}, we can show that $\klestsub{b}{k_{\numobs}}{y_{1:\numobs}}{Q}$ is asymptotically consistent by choosing $k_{N}$ to satisfy mild growth conditions.
\begin{proposition} \label{prop:one-sample-klest}
	Suppose $P$ and $Q$ are distributions uniformly continuous on $\mathbb{R}^D$ with densities $p$ and $q$,
	and $\kl{P}{Q} < \infty$. Let $k_{\numobs}$ be a positive integer satisfying
	\[
		\frac{k_{\numobs}}{\numobs} \rightarrow 0, \qquad \frac{k_{\numobs}}{\log \numobs} \rightarrow \infty.
	\]
	If $\inf_{p(y)>0} p(y)>0$ and $\inf_{q(y)>0} q(y)>0$, then
	\begin{equation}
		\lim\limits_{n\rightarrow\infty}\klestsub{b}{k_{\numobs}}{y_{1:\numobs}}{Q} = \kl{P}{Q}
	\end{equation}
	almost surely.
\end{proposition}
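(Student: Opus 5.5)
The plan is to adapt the proof of \citet[Theorem 5]{Wang:2009} to the one-sample setting. Since $q$ is known, the estimator splits as
\[
\klestsub{b}{k_{\numobs}}{y_{1:\numobs}}{Q} = \frac{1}{\numobs}\sum_{n=1}^{\numobs}\log \widehat{p}_{\numobs}(y_n) - \frac{1}{\numobs}\sum_{n=1}^{\numobs}\log q(y_n),
\]
where $\widehat{p}_{\numobs}$ is the $k_{\numobs}$-nearest-neighbor density estimator in \cref{eq:knn-density-est-p}. The second term involves no density estimation. Since $\kl{P}{Q}<\infty$, $\E_P|\log q(Y)|<\infty$: the lower bound $\inf_{q>0} q>0$ bounds $\log q$ below on the support, and uniform continuity of the density bounds it above. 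The strong law of large numbers then gives almost sure convergence to $\E_P\log q(Y)$. So it suffices to show $\numobs^{-1}\sum_n \log \widehat p_{\numobs}(y_n) \to \E_P \log p(Y)$ almost surely. Combined with $\kl{P}{Q} = \E_P\log p - \E_P\log q$, this yields the claim.

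For the first term, I would write
\[
\frac{1}{\numobs}\sum_n \log \widehat p_{\numobs}(y_n) = \frac{1}{\numobs}\sum_n \log p(y_n) + \frac{1}{\numobs}\sum_n \log\frac{\widehat p_{\numobs}(y_n)}{p(y_n)}.
\]
The first piece converges almost surely by the SLLN. The key step is to show that $\max_{n\le \numobs}|\widehat p_{\numobs}(y_n)/p(y_n) - 1| \to 0$ almost surely. This is the uniform strong consistency of the $k_{\numobs}$-NN density estimator under $k_{\numobs}/\numobs\to 0$ and $k_{\numobs}/\log\numobs\to\infty$, the Devroye--Wagner type result used in \citet{Wang:2009}. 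It then implies that the log-ratio is uniformly $o(1)$, and hence so is its average.

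The uniform consistency argument runs in three steps.

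\emph{Radii shrink.} The condition $k_{\numobs}/\numobs\to0$, together with the lower bound on $p$ over its support, forces $r_{k_{\numobs},n}\to 0$ uniformly in $n$.

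\emph{Local approximation.} Uniform continuity of $p$ gives $P(B(y_n,r))/(V_D(r)p(y_n)) \to 1$ uniformly as $r \to 0$. Here $p(y_n)$ is bounded below, which keeps the ratio controlled.

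\emph{Ball masses.} I would show that $P(B(y_n, r_{k_{\numobs},n}))\cdot(\numobs-1)/k_{\numobs} \to 1$ uniformly in $n$. Conditional on $y_n$, this mass is a Beta$(k_{\numobs},\numobs-k_{\numobs})$ variable. A Chernoff bound gives deviations of relative size $\delta$ with probability at most $\exp(-c\delta^2 k_{\numobs})$. A union bound over the $\numobs$ points gives total probability $\numobs e^{-c\delta^2 k_{\numobs}}$, which is summable in $\numobs$ because $k_{\numobs}/\log\numobs\to\infty$. Borel--Cantelli then yields almost sure convergence.

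I expect the main obstacle to be this uniform-in-$n$ almost sure control of the log-ratio. The difficulties are points near the boundary of the support and the lack of compactness of $\mathbb{R}^D$. I would first handle these with the positivity lower bounds on $p$, since $\inf_{p>0}p>0$ combined with $\int p=1$ forces the support to have finite Lebesgue measure. If that is insufficient, I would truncate to a large compact set and show the contribution from the complement is asymptotically negligible, using integrability of $\log p$ and $\log q$ under $P$.
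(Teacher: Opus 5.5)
Your proposal is correct and follows essentially the paper's route: a strong-law term plus an estimation-error term $\numobs^{-1}\sum_n|\log\widehat p_\numobs(y_n)-\log p(y_n)|$. That error term is driven to zero by uniform strong consistency of the $k_\numobs$-nearest-neighbor density estimate under $k_\numobs/\numobs\to0$ and $k_\numobs/\log\numobs\to\infty$, combined with $\inf_{p>0}p>0$. The differences are minor: the paper cites \citet{Devroye:1977} for the uniform consistency and applies the strong law to $\log(p/q)$ directly, whereas you sketch that lemma yourself at the sample points (Beta-distributed leave-one-out ball masses, a Chernoff bound, a union bound, and Borel--Cantelli) and apply the strong law to $\log p$ and $\log q$ separately.
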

\begin{proof}
	Let $p$ and $q$ are densities of $P$ and $Q$ respectively.
	Consider the following decomposition of the error
	\begin{equation}
		\begin{aligned}
			 & \left|\klestsub{b}{k_{\numobs}}{y_{1:\numobs}}{Q} -\kl{P}{Q} \right|                                                                                                                                                                                                                                                                                                          \\
			 & \leq\left|\frac{1}{\numobs} \sum_{n=1}^{\numobs} \log\left\{\frac{\widehat{p}_{\numobs}\left(y_n\right)}{q\left(y_n\right)}\right\} -\frac{1}{\numobs} \sum_{n=1}^{\numobs} \log\left\{\frac{p\left(y_n\right)}{q\left(y_n\right)}\right\} \right| +\left|\frac{1}{\numobs} \sum_{n=1}^{\numobs} \log\left\{\frac{p\left(y_n\right)}{q\left(y_n\right)}\right\} -\kl{P}{Q}\right| \\
			 & \leq \frac{1}{\numobs} \sum_{n=1}^{\numobs}\left|\log \widehat{p}_{\numobs}\left(y_n\right)-\log p\left(y_n\right)\right|  +\left|\frac{1}{\numobs} \sum_{n=1}^{\numobs} \log\left\{\frac{p\left(y_n\right)}{q\left(y_n\right)}\right\} -\kl{P}{Q}\right|                                                                                                                         \\ &= e_1+e_2 .\end{aligned}
	\end{equation}
	It follows by the conditions that $k_{\numobs}/\numobs \rightarrow 0$ and $k_{\numobs}/\log \numobs \rightarrow \infty$ 	and the theorem given in \cite{Devroye:1977} that $\widehat{p}_{\numobs}$ is uniformly strongly consistent: almost surely
	\begin{equation}
		\lim _{\numobs \rightarrow \infty} \sup _y\left|\widehat{p}_{\numobs}(y)-p(y)\right| \rightarrow 0.
	\end{equation}
	Therefore, following the proof of \cite{Wang:2009}, for any $\varepsilon > 0$, there exists $N_1$ such that for any $n > N_1$, $e_1 < \varepsilon/2$. For $e_2$, it simply follows by the Law of Large Numbers that for any $\varepsilon > 0$, there exists $N_2 $ such that for any $n > N_2$, $e_2 < \varepsilon/2$. By choosing $N = \max(N_1, N_2)$, for any $n>N$, we have $|\klestsub{b}{k_{\numobs}}{y_{1:\numobs}}{Q} -\kl{P}{Q} | < \varepsilon$.
\end{proof}

\subsection{Detailed Empirical Comparison}

We now empirically compare the behavior of these $k$-nearest-neighbor Kullback--Leibler estimators.
Consider two multivariate Gaussian distributions $P = \distNorm(\mu_1, \Sigma_1)$ and $Q = \distNorm(\mu_2, \Sigma_2)$. The theoretical value for the Kullback--Leibler divergence between $P$ and $Q$ is
\begin{equation}
	\kl{P}{Q} = \frac{1}{2}\left[\log\frac{|\Sigma_2|}{|\Sigma_1|} - d + tr(\Sigma_2^{-1}\Sigma_1) + (\mu_2-\mu_1)^T\Sigma_2^{-1}(\mu_2-\mu_1)\right],
	\label{eq:kl-gaussian}
\end{equation}
where $|\cdot|$ is the determinant of a matrix and $tr(\cdot)$ denotes the trace.
We generate samples $y_{1:\numobs}$ from $P$ and estimate $\klest{y_{1:\numobs}}{Q}$ with the three estimators above:
the canonical fixed $k$ estimator in \cref{eq:canonical-knn-kl-est} with $k\in\{1,10\}$,  the bias-corrected estimator in \cref{eq:biased-correct-knn-kl-est} with $k\in\{1,10\}$ and the adaptive estimator with $k_{\numobs} =  \numobs^{1/2}$.

We generate samples from a weakly correlated multivariate Gaussian distribution. Set $P = \distNorm(\mu, \Sigma)$ with $\mu=(1,\ldots,1)\in R^D$ and $\Sigma_{ij}=\exp\{-(i-j)^2/\sigma^2\}$, where large $\sigma$ results in high correlations and vice versa.
Let $\sigma=0.6$ and set $Q = \distNorm(0, I_{D})$.
We test the performance of each estimator with varying $\numobs \in \{100, 1000, 5000, 10000, 20000, 50000\}$ and varying dimensions $D \in \{4, 10, 25, 50\}$.

As shown in \cref{fig:knn-kl-est-comparison}, when $D=4$, the adaptive estimator with $k_{\numobs}=\numobs^{1/2}$ outperforms and shows reliable estimation when sample size is large ($\numobs \ge 5000$).
This scenario resembles the setup in our simulation and real-data experiments.
We therefore use the adaptive estimator with $k_{\numobs}=\numobs^{1/2}$ for our experiments.

When the dimension increases, the stability of all $k$-nearest-neighbor estimators drops due to the sparsity of data in high dimensions.
This reveals a limitation of all $k$-nearest-neighbor estimators.
Although proposing estimators for divergence is beyond the scope of the paper,
we test one possible adaption in \cref{sec:high-dim-simulation} to use the $k$-nearest-neighbor estimators in high dimensions by assuming independence across coordinates.

\begin{figure}[hp]
	\centering
	\subfloat[$d = 4$]{\includegraphics[width=.48\textwidth]{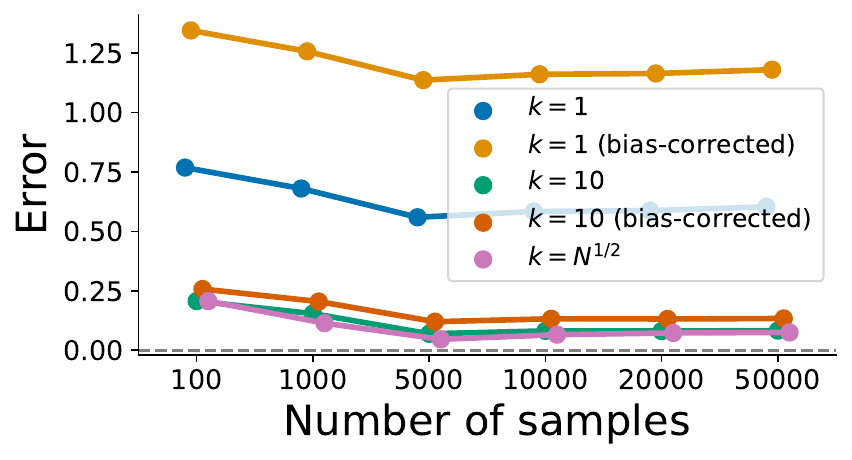}}
	\subfloat[$d = 10$]{\includegraphics[width=.48\textwidth]{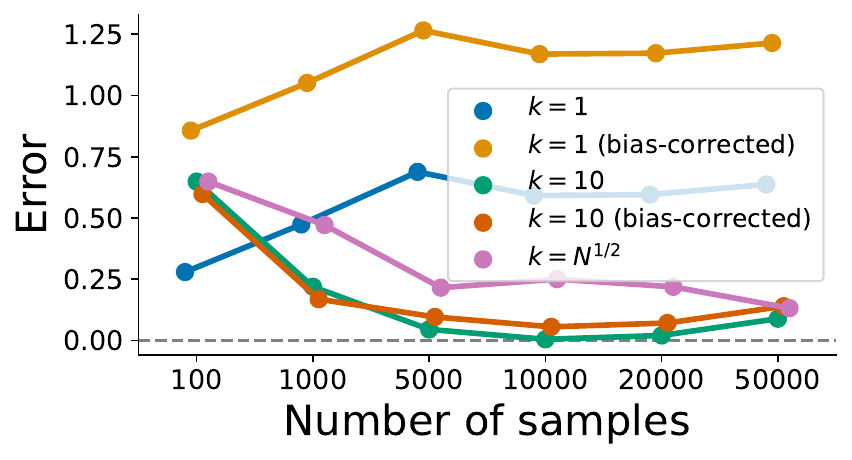}}
	\\
	\subfloat[$d = 25$]{\includegraphics[width=.48\textwidth]{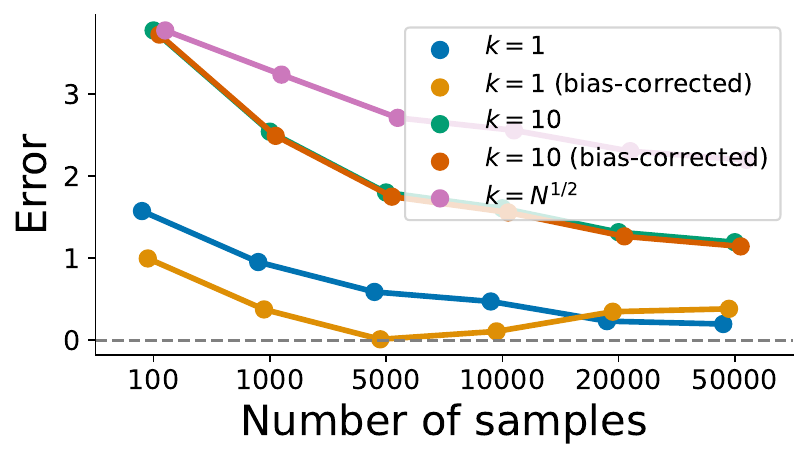}}
	\subfloat[$d = 50$]{\includegraphics[width=.48\textwidth]{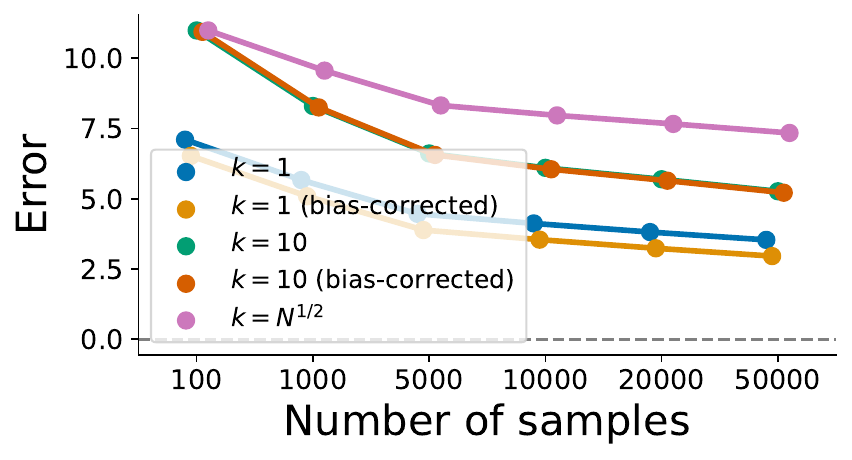}}
	\caption{Absolute error against sample size for canonical $1$-nearest-neighbor estimator, canonical $10$-nearest-neighbor estimator, bias-corrected $1$-nearest-neighbor estimator, bias-corrected $10$-nearest-neighbor estimator and adaptive $k_{\numobs}$-nearest-neighbor estimator with $k_{\numobs}=\numobs^{1/2}$. Each panel corresponds with a different dimension $D\in\{4,10,25,50\}$. Gray dotted lines indicate no error.}
	\label{fig:knn-kl-est-comparison}
\end{figure}

\section{Additional Simulation and Method Implementation Details}

\subsection{Sinkhorn Distance}
\label{sec:sinkhorn}
While the Wasserstein distance has appealing properties, it can be challenging to obtain an accurate estimate from finite samples
because it is sensitive to small changes in empirical distributions and suffers from slow convergence rates in high dimensions. 
The Sinkhorn distance, however, provides a regularized alternative that approximates the Wasserstein distance with faster sample convergence \citep{fast_sh}.
So, in practice, we can use the Sinkhorn distance to approximate the Wasserstein distance, 
which is the approach we take in \cref{sec:scRNA}. 
Specifically, we use the unbalanced Sinkhorn distance \citep{unbsinkhorn}, which solves an unbalanced optimal transport (OT) problem
 in the discrete setting.
 Given samples $x_{1},\dots,x_{N}$ and $y_{1},\dots,y_{L}$, we can construction the 
cost matrix $M \in \reals^{N \times L}$ for a metric $m$ defined by $M_{n\ell} = m(x_{n}, y_{\ell})$. 
Let $U$ denote the set of transport plans 
\[
U = \left\{ A \in \mathbb{R}_+^{D \times D} \mid {1}_D^T A {1}_D = 1 \right\},
\]
where $1_{D}$ denotes the $D$-dimension vector with all components equal to 1. 
Given nonnegative regularization constants $\veps$ and $\varrho$, for $r \in \Delta_{N}$ and $c \in \Delta_{L}$,
the unbalanced Sinkhorn distance is defined as
\[
	d_{M,\varepsilon, \varrho}(r, c) = \min_{A\in U
    } \mathrm{tr}(A^{\top}M) + \varepsilon\,\kl{A}{rc^T}
	+ \varrho\,\kl{A 1}{r}
	+ \varrho\,\kl{A^T 1}{c}. 
\]
A larger value of $\varepsilon$ encourages a smoother, more numerically stable transport plan by penalizing the divergence between 
the plan $A$ and the independence (maximum entropy) transport plan $rc^T$. 
Smaller marginal penalty $\varrho$ introduces robustness to the marginal constraints of the transport plan.
The so-called balanced OT is retrieved in the limit of $\rho \to \infty$.
Additionally taking $\varepsilon\to0$ recovers the unregularized OT, which is equal to the
empirical 1-Wasserstein distance. 

\subsection{Limitations of Coarsening}
\label{sec:coarsening-limitations}

The following toy example illustrates how coarsening and the Bayesian information criterion can overfit even in scenarios with only a modest degree of misspecification.
	We generate data from a mixture of $\numcomps_{o} = 2$ skew normal distributions but fit the data using a Gaussian mixture.
	The level of misspecification is controlled by the skewness parameter of each skew normal component in the true generative distribution $P_{o}$.
	We consider the following scenarios: two equal-sized clusters with the same level of misspecification (denoted \texttt{same})
	and two equal-sized clusters with different levels of misspecification (denoted \texttt{different}).
	See \cref{sec:high-dim-simulation} for further details about the experimental set-up.
	As shown in the first row of \cref{fig:motivate-comparison}, in both scenarios using expectation--maximization (EM)
	with the Bayesian information criterion (BIC) results in estimating $\numcomps \gg \numcomps_{o}$
	to capture the skewness of each component.
	As shown in the second row of \cref{fig:motivate-comparison},
	the coarsened posterior performs well in the \texttt{same} case but overfits the cluster with a larger degree of misspecification in the \texttt{different} scenario.

\begin{figure}[tp]
	\centering
	\includegraphics[width=.48\textwidth]{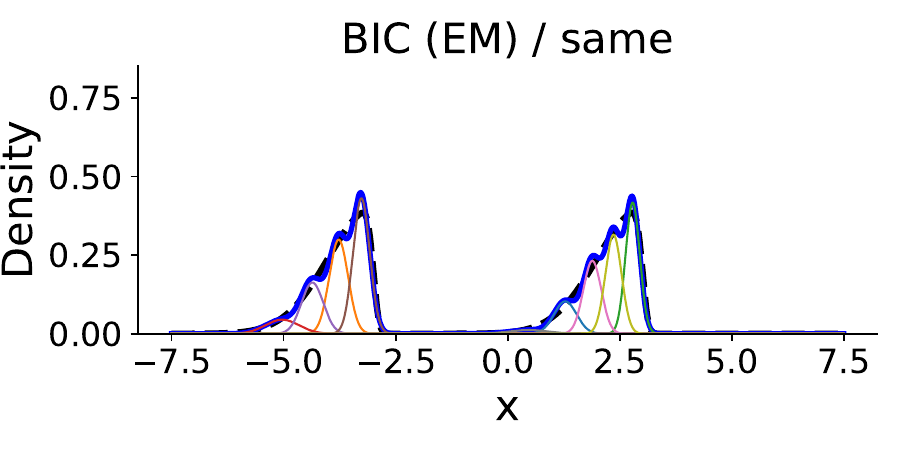}
	\includegraphics[width=.48\textwidth]{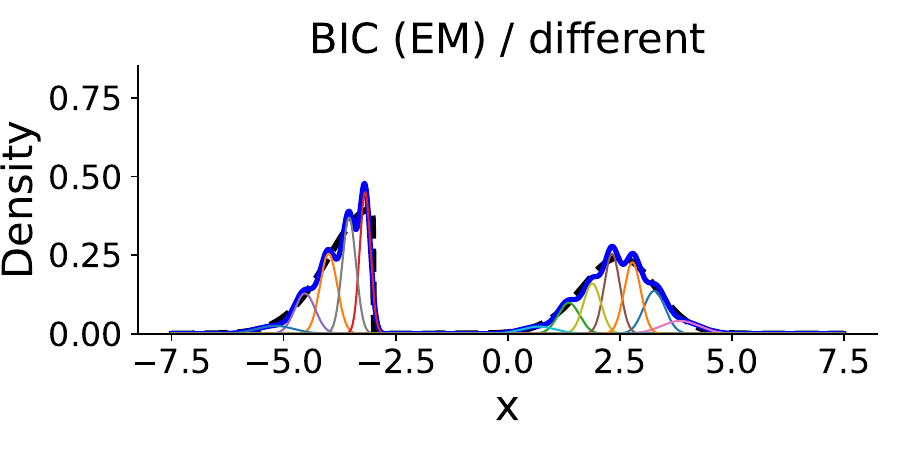}\\
	\includegraphics[width=.48\textwidth]{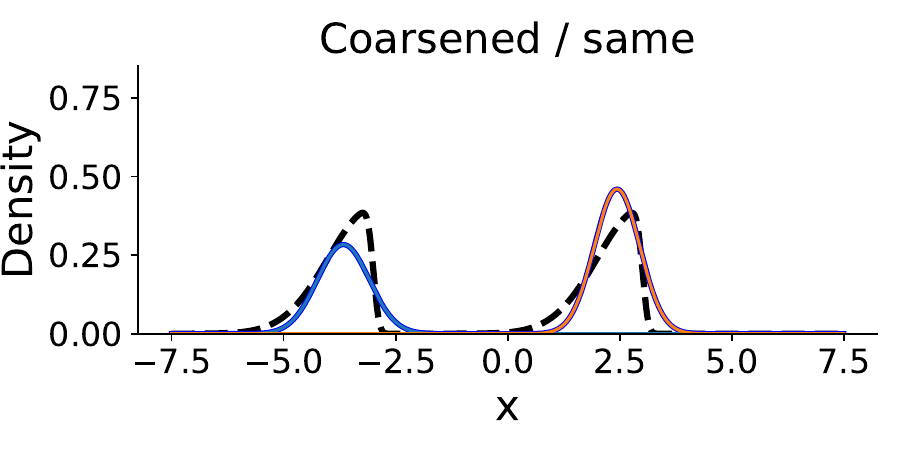}
	\includegraphics[width=.48\textwidth]{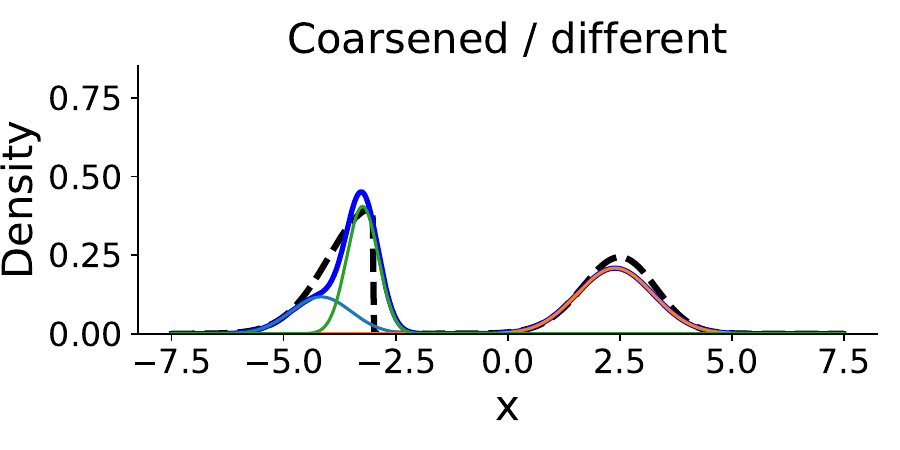}\\
	\includegraphics[width=.48\textwidth]{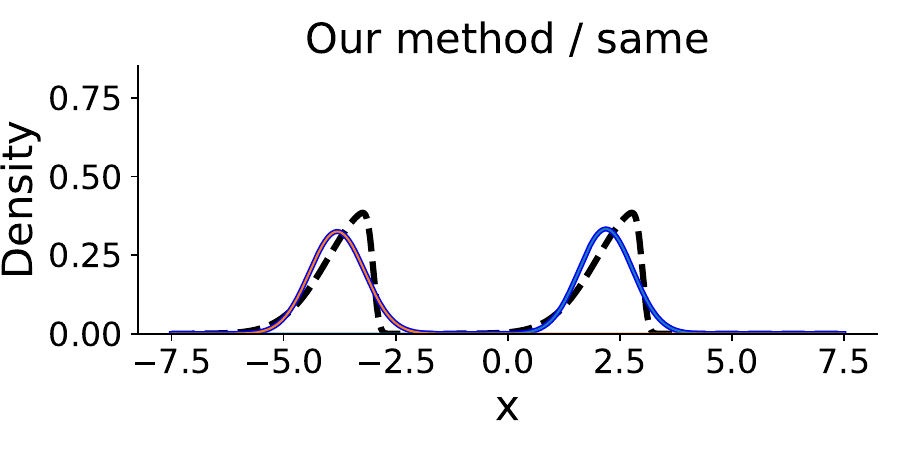}
	\includegraphics[width=.48\textwidth]{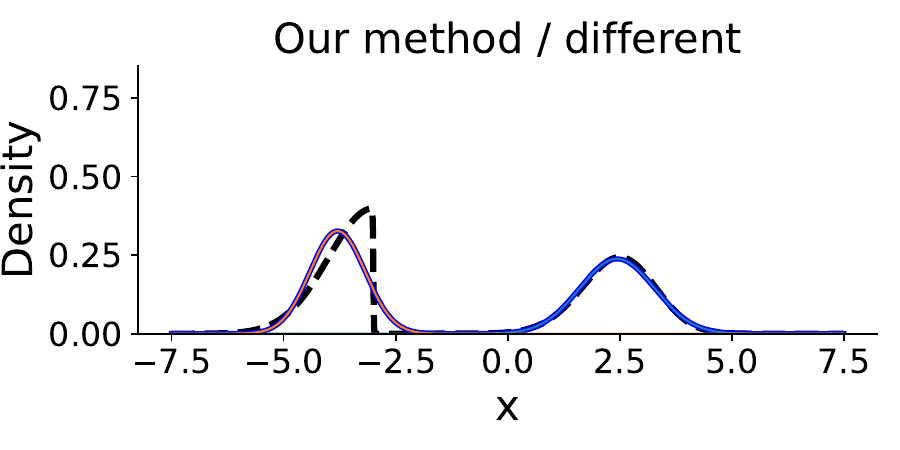}
	\caption{
		For the mixture of skew-normals example from \cref{sec:intro},
		each panel shows the density of $P_{o}$ (dashed lines) and the densities of the fitted Gaussian mixture model and each
		component distribution (solid lines) using $\numobs = 10\,000$ observations.
		The ``same'' and ``different'' scenarios describes the relative degree of misspecification of the two components.
		Results are given for three approaches:
		expectation--maximization with the Bayesian information criterion (first row),
		the coarsened posterior (second row),
		and our robust model selection method (third row).}
	\label{fig:motivate-comparison}
\end{figure}

\subsection{Coarsened Posterior Calibration Details}
\label{appx:simulation-gauss}

The coarsened posterior requires calibration of the hyperparameter $\alpha$, which determines the degree of misspecification.
We select $\alpha$ using the \emph{elbow method} proposed by \citet{Miller:2019}.
In this section, we include all calibration figures for the coarsened posterior following the code provided by \citet{Miller:2019}.

As shown in \cref{fig:coarsen-calibration}, we set $\alpha$ based on the turning point where we see no significant increase
in the log-likelihood if $\alpha$ increases further.
Using these values for $\alpha$, we can see for all cases except the \texttt{small-large} case, the coarsened posterior consistently estimates the number of clusters (after removing mini clusters with size $<2\%$) as $\widehat{\numcomps} = 3 > \numcomps_{o} = 2$.

\begin{figure}[tp]
	\centering
	\includegraphics[width=.48\textwidth]{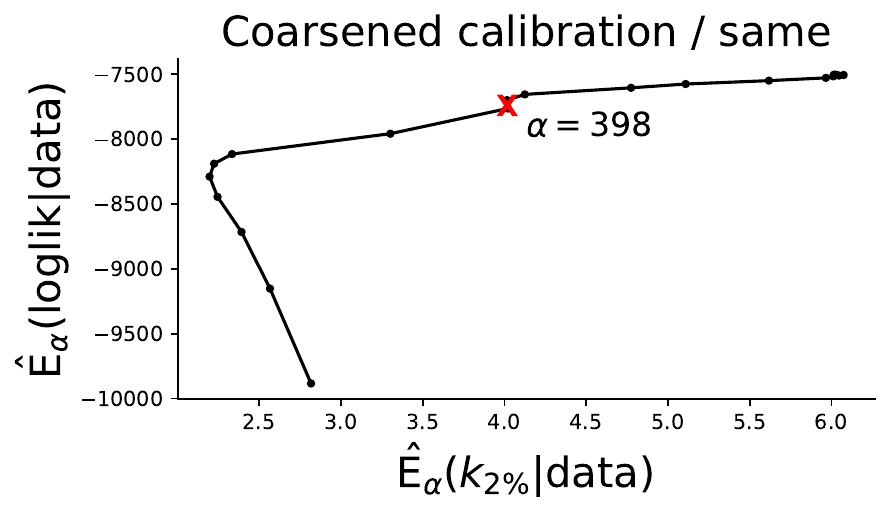}
	\includegraphics[width=.48\textwidth]{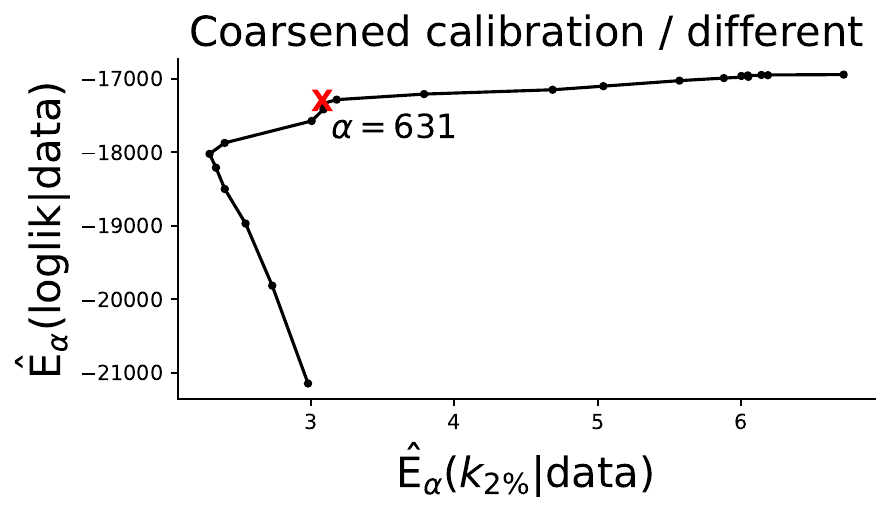}\\
	\includegraphics[width=.48\textwidth]{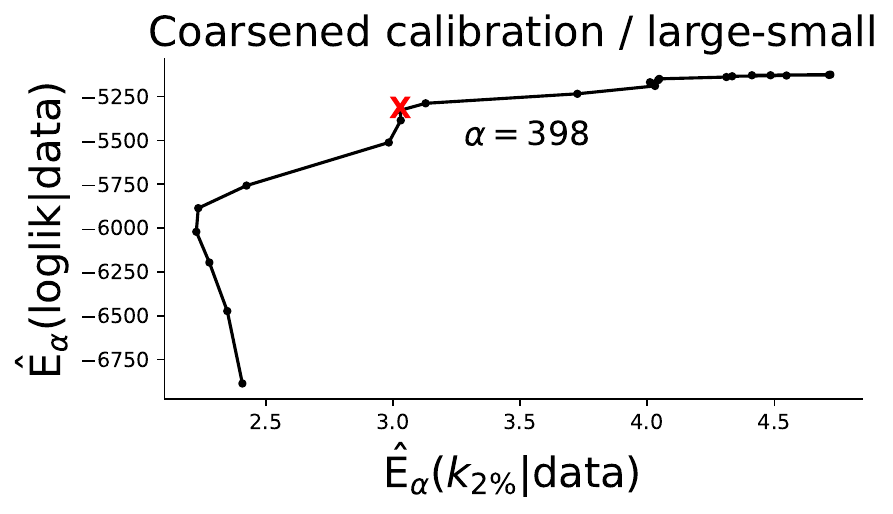}
	\includegraphics[width=.48\textwidth]{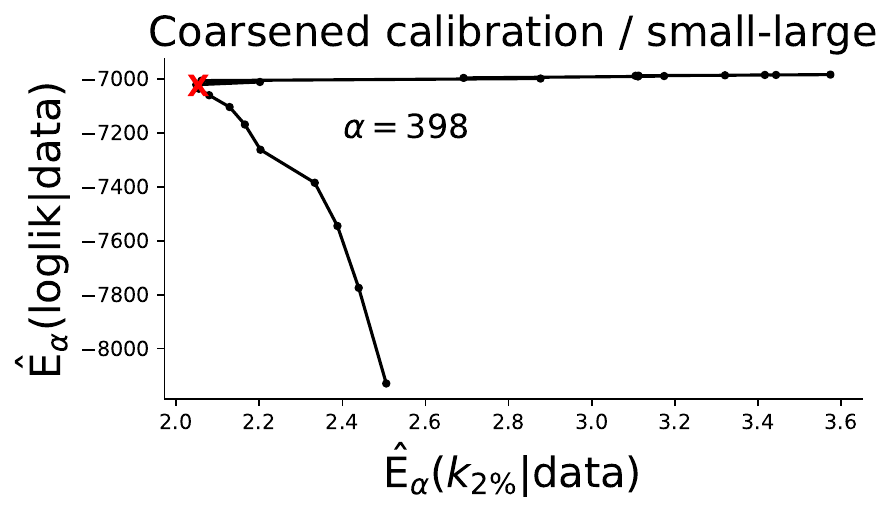}\\
	\includegraphics[width=.48\textwidth]{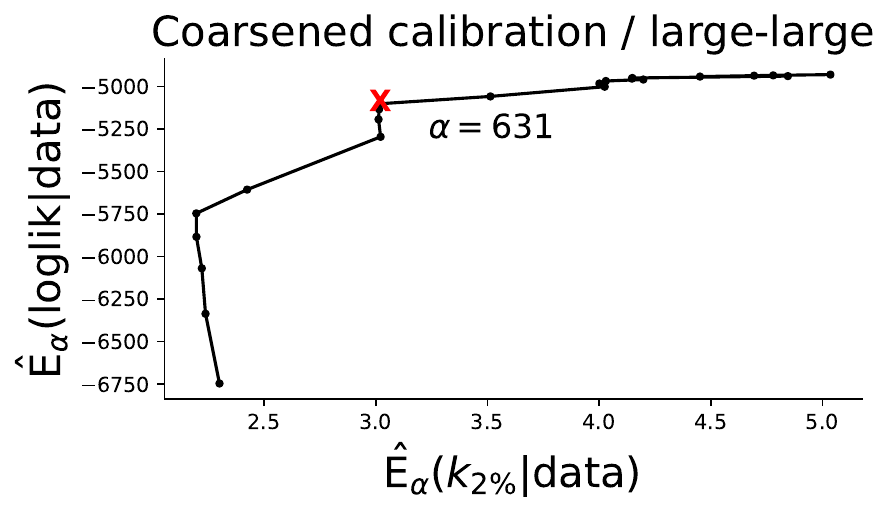}
	\caption{For the mixture of skew normals experiments from \cref{sec:intro,sec:high-dim-simulation}, each panel shows the expected log-likelihood $\widehat{E}_{\alpha}(\mathrm{loglik}\mid \mathrm{data})$ against the expected number of clusters which excludes tiny clusters of size less that $2\%$ of whole dataset denoted as $\widehat{E}_{\alpha}(k_{2\%}\mid \mathrm{data})$.
		We select $\alpha$ as the elbow point in the plots.}
	\label{fig:coarsen-calibration}
\end{figure}

\section{Additional Experimental Details and Results for Mixture Model}
\label{sec:case-study-details}

\subsection{Skew-normal Mixture Simulation Study}
\label{sec:simulation-gauss}

We first consider the case of two clusters of equal size. 
We set $\eta_{o} = (0.5, 0.5)$, $\mu_{o} = (-3,3)$, and $\sigma_{o} = (1,1)$ for the two scenarios in \cref{fig:motivate-comparison}: $\gamma_o = (-10,-1)$ (denoted \texttt{different}) and $\gamma_o = (-10,-10)$ (denoted \texttt{same}).
We also compare \methodname to the coarsened posterior with data from
two-component mixtures of different cluster sizes. %
We set $\eta_{o} = (0.95, 0.05)$, $\mu_{o} = (-3,3)$, and $\sigma_{o} = (1,1)$ for the following three scenarios:
$\gamma_o = (-10,-1)$ (denoted \texttt{large-small}),  $\gamma_o = (-1,-10)$ (denoted \texttt{small-large}),
and  $\gamma_o = (-10,-10)$ (denoted \texttt{large-large}).

\begin{figure}[tp]
	\centering
	\subfloat{\label{fig:gauss-cpos-pdfs-1}\includegraphics[width=.32\textwidth]{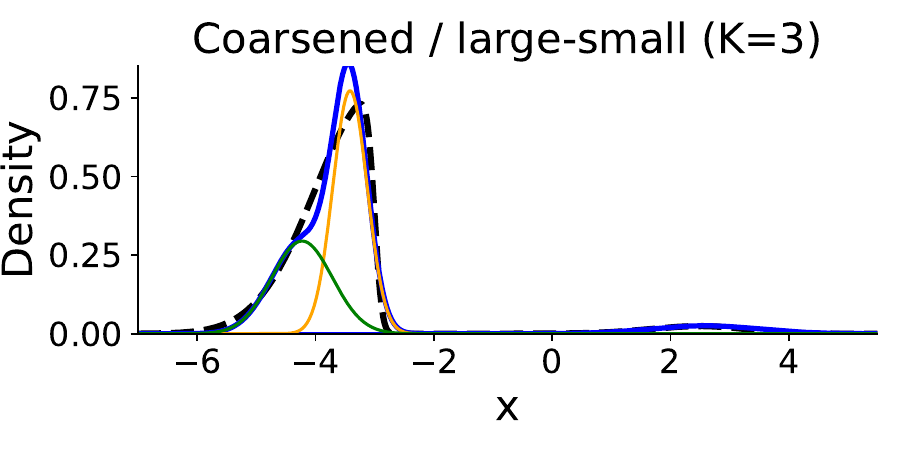}}
	\subfloat{\label{fig:gauss-cpos-pdfs-2}\includegraphics[width=.32\textwidth]{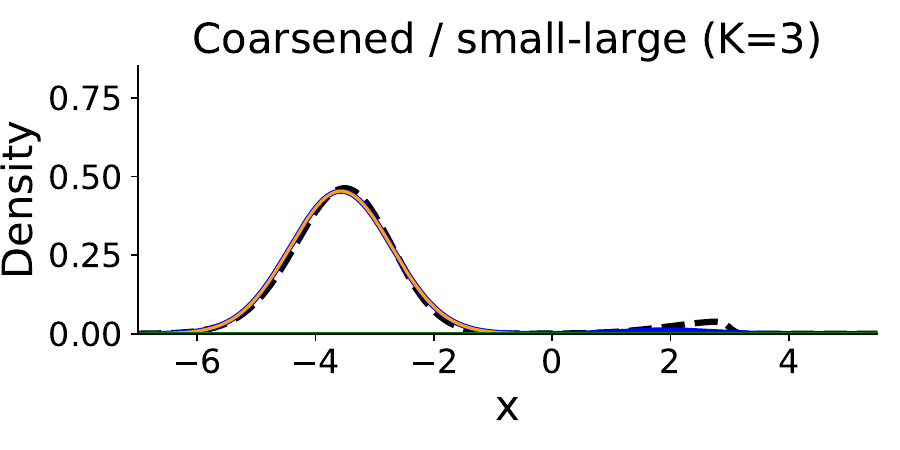}}
	\subfloat{\label{fig:gauss-cpos-pdfs-3}\includegraphics[width=.32\textwidth]{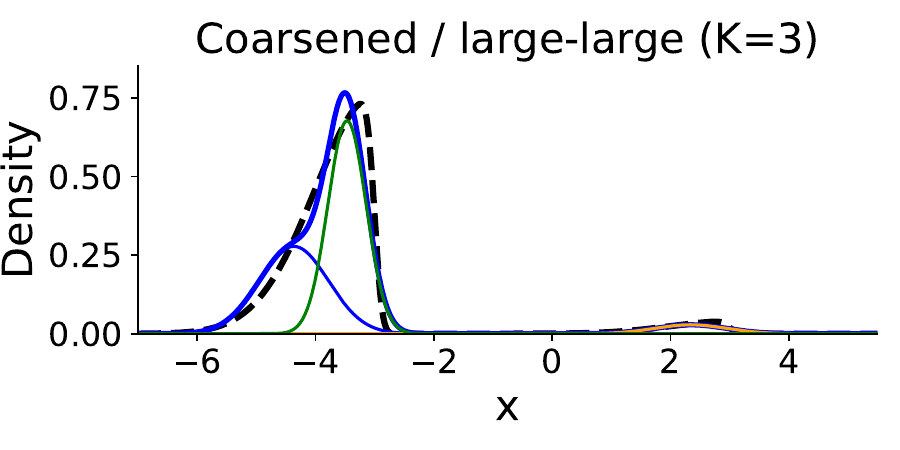}}
	\\
	\subfloat{\label{fig:gauss-stare-pdfs-1}\includegraphics[width=.32\textwidth]{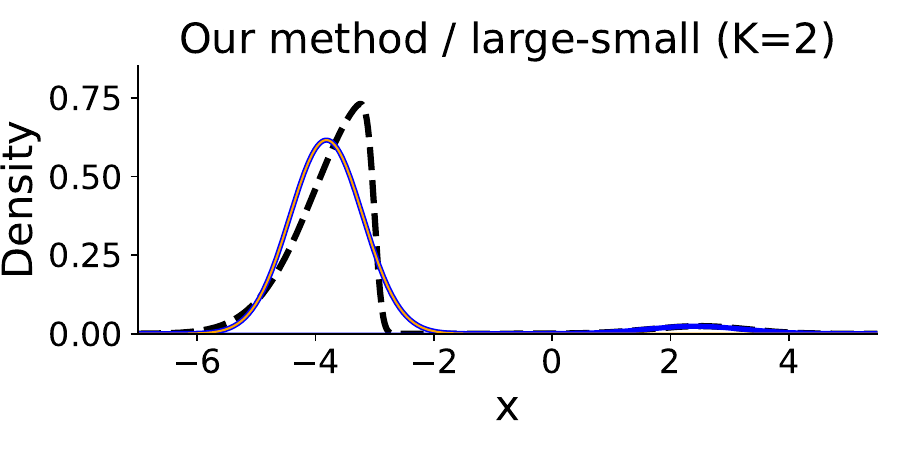}}
	\subfloat{\label{fig:gauss-stare-pdfs-2}\includegraphics[width=.32\textwidth]{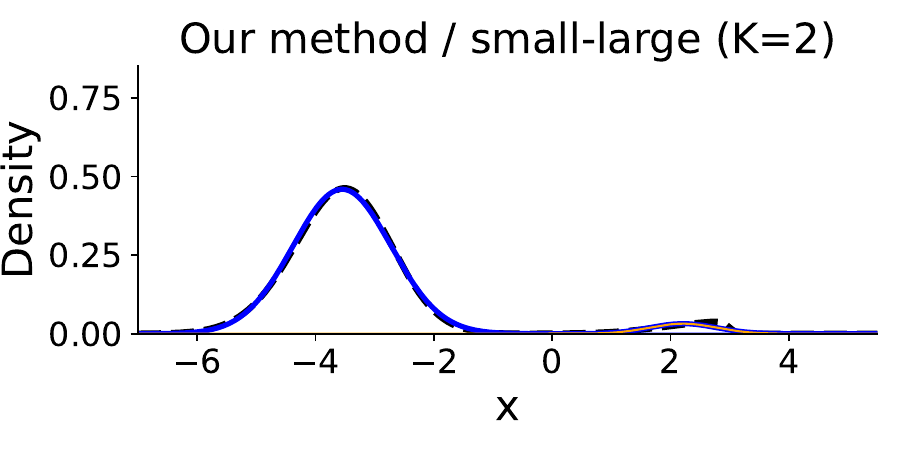}}
	\subfloat{\label{fig:gauss-stare-pdfs-3}\includegraphics[width=.32\textwidth]{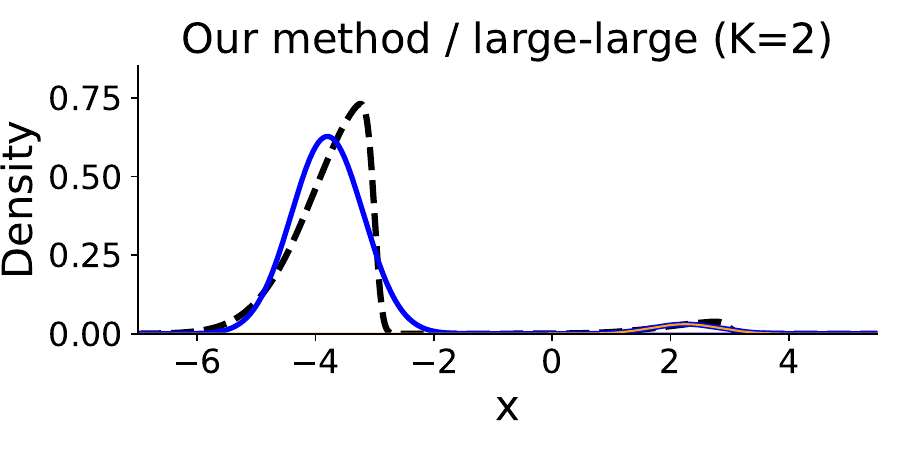}}
	\\
	\subfloat{\label{fig:gauss-stare-penloss-1}\includegraphics[width=.32\textwidth]{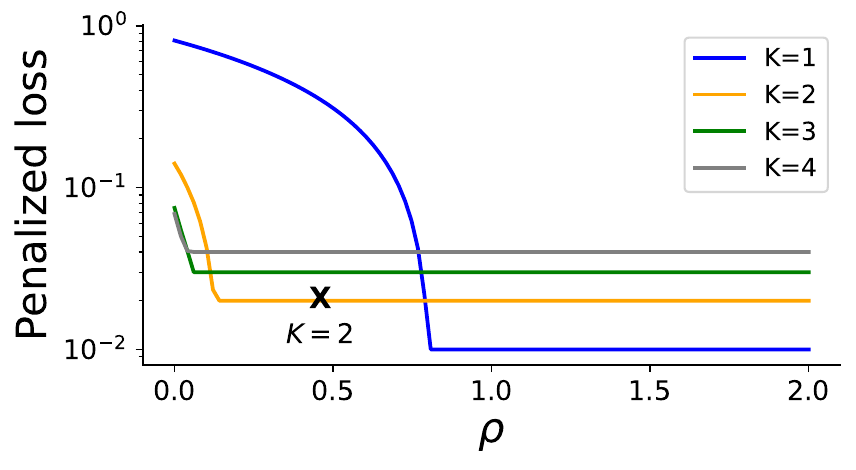}}
	\subfloat{\label{fig:gauss-stare-penloss-2}\includegraphics[width=.32\textwidth]{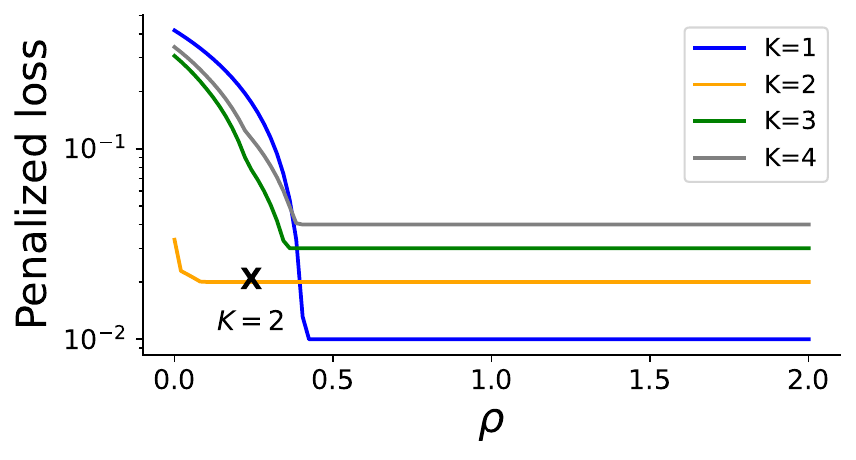}}
	\subfloat{\label{fig:gauss-stare-penloss-3}\includegraphics[width=.32\textwidth]{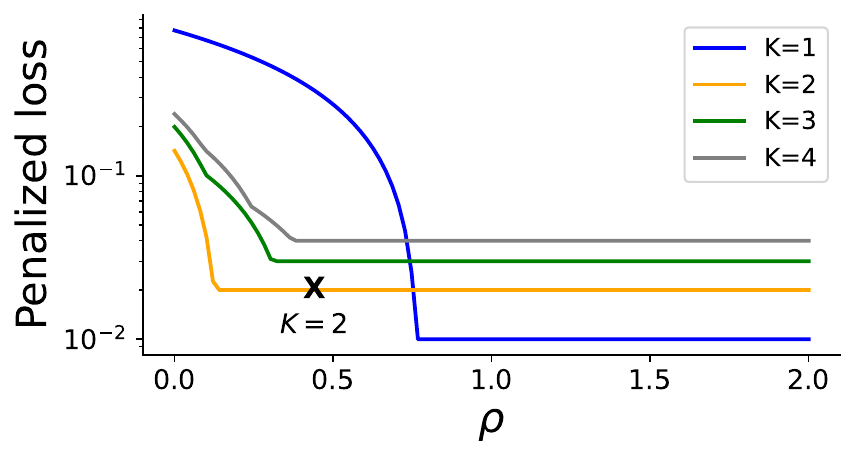}}

	\caption{
		Comparison between the coarsened posterior and \methodname when using a Gaussian mixture model to fit data generated from a mixture of
		skew-normal distributions.
		\textbf{First row:} Densities of the model and components selected using the coarsened posterior (solid lines) and the density of the data distribution (dashed line). The title specifies the data-generating distribution and the number of components selected.
		In the middle plot of the first row, the minor cluster contains two components.
		\textbf{Second row:} Densities of the model and components selected using our structurally aware robust method.
		\textbf{Third row:} Penalized loss plots, where the cross mark indicates the first wide stable region and is labeled with the number of clusters \methodname selects.
		Line colors correspond to different $\numcomps$ values.}
	\label{fig:gauss-simulation}
\end{figure}

For the coarsened posterior, we calibrate the hyperparameter $\alpha$ following the procedure from \citet[Section 4]{Miller:2019}.
First, we use Markov chain Monte Carlo to approximate the coarsened posterior for $\alpha$ values ranging from $10$ to $10^5$.
Then, we select the coarsened posterior with the $\alpha$ value at the clear cusp that indicates a good fit and low complexity.
See the Supplementary Materials for further details and calibration plots.
As shown for \texttt{large-small} and \texttt{large-large} in \cref{fig:gauss-simulation}, when the
larger cluster has large misspecification, the coarsened posterior introduces one additional cluster to explain the larger cluster.
For the \texttt{small-large} case, when the larger cluster exhibits a small degree of misspecification,
the coarsened posterior introduces one additional cluster to explain the smaller cluster.

\methodname correctly calibrates the model mismatch cutoff $\rho$ using the penalized loss plots shown in \cref{fig:gauss-simulation},
as in all cases $\numcomps = 2$ corresponds to the first wide, stable region.
By the density plots in the middle column, we can see that \methodname 
is able to properly trade off a worse density estimate for better model selection.

\paragraph{Comparison to other methods}
To compare \methodname to other mixture model selection criteria, we generate synthetic datasets from mixtures of multivariate skew-normal distributions under both low- and high-dimensional cases.  
The density of multivariate skew normal distribution is $f(x; m, \Sigma, \gamma) = 2\phi(x; m, \Sigma)\Phi(\gamma \odot x; m, \Sigma)$, where $\phi(x; m, \Sigma)$ and $\Phi(x; m, \Sigma)$ are, respectively, the probability density function
and cumulative distribution function of the multivariate normal $\distNorm(m, \Sigma)$.
Cluster means are first placed on an evenly spaced grid and then, for each component, a mean vector is sampled from a multivariate normal distribution centered at the corresponding grid location, and then randomly shuffled within each dimension.
Mixture weights are sampled from a Dirichlet distribution and adjusted to ensure a minimum component proportion of $0.05$, and cluster-specific skewness parameters are drawn from normal distributions with means evenly spaced between $2.0$ and $8.0$.  
To introduce controlled variation in cluster covariance structure, we compute, for each cluster mean, its minimum separation distance to every other cluster and multiply this distance by a scalar $\alpha \in (0,1)$. This ensures that more isolated clusters receive larger variance scales, while clusters that are closer together are assigned smaller ones.
For the low-dimensional experiments, we generate 60 datasets with $K_{o} \in \{3,4\}$, $D \in \{2,3\}$, and fifteen replicates per configuration, each containing $N = 1000$ samples.  
For the high-dimensional experiments, we generate 40 datasets with $K_{o} \in \{8,9\}$, dimension $D = 20$, and twenty replicates per configuration, each containing $N = 5000$ samples.

\begin{table}[tp]
\centering
\caption{Wilcoxon signed-rank test $p$-values comparing absolute errors of
\methodname\ against common model selection criteria.}
\label{tab:wilcoxon-pvals}
\begin{tabular}{lcc}
\toprule
\textbf{Comparison} & \textbf{Low-dim} & \textbf{High-dim} \\
\midrule
\methodname\ vs. Gap        & $4.07\times 10^{-4}$ & 0.028890 \\
\methodname\ vs. Elbow      & $8\times 10^{-6}$ & 0.002188 \\
\methodname\ vs. Silhouette & $2\times 10^{-6}$ & 0.158655 \\
\bottomrule
\end{tabular}
\end{table}

\subsection{Flow Cytometry Calibration Results}
\label{appx:flow-cytometry}

In this section, we include loss and F-measure plots of our model selection method on all test datasets 7--12.
See \citet[Section 5.2]{Miller:2019} for a discussion of the exact calibration procedure for the coarsened posterior.

Recall that to calibrate $\rho$, we select $\rho$ that optimizes the F-measure across first $6$ datasets.
To incorporate this prior knowledge on test datasets, we suggests selecting the value of $\numcomps$
that has has stable penalized loss and is closest to the optimal $\rho$.
We compare our selection $\widehat{\numcomps}$ with the ground truth $\numcomps_o$ labeled by experts.
For each dataset, there is always one cluster labeled as unknown due to some unclear information for cells.
With automatic clustering algorithm, it is natural for the algorithm to identify those unlabeled points and assign them to other clusters, which results in $\numcomps_o-1$ clusters.
So we treat both $\numcomps_o$ and $\numcomps_o-1$ as ground truth in our analysis.
As shown in \cref{fig:GvHD,fig:GvHD2}, our selection method results in highest F-measure for datasets 8--12.
Dataset 7 is challenging and even the ground truth does not produce a large F-measure.

\begin{figure}[tp]
	\centering
	\subfloat[Data 7]{\label{fig:data7}
		\includegraphics[width=.48\textwidth]{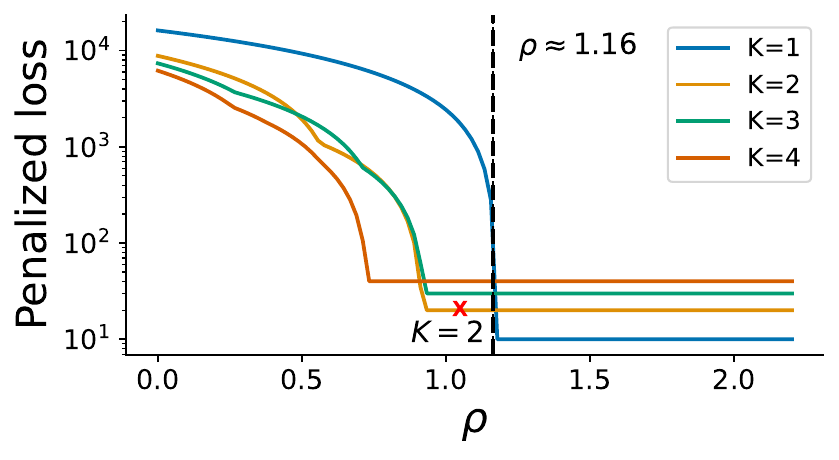}
		\includegraphics[width=.48\textwidth]{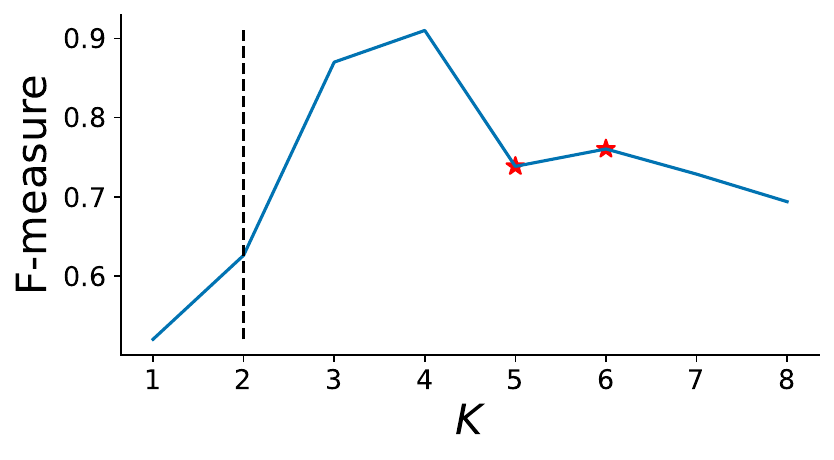}}	\\
	\subfloat[Data 8]{\label{fig:data8}
		\includegraphics[width=.48\textwidth]{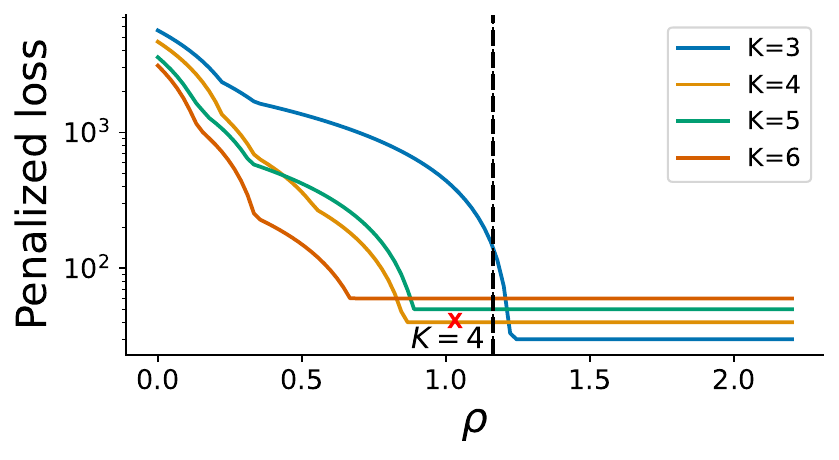}
		\includegraphics[width=.48\textwidth]{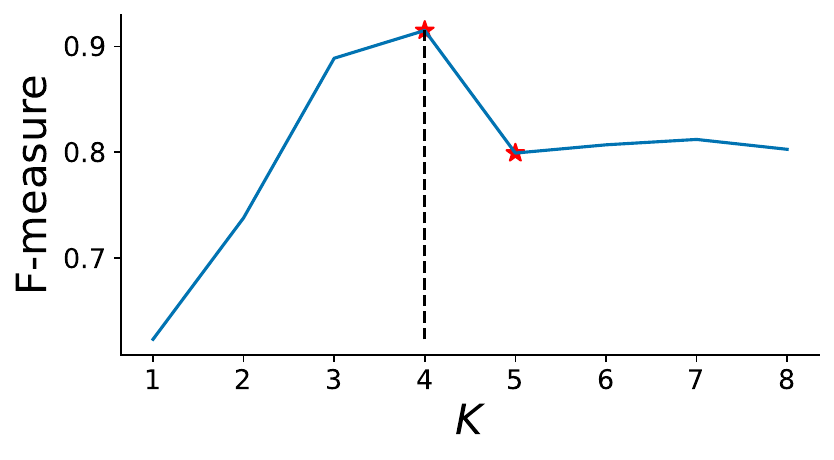}}	\\
	\subfloat[Data 9]{\label{fig:data9}
		\includegraphics[width=.48\textwidth]{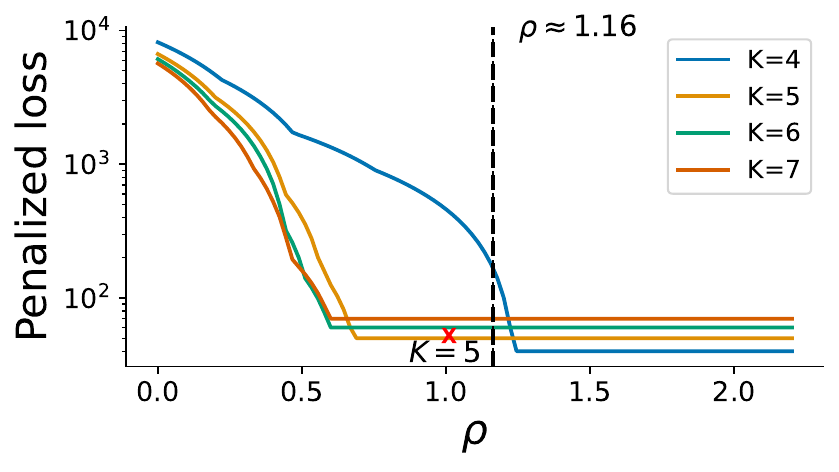}
		\includegraphics[width=.48\textwidth]{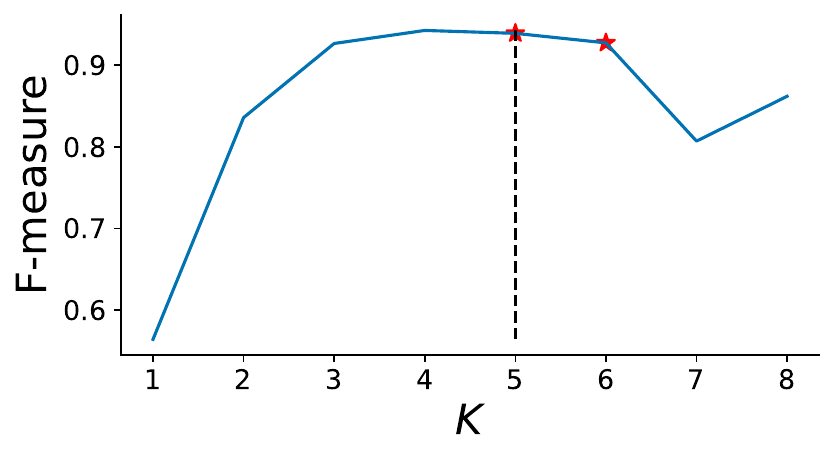}}
	\caption{Calibration and F-measure plots for test datasets 7--9 in flow cytometry experiments. \textbf{Left}: The black dashed lines indicate the optimal $\rho$ calibrated on training datasets 1--6. The cross mark indicates the selection for number of clusters. \textbf{Right}: F-measure against the number of clusters. The dashed line shows the number of clusters selected by \methodname and the red star indicates the ground truth $\numcomps_{o}$.}
	\label{fig:GvHD}
\end{figure}

\begin{figure}[tp]
	\centering
	\subfloat[Data 10]{\label{fig:data10}
		\includegraphics[width=.48\textwidth]{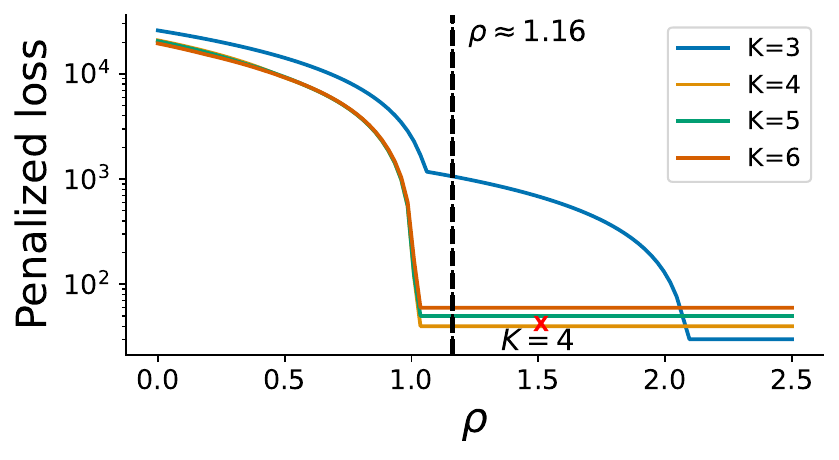}
		\includegraphics[width=.48\textwidth]{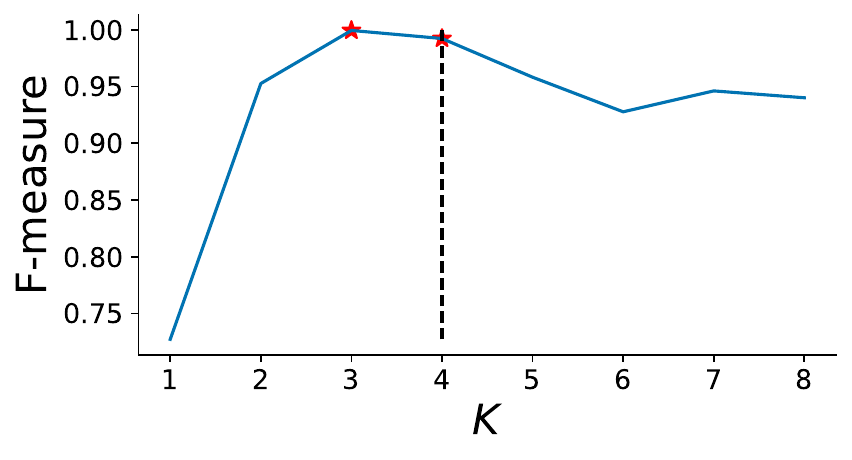}}	\\
	\subfloat[Data 11]{\label{fig:data11}
		\includegraphics[width=.48\textwidth]{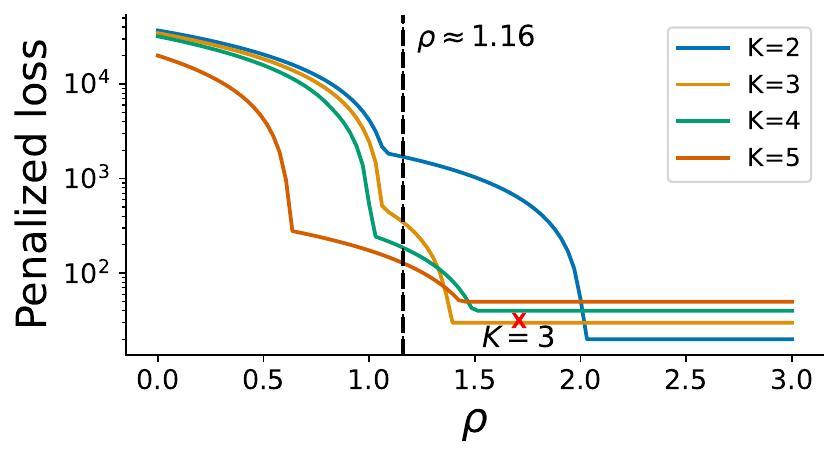}
		\includegraphics[width=.48\textwidth]{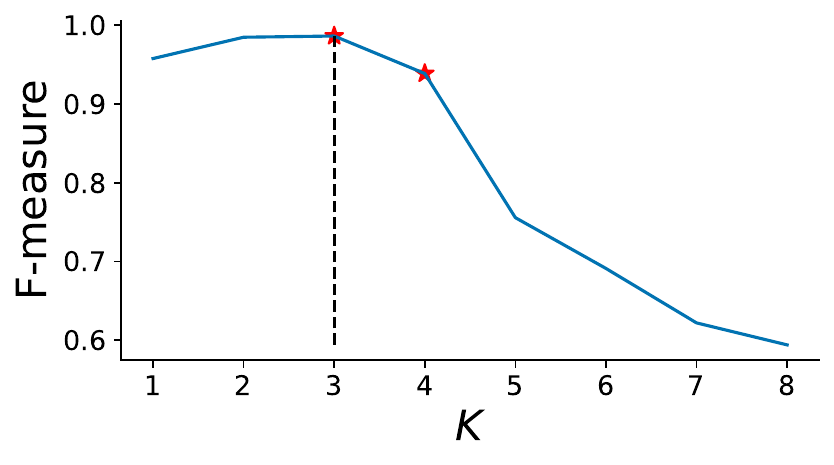}}	\\
	\subfloat[Data 12]{\label{fig:data12}
		\includegraphics[width=.48\textwidth]{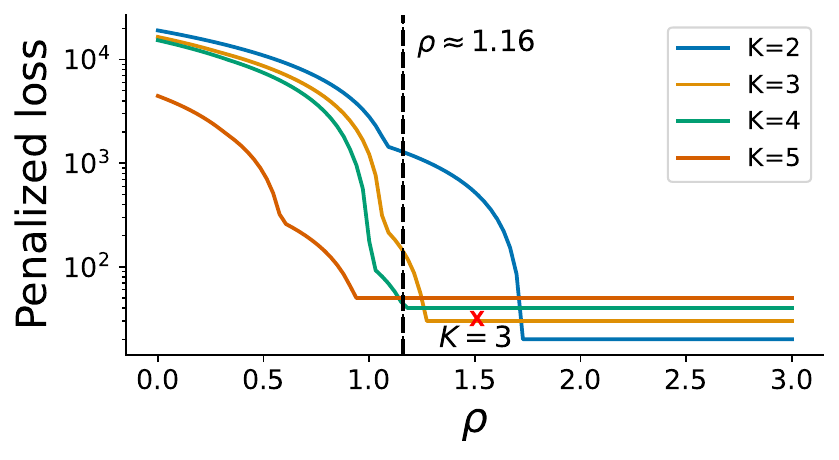}
		\includegraphics[width=.48\textwidth]{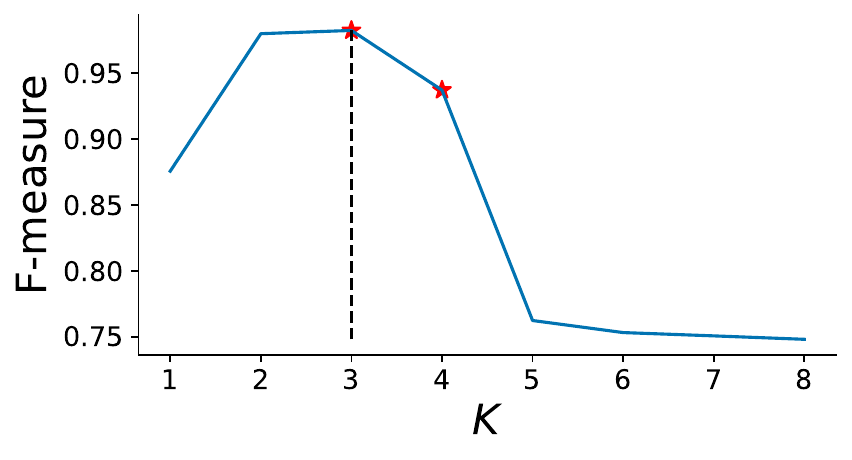}}
	\caption{Calibration and F-measure plots for test datasets $10-12$ in flow cytometry experiments. See caption for \cref{fig:GvHD} for details}
	\label{fig:GvHD2}
\end{figure}

\begin{figure}[tp]
	\centering
	\subfloat[Data 7, $\numcomps=8$]{\label{fig:data7-loss}
		\includegraphics[width=.48\textwidth]{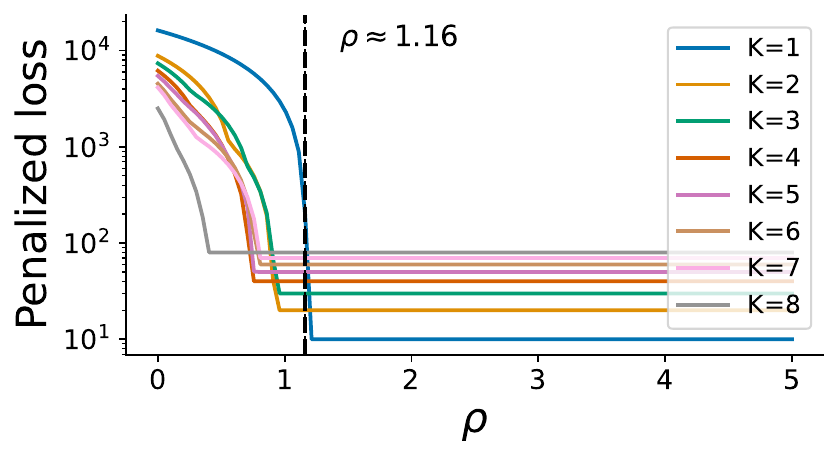}}
	\subfloat[Data 8, $\numcomps=7$]{\label{fig:data8-loss}
		\includegraphics[width=.48\textwidth]{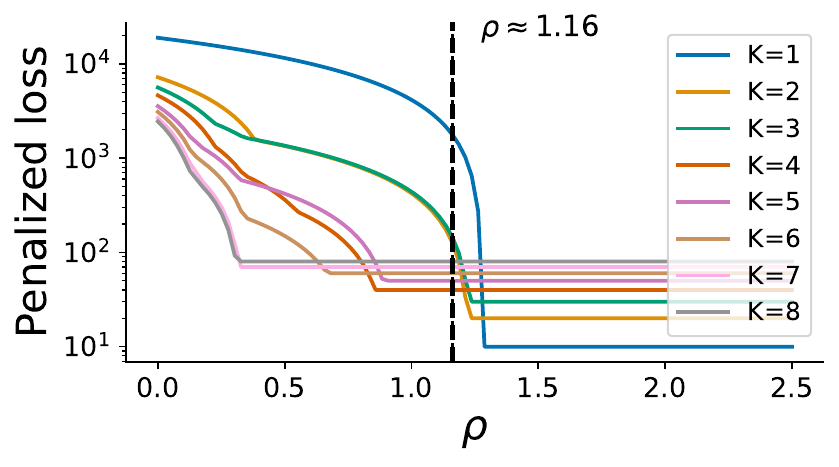}}\\
	\subfloat[Data 9, $\numcomps=5$]{\label{fig:data9-loss}
		\includegraphics[width=.48\textwidth]{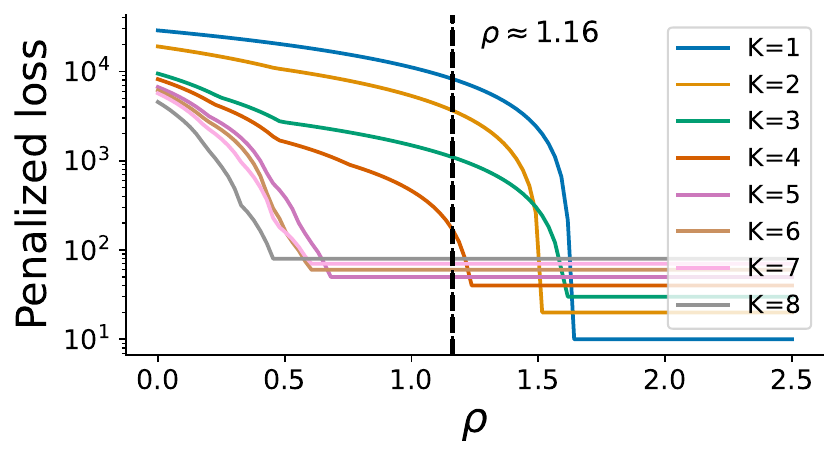}}
	\subfloat[Data 10, $\numcomps=4$]{\label{fig:data10-loss}
		\includegraphics[width=.48\textwidth]{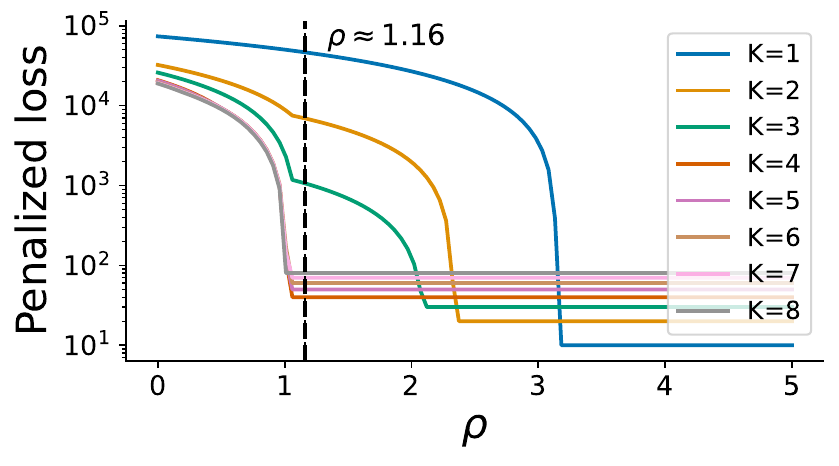}}\\
	\subfloat[Data 11, $\numcomps=3$]{\label{fig:data11-loss}
		\includegraphics[width=.48\textwidth]{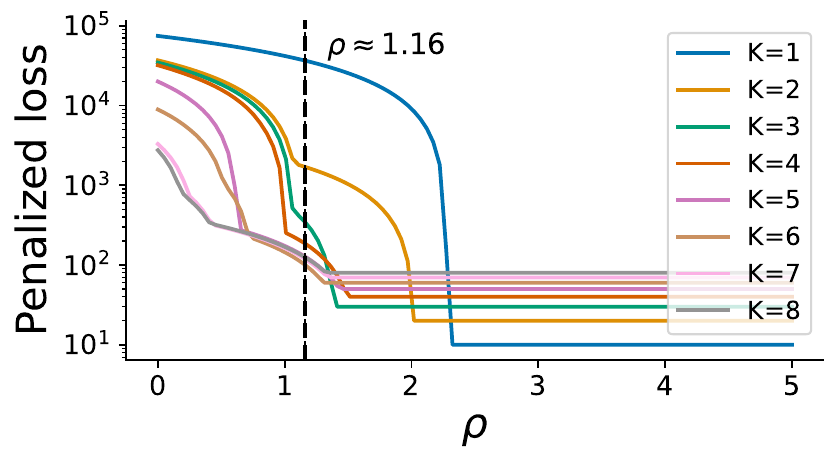}}
	\subfloat[Data 12, $\numcomps=3$]{\label{fig:data12-loss}
		\includegraphics[width=.48\textwidth]{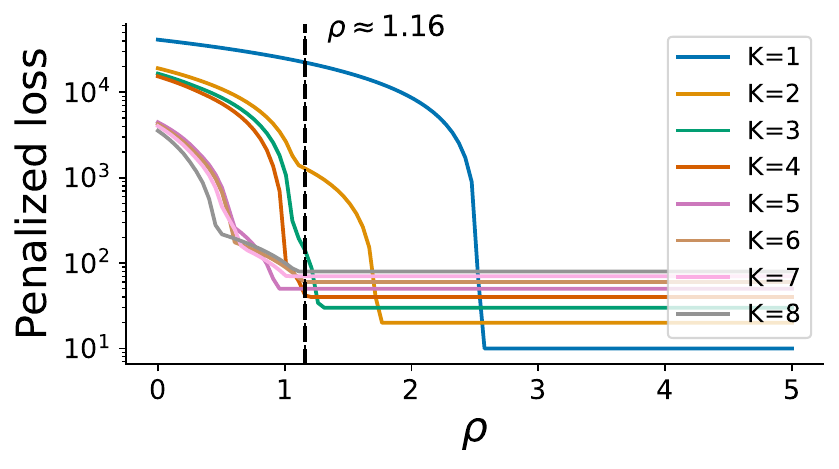}}
	\caption{Calibration including $\numcomps=1,\ldots,8$ for test datasets 7--12 in flow cytometry experiments. }
	\label{fig:GvHD3}
\end{figure}

\begin{figure}[tp]
	\centering
    \subfloat{\includegraphics[width=0.54\textwidth]{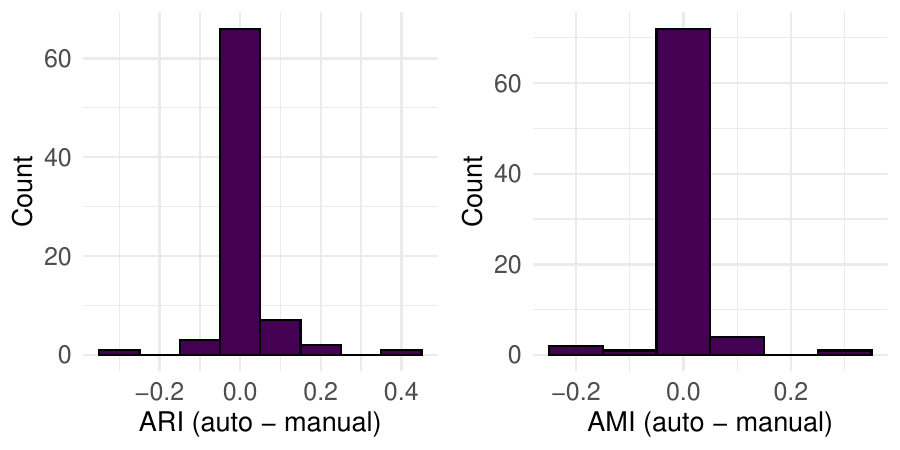}}
    \subfloat{\includegraphics[width=0.36\textwidth]{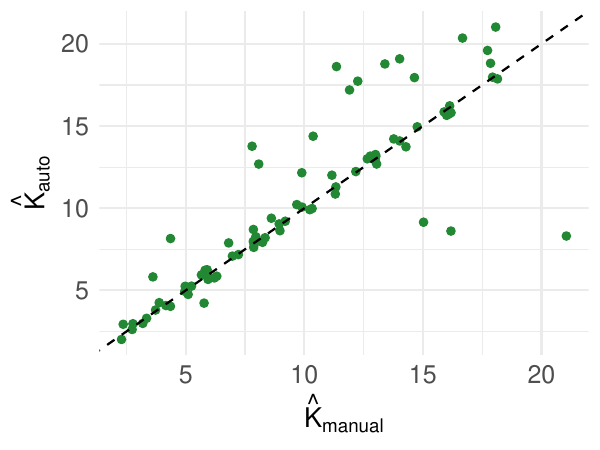}}

	\caption{Comparison between manual and automated selection of $\rho$ for the \textsf{uniform} scRNA-seq datasets. 
        \textbf{Left:} Pairwise difference in ARI. 
		\textbf{Middle:} Pairwise difference in AMI.
		\textbf{Right:}
        Manually vs. automatically estimated number of cell types in uniform datasets (with x-axis jitter).} 
	\label{fig:auto_comp_unif}
\end{figure}

\subsection{Tabula Muris: Data Overview and Processing}
\label{sec:rna-data}

Tabula Muris \citep{mice}
is a comprehensive collection of single-cell transcriptome data.
The gene count tables are derived from SMART-Seq2 RNA-seq libraries and consist
of 53,760 cells from 20 different organs and tissues of 8 mice. 
We subsampled datasets from the Tabula Muris data while controlling for the number of cell types and the number of cell observations in each cell type.
We constructed 80 \textsf{uniform} datasets using 8 experimental settings and 10 replications each, as
described in \cref{tab:subsampled-datasets}.
\begin{table}[tp]
	\centering
	\caption{Summary of uniform cluster data:
		A total number of $E=8$ experiment settings, each with $10$ replications, resulting in experiments with 8 different numbers of cell types: $T = 2\times[E]  =  \{2,4,…,16\}$, where each cell type has $ N_T = 500$ cell observations. Each experiment has total cell observations $t\times N_t=500\times t,\ \forall \, t\in T$.}{
		\begin{tabular}{cccc}
			\hline
			\textbf{Experiment Settings (E)} & \textbf{Cell Types (T)}   & \textbf{Cell Observations}           & \textbf{Replications} \\ \hline
			8                                & $\{2, 4, 6, \ldots, 16\}$ & $500 \times T$ (from 1,000 to 8,000) & 10 per setting        \\ \hline
		\end{tabular}}
	\label{tab:subsampled-datasets}
\end{table}

All datasets were processed according to the following procedure using the Seurat R package before being used for clustering.
Cells with low gene counts ($<200$) and genes expressed in very few cells ($<2$) are excluded.
The gene expression counts are normalized and log-transformed by each cell.
After log transformation, the counts are scaled so that each gene has a mean expression of 0 and a variance of 1 across all cells.
Finally, PCA is performed on a subset of highly variable genes that exhibit significant variation across cells, and the projected data dimension is determined by the Jackstraw method \citep{Chung_jackstraw}.

We evaluate cell clustering performance by examining both the accuracy of cluster assignments and the precision in estimating the number of cell types.
To test whether \methodname effectively prevents overestimation, we examine the deviation of the estimate from the true number of cell types $K_o$. The agreement between the ground truth labels and the estimated labels is quantified using the Adjusted Rand Index (ARI) and Adjusted Mutual Information (AMI).
For both ARI and AMI, a value of 1 indicates a perfect agreement between the compared clusters, and 0 indicates random clustering.
\begin{table}[tp]
	\centering
	\caption{Sinkhorn parameter settings}{
		\begin{tabular}{cccc}
			\hline
			\textbf{Data Dimension} & \textbf{$\gamma, c$} & \textbf{$\varepsilon$} & \textbf{$\rho_1, \rho_2$} \\ \hline
			$\leq 20$               & Uniform              & 1                   & 20                        \\
			21--30                  & Uniform              & 2                   & 10                        \\
			31--60                  & Uniform              & 2                   & 5                         \\ \hline
		\end{tabular}}
	\label{tab:parameter-settings}
\end{table}

\subsection{Additional Figures for Mixture Model Selection Criteria} \label{sec:sup-rnaseq-mmcrit}

\begin{figure}[tp]
	\centering
    \includegraphics[width=0.8\textwidth]{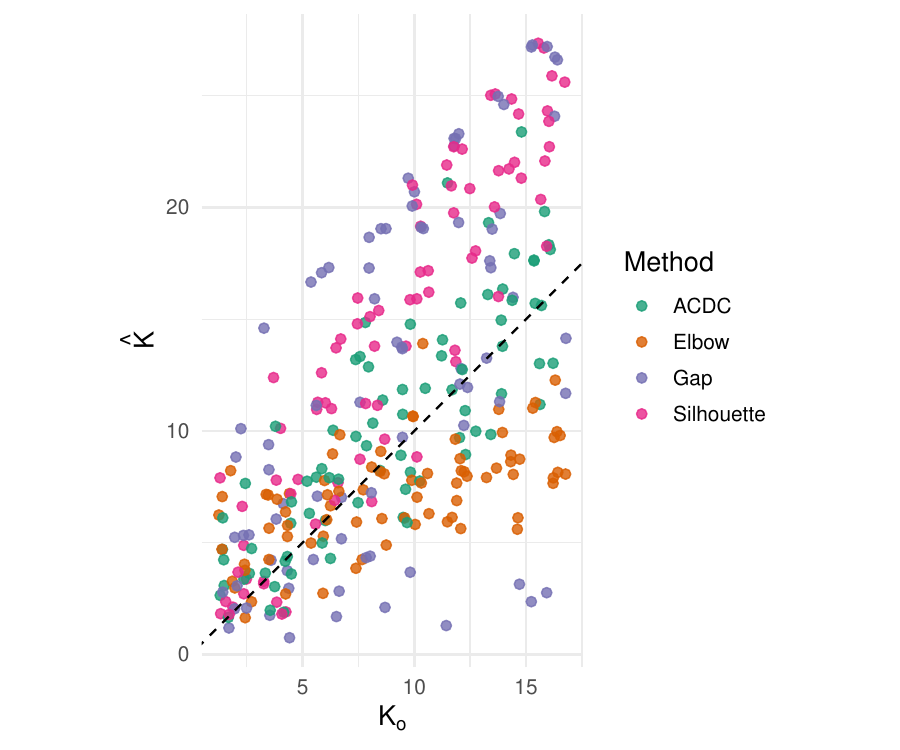}
	\caption{True number of cell types vs.\ estimated number of cell types for ACDC and other common model selection criteria (Elbow, Gap, and Silhouette).} 
	\label{fig:rnaesq_eq_suppK}
\end{figure}
In \cref{fig:rnaesq_eq_suppK}, \methodname produces estimates that lie tightly around the diagonal reference line, with relatively few extreme errors. For small $K_o$, Elbow and Silhouette behave similarly to \methodname.
However, as $K_o$ grows larger, Elbow has a tendency to underestimate, while Silhouette increasingly overestimate the number of cell types. The Gap statistic exhibits the highest variability, with substantial overestimation and underestimation.
Overall, \methodname provides the most stable and accurate recovery of the true number of cell types across the full range of $K_o$.

\begin{figure}[tp]
	\centering
    \includegraphics[width=0.9\textwidth]{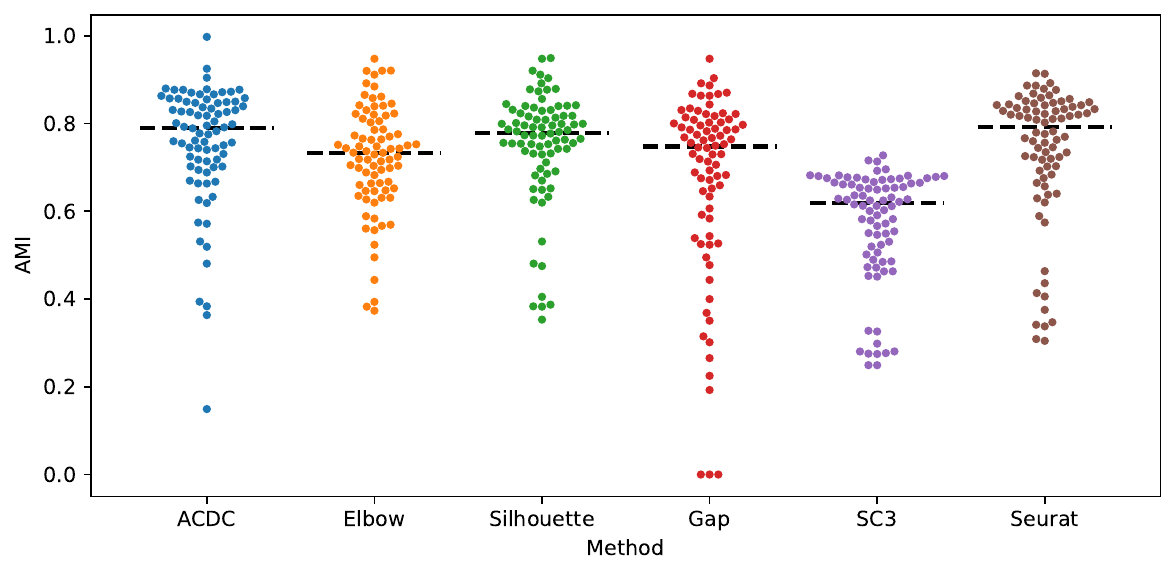}
    \includegraphics[width=0.9\textwidth]{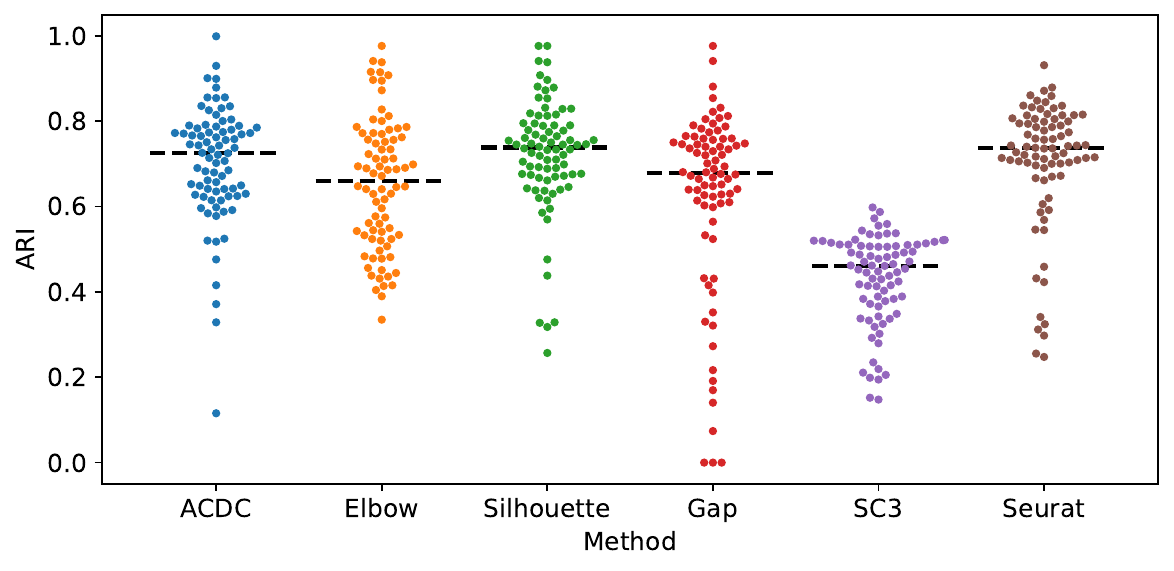}
	\caption{Comparison of ACDC with common model selection criteria (Elbow, Gap, and Silhouette) and specialized tools for scRNA-seq clustering (Seurat and SC3) on AMI and ARI. Dotted black lines indicate the medians.}
    \label{fig:rna-seq-mm-ami-ari}
\end{figure}

\subsection{Description of Existing Tools} \label{sec:existing-tools}

The Seurat R package \citep{seurat} uses a clustering algorithm based on shared nearest neighbor (SNN) modularity optimization.
Seurat first constructs a k-nearest neighbor (KNN) graph based on the Euclidean distance in the PCA space. A SNN graph is then constructed where edges are determined by the shared nearest neighbors among cells in the KNN graph.
The weights are assigned to the edges so that the edges connecting cells sharing close nearest neighbors are weighted higher compared to those joining cells sharing distant nearest neighbors.
Finally, the SNN graph is partitioned into clusters using the Louvain algorithm, which optimizes the modularity of the clustering solution.

The SC3 \citep[Single-Cell Consensus Clustering;][]{sc3} R package employs a robust consensus clustering approach that integrates PCA, K-means, and hierarchical clustering.
Distance matrices are computed using Euclidean, Pearson, and Spearman metrics and transformed using PCA. The transformed distance matrices are used for K-means clustering, and multiple clustering solutions are generated based on different numbers of eigenvectors of the matrices. A cell-to-cell binary similarity matrix is constructed for each clustering result with each entry indicating whether two cells belong to the same cluster. These similarity matrices are averaged to form a consensus matrix that is then clustered using agglomerative hierarchical clustering where the clusters are identified at a user-specified level of hierarchy. For our experiments, we use the cluster number estimation function provided by the package to determine $K$.

\section{Probabilistic Matrix Factorization Experimental Details and Further Results}

\subsection{Conditional Sampling for PMF Models}

\paragraph{Poisson NMF.}
Recall the standard Poisson NMF model:
\[
	{y}^{(K)}_{nk}& \distas \distPoiss\lrp{{\phi}^{(K)}_{k}z^{(K)}_{nk}} \quad&&\text{for}\ n=1,\dots,N,k=1,\dots,K\\
	{x}_{n}&=\sum^{K}_{k=1}{y}^{(K)}_{nk}\quad&&\text{for}\ n=1,\dots,N,
\]
Applying Bayes' rule, we can sample ${\veps}_{nk}\mid{x}_{n}$ for any given $n,k$ using the following procedure, with
each dimension $d$ sampled independently:
\[
	{y}^{(K)}_{n,1:K,[d]}\mid{x}_{n,[d]}&\sim \distMulti \lrp{{x}_{n,[d]};\widehat{p}_{n,1:K,d}}, \\
	{\veps}_{n,k,[d]}\mid{y}^{(K)}_{n,k,[d]}&\sim\Unif\left(\mathcal{F}_{\distPoiss}\lrp{y^{(K)}_{n,k,[d]}-1;p_{n,k,d}},\right.
	\left.\mathcal{F}_{\distPoiss}\lrp{{y}^{(K)}_{n,k,[d]};p_{n,k,d}}\right),
	\label{eq:Pois_eps_sampling}
\]
where
\[
	p_{n,k,d} &= {\phi}^{(K)}_{k,[d]}z^{(K)}_{nk}, &
	\widehat{p}_{n,k,d} &= \frac{p_{n,k,d}}{\sum^{K}_{k'=1}p_{n,k',d}},
\]
and $\mathcal{F}_{\distPoiss}(\blank;\lambda)$ is the cdf of $\distPoiss(\lambda)$.

\paragraph{Gaussian Factor Analysis.}
Recall the usual Gaussian factor analysis model: 
\[
	\begin{aligned}
		{\Sigma}^{(K)}_{k} & =\lrb{\sigma^{2}_{k, 1},\dots,\sigma^{2}_{k, D}}\transpose&& \text{for}\ k=1,\dots,K, \\
		{y}^{(K)}_{nk}          & \mop{\sim}^{\mathrm{e.w}} \distNorm\lrp{{\phi}^{(K)}_{k}z^{(K)}_{nk},{\Sigma}^{(K)}_{k}}
		\qquad\quad&& \text{for}\ n=1,\dots,N,k=1,\dots,K,                        \\
		{x}_{n}           &=\sum^{K}_{k=1}{y}_{nk}&& \text{for}\ n=1,\dots,N.
	\end{aligned}
\]
Again, applying Bayes' rule, we can sample ${\veps}_{nk}\mid{x}_{n}$ for any given $n,k$ using the following formulation, 
with each dimension $d$ sampled independently:
\[
	&{y}^{(K)}_{n,k,[d]}\mid{x}_{n,[d]},y^{(K)}_{n,1:k-1,[d]},\sigma_{1:K,d}\sim
	\left\{\begin{array}{lr}
		\distNorm(\widetilde{\mu}_{n,d,k},\widetilde{\sigma}^{2}_{k,d}) & \text{if}\ k\neq K, \\
		\delta\lrp{\overline{x}_{n,k,d}}                              & \text{if}\ k=K,
	\end{array}\right. \\
  &{\veps}_{n,k,[d]} = \mathcal{F}_{\distNorm}
	\lrp{{y}^{(K)}_{n,k,[d]};\mu_{n,k,d},\sigma^{2}_{k,d}},
\]
where
\[
	&\overline{x}_{n,k,d}={x}_{n,[d]} - \sum^{k-1}_{k'=1} {y}^{(K)}_{n,k',[d]},
	&\mu_{n,k,d}={\phi}^{(K)}_{k,[d]}z^{(K)}_{nk},\\
	&\overline{\mu}_{n,k,d}=\sum^{K}_{k'=k+1}\mu_{n,k',d}, %
	&\overline{\sigma}^{2}_{k,d}=\sum^{K}_{k'=k+1}\sigma^{2}_{k',d},\\
	&\widetilde{\mu}_{n,k,d}=\frac{\sigma^{-2}_{k,d}\mu_{n,k,d}-
		\overline{\sigma}_{k,d}^{-2}(\overline{\mu}_{n,k,d}-\overline{x}_{n,k,d})}
	{\sigma^{-2}_{k,d}+\overline{\sigma}^{-2}_{k,d}},
	&\widetilde{\sigma}^{2}_{k,d}=\frac{\sigma^{2}_{k,d}\overline{\sigma}^{2}_{k,d}}
	{\sigma^{2}_{k,d}+\overline{\sigma}^{2}_{k,d}},
\]
and $\mathcal{F}_{\distNorm}(\blank;\mu,\sigma^{2})$ is the cdf of $\distNorm(\mu,\sigma^{2})$. 

\subsection{Mutational Signature Discovery} \label{sec:mutsigs-details}

We use simulated breast cancer data
based on the COSMIC v2 catalog and the pan-cancer analysis of whole genomes (PCAWG),
following the procedure of \citet{Xue:2024}.
First, we applied nonnegative least squares regression to the count matrix and COSMIC signatures, resulting in best fit exposure vectors.
We then selected the signatures with significant loadings contributions and used them as the ground-truth signatures ${\phi}_{o}$, with
the inferred per-sample loadings serving as the ground truth exposures,
$z_{o 1}, \dots, z_{o N}$.
Finally, we generated four synthetic datasets: one well specified, and three others each with a different form of model misspecification.
The forms of misspecification we use are from \citet{Xue:2024}:
\begin{itemize}
	\item \textbf{Perturbation:} for each ${x}_{n}$, the signatures ${\phi}_{o}$ are stochastically perturbed
	      before being used to simulate the observed counts.
	\item \textbf{Contamination:} for each ${x}_{n}$, in addition to the ground truth signatures ${\phi}_{o}$,
	      a randomly generated signature with small exposure is included in the sampling process.
	\item \textbf{Overdispersion:} the data is sampled from a negative binomial distribution instead of a Poisson distribution.
\end{itemize}
For each value of $K$, we compute the MLE of the signature and loadings parameters using the multiplicative update algorithm %
\citep{Lee-Seung_multdiv_2000}.

\subsection{Additional Figures}
\cref{fig:mutsig_result_appendix_1} depicts the comparison of model selection quality of ACDC against BIC and PA. 
As mentioned in \cref{sec:mutsigs}, PA consistently underestimates, BIC overestimates, while ACDC is giving reasonable suggestions. \cref{fig:hyprunmix_gt} visualizes the urban dataset, along with its ground truth labeling, and inference labeling for $K=3,\dots,6$. The discussion of this figure can be found in \cref{sec:hyperspectral}.

\begin{figure}[]
  \centering
  \subfloat[Well specified]{\includegraphics[width=\textwidth]{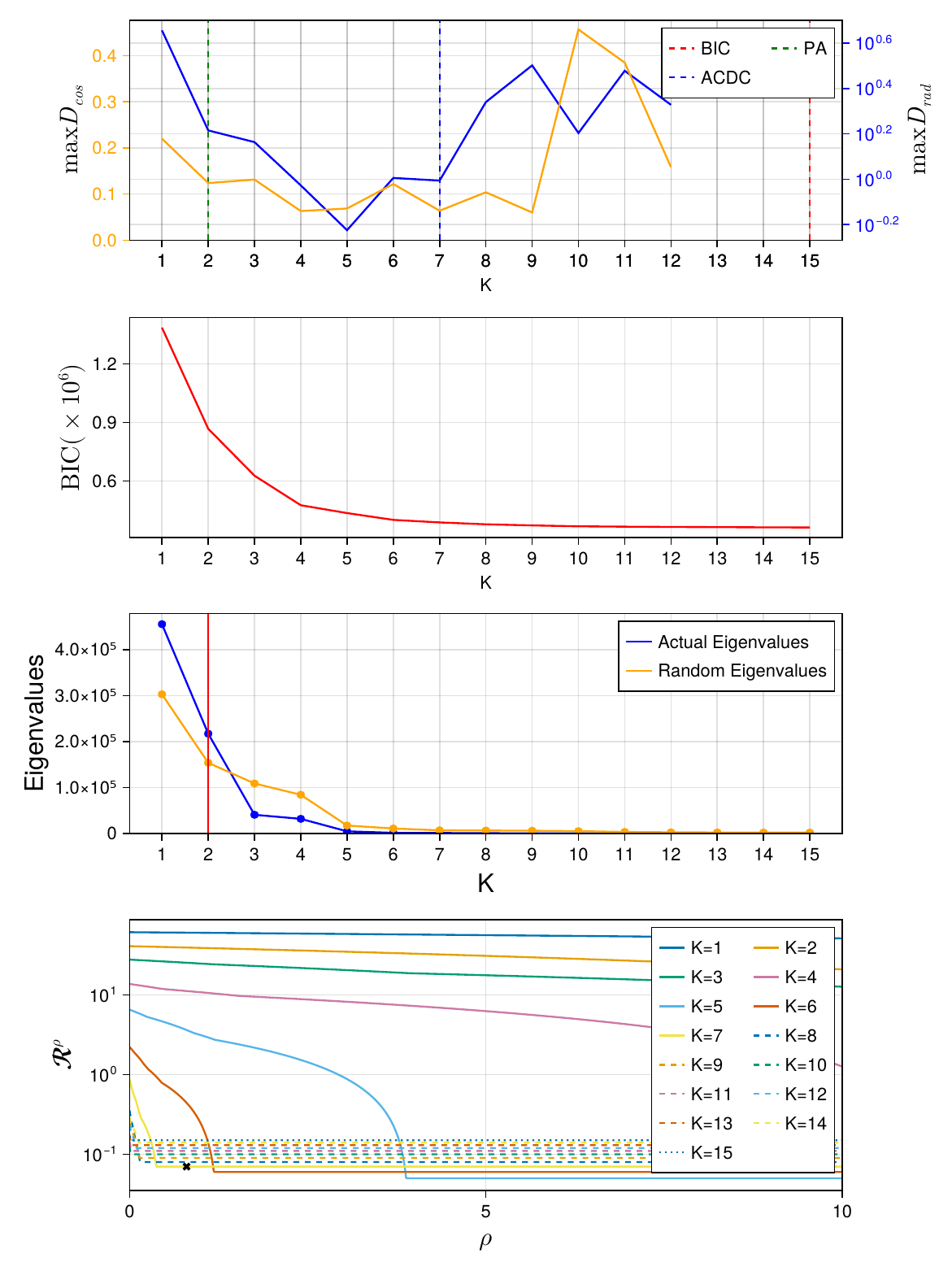}}\\
  \caption{Estimation quality for various scheme of data generation. See \cref{fig:mutsig_result} caption for explanation.}
  \label{fig:mutsig_result_appendix_1}
\end{figure}

\begin{figure}[]\ContinuedFloat
  \centering
  \subfloat[Contaminated]{\includegraphics[width=\textwidth]{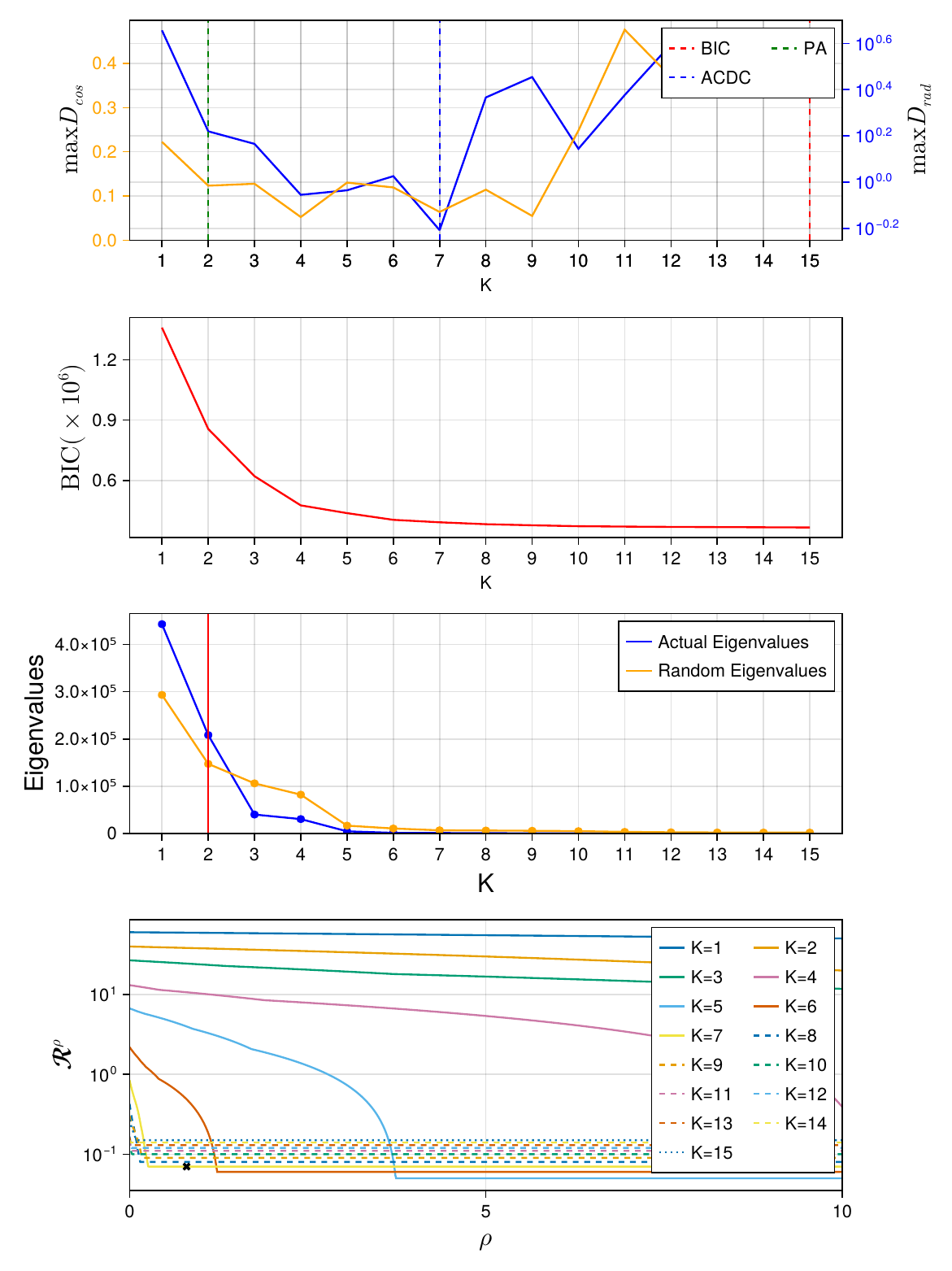}}\\
  \caption{Estimation quality for various scheme of data generation. See \cref{fig:mutsig_result} caption for explanation.}
  \label{fig:mutsig_result_appendix_2}
\end{figure}

\begin{figure}[]\ContinuedFloat
  \centering
  \subfloat[Overdispersed]{\includegraphics[width=\textwidth]{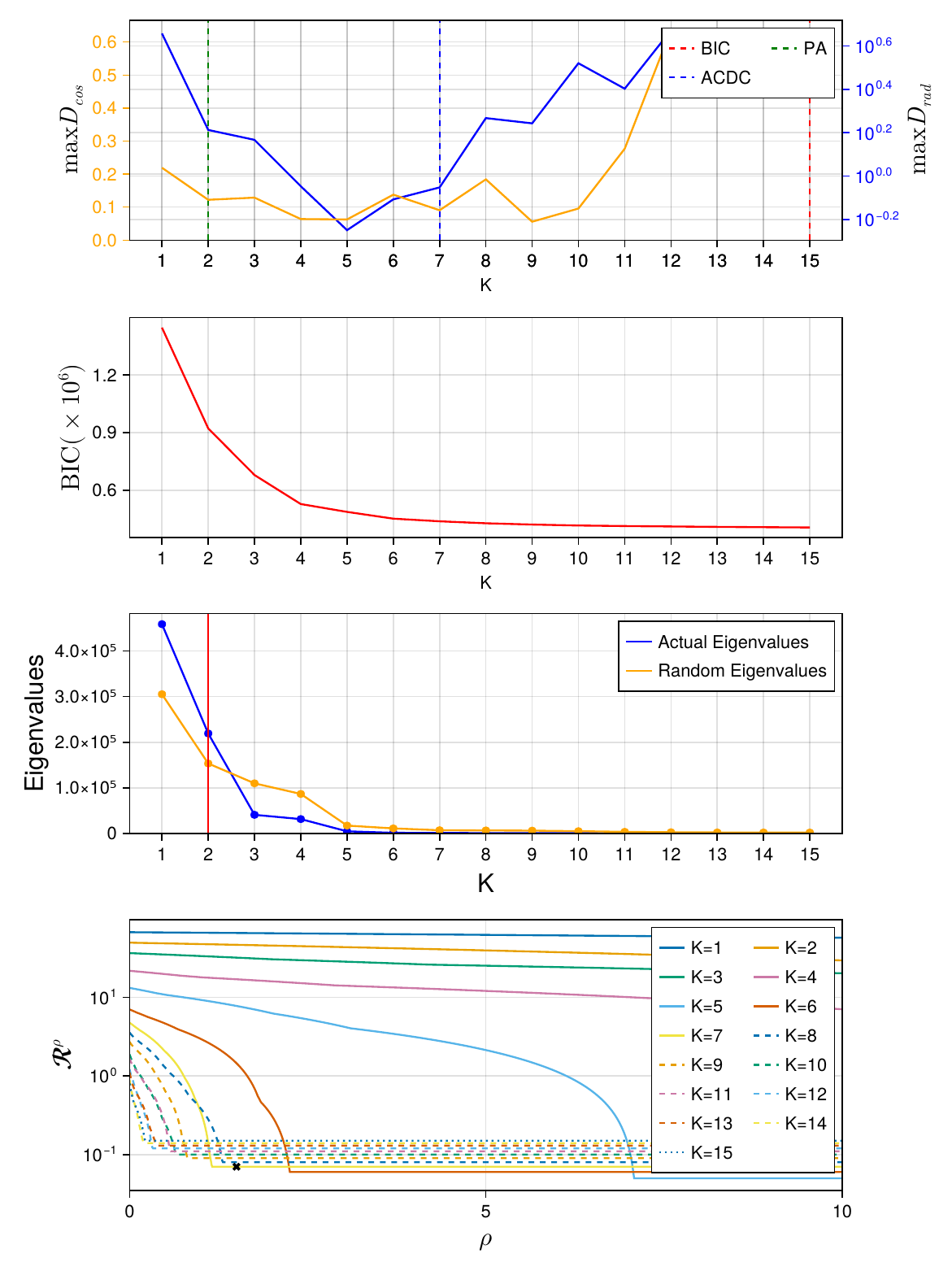}}
  \caption{Estimation quality for various scheme of data generation. See \cref{fig:mutsig_result} caption for explanation.}
  \label{fig:mutsig_result_appendix_3}
\end{figure}

\begin{figure}[]\ContinuedFloat
  \centering
  \subfloat[Perturbed]{\includegraphics[width=\textwidth]{figures/composite-multdiv-600-breast-custom-seed-1-perturbed-0.0025.pdf}}
  \caption{Estimation quality for various scheme of data generation. See \cref{fig:mutsig_result} caption for explanation.}
  \label{fig:mutsig_result_appendix_4}
\end{figure}

\begin{figure}[h!]
	\centering
  \subfloat[Unlabeled (left) and labeled (right) versions of the urban dataset]{\includegraphics[width=0.9\textwidth]{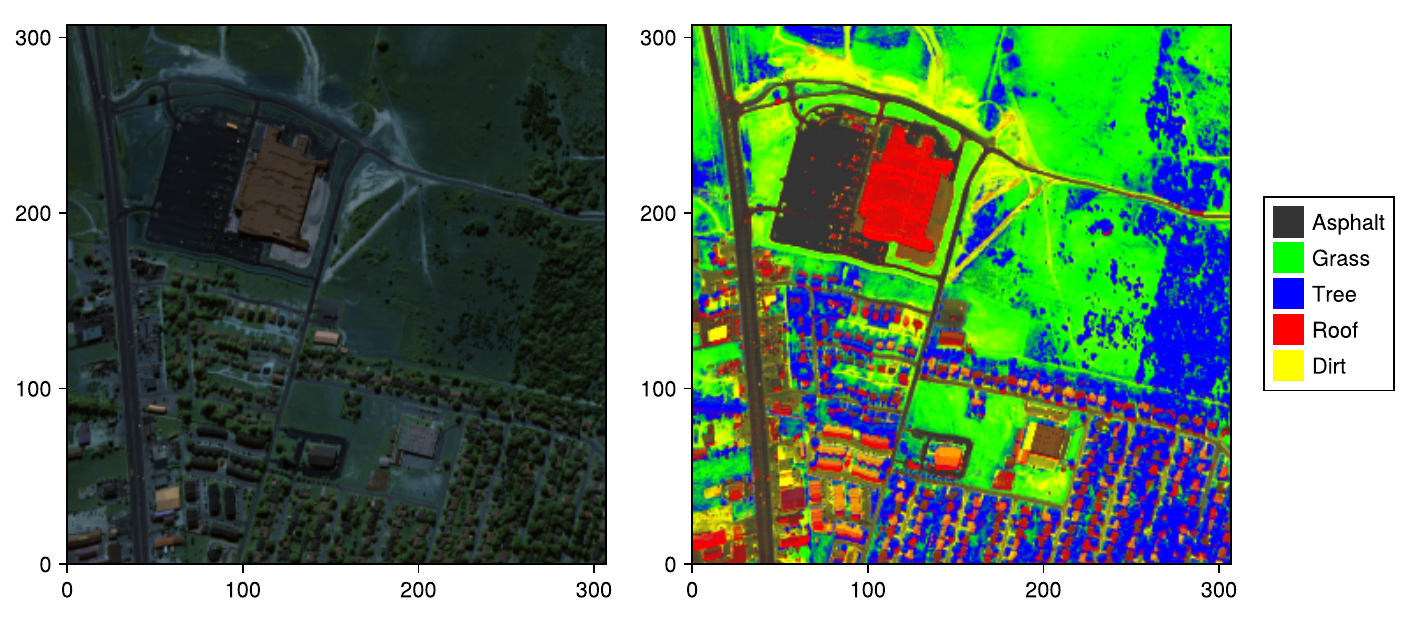}}\\
	\subfloat[$K=3$]{\includegraphics[width=0.45\textwidth]{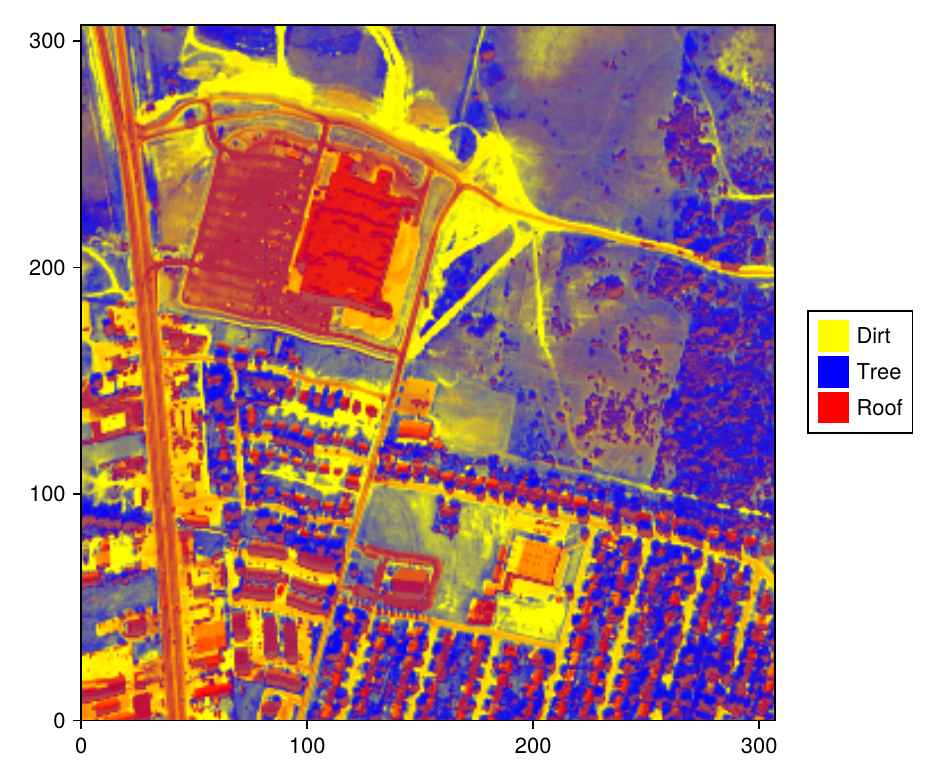}}
	\subfloat[$K=4$]{\includegraphics[width=0.45\textwidth]{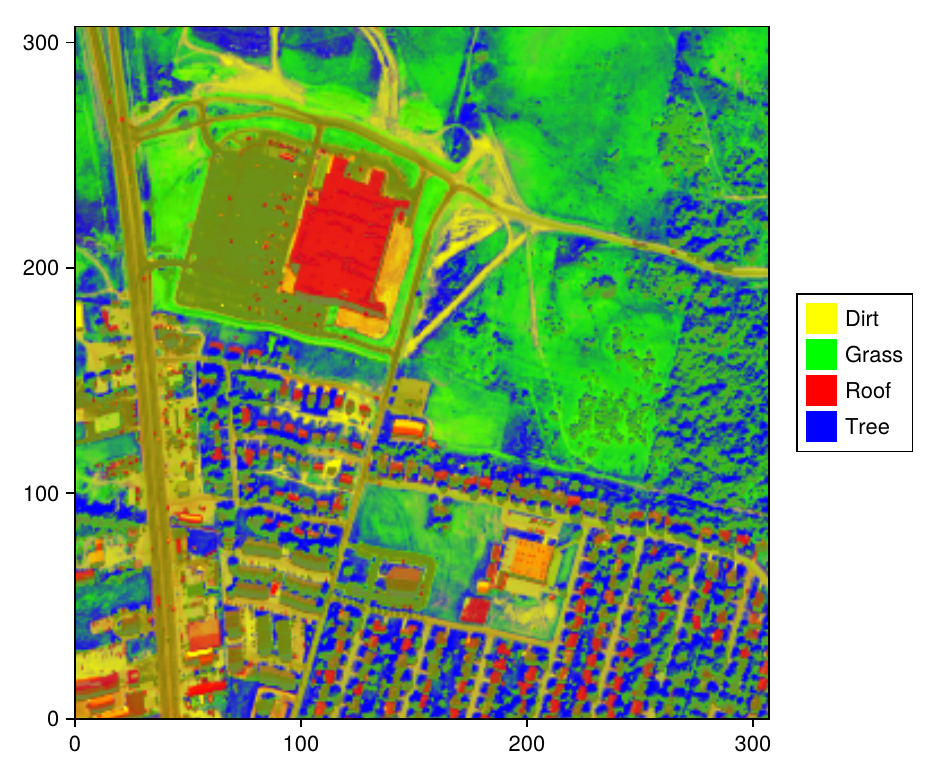}}\\
	\subfloat[$K=5$]{\includegraphics[width=0.45\textwidth]{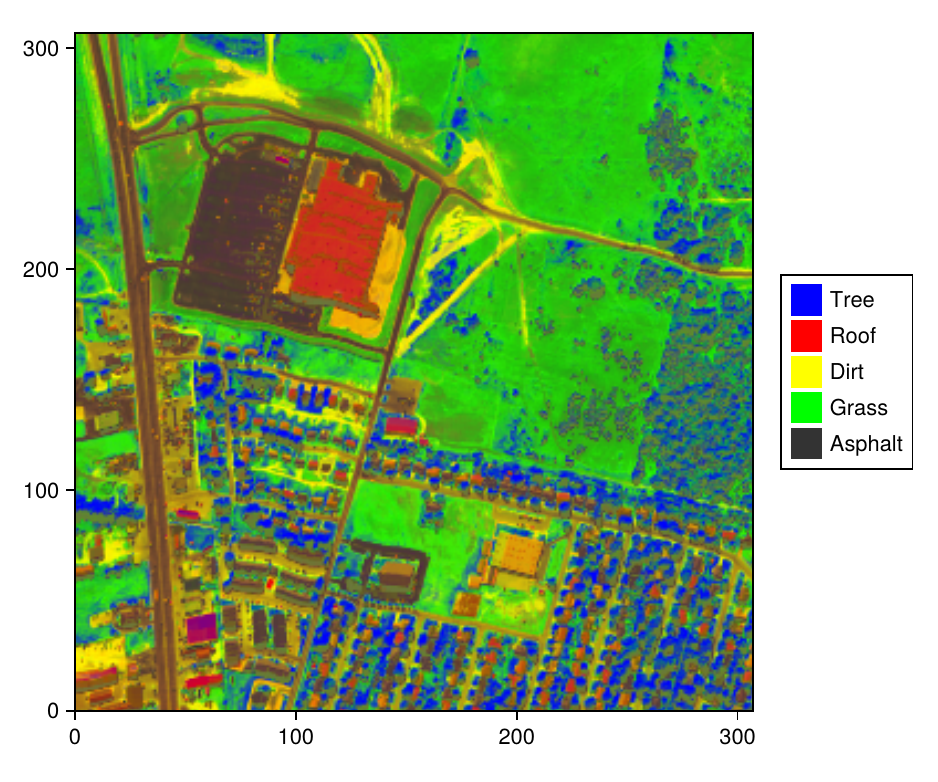}}
	\subfloat[$K=6$]{\includegraphics[width=0.45\textwidth]{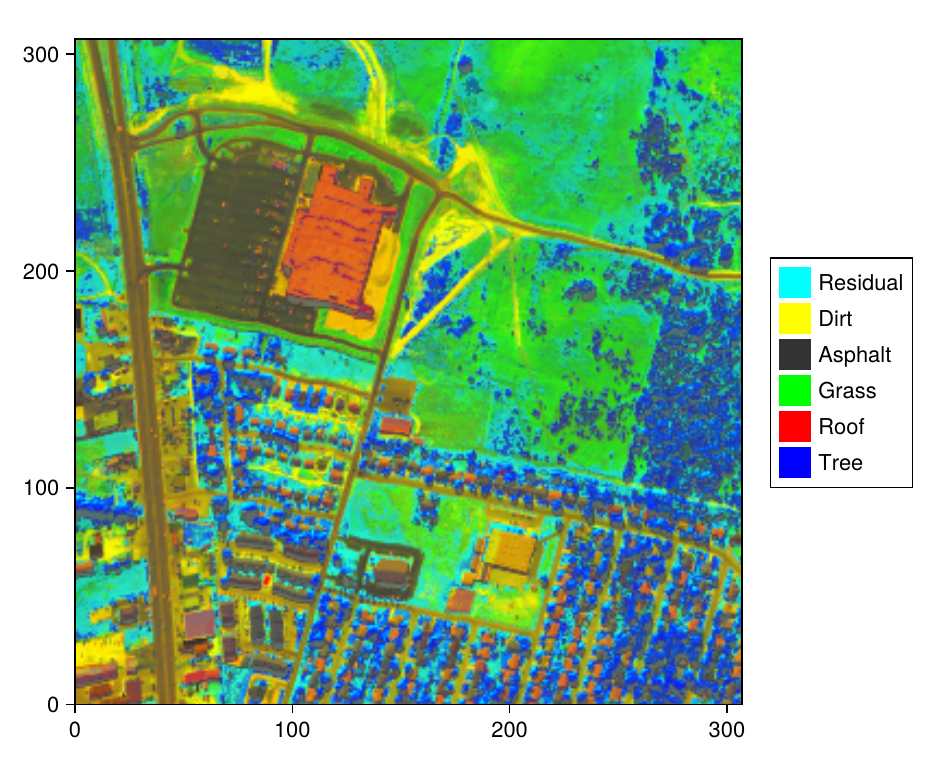}}
  \caption{Visualization of the hyperspectral urban datasets.
  \textbf{(a)} Data and ground truth labels.
  \textbf{(b, c, d, e)} Inferred end-member abundance maps for $K=3,4,5,6$.}
	\label{fig:hyprunmix_gt}
\end{figure}

%% file: _main.bbl
\begin{thebibliography}{93}

\bibitem[\protect\citeauthoryear{Akaike}{1974}]{Akaike:1974}
\begin{barticle}[author]
\bauthor{\bsnm{Akaike},~\bfnm{H.}\binits{H.}}
(\byear{1974}).
\btitle{A new look at the statistical model identification}.
\bjournal{IEEE Transactions on Automatic Control}
\bvolume{19}
\bpages{716-723}.
\bdoi{10.1109/TAC.1974.1100705}
\end{barticle}
\endbibitem

\bibitem[\protect\citeauthoryear{Alexandrov
  et~al.}{2013}]{Alexandrov_mut-sig-NMF_2013}
\begin{barticle}[author]
\bauthor{\bsnm{Alexandrov},~\bfnm{Ludmil~B.}\binits{L.~B.}},
  \bauthor{\bsnm{{Nik-Zainal}},~\bfnm{Serena}\binits{S.}},
  \bauthor{\bsnm{Wedge},~\bfnm{David~C.}\binits{D.~C.}},
  \bauthor{\bsnm{Campbell},~\bfnm{Peter~J.}\binits{P.~J.}} \AND
  \bauthor{\bsnm{Stratton},~\bfnm{Michael~R.}\binits{M.~R.}}
(\byear{2013}).
\btitle{Deciphering {{Signatures}} of {{Mutational Processes Operative}} in
  {{Human Cancer}}}.
\bjournal{Cell Reports}
\bvolume{3}
\bpages{246}.
\bdoi{10.1016/j.celrep.2012.12.008}
\end{barticle}
\endbibitem

\bibitem[\protect\citeauthoryear{Alexandrov et~al.}{2020}]{pcawg2020}
\begin{barticle}[author]
\bauthor{\bsnm{Alexandrov},~\bfnm{Ludmil~B.}\binits{L.~B.}},
  \bauthor{\bsnm{Kim},~\bfnm{Jaegil}\binits{J.}},
  \bauthor{\bsnm{Haradhvala},~\bfnm{Nicholas~J.}\binits{N.~J.}},
  \bauthor{\bsnm{Huang},~\bfnm{Mi~Ni}\binits{M.~N.}},
  \bauthor{\bsnm{Tian~Ng},~\bfnm{Alvin~Wei}\binits{A.~W.}},
  \bauthor{\bsnm{Wu},~\bfnm{Yang}\binits{Y.}},
  \bauthor{\bsnm{Boot},~\bfnm{Arnoud}\binits{A.}},
  \bauthor{\bsnm{Covington},~\bfnm{Kyle~R.}\binits{K.~R.}},
  \bauthor{\bsnm{Gordenin},~\bfnm{Dmitry~A.}\binits{D.~A.}},
  \bauthor{\bsnm{Bergstrom},~\bfnm{Erik~N.}\binits{E.~N.}},
  \bauthor{\bsnm{Islam},~\bfnm{S.~M.~Ashiqul}\binits{S.~M.~A.}},
  \bauthor{\bsnm{Lopez-Bigas},~\bfnm{Nuria}\binits{N.}},
  \bauthor{\bsnm{Klimczak},~\bfnm{Leszek~J.}\binits{L.~J.}},
  \bauthor{\bsnm{McPherson},~\bfnm{John~R.}\binits{J.~R.}},
  \bauthor{\bsnm{Morganella},~\bfnm{Sandro}\binits{S.}},
  \bauthor{\bsnm{Sabarinathan},~\bfnm{Radhakrishnan}\binits{R.}},
  \bauthor{\bsnm{Wheeler},~\bfnm{David~A.}\binits{D.~A.}},
  \bauthor{\bsnm{Mustonen},~\bfnm{Ville}\binits{V.}}, \bauthor{\bsnm{{PCAWG
  Mutational Signatures Working Group}}},
  \bauthor{\bsnm{Getz},~\bfnm{Gad}\binits{G.}},
  \bauthor{\bsnm{Rozen},~\bfnm{Steven~G.}\binits{S.~G.}},
  \bauthor{\bsnm{Stratton},~\bfnm{Michael~R.}\binits{M.~R.}} \AND
  \bauthor{\bsnm{{PCAWG Consortium}}}
(\byear{2020}).
\btitle{The repertoire of mutational signatures in human cancer.}
\bjournal{Nature}
\bvolume{578}
\bpages{94--101}.
\bdoi{10.1038/s41586-020-1943-3}
\end{barticle}
\endbibitem

\bibitem[\protect\citeauthoryear{Anandkumar et~al.}{2014}]{Anandkumar2014uc}
\begin{barticle}[author]
\bauthor{\bsnm{Anandkumar},~\bfnm{A.}\binits{A.}},
  \bauthor{\bsnm{Ge},~\bfnm{R}\binits{R.}},
  \bauthor{\bsnm{Hsu},~\bfnm{Daniel}\binits{D.}} \AND
  \bauthor{\bsnm{Kakade},~\bfnm{S.~M.}\binits{S.~M.}}
(\byear{2014}).
\btitle{Tensor Decompositions for Learning Latent Variable Models}.
\bvolume{15}
\bpages{2773 -- 2883}.
\end{barticle}
\endbibitem

\bibitem[\protect\citeauthoryear{Anderson}{1963}]{Anderson_AsymptoticPCA_1963}
\begin{barticle}[author]
\bauthor{\bsnm{Anderson},~\bfnm{T.~W.}\binits{T.~W.}}
(\byear{1963}).
\btitle{Asymptotic {{Theory}} for {{Principal Component Analysis}}}.
\bjournal{The Annals of Mathematical Statistics}
\bvolume{34}
\bpages{122--148}.
\end{barticle}
\endbibitem

\bibitem[\protect\citeauthoryear{Anderson and
  Amemiya}{1988}]{Anderson_AsymptoticFA_1988}
\begin{barticle}[author]
\bauthor{\bsnm{Anderson},~\bfnm{T.~W.}\binits{T.~W.}} \AND
  \bauthor{\bsnm{Amemiya},~\bfnm{Yasuo}\binits{Y.}}
(\byear{1988}).
\btitle{The {{Asymptotic Normal Distribution}} of {{Estimators}} in {{Factor
  Analysis}} under {{General Conditions}}}.
\bjournal{The Annals of Statistics}
\bvolume{16}
\bpages{759--771}.
\end{barticle}
\endbibitem

\bibitem[\protect\citeauthoryear{Bai and
  Ng}{2002}]{Bai_DeterminingNumberFactors_2002}
\begin{barticle}[author]
\bauthor{\bsnm{Bai},~\bfnm{Jushan}\binits{J.}} \AND
  \bauthor{\bsnm{Ng},~\bfnm{Serena}\binits{S.}}
(\byear{2002}).
\btitle{Determining the {{Number}} of {{Factors}} in {{Approximate Factor
  Models}}}.
\bjournal{Econometrica}
\bvolume{70}
\bpages{191--221}.
\bdoi{10.1111/1468-0262.00273}
\end{barticle}
\endbibitem

\bibitem[\protect\citeauthoryear{Balakrishnan, Wainwright and
  Yu}{2017}]{Balakrishnan:2017}
\begin{barticle}[author]
\bauthor{\bsnm{Balakrishnan},~\bfnm{Sivaraman}\binits{S.}},
  \bauthor{\bsnm{Wainwright},~\bfnm{Martin~J.}\binits{M.~J.}} \AND
  \bauthor{\bsnm{Yu},~\bfnm{Bin}\binits{B.}}
(\byear{2017}).
\btitle{{Statistical guarantees for the EM algorithm: From population to
  sample-based analysis}}.
\bjournal{The Annals of Statistics}
\bvolume{45}
\bpages{77 -- 120}.
\bdoi{10.1214/16-AOS1435}
\end{barticle}
\endbibitem

\bibitem[\protect\citeauthoryear{Bauer}{2007}]{Bauer:2007}
\begin{barticle}[author]
\bauthor{\bsnm{Bauer},~\bfnm{Daniel~J.}\binits{D.~J.}}
(\byear{2007}).
\btitle{Observations on the Use of Growth Mixture Models in Psychological
  Research}.
\bjournal{Multivariate Behavioral Research}
\bvolume{42}
\bpages{757-786}.
\bdoi{10.1080/00273170701710338}
\end{barticle}
\endbibitem

\bibitem[\protect\citeauthoryear{Bharti et~al.}{2023}]{Krause:2023}
\begin{binproceedings}[author]
\bauthor{\bsnm{Bharti},~\bfnm{Ayush}\binits{A.}},
  \bauthor{\bsnm{Naslidnyk},~\bfnm{Masha}\binits{M.}},
  \bauthor{\bsnm{Key},~\bfnm{Oscar}\binits{O.}},
  \bauthor{\bsnm{Kaski},~\bfnm{Samuel}\binits{S.}} \AND
  \bauthor{\bsnm{Briol},~\bfnm{Francois-Xavier}\binits{F.-X.}}
(\byear{2023}).
\btitle{Optimally-weighted Estimators of the Maximum Mean Discrepancy for
  Likelihood-Free Inference}.
In \bbooktitle{Proceedings of the 40th International Conference on Machine
  Learning}
(\beditor{\bfnm{Andreas}\binits{A.}~\bsnm{Krause}},
  \beditor{\bfnm{Emma}\binits{E.}~\bsnm{Brunskill}},
  \beditor{\bfnm{Kyunghyun}\binits{K.}~\bsnm{Cho}},
  \beditor{\bfnm{Barbara}\binits{B.}~\bsnm{Engelhardt}},
  \beditor{\bfnm{Sivan}\binits{S.}~\bsnm{Sabato}} \AND
  \beditor{\bfnm{Jonathan}\binits{J.}~\bsnm{Scarlett}}, eds.).
\bseries{Proceedings of Machine Learning Research}
\bvolume{202}
\bpages{2289--2312}.
\bpublisher{PMLR}.
\end{binproceedings}
\endbibitem

\bibitem[\protect\citeauthoryear{Blei and Lafferty}{2007}]{Blei:2007}
\begin{barticle}[author]
\bauthor{\bsnm{Blei},~\bfnm{D.~M.}\binits{D.~M.}} \AND
  \bauthor{\bsnm{Lafferty},~\bfnm{J~D}\binits{J.~D.}}
(\byear{2007}).
\btitle{{A correlated topic model of Science}}.
\bjournal{The Annals of Applied Statistics}
\bvolume{1}
\bpages{17 -- 35}.
\bdoi{10.1214/07-aoas114}
\end{barticle}
\endbibitem

\bibitem[\protect\citeauthoryear{Brinkman et~al.}{2007}]{Brinkman:2007}
\begin{barticle}[author]
\bauthor{\bsnm{Brinkman},~\bfnm{Ryan~Remy}\binits{R.~R.}},
  \bauthor{\bsnm{Gasparetto},~\bfnm{Maura}\binits{M.}},
  \bauthor{\bsnm{Lee},~\bfnm{Shang-Jung~Jessica}\binits{S.-J.~J.}},
  \bauthor{\bsnm{Ribickas},~\bfnm{Albert~J.}\binits{A.~J.}},
  \bauthor{\bsnm{Perkins},~\bfnm{Janelle}\binits{J.}},
  \bauthor{\bsnm{Janssen},~\bfnm{William}\binits{W.}},
  \bauthor{\bsnm{Smiley},~\bfnm{Renee}\binits{R.}} \AND
  \bauthor{\bsnm{Smith},~\bfnm{Clay}\binits{C.}}
(\byear{2007}).
\btitle{{High-Content Flow Cytometry and Temporal Data Analysis for Defining a
  Cellular Signature of Graft-Versus-Host Disease}}.
\bjournal{Biology of Blood and Marrow Transplantation}
\bvolume{13}
\bpages{691--700}.
\end{barticle}
\endbibitem

\bibitem[\protect\citeauthoryear{Brook and
  Dor}{2016}]{Brook_Dust_over_Green_Canopy_2016}
\begin{barticle}[author]
\bauthor{\bsnm{Brook},~\bfnm{Anna}\binits{A.}} \AND
  \bauthor{\bsnm{Dor},~\bfnm{Eyal~Ben}\binits{E.~B.}}
(\byear{2016}).
\btitle{Quantitative {{Detection}} of {{Settled Dust Over Green Canopy Using
  Sparse Unmixing}} of {{Airborne Hyperspectral Data}}}.
\bjournal{IEEE Journal of Selected Topics in Applied Earth Observations and
  Remote Sensing}
\bvolume{9}
\bpages{884--897}.
\bdoi{10.1109/JSTARS.2015.2489207}
\end{barticle}
\endbibitem

\bibitem[\protect\citeauthoryear{Brunet et~al.}{2004}]{brunet_CCC_2004a}
\begin{barticle}[author]
\bauthor{\bsnm{Brunet},~\bfnm{Jean-Philippe}\binits{J.-P.}},
  \bauthor{\bsnm{Tamayo},~\bfnm{Pablo}\binits{P.}},
  \bauthor{\bsnm{Golub},~\bfnm{Todd~R.}\binits{T.~R.}} \AND
  \bauthor{\bsnm{Mesirov},~\bfnm{Jill~P.}\binits{J.~P.}}
(\byear{2004}).
\btitle{Metagenes and Molecular Pattern Discovery Using Matrix Factorization}.
\bjournal{Proceedings of the National Academy of Sciences}
\bvolume{101}
\bpages{4164--4169}.
\bdoi{10.1073/pnas.0308531101}
\end{barticle}
\endbibitem

\bibitem[\protect\citeauthoryear{Buettner
  et~al.}{2017}]{Buettner_FscLVM_ScalableVersatile_FA_2017}
\begin{barticle}[author]
\bauthor{\bsnm{Buettner},~\bfnm{Florian}\binits{F.}},
  \bauthor{\bsnm{Pratanwanich},~\bfnm{Naruemon}\binits{N.}},
  \bauthor{\bsnm{McCarthy},~\bfnm{Davis~J.}\binits{D.~J.}},
  \bauthor{\bsnm{Marioni},~\bfnm{John~C.}\binits{J.~C.}} \AND
  \bauthor{\bsnm{Stegle},~\bfnm{Oliver}\binits{O.}}
(\byear{2017}).
\btitle{F-{{scLVM}}: Scalable and Versatile Factor Analysis for Single-Cell
  {{RNA-seq}}}.
\bjournal{Genome Biology}
\bvolume{18}
\bpages{212}.
\bdoi{10.1186/s13059-017-1334-8}
\end{barticle}
\endbibitem

\bibitem[\protect\citeauthoryear{Buja and
  Eyuboglu}{1992}]{Buja_RemarksParallelAnalysis_1992}
\begin{barticle}[author]
\bauthor{\bsnm{Buja},~\bfnm{A.}\binits{A.}} \AND
  \bauthor{\bsnm{Eyuboglu},~\bfnm{N.}\binits{N.}}
(\byear{1992}).
\btitle{Remarks on {{Parallel Analysis}}}.
\bjournal{Multivariate Behavioral Research}
\bvolume{27}
\bpages{509--540}.
\bdoi{10.1207/s15327906mbr2704_2}
\end{barticle}
\endbibitem

\bibitem[\protect\citeauthoryear{Butler et~al.}{2018}]{seurat}
\begin{barticle}[author]
\bauthor{\bsnm{Butler},~\bfnm{Andrew}\binits{A.}},
  \bauthor{\bsnm{Hoffman},~\bfnm{Paul}\binits{P.}},
  \bauthor{\bsnm{Smibert},~\bfnm{Peter}\binits{P.}},
  \bauthor{\bsnm{Papalexi},~\bfnm{Efthymia}\binits{E.}} \AND
  \bauthor{\bsnm{Satija},~\bfnm{Rahul}\binits{R.}}
(\byear{2018}).
\btitle{Integrating single-cell transcriptomic data across different
  conditions, technologies, and species}.
\bjournal{Nature Biotechnology}
\bvolume{36}
\bpages{411-420}.
\bdoi{10.1038/nbt.4096}
\end{barticle}
\endbibitem

\bibitem[\protect\citeauthoryear{Cai, Campbell and Broderick}{2021}]{Cai:2021}
\begin{binproceedings}[author]
\bauthor{\bsnm{Cai},~\bfnm{Diana}\binits{D.}},
  \bauthor{\bsnm{Campbell},~\bfnm{Trevor}\binits{T.}} \AND
  \bauthor{\bsnm{Broderick},~\bfnm{Tamara}\binits{T.}}
(\byear{2021}).
\btitle{Finite mixture models do not reliably learn the number of components}.
In \bbooktitle{Proceedings of the 38th International Conference on Machine
  Learning}
(\beditor{\bfnm{Marina}\binits{M.}~\bsnm{Meila}} \AND
  \beditor{\bfnm{Tong}\binits{T.}~\bsnm{Zhang}}, eds.).
\bseries{Proceedings of Machine Learning Research}
\bvolume{139}
\bpages{1158--1169}.
\bpublisher{PMLR}.
\end{binproceedings}
\endbibitem

\bibitem[\protect\citeauthoryear{Carvalho et~al.}{2008}]{Carvalho:2008}
\begin{barticle}[author]
\bauthor{\bsnm{Carvalho},~\bfnm{Carlos~M.}\binits{C.~M.}},
  \bauthor{\bsnm{Chang},~\bfnm{Jeffrey}\binits{J.}},
  \bauthor{\bsnm{Lucas},~\bfnm{Joseph~E.}\binits{J.~E.}},
  \bauthor{\bsnm{Nevins},~\bfnm{Joseph~R.}\binits{J.~R.}},
  \bauthor{\bsnm{Wang},~\bfnm{Quanli}\binits{Q.}} \AND
  \bauthor{\bsnm{and},~\bfnm{Mike~West}\binits{M.~W.}}
(\byear{2008}).
\btitle{High-Dimensional Sparse Factor Modeling: Applications in Gene
  Expression Genomics}.
\bjournal{Journal of the American Statistical Association}
\bvolume{103}
\bpages{1438--1456}.
\bdoi{10.1198/016214508000000869}
\end{barticle}
\endbibitem

\bibitem[\protect\citeauthoryear{Chakraborty, Bhattacharya and
  Pati}{2023}]{Chakraborty:2023}
\begin{barticle}[author]
\bauthor{\bsnm{Chakraborty},~\bfnm{Abhisek}\binits{A.}},
  \bauthor{\bsnm{Bhattacharya},~\bfnm{Anirban}\binits{A.}} \AND
  \bauthor{\bsnm{Pati},~\bfnm{Debdeep}\binits{D.}}
(\byear{2023}).
\btitle{{Robust probabilistic inference via a constrained transport metric}}.
\bjournal{arXiv}.
\end{barticle}
\endbibitem

\bibitem[\protect\citeauthoryear{Chakravarty and
  Solit}{2021}]{Chakravarty:2021}
\begin{barticle}[author]
\bauthor{\bsnm{Chakravarty},~\bfnm{Debyani}\binits{D.}} \AND
  \bauthor{\bsnm{Solit},~\bfnm{David~B.}\binits{D.~B.}}
(\byear{2021}).
\btitle{Clinical cancer genomic profiling.}
\bjournal{Nature Reviews. Genetics}
\bvolume{22}
\bpages{483--501}.
\bdoi{10.1038/s41576-021-00338-8}
\end{barticle}
\endbibitem

\bibitem[\protect\citeauthoryear{Chiou and Li}{2007}]{Chiou:2007}
\begin{barticle}[author]
\bauthor{\bsnm{Chiou},~\bfnm{Jeng‐Min}\binits{J.}} \AND
  \bauthor{\bsnm{Li},~\bfnm{Pai‐Ling}\binits{P.}}
(\byear{2007}).
\btitle{{Functional clustering and identifying substructures of longitudinal
  data}}.
\bjournal{Journal of the Royal Statistical Society: Series B (Statistical
  Methodology)}
\bvolume{69}
\bpages{679--699}.
\bdoi{10.1111/j.1467-9868.2007.00605.x}
\end{barticle}
\endbibitem

\bibitem[\protect\citeauthoryear{Chizat et~al.}{2020}]{fast_sh}
\begin{binproceedings}[author]
\bauthor{\bsnm{Chizat},~\bfnm{L\'{e}na\"{\i}c}\binits{L.}},
  \bauthor{\bsnm{Roussillon},~\bfnm{Pierre}\binits{P.}},
  \bauthor{\bsnm{L\'{e}ger},~\bfnm{Flavien}\binits{F.}},
  \bauthor{\bsnm{Vialard},~\bfnm{Fran\c{c}ois-Xavier}\binits{F.-X.}} \AND
  \bauthor{\bsnm{Peyr\'{e}},~\bfnm{Gabriel}\binits{G.}}
(\byear{2020}).
\btitle{Faster Wasserstein Distance Estimation with the Sinkhorn Divergence}.
In \bbooktitle{Advances in Neural Information Processing Systems}
(\beditor{\bfnm{H.}\binits{H.}~\bsnm{Larochelle}},
  \beditor{\bfnm{M.}\binits{M.}~\bsnm{Ranzato}},
  \beditor{\bfnm{R.}\binits{R.}~\bsnm{Hadsell}},
  \beditor{\bfnm{M.~F.}\binits{M.~F.}~\bsnm{Balcan}} \AND
  \beditor{\bfnm{H.}\binits{H.}~\bsnm{Lin}}, eds.)
\bvolume{33}
\bpages{2257--2269}.
\bpublisher{Curran Associates, Inc.}
\end{binproceedings}
\endbibitem

\bibitem[\protect\citeauthoryear{Chung and Storey}{2015}]{Chung_jackstraw}
\begin{barticle}[author]
\bauthor{\bsnm{Chung},~\bfnm{Neo~Christopher}\binits{N.~C.}} \AND
  \bauthor{\bsnm{Storey},~\bfnm{John~D.}\binits{J.~D.}}
(\byear{2015}).
\btitle{Statistical significance of variables driving systematic variation in
  high-dimensional data}.
\bjournal{Bioinformatics}
\bvolume{31}
\bpages{545-554}.
\bdoi{10.1093/bioinformatics/btu674}
\end{barticle}
\endbibitem

\bibitem[\protect\citeauthoryear{Cichocki and
  Phan}{2009}]{Cichocki-Phan_coorddesc_2009}
\begin{barticle}[author]
\bauthor{\bsnm{Cichocki},~\bfnm{Andrzej}\binits{A.}} \AND
  \bauthor{\bsnm{Phan},~\bfnm{Anh-Huy}\binits{A.-H.}}
(\byear{2009}).
\btitle{Fast {{Local Algorithms}} for {{Large Scale Nonnegative Matrix}} and
  {{Tensor Factorizations}}}.
\bjournal{IEICE Transactions}
\bvolume{92-A}
\bpages{708--721}.
\bdoi{10.1587/transfun.E92.A.708}
\end{barticle}
\endbibitem

\bibitem[\protect\citeauthoryear{Consortium}{2018}]{mice}
\begin{barticle}[author]
\bauthor{\bsnm{Consortium},~\bfnm{The Tabula~Muris}\binits{T.~T.~M.}}
(\byear{2018}).
\btitle{{Single-cell transcriptomics of 20 mouse organs creates a Tabula
  Muris}}.
\bjournal{Nature}
\bvolume{562}
\bpages{367--372}.
\bdoi{10.1038/s41586-018-0590-4}
\end{barticle}
\endbibitem

\bibitem[\protect\citeauthoryear{Cover and Thomas}{2006}]{cover2006elements}
\begin{bbook}[author]
\bauthor{\bsnm{Cover},~\bfnm{Thomas~M.}\binits{T.~M.}} \AND
  \bauthor{\bsnm{Thomas},~\bfnm{Joy~A.}\binits{J.~A.}}
(\byear{2006}).
\btitle{Elements of Information Theory},
\bedition{2nd} ed.
\bpublisher{Wiley-Interscience}, \baddress{Hoboken, NJ}.
\end{bbook}
\endbibitem

\bibitem[\protect\citeauthoryear{Cunningham and Yu}{2014}]{Cunningham:2014}
\begin{barticle}[author]
\bauthor{\bsnm{Cunningham},~\bfnm{John~P}\binits{J.~P.}} \AND
  \bauthor{\bsnm{Yu},~\bfnm{Byron~M}\binits{B.~M.}}
(\byear{2014}).
\btitle{{Dimensionality reduction for large-scale neural recordings}}.
\bjournal{Nature Neuroscience}
\bvolume{17}
\bpages{1500--1509}.
\bdoi{10.1038/nn.3776}
\end{barticle}
\endbibitem

\bibitem[\protect\citeauthoryear{de~Souto et~al.}{2008}]{Souto:2008}
\begin{barticle}[author]
\bauthor{\bparticle{de} \bsnm{Souto},~\bfnm{Marc{\'i}lio
  Carlos~Pereira}\binits{M.~C.~P.}},
  \bauthor{\bsnm{Costa},~\bfnm{Ivan~G.}\binits{I.~G.}}, \bauthor{\bparticle{de}
  \bsnm{Araujo},~\bfnm{Daniel S.~A.}\binits{D.~S.~A.}},
  \bauthor{\bsnm{Ludermir},~\bfnm{Teresa~Bernarda}\binits{T.~B.}} \AND
  \bauthor{\bsnm{Schliep},~\bfnm{Alexander}\binits{A.}}
(\byear{2008}).
\btitle{Clustering cancer gene expression data: a comparative study}.
\bjournal{BMC Bioinformatics}
\bvolume{9}
\bpages{497 - 497}.
\end{barticle}
\endbibitem

\bibitem[\protect\citeauthoryear{Devarajan}{2019}]{Devarajan_NMFdualdivergence_2019}
\begin{bmisc}[author]
\bauthor{\bsnm{Devarajan},~\bfnm{Karthik}\binits{K.}}
(\byear{2019}).
\btitle{Non-Negative Matrix Factorization Based on Generalized Dual
  Divergence}.
\bdoi{10.48550/arXiv.1905.07034}
\end{bmisc}
\endbibitem

\bibitem[\protect\citeauthoryear{Devroye and Wagner}{1977}]{Devroye:1977}
\begin{barticle}[author]
\bauthor{\bsnm{Devroye},~\bfnm{Luc~P.}\binits{L.~P.}} \AND
  \bauthor{\bsnm{Wagner},~\bfnm{T.~J.}\binits{T.~J.}}
(\byear{1977}).
\btitle{{The Strong Uniform Consistency of Nearest Neighbor Density
  Estimates}}.
\bjournal{The Annals of Statistics}
\bvolume{5}
\bpages{536 -- 540}.
\bdoi{10.1214/aos/1176343851}
\end{barticle}
\endbibitem

\bibitem[\protect\citeauthoryear{Dewaskar et~al.}{2023}]{Dewaskar:2023}
\begin{barticle}[author]
\bauthor{\bsnm{Dewaskar},~\bfnm{Miheer}\binits{M.}},
  \bauthor{\bsnm{Tosh},~\bfnm{Christopher}\binits{C.}},
  \bauthor{\bsnm{Knoblauch},~\bfnm{Jeremias}\binits{J.}} \AND
  \bauthor{\bsnm{Dunson},~\bfnm{David~B}\binits{D.~B.}}
(\byear{2023}).
\btitle{{Robustifying likelihoods by optimistically re-weighting data}}.
\bjournal{arXiv}.
\end{barticle}
\endbibitem

\bibitem[\protect\citeauthoryear{Dobriban}{2020}]{Dobriban_PA-for-FA_2020}
\begin{barticle}[author]
\bauthor{\bsnm{Dobriban},~\bfnm{Edgar}\binits{E.}}
(\byear{2020}).
\btitle{Permutation Methods for Factor Analysis and {{PCA}}}.
\bjournal{The Annals of Statistics}
\bvolume{48}.
\bdoi{10.1214/19-AOS1907}
\end{barticle}
\endbibitem

\bibitem[\protect\citeauthoryear{Donoho}{2000}]{Donoho:2000}
\begin{barticle}[author]
\bauthor{\bsnm{Donoho},~\bfnm{David}\binits{D.}}
(\byear{2000}).
\btitle{High-Dimensional Data Analysis: The Curses and Blessings of
  Dimensionality}.
\bjournal{AMS Math Challenges Lecture}
\bpages{1-32}.
\end{barticle}
\endbibitem

\bibitem[\protect\citeauthoryear{Dunson}{2000}]{Dunson:2000}
\begin{barticle}[author]
\bauthor{\bsnm{Dunson},~\bfnm{D.~B.}\binits{D.~B.}}
(\byear{2000}).
\btitle{{Bayesian latent variable models for clustered mixed outcomes}}.
\bjournal{Journal of the Royal Statistical Society: Series B (Statistical
  Methodology)}
\bvolume{62}
\bpages{355--366}.
\bdoi{10.1111/1467-9868.00236}
\end{barticle}
\endbibitem

\bibitem[\protect\citeauthoryear{F{\'e}votte and
  Dobigeon}{2015}]{Fevotte_NonlinearHyperspectralUnmixing_2015}
\begin{barticle}[author]
\bauthor{\bsnm{F{\'e}votte},~\bfnm{C{\'e}dric}\binits{C.}} \AND
  \bauthor{\bsnm{Dobigeon},~\bfnm{Nicolas}\binits{N.}}
(\byear{2015}).
\btitle{Nonlinear Hyperspectral Unmixing with Robust Nonnegative Matrix
  Factorization}.
\bjournal{IEEE Transactions on Image Processing}
\bvolume{24}
\bpages{4810--4819}.
\bdoi{10.1109/TIP.2015.2468177}
\end{barticle}
\endbibitem

\bibitem[\protect\citeauthoryear{Flynt, Dean and
  Nugent}{2019}]{Flynt_sARI_2019}
\begin{barticle}[author]
\bauthor{\bsnm{Flynt},~\bfnm{Abby}\binits{A.}},
  \bauthor{\bsnm{Dean},~\bfnm{Nema}\binits{N.}} \AND
  \bauthor{\bsnm{Nugent},~\bfnm{Rebecca}\binits{R.}}
(\byear{2019}).
\btitle{{{sARI}}: A Soft Agreement Measure for Class Partitions Incorporating
  Assignment Probabilities}.
\bjournal{Advances in Data Analysis and Classification}
\bvolume{13}
\bpages{303--323}.
\bdoi{10.1007/s11634-018-0346-x}
\end{barticle}
\endbibitem

\bibitem[\protect\citeauthoryear{Fr{\"u}hwirth-Schnatter}{2006}]{Fruhwurth:2006}
\begin{bbook}[author]
\bauthor{\bsnm{Fr{\"u}hwirth-Schnatter},~\bfnm{S.}\binits{S.}}
(\byear{2006}).
\btitle{Finite Mixture and Markov Switching Models}.
\bseries{Springer Series in Statistics}.
\bpublisher{Springer New York}.
\end{bbook}
\endbibitem

\bibitem[\protect\citeauthoryear{Fu, Huang and
  Sidiropoulos}{2018}]{Fu_IdentifiabilityNMF_2018}
\begin{barticle}[author]
\bauthor{\bsnm{Fu},~\bfnm{Xiao}\binits{X.}},
  \bauthor{\bsnm{Huang},~\bfnm{Kejun}\binits{K.}} \AND
  \bauthor{\bsnm{Sidiropoulos},~\bfnm{Nicholas~D.}\binits{N.~D.}}
(\byear{2018}).
\btitle{On {{Identifiability}} of {{Nonnegative Matrix Factorization}}}.
\bjournal{IEEE Signal Processing Letters}
\bvolume{25}
\bpages{328--332}.
\bdoi{10.1109/LSP.2018.2789405}
\end{barticle}
\endbibitem

\bibitem[\protect\citeauthoryear{Gibbs and Su}{2002}]{Gibbs:2002}
\begin{barticle}[author]
\bauthor{\bsnm{Gibbs},~\bfnm{Alison~L.}\binits{A.~L.}} \AND
  \bauthor{\bsnm{Su},~\bfnm{Francis~Edward}\binits{F.~E.}}
(\byear{2002}).
\btitle{On Choosing and Bounding Probability Metrics}.
\bjournal{International Statistical Review}
\bvolume{70}
\bpages{419--435}.
\end{barticle}
\endbibitem

\bibitem[\protect\citeauthoryear{Gorsky, Chan and Ma}{2020}]{Gorsky:2020}
\begin{barticle}[author]
\bauthor{\bsnm{Gorsky},~\bfnm{Shai}\binits{S.}},
  \bauthor{\bsnm{Chan},~\bfnm{Cliburn}\binits{C.}} \AND
  \bauthor{\bsnm{Ma},~\bfnm{Li}\binits{L.}}
(\byear{2020}).
\btitle{Coarsened mixtures of hierarchical skew normal kernels for flow
  cytometry analyses}.
\bjournal{arXiv}.
\end{barticle}
\endbibitem

\bibitem[\protect\citeauthoryear{Grabski, Street and Irizarry}{2023}]{grabski}
\begin{barticle}[author]
\bauthor{\bsnm{Grabski},~\bfnm{Isabella~N.}\binits{I.~N.}},
  \bauthor{\bsnm{Street},~\bfnm{Kelly}\binits{K.}} \AND
  \bauthor{\bsnm{Irizarry},~\bfnm{Rafael~A.}\binits{R.~A.}}
(\byear{2023}).
\btitle{Significance analysis for clustering with single-cell {RNA}-sequencing
  data}.
\bjournal{Nature Methods}
\bvolume{20}
\bpages{1196--1202}.
\bdoi{10.1038/s41592-023-01933-9}
\end{barticle}
\endbibitem

\bibitem[\protect\citeauthoryear{Gretton et~al.}{2012}]{Gretton:2012}
\begin{barticle}[author]
\bauthor{\bsnm{Gretton},~\bfnm{Arthur}\binits{A.}},
  \bauthor{\bsnm{Borgwardt},~\bfnm{Karsten~M.}\binits{K.~M.}},
  \bauthor{\bsnm{Rasch},~\bfnm{Malte~J.}\binits{M.~J.}},
  \bauthor{\bsnm{Sch{{\"o}}lkopf},~\bfnm{Bernhard}\binits{B.}} \AND
  \bauthor{\bsnm{Smola},~\bfnm{Alexander}\binits{A.}}
(\byear{2012}).
\btitle{A Kernel Two-Sample Test}.
\bjournal{Journal of Machine Learning Research}
\bvolume{13}
\bpages{723-773}.
\end{barticle}
\endbibitem

\bibitem[\protect\citeauthoryear{Guha, Ho and Nguyen}{2021}]{Guha:2021}
\begin{barticle}[author]
\bauthor{\bsnm{Guha},~\bfnm{Aritra}\binits{A.}},
  \bauthor{\bsnm{Ho},~\bfnm{Nhat}\binits{N.}} \AND
  \bauthor{\bsnm{Nguyen},~\bfnm{XuanLong}\binits{X.}}
(\byear{2021}).
\btitle{{On posterior contraction of parameters and interpretability in
  Bayesian mixture modeling}}.
\bjournal{Bernoulli}
\bvolume{27}
\bpages{2159 -- 2188}.
\bdoi{10.3150/20-BEJ1275}
\end{barticle}
\endbibitem

\bibitem[\protect\citeauthoryear{Horn}{1965}]{Horn_ParallelAnalysis_1965}
\begin{barticle}[author]
\bauthor{\bsnm{Horn},~\bfnm{John~L.}\binits{J.~L.}}
(\byear{1965}).
\btitle{A Rationale and Test for the Number of Factors in Factor Analysis}.
\bjournal{Psychometrika}
\bvolume{30}
\bpages{179--185}.
\bdoi{10.1007/BF02289447}
\end{barticle}
\endbibitem

\bibitem[\protect\citeauthoryear{Huang and
  Yao}{2012}]{Huang_MixtureRegressionModels_2012}
\begin{barticle}[author]
\bauthor{\bsnm{Huang},~\bfnm{Mian}\binits{M.}} \AND
  \bauthor{\bsnm{Yao},~\bfnm{Weixin}\binits{W.}}
(\byear{2012}).
\btitle{Mixture of {{Regression Models With Varying Mixing Proportions}}: {{A
  Semiparametric Approach}}}.
\bjournal{Journal of the American Statistical Association}
\bvolume{107}
\bpages{711--724}.
\bdoi{10.1080/01621459.2012.682541}
\end{barticle}
\endbibitem

\bibitem[\protect\citeauthoryear{Hubert and Arabie}{1985}]{ari}
\begin{barticle}[author]
\bauthor{\bsnm{Hubert},~\bfnm{L.}\binits{L.}} \AND
  \bauthor{\bsnm{Arabie},~\bfnm{P.}\binits{P.}}
(\byear{1985}).
\btitle{{Comparing partitions}}.
\bjournal{Journal of classification}
\bvolume{2}
\bpages{193--218}.
\end{barticle}
\endbibitem

\bibitem[\protect\citeauthoryear{Islam
  et~al.}{2022}]{islam_sigprofiler_extractor_2022}
\begin{barticle}[author]
\bauthor{\bsnm{Islam},~\bfnm{S.~M.~Ashiqul}\binits{S.~M.~A.}},
  \bauthor{\bsnm{{D{\'i}az-Gay}},~\bfnm{Marcos}\binits{M.}},
  \bauthor{\bsnm{Wu},~\bfnm{Yang}\binits{Y.}},
  \bauthor{\bsnm{Barnes},~\bfnm{Mark}\binits{M.}},
  \bauthor{\bsnm{Vangara},~\bfnm{Raviteja}\binits{R.}},
  \bauthor{\bsnm{Bergstrom},~\bfnm{Erik~N.}\binits{E.~N.}},
  \bauthor{\bsnm{He},~\bfnm{Yudou}\binits{Y.}},
  \bauthor{\bsnm{Vella},~\bfnm{Mike}\binits{M.}},
  \bauthor{\bsnm{Wang},~\bfnm{Jingwei}\binits{J.}},
  \bauthor{\bsnm{Teague},~\bfnm{Jon~W.}\binits{J.~W.}},
  \bauthor{\bsnm{Clapham},~\bfnm{Peter}\binits{P.}},
  \bauthor{\bsnm{Moody},~\bfnm{Sarah}\binits{S.}},
  \bauthor{\bsnm{Senkin},~\bfnm{Sergey}\binits{S.}},
  \bauthor{\bsnm{Li},~\bfnm{Yun~Rose}\binits{Y.~R.}},
  \bauthor{\bsnm{Riva},~\bfnm{Laura}\binits{L.}},
  \bauthor{\bsnm{Zhang},~\bfnm{Tongwu}\binits{T.}},
  \bauthor{\bsnm{Gruber},~\bfnm{Andreas~J.}\binits{A.~J.}},
  \bauthor{\bsnm{Steele},~\bfnm{Christopher~D.}\binits{C.~D.}},
  \bauthor{\bsnm{Otlu},~\bfnm{Bur{\c c}ak}\binits{B.}},
  \bauthor{\bsnm{Khandekar},~\bfnm{Azhar}\binits{A.}},
  \bauthor{\bsnm{Abbasi},~\bfnm{Ammal}\binits{A.}},
  \bauthor{\bsnm{Humphreys},~\bfnm{Laura}\binits{L.}},
  \bauthor{\bsnm{Syulyukina},~\bfnm{Natalia}\binits{N.}},
  \bauthor{\bsnm{Brady},~\bfnm{Samuel~W.}\binits{S.~W.}},
  \bauthor{\bsnm{Alexandrov},~\bfnm{Boian~S.}\binits{B.~S.}},
  \bauthor{\bsnm{Pillay},~\bfnm{Nischalan}\binits{N.}},
  \bauthor{\bsnm{Zhang},~\bfnm{Jinghui}\binits{J.}},
  \bauthor{\bsnm{Adams},~\bfnm{David~J.}\binits{D.~J.}},
  \bauthor{\bsnm{Martincorena},~\bfnm{I{\~n}igo}\binits{I.}},
  \bauthor{\bsnm{Wedge},~\bfnm{David~C.}\binits{D.~C.}},
  \bauthor{\bsnm{Landi},~\bfnm{Maria~Teresa}\binits{M.~T.}},
  \bauthor{\bsnm{Brennan},~\bfnm{Paul}\binits{P.}},
  \bauthor{\bsnm{Stratton},~\bfnm{Michael~R.}\binits{M.~R.}},
  \bauthor{\bsnm{Rozen},~\bfnm{Steven~G.}\binits{S.~G.}} \AND
  \bauthor{\bsnm{Alexandrov},~\bfnm{Ludmil~B.}\binits{L.~B.}}
(\byear{2022}).
\btitle{Uncovering Novel Mutational Signatures by de Novo Extraction with
  {{SigProfilerExtractor}}}.
\bjournal{Cell Genomics}
\bvolume{2}.
\bdoi{10.1016/j.xgen.2022.100179}
\end{barticle}
\endbibitem

\bibitem[\protect\citeauthoryear{Jaspers, Kom{\'a}rek and
  Aerts}{2018}]{Jaspers_BayesianEstimationMixtureCovariate_2018}
\begin{barticle}[author]
\bauthor{\bsnm{Jaspers},~\bfnm{Stijn}\binits{S.}},
  \bauthor{\bsnm{Kom{\'a}rek},~\bfnm{Arno{\v s}t}\binits{A.}} \AND
  \bauthor{\bsnm{Aerts},~\bfnm{Marc}\binits{M.}}
(\byear{2018}).
\btitle{Bayesian Estimation of Multivariate Normal Mixtures with
  Covariate-Dependent Mixing Weights, with an Application in Antimicrobial
  Resistance Monitoring}.
\bjournal{Biometrical Journal}
\bvolume{60}
\bpages{7--19}.
\bdoi{10.1002/bimj.201600253}
\end{barticle}
\endbibitem

\bibitem[\protect\citeauthoryear{Ji
  et~al.}{2016}]{Ji_Estimatng_Vegetation_Fractional_Cover_2016}
\begin{barticle}[author]
\bauthor{\bsnm{Ji},~\bfnm{Cuicui}\binits{C.}},
  \bauthor{\bsnm{Jia},~\bfnm{Y.}\binits{Y.}},
  \bauthor{\bsnm{Li},~\bfnm{Xuan}\binits{X.}} \AND
  \bauthor{\bsnm{Wang},~\bfnm{Jw}\binits{J.}}
(\byear{2016}).
\btitle{Research on Linear and Nonlinear Spectral Mixture Models for Estimating
  Vegetation Fractional Cover of {{Nitraria}} Bushes}.
\bjournal{National Remote Sensing Bulletin}
\bvolume{20}
\bpages{1402--1412}.
\bdoi{10.11834/jrs.20166020}
\end{barticle}
\endbibitem

\bibitem[\protect\citeauthoryear{Kingma and Welling}{2014}]{Kingma2014}
\begin{binproceedings}[author]
\bauthor{\bsnm{Kingma},~\bfnm{Diederik~P}\binits{D.~P.}} \AND
  \bauthor{\bsnm{Welling},~\bfnm{Max}\binits{M.}}
(\byear{2014}).
\btitle{Auto-Encoding Variational Bayes}.
\bseries{International Conference on Learning Representations}.
\end{binproceedings}
\endbibitem

\bibitem[\protect\citeauthoryear{Kingma and Welling}{2019}]{Kingma2019VAEs}
\begin{barticle}[author]
\bauthor{\bsnm{Kingma},~\bfnm{Diederik~P.}\binits{D.~P.}} \AND
  \bauthor{\bsnm{Welling},~\bfnm{Max}\binits{M.}}
(\byear{2019}).
\btitle{An Introduction to Variational Autoencoders}.
\bjournal{Foundations and Trends in Machine Learning}
\bvolume{12}
\bpages{307--392}.
\bdoi{10.1561/2200000056}
\end{barticle}
\endbibitem

\bibitem[\protect\citeauthoryear{Kinker et~al.}{2020}]{Kinker_Pan_Cancer_2020}
\begin{barticle}[author]
\bauthor{\bsnm{Kinker},~\bfnm{Gabriela~S.}\binits{G.~S.}},
  \bauthor{\bsnm{Greenwald},~\bfnm{Alissa~C.}\binits{A.~C.}},
  \bauthor{\bsnm{Tal},~\bfnm{Rotem}\binits{R.}},
  \bauthor{\bsnm{Orlova},~\bfnm{Zhanna}\binits{Z.}},
  \bauthor{\bsnm{Cuoco},~\bfnm{Michael~S.}\binits{M.~S.}},
  \bauthor{\bsnm{McFarland},~\bfnm{James~M.}\binits{J.~M.}},
  \bauthor{\bsnm{Warren},~\bfnm{Allison}\binits{A.}},
  \bauthor{\bsnm{Rodman},~\bfnm{Christopher}\binits{C.}},
  \bauthor{\bsnm{Roth},~\bfnm{Jennifer~A.}\binits{J.~A.}},
  \bauthor{\bsnm{Bender},~\bfnm{Samantha~A.}\binits{S.~A.}},
  \bauthor{\bsnm{Kumar},~\bfnm{Bhavna}\binits{B.}},
  \bauthor{\bsnm{Rocco},~\bfnm{James~W.}\binits{J.~W.}},
  \bauthor{\bsnm{Fernandes},~\bfnm{Pedro~ACM}\binits{P.~A.}},
  \bauthor{\bsnm{Mader},~\bfnm{Christopher~C.}\binits{C.~C.}},
  \bauthor{\bsnm{{Keren-Shaul}},~\bfnm{Hadas}\binits{H.}},
  \bauthor{\bsnm{Plotnikov},~\bfnm{Alexander}\binits{A.}},
  \bauthor{\bsnm{Barr},~\bfnm{Haim}\binits{H.}},
  \bauthor{\bsnm{Tsherniak},~\bfnm{Aviad}\binits{A.}},
  \bauthor{\bsnm{{Rozenblatt-Rosen}},~\bfnm{Orit}\binits{O.}},
  \bauthor{\bsnm{Krizhanovsky},~\bfnm{Valery}\binits{V.}},
  \bauthor{\bsnm{Puram},~\bfnm{Sidharth~V.}\binits{S.~V.}},
  \bauthor{\bsnm{Regev},~\bfnm{Aviv}\binits{A.}} \AND
  \bauthor{\bsnm{Tirosh},~\bfnm{Itay}\binits{I.}}
(\byear{2020}).
\btitle{Pan-Cancer Single Cell {{RNA-seq}} Uncovers Recurring Programs of
  Cellular Heterogeneity}.
\bjournal{Nature genetics}
\bvolume{52}
\bpages{1208--1218}.
\bdoi{10.1038/s41588-020-00726-6}
\end{barticle}
\endbibitem

\bibitem[\protect\citeauthoryear{Kiselev et~al.}{2016}]{sc3}
\begin{barticle}[author]
\bauthor{\bsnm{Kiselev},~\bfnm{Vladimir~Yu.}\binits{V.~Y.}},
  \bauthor{\bsnm{Kirschner},~\bfnm{Kristina}\binits{K.}},
  \bauthor{\bsnm{Schaub},~\bfnm{Michael~T.}\binits{M.~T.}},
  \bauthor{\bsnm{Andrews},~\bfnm{Tallulah}\binits{T.}},
  \bauthor{\bsnm{Chandra},~\bfnm{Tamir}\binits{T.}},
  \bauthor{\bsnm{Natarajan},~\bfnm{Kedar~N}\binits{K.~N.}},
  \bauthor{\bsnm{Reik},~\bfnm{Wolf}\binits{W.}},
  \bauthor{\bsnm{Barahona},~\bfnm{Mauricio}\binits{M.}},
  \bauthor{\bsnm{Green},~\bfnm{Anthony~R}\binits{A.~R.}} \AND
  \bauthor{\bsnm{Hemberg},~\bfnm{Martin}\binits{M.}}
(\byear{2016}).
\btitle{SC3 {\textendash} consensus clustering of single-cell RNA-Seq data}.
\bjournal{bioRxiv}.
\bdoi{10.1101/036558}
\end{barticle}
\endbibitem

\bibitem[\protect\citeauthoryear{Kotliar
  et~al.}{2019}]{Kotliar_Identify_Cell_Idendity_Activity_NMF_2019}
\begin{barticle}[author]
\bauthor{\bsnm{Kotliar},~\bfnm{Dylan}\binits{D.}},
  \bauthor{\bsnm{Veres},~\bfnm{Adrian}\binits{A.}},
  \bauthor{\bsnm{Nagy},~\bfnm{M~Aurel}\binits{M.~A.}},
  \bauthor{\bsnm{Tabrizi},~\bfnm{Shervin}\binits{S.}},
  \bauthor{\bsnm{Hodis},~\bfnm{Eran}\binits{E.}},
  \bauthor{\bsnm{Melton},~\bfnm{Douglas~A}\binits{D.~A.}} \AND
  \bauthor{\bsnm{Sabeti},~\bfnm{Pardis~C}\binits{P.~C.}}
(\byear{2019}).
\btitle{Identifying Gene Expression Programs of Cell-Type Identity and Cellular
  Activity with Single-Cell {{RNA-Seq}}}.
\bjournal{eLife}
\bvolume{8}
\bpages{e43803}.
\bdoi{10.7554/eLife.43803}
\end{barticle}
\endbibitem

\bibitem[\protect\citeauthoryear{Lee and Seung}{2000}]{Lee-Seung_multdiv_2000}
\begin{binproceedings}[author]
\bauthor{\bsnm{Lee},~\bfnm{Daniel}\binits{D.}} \AND
  \bauthor{\bsnm{Seung},~\bfnm{H.~Sebastian}\binits{H.~S.}}
(\byear{2000}).
\btitle{Algorithms for Non-negative Matrix Factorization}.
In \bbooktitle{Advances in Neural Information Processing Systems}
(\beditor{\bfnm{T.}\binits{T.}~\bsnm{Leen}},
  \beditor{\bfnm{T.}\binits{T.}~\bsnm{Dietterich}} \AND
  \beditor{\bfnm{V.}\binits{V.}~\bsnm{Tresp}}, eds.)
\bvolume{13}.
\bpublisher{MIT Press}.
\end{binproceedings}
\endbibitem

\bibitem[\protect\citeauthoryear{Levitin
  et~al.}{2019}]{Levitin_DeNovo_Gene_Signature_Identification_2019}
\begin{barticle}[author]
\bauthor{\bsnm{Levitin},~\bfnm{Hanna~Mendes}\binits{H.~M.}},
  \bauthor{\bsnm{Yuan},~\bfnm{Jinzhou}\binits{J.}},
  \bauthor{\bsnm{Cheng},~\bfnm{Yim~Ling}\binits{Y.~L.}},
  \bauthor{\bsnm{Ruiz},~\bfnm{Francisco~JR}\binits{F.~J.}},
  \bauthor{\bsnm{Bush},~\bfnm{Erin~C}\binits{E.~C.}},
  \bauthor{\bsnm{Bruce},~\bfnm{Jeffrey~N}\binits{J.~N.}},
  \bauthor{\bsnm{Canoll},~\bfnm{Peter}\binits{P.}},
  \bauthor{\bsnm{Iavarone},~\bfnm{Antonio}\binits{A.}},
  \bauthor{\bsnm{Lasorella},~\bfnm{Anna}\binits{A.}},
  \bauthor{\bsnm{Blei},~\bfnm{David~M}\binits{D.~M.}} \AND
  \bauthor{\bsnm{Sims},~\bfnm{Peter~A}\binits{P.~A.}}
(\byear{2019}).
\btitle{De Novo Gene Signature Identification from Single-cell {{RNA}}-seq with
  Hierarchical {{Poisson}} Factorization}.
\bjournal{Molecular Systems Biology}
\bvolume{15}
\bpages{e8557}.
\bdoi{10.15252/msb.20188557}
\end{barticle}
\endbibitem

\bibitem[\protect\citeauthoryear{Li et~al.}{2020}]{pcawgSV:2020}
\begin{barticle}[author]
\bauthor{\bsnm{Li},~\bfnm{Yilong}\binits{Y.}},
  \bauthor{\bsnm{Roberts},~\bfnm{Nicola~D.}\binits{N.~D.}},
  \bauthor{\bsnm{Wala},~\bfnm{Jeremiah~A.}\binits{J.~A.}},
  \bauthor{\bsnm{Shapira},~\bfnm{Ofer}\binits{O.}},
  \bauthor{\bsnm{Schumacher},~\bfnm{Steven~E.}\binits{S.~E.}},
  \bauthor{\bsnm{Kumar},~\bfnm{Kiran}\binits{K.}},
  \bauthor{\bsnm{Khurana},~\bfnm{Ekta}\binits{E.}},
  \bauthor{\bsnm{Waszak},~\bfnm{Sebastian}\binits{S.}},
  \bauthor{\bsnm{Korbel},~\bfnm{Jan~O.}\binits{J.~O.}},
  \bauthor{\bsnm{Haber},~\bfnm{James~E.}\binits{J.~E.}},
  \bauthor{\bsnm{Imielinski},~\bfnm{Marcin}\binits{M.}}, \bauthor{\bsnm{{PCAWG
  Structural Variation Working Group}}},
  \bauthor{\bsnm{Weischenfeldt},~\bfnm{Joachim}\binits{J.}},
  \bauthor{\bsnm{Beroukhim},~\bfnm{Rameen}\binits{R.}},
  \bauthor{\bsnm{Campbell},~\bfnm{Peter~J.}\binits{P.~J.}} \AND
  \bauthor{\bsnm{{PCAWG Consortium}}}
(\byear{2020}).
\btitle{Patterns of somatic structural variation in human cancer genomes.}
\bjournal{Nature}
\bvolume{578}
\bpages{112--121}.
\bdoi{10.1038/s41586-019-1913-9}
\end{barticle}
\endbibitem

\bibitem[\protect\citeauthoryear{Lin and
  Zhang}{2017}]{Lin_RetrievingHydrousMinerals_2017}
\begin{barticle}[author]
\bauthor{\bsnm{Lin},~\bfnm{Honglei}\binits{H.}} \AND
  \bauthor{\bsnm{Zhang},~\bfnm{Xia}\binits{X.}}
(\byear{2017}).
\btitle{Retrieving the Hydrous Minerals on {{Mars}} by Sparse Unmixing and the
  {{Hapke}} Model Using {{MRO}}/{{CRISM}} Data}.
\bjournal{Icarus}
\bvolume{288}
\bpages{160--171}.
\bdoi{10.1016/j.icarus.2017.01.019}
\end{barticle}
\endbibitem

\bibitem[\protect\citeauthoryear{Liu}{2019}]{Liu_Support-Union_2019}
\begin{binproceedings}[author]
\bauthor{\bsnm{Liu},~\bfnm{Zhaoqiang}\binits{Z.}}
(\byear{2019}).
\btitle{Model {{Selection}} for {{Nonnegative Matrix Factorization}} by
  {{Support Union Recovery}}}.
In \bbooktitle{{{ICASSP}} 2019 - 2019 {{IEEE International Conference}} on
  {{Acoustics}}, {{Speech}} and {{Signal Processing}} ({{ICASSP}})}
\bpages{3407--3411}.
\bdoi{10.1109/ICASSP.2019.8683718}
\end{binproceedings}
\endbibitem

\bibitem[\protect\citeauthoryear{McKinnon}{2018}]{flow_cytometry}
\begin{barticle}[author]
\bauthor{\bsnm{McKinnon},~\bfnm{Katherine~M.}\binits{K.~M.}}
(\byear{2018}).
\btitle{Flow Cytometry: An Overview}.
\bjournal{Current Protocols in Immunology}
\bvolume{120}
\bpages{5.1.1-5.1.11}.
\bdoi{https://doi.org/10.1002/cpim.40}
\end{barticle}
\endbibitem

\bibitem[\protect\citeauthoryear{Miller and Dunson}{2019}]{Miller:2019}
\begin{barticle}[author]
\bauthor{\bsnm{Miller},~\bfnm{Jeffrey~W.}\binits{J.~W.}} \AND
  \bauthor{\bsnm{Dunson},~\bfnm{David~B.}\binits{D.~B.}}
(\byear{2019}).
\btitle{Robust Bayesian Inference via Coarsening}.
\bjournal{Journal of the American Statistical Association}
\bvolume{114}
\bpages{1113-1125}.
\bdoi{10.1080/01621459.2018.1469995}
\end{barticle}
\endbibitem

\bibitem[\protect\citeauthoryear{{Nik-Zainal}
  et~al.}{2012}]{Nik-zainal_MutationalProcessesMolding_2012}
\begin{barticle}[author]
\bauthor{\bsnm{{Nik-Zainal}},~\bfnm{Serena}\binits{S.}},
  \bauthor{\bsnm{Alexandrov},~\bfnm{Ludmil~B.}\binits{L.~B.}},
  \bauthor{\bsnm{Wedge},~\bfnm{David~C.}\binits{D.~C.}},
  \bauthor{\bsnm{Van~Loo},~\bfnm{Peter}\binits{P.}},
  \bauthor{\bsnm{Greenman},~\bfnm{Christopher~D.}\binits{C.~D.}},
  \bauthor{\bsnm{Raine},~\bfnm{Keiran}\binits{K.}},
  \bauthor{\bsnm{Jones},~\bfnm{David}\binits{D.}},
  \bauthor{\bsnm{Hinton},~\bfnm{Jonathan}\binits{J.}},
  \bauthor{\bsnm{Marshall},~\bfnm{John}\binits{J.}},
  \bauthor{\bsnm{Stebbings},~\bfnm{Lucy~A.}\binits{L.~A.}},
  \bauthor{\bsnm{Menzies},~\bfnm{Andrew}\binits{A.}},
  \bauthor{\bsnm{Martin},~\bfnm{Sancha}\binits{S.}},
  \bauthor{\bsnm{Leung},~\bfnm{Kenric}\binits{K.}},
  \bauthor{\bsnm{Chen},~\bfnm{Lina}\binits{L.}},
  \bauthor{\bsnm{Leroy},~\bfnm{Catherine}\binits{C.}},
  \bauthor{\bsnm{Ramakrishna},~\bfnm{Manasa}\binits{M.}},
  \bauthor{\bsnm{Rance},~\bfnm{Richard}\binits{R.}},
  \bauthor{\bsnm{Lau},~\bfnm{King~Wai}\binits{K.~W.}},
  \bauthor{\bsnm{Mudie},~\bfnm{Laura~J.}\binits{L.~J.}},
  \bauthor{\bsnm{Varela},~\bfnm{Ignacio}\binits{I.}},
  \bauthor{\bsnm{McBride},~\bfnm{David~J.}\binits{D.~J.}},
  \bauthor{\bsnm{Bignell},~\bfnm{Graham~R.}\binits{G.~R.}},
  \bauthor{\bsnm{Cooke},~\bfnm{Susanna~L.}\binits{S.~L.}},
  \bauthor{\bsnm{Shlien},~\bfnm{Adam}\binits{A.}},
  \bauthor{\bsnm{Gamble},~\bfnm{John}\binits{J.}},
  \bauthor{\bsnm{Whitmore},~\bfnm{Ian}\binits{I.}},
  \bauthor{\bsnm{Maddison},~\bfnm{Mark}\binits{M.}},
  \bauthor{\bsnm{Tarpey},~\bfnm{Patrick~S.}\binits{P.~S.}},
  \bauthor{\bsnm{Davies},~\bfnm{Helen~R.}\binits{H.~R.}},
  \bauthor{\bsnm{Papaemmanuil},~\bfnm{Elli}\binits{E.}},
  \bauthor{\bsnm{Stephens},~\bfnm{Philip~J.}\binits{P.~J.}},
  \bauthor{\bsnm{McLaren},~\bfnm{Stuart}\binits{S.}},
  \bauthor{\bsnm{Butler},~\bfnm{Adam~P.}\binits{A.~P.}},
  \bauthor{\bsnm{Teague},~\bfnm{Jon~W.}\binits{J.~W.}},
  \bauthor{\bsnm{J{\"o}nsson},~\bfnm{G{\"o}ran}\binits{G.}},
  \bauthor{\bsnm{Garber},~\bfnm{Judy~E.}\binits{J.~E.}},
  \bauthor{\bsnm{Silver},~\bfnm{Daniel}\binits{D.}},
  \bauthor{\bsnm{Miron},~\bfnm{Penelope}\binits{P.}},
  \bauthor{\bsnm{Fatima},~\bfnm{Aquila}\binits{A.}},
  \bauthor{\bsnm{Boyault},~\bfnm{Sandrine}\binits{S.}},
  \bauthor{\bsnm{Langer{\o}d},~\bfnm{Anita}\binits{A.}},
  \bauthor{\bsnm{Tutt},~\bfnm{Andrew}\binits{A.}},
  \bauthor{\bsnm{Martens},~\bfnm{John W.~M.}\binits{J.~W.~M.}},
  \bauthor{\bsnm{Aparicio},~\bfnm{Samuel A. J.~R.}\binits{S.~A. J.~R.}},
  \bauthor{\bsnm{Borg},~\bfnm{{\AA}ke}\binits{{\AA}.}},
  \bauthor{\bsnm{Salomon},~\bfnm{Anne~Vincent}\binits{A.~V.}},
  \bauthor{\bsnm{Thomas},~\bfnm{Gilles}\binits{G.}},
  \bauthor{\bsnm{{B{\o}rresen-Dale}},~\bfnm{Anne-Lise}\binits{A.-L.}},
  \bauthor{\bsnm{Richardson},~\bfnm{Andrea~L.}\binits{A.~L.}},
  \bauthor{\bsnm{Neuberger},~\bfnm{Michael~S.}\binits{M.~S.}},
  \bauthor{\bsnm{Futreal},~\bfnm{P.~Andrew}\binits{P.~A.}},
  \bauthor{\bsnm{Campbell},~\bfnm{Peter~J.}\binits{P.~J.}},
  \bauthor{\bsnm{Stratton},~\bfnm{Michael~R.}\binits{M.~R.}} \AND
  \bauthor{\bsnm{{Breast Cancer Working Group of the International Cancer
  Genome Consortium}}}
(\byear{2012}).
\btitle{Mutational Processes Molding the Genomes of 21 Breast Cancers}.
\bjournal{Cell}
\bvolume{149}
\bpages{979--993}.
\bdoi{10.1016/j.cell.2012.04.024}
\end{barticle}
\endbibitem

\bibitem[\protect\citeauthoryear{Paninski}{2003}]{Paninski:2003}
\begin{barticle}[author]
\bauthor{\bsnm{Paninski},~\bfnm{Liam}\binits{L.}}
(\byear{2003}).
\btitle{Estimation of Entropy and Mutual Information}.
\bjournal{Neural Computation}
\bvolume{15}
\bpages{1191-1253}.
\bdoi{10.1162/089976603321780272}
\end{barticle}
\endbibitem

\bibitem[\protect\citeauthoryear{Pelizzola, Laursen and
  Hobolth}{2023}]{Pelizzola_NegBin-NMF_2023}
\begin{barticle}[author]
\bauthor{\bsnm{Pelizzola},~\bfnm{Marta}\binits{M.}},
  \bauthor{\bsnm{Laursen},~\bfnm{Ragnhild}\binits{R.}} \AND
  \bauthor{\bsnm{Hobolth},~\bfnm{Asger}\binits{A.}}
(\byear{2023}).
\btitle{Model Selection and Robust Inference of Mutational Signatures Using
  {{Negative Binomial}} Non-Negative Matrix Factorization}.
\bjournal{BMC Bioinformatics}
\bvolume{24}
\bpages{187}.
\bdoi{10.1186/s12859-023-05304-1}
\end{barticle}
\endbibitem

\bibitem[\protect\citeauthoryear{Prabhakaran et~al.}{2016}]{Prabhakaran:2016}
\begin{binproceedings}[author]
\bauthor{\bsnm{Prabhakaran},~\bfnm{Sandhya}\binits{S.}},
  \bauthor{\bsnm{Azizi},~\bfnm{Elham}\binits{E.}},
  \bauthor{\bsnm{Carr},~\bfnm{Ambrose}\binits{A.}} \AND
  \bauthor{\bsnm{Pe’er},~\bfnm{Dana}\binits{D.}}
(\byear{2016}).
\btitle{Dirichlet Process Mixture Model for Correcting Technical Variation in
  Single-Cell Gene Expression Data}.
In \bbooktitle{Proceedings of The 33rd International Conference on Machine
  Learning}
(\beditor{\bfnm{Maria~Florina}\binits{M.~F.}~\bsnm{Balcan}} \AND
  \beditor{\bfnm{Kilian~Q.}\binits{K.~Q.}~\bsnm{Weinberger}}, eds.).
\bseries{Proceedings of Machine Learning Research}
\bvolume{48}
\bpages{1070--1079}.
\bpublisher{PMLR}, \baddress{New York, New York, USA}.
\end{binproceedings}
\endbibitem

\bibitem[\protect\citeauthoryear{Rajabi and
  Ghassemian}{2015}]{Rajabi_SpectralUnmixingHyperspectral_2015}
\begin{barticle}[author]
\bauthor{\bsnm{Rajabi},~\bfnm{Roozbeh}\binits{R.}} \AND
  \bauthor{\bsnm{Ghassemian},~\bfnm{Hassan}\binits{H.}}
(\byear{2015}).
\btitle{Spectral {{Unmixing}} of {{Hyperspectral Imagery}} Using {{Multilayer
  NMF}}}.
\bjournal{IEEE Geoscience and Remote Sensing Letters}
\bvolume{12}
\bpages{38--42}.
\bdoi{10.1109/LGRS.2014.2325874}
\end{barticle}
\endbibitem

\bibitem[\protect\citeauthoryear{Risso
  et~al.}{2018}]{Risso_General_Flexible_Signal_Extract_2018}
\begin{barticle}[author]
\bauthor{\bsnm{Risso},~\bfnm{Davide}\binits{D.}},
  \bauthor{\bsnm{Perraudeau},~\bfnm{Fanny}\binits{F.}},
  \bauthor{\bsnm{Gribkova},~\bfnm{Svetlana}\binits{S.}},
  \bauthor{\bsnm{Dudoit},~\bfnm{Sandrine}\binits{S.}} \AND
  \bauthor{\bsnm{Vert},~\bfnm{Jean-Philippe}\binits{J.-P.}}
(\byear{2018}).
\btitle{A General and Flexible Method for Signal Extraction from Single-Cell
  {{RNA-seq}} Data}.
\bjournal{Nature Communications}
\bvolume{9}
\bpages{284}.
\bdoi{10.1038/s41467-017-02554-5}
\end{barticle}
\endbibitem

\bibitem[\protect\citeauthoryear{Rohe and Zeng}{2023}]{rohe2023vintage-7f4}
\begin{barticle}[author]
\bauthor{\bsnm{Rohe},~\bfnm{Karl}\binits{K.}} \AND
  \bauthor{\bsnm{Zeng},~\bfnm{Muzhe}\binits{M.}}
(\byear{2023}).
\btitle{Vintage factor analysis with Varimax performs statistical inference}.
\bjournal{Journal of the Royal Statistical Society Series B: Statistical
  Methodology}
\bvolume{85}
\bpages{1037--1060}.
\bdoi{10.1093/jrsssb/qkad029}
\end{barticle}
\endbibitem

\bibitem[\protect\citeauthoryear{Rousseeuw}{1987}]{siluet_coef}
\begin{barticle}[author]
\bauthor{\bsnm{Rousseeuw},~\bfnm{Peter~J.}\binits{P.~J.}}
(\byear{1987}).
\btitle{Silhouettes: A graphical aid to the interpretation and validation of
  cluster analysis}.
\bjournal{Journal of Computational and Applied Mathematics}
\bvolume{20}
\bpages{53-65}.
\bdoi{https://doi.org/10.1016/0377-0427(87)90125-7}
\end{barticle}
\endbibitem

\bibitem[\protect\citeauthoryear{Schwarz}{1978}]{Schwarz:1978}
\begin{barticle}[author]
\bauthor{\bsnm{Schwarz},~\bfnm{Gideon}\binits{G.}}
(\byear{1978}).
\btitle{Estimating the Dimension of a Model}.
\bjournal{The Annals of Statistics}
\bvolume{6}
\bpages{461--464}.
\end{barticle}
\endbibitem

\bibitem[\protect\citeauthoryear{Seplyarskiy
  et~al.}{2021}]{Seplyarskiy_PopulationSequencingData_2021}
\begin{barticle}[author]
\bauthor{\bsnm{Seplyarskiy},~\bfnm{Vladimir~B}\binits{V.~B.}},
  \bauthor{\bsnm{Soldatov},~\bfnm{Ruslan~A}\binits{R.~A.}},
  \bauthor{\bsnm{Koch},~\bfnm{Evan}\binits{E.}},
  \bauthor{\bsnm{McGinty},~\bfnm{Ryan~J}\binits{R.~J.}},
  \bauthor{\bsnm{Goldmann},~\bfnm{Jakob~M}\binits{J.~M.}},
  \bauthor{\bsnm{Hernandez},~\bfnm{Ryan~D}\binits{R.~D.}},
  \bauthor{\bsnm{Barnes},~\bfnm{Kathleen}\binits{K.}},
  \bauthor{\bsnm{Correa},~\bfnm{Adolfo}\binits{A.}},
  \bauthor{\bsnm{Burchard},~\bfnm{Esteban~G}\binits{E.~G.}},
  \bauthor{\bsnm{Ellinor},~\bfnm{Patrick~T}\binits{P.~T.}},
  \bauthor{\bsnm{McGarvey},~\bfnm{Stephen~T}\binits{S.~T.}},
  \bauthor{\bsnm{Mitchell},~\bfnm{Braxton~D}\binits{B.~D.}},
  \bauthor{\bsnm{Vasan},~\bfnm{Ramachandran~S}\binits{R.~S.}},
  \bauthor{\bsnm{Redline},~\bfnm{Susan}\binits{S.}},
  \bauthor{\bsnm{Silverman},~\bfnm{Edwin}\binits{E.}},
  \bauthor{\bsnm{Weiss},~\bfnm{Scott~T}\binits{S.~T.}},
  \bauthor{\bsnm{Arnett},~\bfnm{Donna~K}\binits{D.~K.}},
  \bauthor{\bsnm{Blangero},~\bfnm{John}\binits{J.}},
  \bauthor{\bsnm{Boerwinkle},~\bfnm{Eric}\binits{E.}},
  \bauthor{\bsnm{He},~\bfnm{Jiang}\binits{J.}},
  \bauthor{\bsnm{Montgomery},~\bfnm{Courtney}\binits{C.}},
  \bauthor{\bsnm{Rao},~\bfnm{D~C}\binits{D.~C.}},
  \bauthor{\bsnm{Rotter},~\bfnm{Jerome~I}\binits{J.~I.}},
  \bauthor{\bsnm{Taylor},~\bfnm{Kent~D}\binits{K.~D.}},
  \bauthor{\bsnm{Brody},~\bfnm{Jennifer~A}\binits{J.~A.}},
  \bauthor{\bsnm{Chen},~\bfnm{Yii-Der~Ida}\binits{Y.-D.~I.}},
  \bauthor{\bsnm{{de Las Fuentes}},~\bfnm{Lisa}\binits{L.}},
  \bauthor{\bsnm{Hwu},~\bfnm{Chii-Min}\binits{C.-M.}},
  \bauthor{\bsnm{Rich},~\bfnm{Stephen~S}\binits{S.~S.}},
  \bauthor{\bsnm{Manichaikul},~\bfnm{Ani~W}\binits{A.~W.}},
  \bauthor{\bsnm{Mychaleckyj},~\bfnm{Josyf~C}\binits{J.~C.}},
  \bauthor{\bsnm{Palmer},~\bfnm{Nicholette~D}\binits{N.~D.}},
  \bauthor{\bsnm{Smith},~\bfnm{Jennifer~A}\binits{J.~A.}},
  \bauthor{\bsnm{Kardia},~\bfnm{Sharon L~R}\binits{S.~L.~R.}},
  \bauthor{\bsnm{Peyser},~\bfnm{Patricia~A}\binits{P.~A.}},
  \bauthor{\bsnm{Bielak},~\bfnm{Lawrence~F}\binits{L.~F.}},
  \bauthor{\bsnm{O'Connor},~\bfnm{Timothy~D}\binits{T.~D.}},
  \bauthor{\bsnm{Emery},~\bfnm{Leslie~S}\binits{L.~S.}},
  \bauthor{\bsnm{Gilissen},~\bfnm{Christian}\binits{C.}},
  \bauthor{\bsnm{Wong},~\bfnm{Wendy S~W}\binits{W.~S.~W.}},
  \bauthor{\bsnm{Kharchenko},~\bfnm{Peter~V}\binits{P.~V.}} \AND
  \bauthor{\bsnm{Sunyaev},~\bfnm{Shamil}\binits{S.}}
(\byear{2021}).
\btitle{Population Sequencing Data Reveal a Compendium of Mutational Processes
  in Human Germline}.
\bjournal{Science (New York, N.Y.)}
\bvolume{373}
\bpages{1030--1035}.
\bdoi{10.1126/science.aba7408}
\end{barticle}
\endbibitem

\bibitem[\protect\citeauthoryear{Shmueli}{2010}]{Shmueli:2010}
\begin{barticle}[author]
\bauthor{\bsnm{Shmueli},~\bfnm{Galit}\binits{G.}}
(\byear{2010}).
\btitle{{To Explain or to Predict?}}
\bjournal{Statistical Science}
\bvolume{25}
\bpages{289 -- 310}.
\bdoi{10.1214/10-sts330}
\end{barticle}
\endbibitem

\bibitem[\protect\citeauthoryear{Simon-Gabriel and
  Sch\"{o}lkopf}{2018}]{SimonGabriel:2018}
\begin{barticle}[author]
\bauthor{\bsnm{Simon-Gabriel},~\bfnm{Carl-Johann}\binits{C.-J.}} \AND
  \bauthor{\bsnm{Sch\"{o}lkopf},~\bfnm{Bernhard}\binits{B.}}
(\byear{2018}).
\btitle{{Kernel Distribution Embeddings - Universal Kernels, Characteristic
  Kernels and Kernel Metrics on Distributions.}}
\bjournal{Journal of Machine Learning Research}
\bvolume{19}
\bpages{1 -- 29}.
\end{barticle}
\endbibitem

\bibitem[\protect\citeauthoryear{Simon-Gabriel et~al.}{2023}]{Simon:2020}
\begin{barticle}[author]
\bauthor{\bsnm{Simon-Gabriel},~\bfnm{C.~J.}\binits{C.~J.}},
  \bauthor{\bsnm{Barp},~\bfnm{A.}\binits{A.}},
  \bauthor{\bsnm{Sch{\"o}lkopf},~\bfnm{B.}\binits{B.}} \AND
  \bauthor{\bsnm{Mackey},~\bfnm{L.}\binits{L.}}
(\byear{2023}).
\btitle{Metrizing Weak Convergence with Maximum Mean Discrepancies}.
\bjournal{Journal of Machine Learning Research}
\bvolume{24}.
\end{barticle}
\endbibitem

\bibitem[\protect\citeauthoryear{Spiegelhalter
  et~al.}{2002}]{Spiegelhalter:2002}
\begin{barticle}[author]
\bauthor{\bsnm{Spiegelhalter},~\bfnm{David~J.}\binits{D.~J.}},
  \bauthor{\bsnm{Best},~\bfnm{Nicola~G.}\binits{N.~G.}},
  \bauthor{\bsnm{Carlin},~\bfnm{Bradley~P.}\binits{B.~P.}} \AND
  \bauthor{\bsnm{Van Der~Linde},~\bfnm{Angelika}\binits{A.}}
(\byear{2002}).
\btitle{Bayesian measures of model complexity and fit}.
\bjournal{Journal of the Royal Statistical Society: Series B (Statistical
  Methodology)}
\bvolume{64}
\bpages{583-639}.
\end{barticle}
\endbibitem

\bibitem[\protect\citeauthoryear{Sriperumbudur
  et~al.}{2010}]{Sriperumbudur:2010}
\begin{barticle}[author]
\bauthor{\bsnm{Sriperumbudur},~\bfnm{Bharath~K}\binits{B.~K.}},
  \bauthor{\bsnm{Gretton},~\bfnm{A.}\binits{A.}},
  \bauthor{\bsnm{Fukumizu},~\bfnm{K.}\binits{K.}},
  \bauthor{\bsnm{Sch\"{o}lkopf},~\bfnm{Bernhard}\binits{B.}} \AND
  \bauthor{\bsnm{Lanckriet},~\bfnm{G~R~G}\binits{G.~R.~G.}}
(\byear{2010}).
\btitle{{Hilbert Space Embeddings and Metrics on Probability Measures}}.
\bjournal{Journal of Machine Learning Research}
\bvolume{11}
\bpages{1517 -- 1561}.
\end{barticle}
\endbibitem

\bibitem[\protect\citeauthoryear{Stevens et~al.}{2019}]{Stevens:2019}
\begin{barticle}[author]
\bauthor{\bsnm{Stevens},~\bfnm{Elizabeth}\binits{E.}},
  \bauthor{\bsnm{Dixon},~\bfnm{Dennis~R.}\binits{D.~R.}},
  \bauthor{\bsnm{Novack},~\bfnm{Marlena~N.}\binits{M.~N.}},
  \bauthor{\bsnm{Granpeesheh},~\bfnm{Doreen}\binits{D.}},
  \bauthor{\bsnm{Smith},~\bfnm{Tristram}\binits{T.}} \AND
  \bauthor{\bsnm{Linstead},~\bfnm{Erik}\binits{E.}}
(\byear{2019}).
\btitle{Identification and analysis of behavioral phenotypes in autism spectrum
  disorder via unsupervised machine learning}.
\bjournal{International Journal of Medical Informatics}
\bvolume{129}
\bpages{29-36}.
\bdoi{https://doi.org/10.1016/j.ijmedinf.2019.05.006}
\end{barticle}
\endbibitem

\bibitem[\protect\citeauthoryear{Séjourné, Vialard and
  Peyré}{2022}]{unbsinkhorn}
\begin{bmisc}[author]
\bauthor{\bsnm{Séjourné},~\bfnm{Thibault}\binits{T.}},
  \bauthor{\bsnm{Vialard},~\bfnm{François-Xavier}\binits{F.-X.}} \AND
  \bauthor{\bsnm{Peyré},~\bfnm{Gabriel}\binits{G.}}
(\byear{2022}).
\btitle{Faster Unbalanced Optimal Transport: Translation invariant Sinkhorn and
  1-D Frank-Wolfe}.
\end{bmisc}
\endbibitem

\bibitem[\protect\citeauthoryear{Thorndike}{1953}]{Thorndike_1953}
\begin{barticle}[author]
\bauthor{\bsnm{Thorndike},~\bfnm{Robert~L.}\binits{R.~L.}}
(\byear{1953}).
\btitle{Who Belongs in the Family?}
\bjournal{Psychometrika}
\bvolume{18}
\bpages{267–276}.
\bdoi{10.1007/BF02289263}
\end{barticle}
\endbibitem

\bibitem[\protect\citeauthoryear{Tibshirani, Walther and
  Hastie}{2001}]{gap_stats}
\begin{barticle}[author]
\bauthor{\bsnm{Tibshirani},~\bfnm{Robert}\binits{R.}},
  \bauthor{\bsnm{Walther},~\bfnm{Guenther}\binits{G.}} \AND
  \bauthor{\bsnm{Hastie},~\bfnm{Trevor}\binits{T.}}
(\byear{2001}).
\btitle{Estimating the Number of Clusters in a Data Set via the Gap Statistic}.
\bjournal{Journal of the Royal Statistical Society. Series B (Statistical
  Methodology)}
\bvolume{63}
\bpages{411--423}.
\end{barticle}
\endbibitem

\bibitem[\protect\citeauthoryear{van~der Vaart and Wellner}{1996}]{Vaart:1996}
\begin{bbook}[author]
\bauthor{\bparticle{van~der} \bsnm{Vaart},~\bfnm{AW}\binits{A.}} \AND
  \bauthor{\bsnm{Wellner},~\bfnm{J.}\binits{J.}}
(\byear{1996}).
\btitle{Weak Convergence and Empirical Processes: With Applications to
  Statistics}.
\bseries{Springer Series in Statistics}.
\bpublisher{Springer}.
\end{bbook}
\endbibitem

\bibitem[\protect\citeauthoryear{Villani}{2009}]{Villani:2009}
\begin{bbook}[author]
\bauthor{\bsnm{Villani},~\bfnm{C}\binits{C.}}
(\byear{2009}).
\btitle{{Optimal transport: old and new}}
\bvolume{338}.
\bpublisher{Springer}.
\end{bbook}
\endbibitem

\bibitem[\protect\citeauthoryear{Vinh, Epps and Bailey}{2010}]{ami}
\begin{barticle}[author]
\bauthor{\bsnm{Vinh},~\bfnm{Nguyen~Xuan}\binits{N.~X.}},
  \bauthor{\bsnm{Epps},~\bfnm{Julien}\binits{J.}} \AND
  \bauthor{\bsnm{Bailey},~\bfnm{James}\binits{J.}}
(\byear{2010}).
\btitle{Information Theoretic Measures for Clusterings Comparison: Variants,
  Properties, Normalization and Correction for Chance}.
\bjournal{Journal of Machine Learning Research}
\bvolume{11}
\bpages{2837--2854}.
\end{barticle}
\endbibitem

\bibitem[\protect\citeauthoryear{Walker and Hjort}{2001}]{Walker:2001}
\begin{barticle}[author]
\bauthor{\bsnm{Walker},~\bfnm{Stephen}\binits{S.}} \AND
  \bauthor{\bsnm{Hjort},~\bfnm{Nils~Lid}\binits{N.~L.}}
(\byear{2001}).
\btitle{On Bayesian consistency}.
\bjournal{Journal of the Royal Statistical Society: Series B (Statistical
  Methodology)}
\bvolume{63}
\bpages{811-821}.
\bdoi{https://doi.org/10.1111/1467-9868.00314}
\end{barticle}
\endbibitem

\bibitem[\protect\citeauthoryear{Wang and Blei}{2019}]{Wang:2019}
\begin{barticle}[author]
\bauthor{\bsnm{Wang},~\bfnm{Yixin}\binits{Y.}} \AND
  \bauthor{\bsnm{Blei},~\bfnm{David~M.}\binits{D.~M.}}
(\byear{2019}).
\btitle{Frequentist Consistency of Variational Bayes}.
\bjournal{Journal of the American Statistical Association}
\bvolume{114}
\bpages{1147-1161}.
\end{barticle}
\endbibitem

\bibitem[\protect\citeauthoryear{Wang, Kulkarni and Verdú}{2009}]{Wang:2009}
\begin{barticle}[author]
\bauthor{\bsnm{Wang},~\bfnm{Qing}\binits{Q.}},
  \bauthor{\bsnm{Kulkarni},~\bfnm{Sanjeev}\binits{S.}} \AND
  \bauthor{\bsnm{Verdú},~\bfnm{Sergio}\binits{S.}}
(\byear{2009}).
\btitle{Divergence Estimation for Multidimensional Densities Via
  -Nearest-Neighbor Distances}.
\bjournal{Information Theory, IEEE Transactions on}
\bvolume{55}
\bpages{2392 - 2405}.
\end{barticle}
\endbibitem

\bibitem[\protect\citeauthoryear{Wellner}{1981}]{Wellner:1981}
\begin{barticle}[author]
\bauthor{\bsnm{Wellner},~\bfnm{Jon~A.}\binits{J.~A.}}
(\byear{1981}).
\btitle{A Glivenko-Cantelli theorem for empirical measures of independent but
  non-identically distributed random variables}.
\bjournal{Stochastic Processes and their Applications}
\bvolume{11}
\bpages{309-312}.
\end{barticle}
\endbibitem

\bibitem[\protect\citeauthoryear{West}{2003}]{West:2003}
\begin{barticle}[author]
\bauthor{\bsnm{West},~\bfnm{Mike}\binits{M.}}
(\byear{2003}).
\btitle{{Bayesian Factor Regression Models in the “Large p, Small n”
  Paradigm}}.
\bjournal{Bayesian Statistics}
\bvolume{7}.
\end{barticle}
\endbibitem

\bibitem[\protect\citeauthoryear{Wu et~al.}{2024}]{Wu:2024}
\begin{barticle}[author]
\bauthor{\bsnm{Wu},~\bfnm{Bohan}\binits{B.}},
  \bauthor{\bsnm{Weinstein},~\bfnm{Eli~N}\binits{E.~N.}},
  \bauthor{\bsnm{Salehi},~\bfnm{Sohrab}\binits{S.}},
  \bauthor{\bsnm{Wang},~\bfnm{Yixin}\binits{Y.}} \AND
  \bauthor{\bsnm{Blei},~\bfnm{David~M}\binits{D.~M.}}
(\byear{2024}).
\btitle{{Adaptive Nonparametric Perturbations of Parametric Bayesian Models}}.
\bjournal{arXiv}.
\bdoi{10.48550/arxiv.2412.10683}
\end{barticle}
\endbibitem

\bibitem[\protect\citeauthoryear{Xue et~al.}{2024}]{Xue:2024}
\begin{barticle}[author]
\bauthor{\bsnm{Xue},~\bfnm{Catherine}\binits{C.}},
  \bauthor{\bsnm{Miller},~\bfnm{Jeffrey~W.}\binits{J.~W.}},
  \bauthor{\bsnm{Carter},~\bfnm{Scott~L.}\binits{S.~L.}} \AND
  \bauthor{\bsnm{Huggins},~\bfnm{Jonathan~H.}\binits{J.~H.}}
(\byear{2024}).
\btitle{{Robust discovery of mutational signatures using power posteriors}}.
\bjournal{bioRxiv}
\bpages{2024.10.23.619958}.
\bdoi{10.1101/2024.10.23.619958}
\end{barticle}
\endbibitem

\bibitem[\protect\citeauthoryear{Zhao and Lai}{2020}]{Zhao:2020}
\begin{binproceedings}[author]
\bauthor{\bsnm{Zhao},~\bfnm{Puning}\binits{P.}} \AND
  \bauthor{\bsnm{Lai},~\bfnm{Lifeng}\binits{L.}}
(\byear{2020}).
\btitle{Analysis of K Nearest Neighbor KL Divergence Estimation for Continuous
  Distributions}.
In \bbooktitle{2020 IEEE International Symposium on Information Theory (ISIT)}
\bpages{2562-2567}.
\bdoi{10.1109/ISIT44484.2020.9174033}
\end{binproceedings}
\endbibitem

\bibitem[\protect\citeauthoryear{Zhao and
  Tan}{2017}]{Zhao_ConvergenceAnalysisMU_2017}
\begin{bmisc}[author]
\bauthor{\bsnm{Zhao},~\bfnm{Renbo}\binits{R.}} \AND
  \bauthor{\bsnm{Tan},~\bfnm{Vincent Y.~F.}\binits{V.~Y.~F.}}
(\byear{2017}).
\btitle{A {{Unified Convergence Analysis}} of the {{Multiplicative Update
  Algorithm}} for {{Regularized Nonnegative Matrix Factorization}}}.
\bdoi{10.48550/arXiv.1609.00951}
\end{bmisc}
\endbibitem

\end{thebibliography}
